\documentclass[%
aps,
twocolumn,
superscriptaddress,
amsmath,amssymb,
pra,
floatfix,
longbibliography,
10pt
]{revtex4-2}
\usepackage{url}
\usepackage{graphicx}
\usepackage{placeins}
\usepackage{dcolumn}
\usepackage{bm}
\usepackage{amsmath,amsfonts,amssymb}
\usepackage{mathtools}

\usepackage{silence}
\usepackage{physics}
\usepackage{dsfont}
\usepackage{tikz}
\usepackage{pgfplots}
\usepackage{enumerate}
\usepackage{subfigure}
\usepackage{bbold}
\usepackage{makecell}
\usepackage{array}

\usepackage{xcolor}
\usepackage{tabularray}
\usepackage{amssymb}
\usepackage{ytableau}

\usepackage{tikz-qtree}

\usepackage[colorlinks, linkcolor=red, anchorcolor=green, citecolor=blue]{hyperref}

\usepackage{algorithmic}
\usepackage[ruled,linesnumbered]{algorithm2e}

\SetCommentSty{mycommfont}

\makeatletter
\newcommand{\thickhline}{%
    \noalign {\ifnum 0=`}\fi \hrule height 1pt
    \futurelet \reserved@a \@xhline
}
\newcolumntype{"}{@{\hskip\tabcolsep\vrule width 1pt\hskip\tabcolsep}}
\makeatother

\usepackage{multirow}
\usepackage{tabu}
\usepackage{textcomp,booktabs}
\usepackage{colortbl}
\usepackage[T1]{fontenc}
\definecolor{mygray}{gray}{0.9}
\definecolor{mypink}{rgb}{0.99,0.91,0.95}
\definecolor{mycyan}{cmyk}{0.3,0,0,0}

\newcommand{\id}{\text{id}}

\newcommand{\mE}{\mathcal{E}}
\newcommand{\mF}{\mathcal{F}}

\newcommand{\mN}{\mathcal{N}}

\newcommand{\mO}{\mathcal{O}}
\newcommand{\mP}{\mathcal{P}}

\newcommand{\mH}{\mathcal{H}}
\newcommand{\mR}{\mathcal{R}}
\newcommand{\mS}{\mathcal{S}}
\newcommand{\mU}{\mathcal{U}}
\newcommand{\T}{\mathbf{T}}
\newcommand{\0}{\mathbb{0}}
\newcommand{\1}{\mathbb{1}}

\makeatletter
\newcommand*{\rom}[1]{\expandafter\@slowromancap\romannumeral #1@}
\makeatother

\makeatletter
\newcommand*{\Relbarfill@}{\arrowfill@\Relbar\Relbar\Relbar}
\newcommand*{\xeq}[2][]{\ext@arrow 0055\Relbarfill@{#1}{#2}}
\makeatother

\definecolor{mGreen}{RGB}{193, 225, 193}
\definecolor{mRed}{RGB}{255, 204, 203}
\definecolor{mBlue}{RGB}{176, 224, 230}  
\definecolor{mOrange}{RGB}{255, 218, 185} 
\definecolor{mPurple}{HTML}{E5D1FA}

\usepackage{amsthm}
\usepackage{tcolorbox}
\tcbuselibrary{theorems}
\newtheorem{thm}{Theorem}
\newtheorem{lem}{Lemma}

\newtcbtheorem[number within=section, use counter*=thm]{mythm}{Theorem}%
{colback=red!5,colframe=red!35!red,fonttitle=\bfseries}{thm}

\newtcbtheorem[number within=section, use counter*=thm]{mylem}{Lemma}%
{colback=green!5,colframe=green!35!black,fonttitle=\bfseries}{lem}

\newtcbtheorem[number within=section, use counter*=thm]{mydef}{Definition}%
{colback=yellow!5,colframe=yellow!70!black,fonttitle=\bfseries}{def}

\newtcbtheorem[number within=section, use counter*=thm]{mycor}{Corollary}%
{colback=blue!5,colframe=blue!70!black,fonttitle=\bfseries}{cor}

\newtcbtheorem[number within=section, use counter*=thm]{myprop}{Proposition}%
{colback=pink!5,colframe=pink!70!black,fonttitle=\bfseries}{prop}

\pgfplotsset{compat=1.17}

\begin{document}



\title{\texorpdfstring{Forward-Assisted Purification: \\
A Spatiotemporal Framework Beyond Conventional Limits}{Forward-Assisted Purification: A Spatiotemporal Framework Beyond Conventional Limits}}


\author{Fei Meng}
\thanks{These authors contributed equally}
\affiliation{School of Physics and Astronomy, University of Glasgow, Glasgow G12 8QQ, United Kingdom}

\author{Jinge Bao}
\thanks{These authors contributed equally}
\affiliation{School of Informatics, University of Edinburgh, Edinburgh EH8 9AB, United Kingdom}

\author{Yunlong Xiao}
\email{mathxiao123@gmail.com}
\affiliation{Institute of High Performance Computing (IHPC), Agency for Science, Technology and Research (A*STAR), 1 Fusionopolis Way, \#16-16 Connexis, Singapore 138632, Republic of Singapore}


\begin{abstract}
Noise remains the primary obstacle to realizing quantum advantage, continuously degrading the resources that enable quantum technologies. 
Purification aims to reverse this degradation by extracting high-fidelity resources from noisy ensembles, yet its conventional formulation is intrinsically static, acting only after noise has taken effect. 
Here we instead recast purification as a dynamical task, introducing a spatiotemporal framework that distributes interventions across the noise process.
This formulation reveals operational capabilities inaccessible to existing approaches and gives rise to forward-assisted purifications that extend achievable performance. 
In certain regimes, a single-copy protocol already exceeds what can be achieved with up to $50$ copies under conventional purification,  demonstrating a significant overhead in required resources.
Beyond these gains, our framework circumvents no-purification theorems within conventional protocols, including for Bell-state ensembles, thereby enabling purification previously considered impossible and pointing toward an efficient route to mitigating noise in quantum systems.
\end{abstract}

\maketitle


\noindent \textbf{Introduction}---Quantum technologies do not merely extend the reach of existing capabilities; they reshape the very limits of what can be achieved in computation~\cite{365700,10.1145/237814.237866,Arute2019,doi:10.1126/science.abe8770}, communication~\cite{PhysRevLett.70.1895,Bouwmeester1997,Kimble2008,Ren2017,RevModPhys.92.025002}, and sensing~\cite{doi:10.1126/science.1104149,PhysRevLett.96.010401,Giovannetti2011,RevModPhys.89.035002} by exploiting intrinsically quantum features such as superposition~\cite{RevModPhys.85.1103} and entanglement~\cite{RevModPhys.81.865}. 
This promise, however, is inseparable from a fundamental fragility. 
In any realistic setting, interactions with the environment are unavoidable; they introduce noise that degrades coherence, converting ideally pure states into mixed ones and thereby eroding the resources that give quantum protocols their power.
Confronting this challenge has driven the development of a spectrum of approaches, including quantum error correction~\cite{PhysRevA.52.R2493,PhysRevLett.81.2152,RevModPhys.87.307,Acharya2023,Bluvstein2024}, quantum error mitigation~\cite{PhysRevX.7.021050,PhysRevLett.119.180509,PhysRevX.8.031027,doi:10.7566/JPSJ.90.032001,RevModPhys.95.045005}, and quantum state purification~\cite{PhysRevLett.82.4344,Keyl2001,PhysRevA.70.032308,li2025optimalquantumpurityamplification,Childs2025streamingquantum,grier2025streamingquantumstatepurification,Yao_2025}.
Among them, purification occupies a distinctive position: by providing operational means to recover high-quality quantum resources from noisy data~\cite{PhysRevLett.76.722,Pan2001,Pan2003,Reichle2006}, it establishes a concrete bridge between the presence of noise and the restoration of functionality in quantum information processing~\cite{PhysRevA.59.169,Pan2001Entanglement,PhysRevLett.90.067901,PhysRevLett.126.010503,PhysRevLett.128.080504}.

\begin{figure}[t]
    \centering
    \includegraphics[width=\linewidth]{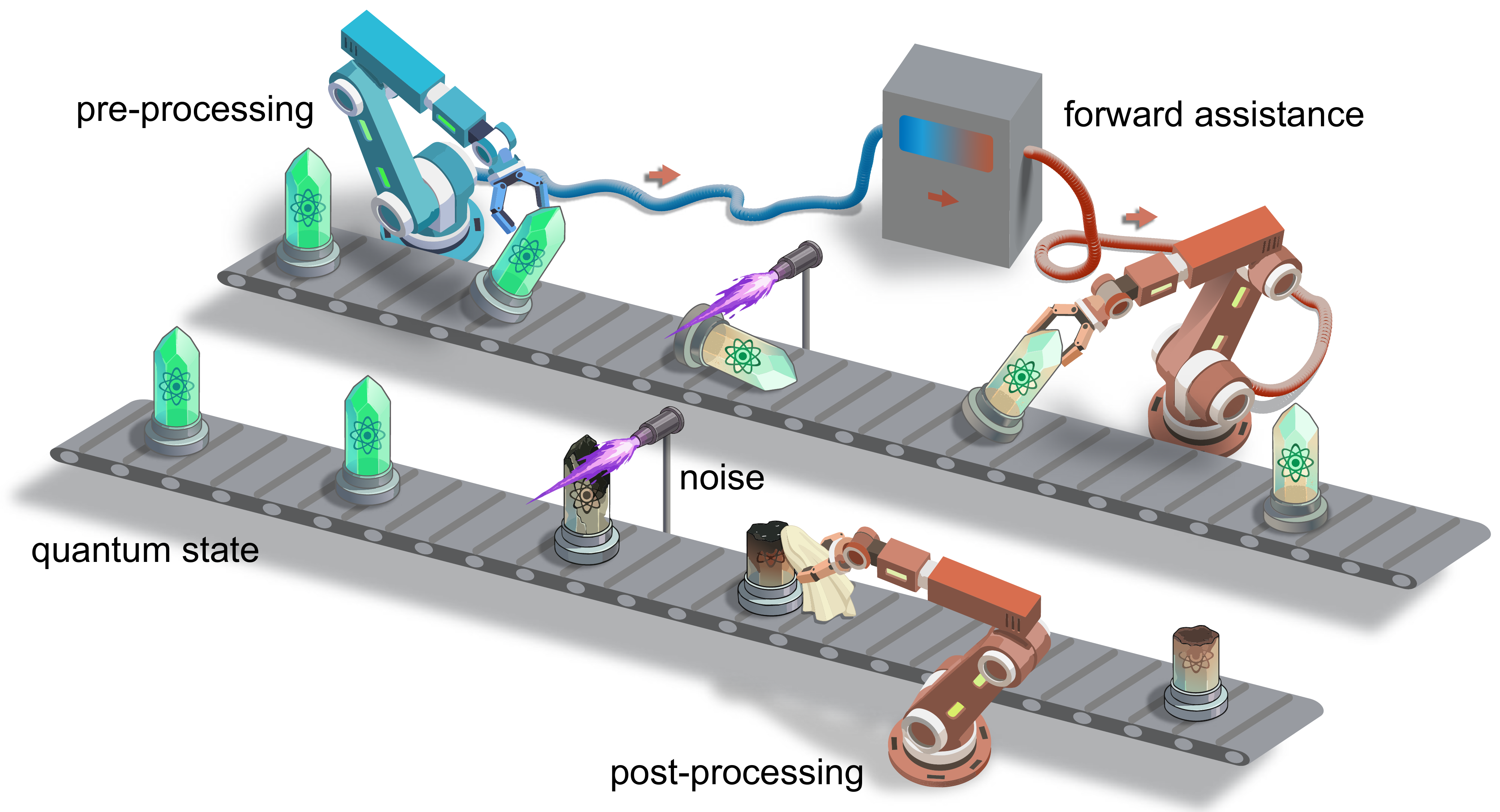}
    \caption{\textbf{Forward-Assisted versus Conventional Purification.} 
    The lower conveyor belt depicts conventional purification, where quantum states, depicted as crystals, are first degraded by environmental noise, represented by a purple flame, and are then recovered only through post-processing. 
    The upper conveyor belt illustrates the forward-assisted setting introduced here: 
    a pre-processing operation is applied before the noise acts, and a forward memory link connects this intervention to the later post-processing stage. 
    Treating noise as a dynamical process in this way enlarges the set of admissible purification strategies and gives access to recovery capabilities that are unavailable in conventional purifications.}
    \label{fig:MT_Sketch}
\end{figure}

Building on this role, the conventional formulation of purification considers the setting in which multiple noisy copies are available and subjects them to collective operations designed to partially reverse the action of noise, producing an output state closer to the original input. 
This paradigm offers a direct operational route from noisy data to restored functionality, yet its practical reach remains sharply constrained.
A central issue is whether existing protocols are genuinely optimal, or whether they reflect only a restricted perspective on what quantum mechanics permits.
Even when improvement is, in principle, achievable, it often comes at a prohibitive cost: in relevant noise regimes, marginal gains in fidelity require an overwhelming number of noisy copies, placing such approaches beyond current experimental capabilities. 
At the same time, the theoretical landscape has been progressively delineated by a series of no-go results~\cite{PhysRevLett.125.060405,PRXQuantum.3.010337,Regula2021,PhysRevLett.134.190803,bdw8-k91v,3bb1-pmtp}, establishing that under a wide range of physically natural conditions, purification is unattainable. 
These features point beyond technical obstacles to a more intrinsic limitation, suggesting that the conventional formulation itself constrains how purification can be realized in practice.

We address these challenges by introducing a spatiotemporal framework that treats noise as an intrinsically dynamical process.
Interpreting noise as a quantum channel promotes its most general manipulation to superchannels~\cite{Chiribella_2008,8678741,xiao2025superchanneltearsgeneralizedoccams}, encompassing pre- and post-processing, with quantum memory linking them, and in doing so reveals a spatiotemporal structure that lies beyond the reach of conventional approaches, as illustrated in Fig.~\ref{fig:MT_Sketch}. 
This perspective gives rise to forward-assisted purifications with operational advantages: 
they surpass standard strategies and, notably, even memory-free single-copy implementations can outperform conventional approaches that rely on many noisy samples.
In extreme cases, conventional protocols remain unable to match the performance of a single-copy forward-assisted scheme even with up to $50$ noisy copies, highlighting a substantial separation in sample complexity.
The framework further enables purification in regimes previously considered unattainable, including Bell states, thereby redefining the limits of quantum purification. 
To enable systematic benchmarking, we develop representation-theoretic algorithms based on Schur-Weyl duality and Clebsch-Gordan recursion that compress the underlying semidefinite programming, extending tractable analysis Hilbert space dimension $\approx2^{8}$ to scales approaching $2^{50}$.


\noindent \textbf{Conventional Purification}---Assume an ideal source prepares a desired pure state $\psi \in \mS$ for a given task, where $\mS$ denotes the set of admissible inputs.
In realistic settings, environmental decoherence and device imperfections transform $\psi$ into a noisy state $\mN(\psi)$ (Fig.~\ref{fig:MT_Purification_1}(a1)), thereby limiting its utility in subsequent applications.
To mitigate this degradation, purification protocols process multiple noisy copies, $\mathcal{N}(\psi)^{\otimes n}$, to recover a state that better approximates $\psi$.
Formally, such protocols are described by a quantum channel $\mE$, producing the output $\mE(\mN(\psi)^{\otimes n})$ (Fig.~\ref{fig:MT_Purification_1}(a2)). 
Taking $\psi$ to be drawn uniformly from $\mathcal{S}$, the optimal performance of purification, quantified by the average fidelity, is then given by
\begin{align}\label{eq:MT_Conventional_Purification}
    F_{\ast}\coloneqq
    \max_{\mE\; \mathrm{CPTP}} \quad 
    \frac{1}{|\mS|}\sum_{\psi\in\mS}
    \Tr[J^{\mE}\cdot \left(\left(\mN(\psi)^{\T}\right)^{\otimes n}\otimes\psi\right)
    ]
    .
\end{align}
Here $J^{\mE}$ stands for the Choi operator of $\mE$~\cite{JAMIOLKOWSKI1972275,CHOI1975285}, defined as $J^{\mE}\coloneqq(\id\otimes\mE)(\Gamma)$, where $\Gamma=\ketbra{\Gamma}{\Gamma}$ and $\ket{\Gamma}\coloneqq\sum_{i}\ket{ii}$ is the (unnormalized) maximally entangled state; system labels are omitted for simplicity.
In Eq.~\eqref{eq:MT_Conventional_Purification}, $\mN(\psi)^{\otimes n}$ is fed into the completely positive trace-preserving (CPTP) map $\mE$, with $\psi$ as the target output, and $\T$ denoting transposition on $\mN(\psi)$; the optimization is over all quantum channels.
As a baseline, retaining a single noisy copy $\mN(\psi)$ without processing yields average fidelity $1/|\mS|\sum_{\psi\in\mS}\Tr[\mN(\psi)\cdot\psi]$, and protocols that strictly surpass this benchmark are deemed efficient purification.

\begin{figure*}[ht]
    \centering   
    \includegraphics[width=1\textwidth]{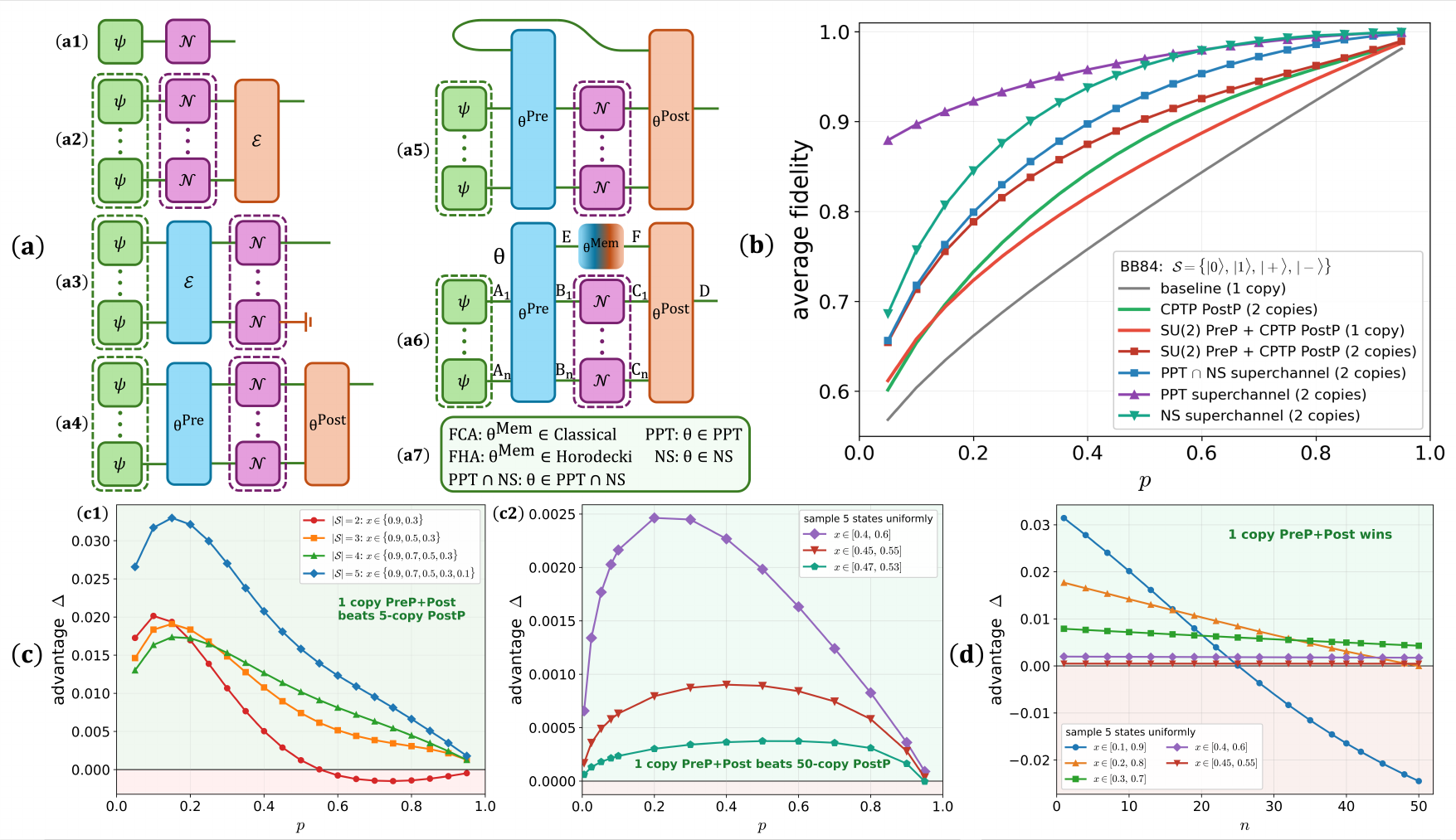}
    \caption{\textbf{Spatiotemporal Purifications}. 
        (a) Purification protocols considered in this work: 
        no purification (a1, Benchmark); 
        conventional post-processing-only purification (a2, PostP); 
        pre-processing-only purification (a3, PreP); 
        unassisted purification combining pre- and post-processing without memory (a4, UA); 
        entanglement-assisted purification with shared entanglement (a5, EA); 
        and forward-assisted purification, where pre-processing $\theta^{\mathrm{Pre}}$ and post-processing $\theta^{\mathrm{Post}}$ are connected by an intermediate memory $\theta^{\mathrm{Mem}}$ (a6, FA).
        Restricted FA classes are defined by the structure of the superchannel $\theta\coloneqq\theta^{\mathrm{Post}}\circ\theta^{\mathrm{Mem}}\circ\theta^{\mathrm{Pre}}$ and its memory $\theta^{\mathrm{Mem}}$, including forward-classical-assisted (FCA) and forward-Horodecki-assisted (FHA) protocols, as well as positive partial transposition (PPT), non-signalling (NS) and PPT $\cap$ NS purifications (a7).
        Across all panels, $\psi$ denotes pure inputs, $\mN$ the noise channel.
        (b) Achievable performance for 2-to-1 purification of BB84 states under amplitude damping noise $\mN=\mN_{\mathrm{AD}}(p)$.
        Forward-assisted protocols enlarge the attainable fidelity, revealing performance beyond conventional purification.
        Here, the pre-processing is restricted to local unitaries.
        (c) Sample advantage of single-copy UA purification. 
        For $\mS\coloneqq\{\sqrt{x}\ket{0}+\sqrt{1-x}\ket{1}\}_{x}$, we plot the fidelity gap $\Delta$ between one-copy UA (a4) and many-copy conventional purification (a2). 
        Panels (c1) and (c2) compare against 5-copy and 50-copy conventional PostP protocols, respectively; 
        green regions mark where one-copy UA achieves higher fidelity.
        (d) Scaling of the sample advantage. 
        At fixed noise strength $p=0.05$, we track $\Delta$ between one-copy UA and $n$-copy conventional protocol as $n$ increases. 
        Green regions show ensembles for which the single-copy UA protocol continues to outperform conventional purification, even when the latter is supplied with up to $50$ noisy copies.
        In (c) and (d), pre-processing uses simple local $y$-axis rotations.
    }
    \label{fig:MT_Purification_1}
\end{figure*}

Despite sustained progress, purification remains intrinsically constrained. 
Across broad noise regimes, even marginal fidelity gains typically require a large number of input copies, imposing an overhead that severely constrains practical implementation. 
More fundamentally, purification is ruled out for ensembles that underpin a wide range of quantum protocols.
A sequence of no-go results shows that, for ensembles such as the Bell states, no positive partial transpose preserving (PPTp) operations, even when probabilistic protocols with arbitrary success probability are permitted, can accomplish efficient distributed purification, thereby establishing a stringent no-purification theorem~\cite{3bb1-pmtp}.
These operational constraints are further compounded by a computational barrier: although the optimal performance $F_{\ast}$ admits a semidefinite programming (SDP) formulation in Eq.~\eqref{eq:MT_Conventional_Purification}, the computational complexity nevertheless grows exponentially with system size $n$.
For local dimension $d$, the associated Choi operator $J^{\mE}$ acts on a Hilbert space of dimension $N=d^{n+1}$; even in the qubit case, this scaling quickly renders the optimization intractable, with memory demands escalating and numerical evaluation already fails near $n\approx 8$, blocking access to larger system sizes.

The interplay between operational constraints and computational complexity brings into focus a set of questions of both foundational and practical importance.
Do the limitations encoded in Eq.~\eqref{eq:MT_Conventional_Purification} reflect intrinsic features of quantum theory, or do they instead delineate the boundaries of the frameworks explored to date? 
Can a broader operational paradigm reduce sample complexity, or even circumvent established no-go results?
And, at a technical level, can the underlying SDP be reformulated so as to remain tractable well beyond the few-copy regime?
Addressing these questions is essential for delineating the limits of quantum theory and for guiding the development of quantum technologies, where managing noise remains a central challenge.
In what follows, we resolve these issues within a unified framework.


\noindent \textbf{Spatiotemporal Framework}---Conventional protocols treat the noisy state as a static resource, confining purification to operations applied only after the noise and thereby masking its fundamentally dynamical nature, which is more naturally described at the level of a quantum channel. 
From this perspective, the most general manipulation of noise is given by a quantum superchannel $\theta$, comprising pre-processing $\theta^{\text{Pre}}$, post-processing $\theta^{\text{Post}}$, and a quantum memory $\theta^{\text{Mem}}$ that links these stages (Fig.~\ref{fig:MT_Purification_1}(a6)). 
Within this broader operational setting, conventional purification appears as a limiting case that allows only post-processing, leaving much of the available structure, especially pre-noise operations and temporal correlations, unused. 
We instead place purification in a fully spatiotemporal architecture, where interventions act both before and after the noise and remain coherently connected across time, so that purification becomes a transformation $\theta$ acting directly on the noise process itself, with output $\theta(\mN^{\otimes n})(\psi^{\otimes n})$.
Unrestricted superchannels, however, are not operationally meaningful: although mathematically admissible, they rely on idealized capabilities and, in the extreme, allow the noise to be bypassed, rendering the task trivial.
We therefore restrict attention to a physically grounded subset, leading to forward-assisted (FA) purifications inspired by quantum Shannon theory~\cite{10.1109/TIT.2015.2439953}, which retain the spatiotemporal structure while remaining experimentally accessible.

In the absence of quantum memory, FA purification reduces to unassisted (UA, Fig.~\ref{fig:MT_Purification_1}(a4)) protocols constructed solely from pre- and post-processing.
This class includes pre-processing-only (PreP, Fig.~\ref{fig:MT_Purification_1}(a3)) and post-processing-only (PostP, Fig.~\ref{fig:MT_Purification_1}(a2)) schemes, the latter coinciding with the conventional formulation. 
Allowing memory expands this structure, with the resulting capabilities governed by its nature: shared entanglement gives rise to entanglement-assisted (EA, Fig.~\ref{fig:MT_Purification_1}(a5)) purification, classical memory yields forward-classical-assisted (FCA) purification, and Horodecki memory defines forward-Horodecki-assisted (FHA) purification. 
Here, $\theta^{\mathrm{Mem}}$ is classical if $\Delta \circ\, \theta^{\mathrm{Mem}} \circ \Delta = \theta^{\mathrm{Mem}}$ where $\Delta$ denotes the completely dephasing map, and Horodecki if its Choi operator is positive under partial transpose (PPT).
This resource-theoretic perspective admits a further refinement at the level of the full spatiotemporal framework by imposing structural constraints on the superchannel $\theta$ (Fig.~\ref{fig:MT_Purification_1}(a7)).
In particular, non-signalling (NS) conditions on $J^{\theta}$, given by $\Tr_{D}[J^{\theta}]=J^{\theta}_{AB}\otimes\1_{C}/d_{C}$ and $\Tr_{B}[J^{\theta}]=J^{\theta}_{CD}\otimes\1_{A}/d_{A}$, define NS purification, while a PPT constraint on the post-processing systems, namely $(J^{\theta})^{\T_{CD}}\geqslant0$, gives PPT purification; their intersection yields PPT $\cap$ NS purification.
Throughout, labels without subscripts denote composite systems, for example $A\coloneqq A_{1}\ldots A_{n}$.
With these structures in place, we proceed to characterize the optimal performance of FA purifications.

\begin{lem}[\textbf{Fundamental Limits of Performance}]
\label{lem:MT_Fundamental_Limits_Performance}
For a forward-assisted purification protocol (Fig.~\ref{fig:MT_Purification_1}(a6)) implemented by a superchannel $\theta$ constrained by the property $\mP$, the optimal performance is characterized by
\begin{align}\label{eq:MT_FA_Fundamental_Limit}
    F_{\mP}
    \coloneqq
    \max \quad 
    & 
    \Tr[J^{\theta}\cdot \left( \Psi_{AD}\otimes\Omega_{BC}
    \right)]
    \\
    \text{s.t.} \quad 
    &J^{\theta}\geqslant0,\; 
    \Tr_{BD}[J^{\theta}]=\1_{AC}, \notag\\
    &\Tr_{D}[J^{\theta}]=
    J^{\theta}_{AB}\otimes\frac{1}{d_C}\1_{C},\;
    \theta\in\mP.\notag
\end{align}
Here, $\Psi_{AD} \coloneqq 1/|\mS|\sum\psi^{\otimes n}_{A}\otimes\psi^{\T}_{D}$ encodes the joint input-target structure induced by the ensemble $\mS$, while $\Omega_{BC} \coloneqq (J^{\mN})^{\otimes n}_{BC}$ denotes the Choi operator of the noise acting independently on each copy.
\end{lem}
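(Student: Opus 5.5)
The plan is to characterize superchannels through their Choi operators, rewrite the average fidelity as a linear functional of $J^{\theta}$, and then read off the constraints. For the system labels I will use: $A=A_1\ldots A_n$ for the input of the pre-processing, which receives $\psi^{\otimes n}$; $B$ for the noise input, i.e.\ the pre-processing output; $C$ for the noise output, i.e.\ the post-processing input; and $D$ for the final output, taken to be one copy of the target system.

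\emph{Step 1: the superchannel constraints.} By the characterization of superchannels of~\cite{Chiribella_2008,8678741}, a map $\theta$ sending channels $B\to C$ to channels $A\to D$ is realizable as $\theta^{\mathrm{Post}}\circ\theta^{\mathrm{Mem}}\circ\theta^{\mathrm{Pre}}$ (with arbitrary memory) if and only if its Choi operator $J^{\theta}$ on $ABCD$ satisfies three conditions:
\begin{itemize}
\item[(i)] $J^{\theta}\geqslant0$;
\item[(ii)] $\Tr_{D}[J^{\theta}]=J^{\theta}_{AB}\otimes\1_{C}/d_{C}$;
\item[(iii)] $\Tr_{B}[J^{\theta}_{AB}]=d_C\1_A$.
\end{itemize}
Here $J^{\theta}_{AB}$ is the Choi operator of the pre-processing channel, scaled by $d_C$. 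Condition (ii) expresses that the post-processing produces an output channel that is trace preserving. Tracing $BD$ in (ii) and using (iii) turns (iii) into exactly $\Tr_{BD}[J^{\theta}]=\1_{AC}$. Conversely, $\Tr_{BD}[J^{\theta}]=\Tr_B[J^{\theta}_{AB}]\otimes\1_C/d_C=\1_{AC}$ gives back (iii). So the constraint set in Eq.~\eqref{eq:MT_FA_Fundamental_Limit} is precisely the set of Choi operators of general superchannels. The restriction to a class $\mP$ (UA, EA, FCA, FHA, PPT, NS, and so on) is appended as the extra constraint $\theta\in\mP$.

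\emph{Step 2: the objective.} I use the link-product composition rule: the Choi operator of $\theta(\mN^{\otimes n})$ equals
\[
\Tr_{BC}\!\left[J^{\theta}\,(\Omega_{BC}^{\T}\otimes\1_{AD})\right],
\]
with the usual transpose convention. The transpose can be absorbed by a consistent choice of Choi convention, and I will fix that convention so that the final formula matches. Next I evaluate the fidelity $\Tr[\theta(\mN^{\otimes n})(\psi^{\otimes n})\,\psi]$. Writing the output channel's action through its Choi operator, this becomes
\[
\Tr\!\left[J^{\theta(\mN^{\otimes n})}_{AD}\,\bigl((\psi^{\otimes n})^{\T}_A\otimes\psi_D\bigr)\right],
\]
which parallels Eq.~\eqref{eq:MT_Conventional_Purification}. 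Substituting the link product and averaging uniformly over $\psi\in\mS$ yields $\Tr[J^{\theta}(\Psi_{AD}\otimes\Omega_{BC})]$. The transposes migrate between $A$ and $D$ depending on convention, and only the global transpose of the objective (which leaves the trace of Hermitian products unchanged) or an equivalent relabeling is needed to match the stated form $\psi_A^{\otimes n}\otimes\psi_D^{\T}$.

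\emph{Step 3: conclusion.} Maximizing a linear functional over the convex set of admissible $J^{\theta}$ gives $F_{\mP}$. The maximum is attained because the feasible set is compact: it is closed and bounded by the trace normalization. This assumes $\mP$ defines a closed set; if it does not, one would replace max by sup.

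The main obstacle is bookkeeping rather than anything conceptual. Transpose conventions must be tracked carefully in Step 2, and Step 1 needs the equivalence of the two trace conditions. For Step 1, I would first verify directly that the stated constraints are necessary by computing the partial traces of a sequential composition. Sufficiency I would cite from the superchannel realization theorem rather than reprove.
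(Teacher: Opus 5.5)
Your proposal is correct and follows essentially the same route as the paper. The paper simply writes down this SDP from the standard Choi characterization of superchannels (positivity, $\Tr_{BD}[J^{\theta}]=\1_{AC}$, and the condition $\Tr_{D}[J^{\theta}]=J^{\theta}_{AB}\otimes\1_{C}/d_{C}$) together with the link-product expression for the average fidelity, so your Steps 1--3 in fact supply more detail than the paper gives.

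Two small tightenings are worth making. First, the global transpose in Step 2 does not leave the variable unchanged: it turns it into $(J^{\theta})^{\T}=\overline{J^{\theta}}$. You should therefore state explicitly that the feasible set, including each class $\mP$ used (PPT, NS, UA, EA, FCA, FHA), is invariant under $J\mapsto J^{\T}$. Second, in Step 3 convexity is neither needed nor true for classes such as UA; compactness alone secures the maximum.
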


The above formulation expresses purification performance under operational constraints $\mP$ as a unified variational characterization, making explicit how physical restrictions are encoded at the level of the superchannel and translate into limits on achievable fidelity. 
From this perspective, different constraints correspond to nested feasible sets of superchannels, thereby inducing an intrinsic ordering of performance. 
This ordering is formalized by the inequality hierarchy in the following theorem.

\begin{thm}[\textbf{Hierarchy of Purification Performance}]
\label{thm:MT_Hierarchy_Purification_Performance}
The fundamental limits of forward-assisted purification, as illustrated in Fig.~\ref{fig:MT_Purification_1}(a), are ordered in terms of average fidelity according to
\begin{align}\label{eq:MT_Hierarchy_NS}
    F_{\mathrm{PostP}} \leqslant 
    F_{\mathrm{UA}} \leqslant
    F_{\mathrm{EA}} \leqslant
    F_{\mathrm{NS}},
\end{align}
and 
\begin{align}\label{eq:MT_Hierarchy_PPT}
    F_{\mathrm{PostP}}\leqslant 
    F_{\mathrm{UA}} \leqslant
    F_{\mathrm{FCA}} \leqslant
    F_{\mathrm{FHA}} \leqslant
    F_{\mathrm{PPT}}.
\end{align}
Here, $\mathrm{PostP}$ denotes post-processing-only purification (Fig.~\ref{fig:MT_Purification_1}(a2)), corresponding to the conventional paradigm and serving as the baseline of the hierarchy. 
Each inclusion enlarges the admissible class of operations and leads to a monotonic increase in achievable fidelity. 
The same ordering holds upon replacing $\mathrm{PostP}$ with $\mathrm{PreP}$.
\end{thm}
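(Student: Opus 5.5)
The plan is to use Lemma~\ref{lem:MT_Fundamental_Limits_Performance}. Every $F_{\mP}$ in the hierarchy is the maximum of the same linear objective $\Tr[J^{\theta}\cdot(\Psi_{AD}\otimes\Omega_{BC})]$ over the set of superchannels obeying the common constraints plus the extra constraint $\mP$. Each inequality therefore follows from an inclusion of feasible sets $\mathcal{C}_{\mP_1}\subseteq\mathcal{C}_{\mP_2}$. Once an inclusion is established, the maximizer for $\mP_1$ is feasible for $\mP_2$, so $F_{\mP_1}\leqslant F_{\mP_2}$. The whole proof thus reduces to exhibiting, for each adjacent pair, how a protocol of the smaller class can be realized within the larger class.

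\textbf{First chain.} PostP $\subseteq$ UA: take $\theta^{\mathrm{Pre}}=\id$ and trivial memory. PreP $\subseteq$ UA: take $\theta^{\mathrm{Post}}=\id$. UA $\subseteq$ EA: a UA protocol is an EA protocol in which the shared entangled state is replaced by a product state (or simply ignored). EA $\subseteq$ NS: here I must check both non-signalling conditions on $J^{\theta}$.
\begin{itemize}
\item With pre-processing $\mathcal{A}\otimes\id$ acting on $\phi_{RR'}$ and post-processing $\mathcal{B}$ acting on $R'$, the Choi operator factorizes as $J^{\theta}=\Tr_{RR'}[(J^{\mathcal{A}}_{ARB}\otimes J^{\mathcal{B}}_{CR'D})(\phi^{\T}_{RR'}\otimes\1)]$, up to the usual transpositions.
\item Tracing $D$ uses trace preservation of $\mathcal{B}$, giving $\1_{CR'}$. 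This leaves $\Tr_{R'}[\phi]$ contracted with $J^{\mathcal{A}}$, i.e.\ an operator on $AB$ tensored with $\1_C$. With the normalization of Lemma~\ref{lem:MT_Fundamental_Limits_Performance}, this is exactly $J^{\theta}_{AB}\otimes\1_C/d_C$.
\item Symmetrically, tracing $B$ gives $J^{\theta}_{CD}\otimes\1_A/d_A$.
\end{itemize}

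\textbf{Second chain.} UA $\subseteq$ FCA, because trivial (constant) memory is classical. FCA $\subseteq$ FHA, because a classical memory channel $\Delta\circ\theta^{\mathrm{Mem}}\circ\Delta$ is entanglement breaking. Its Choi operator is therefore separable, hence PPT. FHA $\subseteq$ PPT: I will show $(J^{\theta})^{\T_{CD}}\geqslant0$ when the memory Choi operator $J^{\mathrm{Mem}}_{MM'}$ is PPT. Write $J^{\theta}$ as the link product of $J^{\mathrm{Pre}}_{ABM}$, $J^{\mathrm{Mem}}_{MM'}$ and $J^{\mathrm{Post}}_{CM'D}$. Under partial transposition on $CD$, transposing the post-processing factor on its $CD$ legs is equivalent, via the link-product identity $\Tr_{M'}[X^{\T_{M'}}Y]=\Tr_{M'}[X\,Y^{\T_{M'}}]$, to transposing it on $M'$ together with $CD$. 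That full transpose is still positive. The remaining factor is $(J^{\mathrm{Mem}})^{\T_{M'}}\geqslant0$ by the Horodecki property. The link product of positive operators is positive, so $(J^{\theta})^{\T_{CD}}\geqslant0$. Replacing PostP by PreP in both chains uses only the PreP $\subseteq$ UA step above.

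\textbf{Main obstacle.} I expect the hardest part to be the bookkeeping of transpositions in the link product for FHA $\subseteq$ PPT, and the normalization in the EA $\subseteq$ NS check. I will handle both by writing everything in the explicit link-product form before taking partial traces or transposes.
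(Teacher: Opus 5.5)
Your proposal is correct and follows essentially the same route as the paper: the optimum is monotone under nested feasible sets, with $\mathrm{FCA}\subset\mathrm{FHA}$ because classical channels have PPT (indeed diagonal) Choi operators, $\mathrm{EA}\subset\mathrm{NS}$ checked on the marginals of the link-product Choi operator, and $\mathrm{FHA}\subset\mathrm{PPT}$ obtained by moving the transpose on the memory output onto the post-processing Choi operator and invoking the Horodecki property of $J^{\mathrm{Mem}}$. The differences are cosmetic: the paper verifies only the $\Tr_{B}$ condition for $\mathrm{EA}\subset\mathrm{NS}$, since the $\Tr_{D}$ condition is already part of every superchannel's constraints, while you check both; you could also say ``every feasible point'' rather than ``the maximizer'' so that no attainment of the optimum is needed.
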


Theorem~\ref{thm:MT_Hierarchy_Purification_Performance} establishes two inequality chains that characterize the performance of FA purifications, demonstrating the clear operational advantage of the spatiotemporal framework. 
In these chains, the conventional PostP scheme occupies the leftmost position; consequently, any protocol to its right achieves superior performance.
When constrained to PPT, NS, or their intersection (Fig.~\ref{fig:MT_Purification_1}(a7)), the corresponding optimal fidelity $F_{\mP}$ admits a SDP formulation.
Beyond these analytical bounds, numerical results further confirm that even the experimentally accessible UA class (Fig.~\ref{fig:MT_Purification_1}(a4)) surpasses the conventional paradigm in certain noise regimes, attaining higher fidelity with fewer samples.
Full derivations, together with a systematic comparison across protocols, are deferred to the Appendix.


\noindent \textbf{Numerical Experiments}---The primary challenge for numerical analysis lies in the computational complexity of SDP formulations of purification. 
Although these programs, such as Eq.~\eqref{eq:MT_Conventional_Purification}, scale polynomially in matrix size, this size itself grows as $\mO(d^{n})$ for $n$-to-1 purification on local dimension $d$, leading to an exponential bottleneck that renders direct optimization intractable beyond $n \approx 8$. 
We resolve this by exploiting symmetry: 
a representation-theoretic reduction based on Schur-Weyl duality and Clebsch-Gordan recursion decomposes the SDP into low-dimensional blocks, reducing time and memory costs to $\mO(|\mS|n^4)$ and $\mO(n^3)$, and extending the accessible regime to $n\approx 50$.
Within this tractable formulation, numerical experiments demonstrate two operational advantages of the spatiotemporal framework:
achievable performance is systematically elevated, forward-assisted protocols across UA, PPT $\cap$ NS, PPT, and NS classes surpass conventional limits, and these gains are accompanied by a genuine improvement in efficiency, reaching equal or higher fidelities with fewer noisy copies; 
even when restricted to the memoryless UA purification, single-copy implementations are shown to outperform conventional multi-copy strategies in certain regimes.

Spatiotemporal purification yields a performance hierarchy that is reflected in the inequality chains of Thm.~\ref{thm:MT_Hierarchy_Purification_Performance}, and is validated by the data in Fig.~\ref{fig:MT_Purification_1}(b).
Specifically, FA protocols consistently outperform conventional purification, pushing achievable fidelity beyond standard limits.
We demonstrate this in a 2-to-1 purification using BB84 ensemble $\{\ket{0}, \ket{1}, \ket{+}, \ket{-}\}$~\cite{BENNETT20147}, under amplitude damping noise $\mN=\mN_{\mathrm{AD}}(p)$. 
Within the UA class (Fig.~\ref{fig:MT_Purification_1}(a4)), simple local unitary pre-processing already provides a clear improvement over conventional schemes. 
Integrating quantum memory further amplifies the operational advantage, with PPT $\cap$ NS, PPT, and NS purifications offering additional gains. 
The lack of a universal ordering between PPT and NS purifications across all noise regimes reveals a fundamental incomparability of these operational classes.

\begin{figure*}[ht]
    \centering   
    \includegraphics[width=1\textwidth]{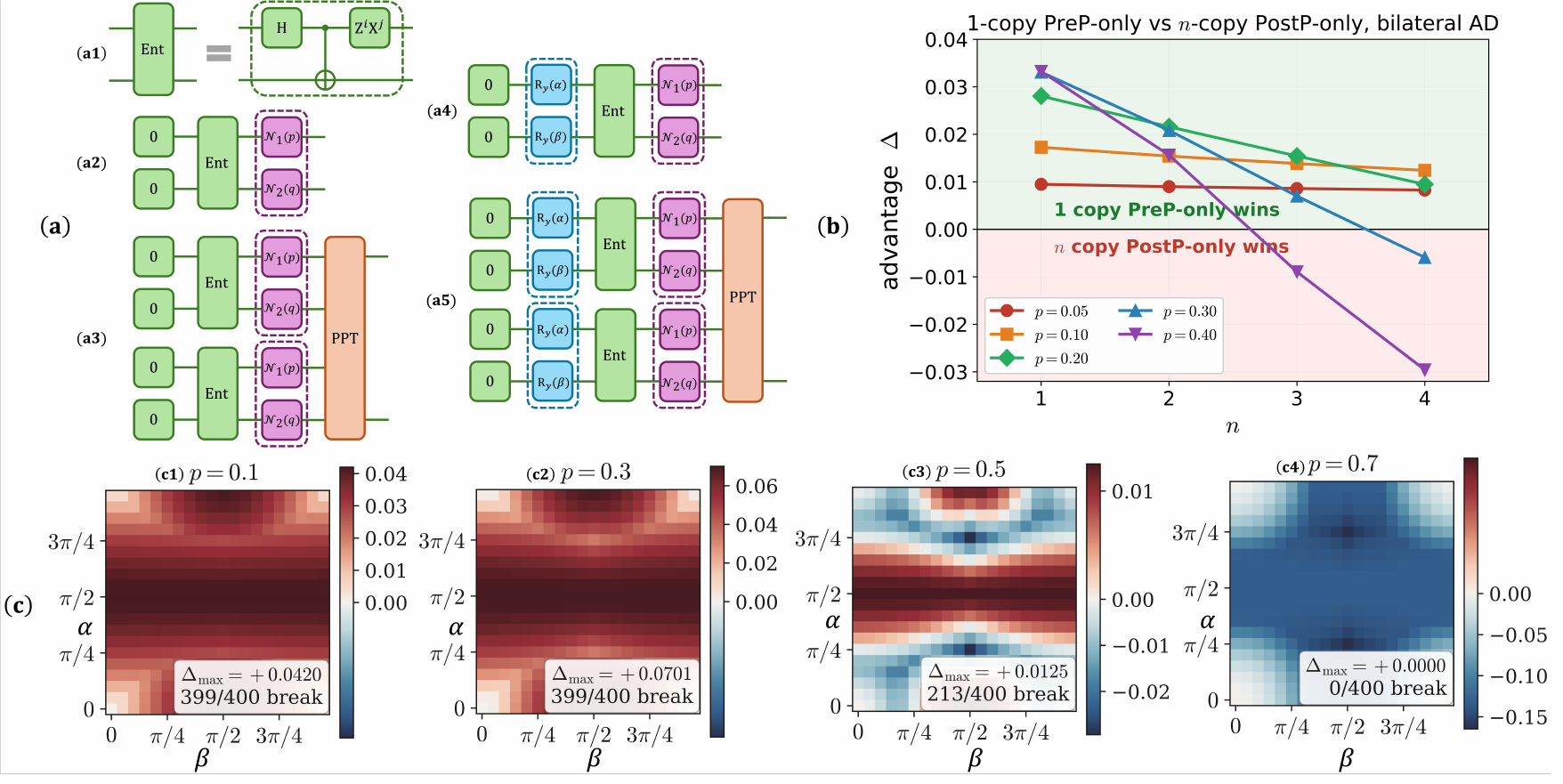}
    \caption{\textbf{Distributed Purifications}.
    (a) Distributed purification architectures: 
    an entangling gate that prepares a maximally entangled state (MES) (a1, Ent); 
    MES generation followed by local noise $\mN_1(p)\otimes\mN_2(q)$ (a2, Benchmark); 
    conventional 2-to-1 purification using PPTp post-processing (a3, PostP); 
    single-copy entanglement purification using local $y$-axis rotations $\mathrm{R}_{y}(\alpha)\otimes\mathrm{R}_{y}(\beta)$ (a4, PreP); 
    and 2-to-1 UA purification that combines the same local pre-processing with PPTp post-processing, without intermediate memory (a5, UA).
    (b) Single-copy advantage. 
    We plot the fidelity gap between the single-copy PreP in (a4) and the multi-copy conventional PostP in (a3), under identical amplitude damping noise, $\mN_1(p)=\mN_2(p)=\mN_{\mathrm{AD}}(p)$. 
    Green regions mark where single-copy PreP outperforms conventional multi-copy purification.
    (c) Distributed Bell states purification. 
    For Bell-state ensemble, recent no-purification results show that PostP cannot increase fidelity under local depolarizing noise, $\mN_{\mathrm{D}}(p)\otimes\mN_{\mathrm{D}}(q)$, even probabilistically.
    Adding the local $y$-axis rotation pre-processing layer in (a5) bypasses this obstruction and enables an efficient purification. 
    We focus on the symmetric noise setting $p=q$ with success probability 0.1. 
    The fidelity gap relative to the conventional purification in (a3) highlights the operational advantage unlocked by spatiotemporal framework.
    }
    \label{fig:MT_Purification_2}
\end{figure*}

Conventional purification is often limited in practice by the large number of noisy samples it requires, raising a natural question: 
can the spatiotemporal framework reduce this cost?
The answer is affirmative. 
Across a broad range of noise parameters and input ensembles, single-copy UA protocols already outperform multi-copy conventional approaches, as illustrated in Fig.~\ref{fig:MT_Purification_1}(c).
For ensembles $\mS\coloneqq\{\sqrt{x}\ket{0}+\sqrt{1-x}\ket{1}\}_{x}$ containing two to five distinct states, single-copy UA purification with local $y$-axis rotations $\mathrm{R}_{y}(\alpha)$ as PreP surpasses conventional PostP supplied with 5 noisy copies (Fig.~\ref{fig:MT_Purification_1}(c1)).
In certain regimes, this advantage persists even when the conventional scheme is given up to 50 noisy copies (Fig.~\ref{fig:MT_Purification_1}(c2)).

Figure~\ref{fig:MT_Purification_1}(d) further tracks this separation as the sample number $n$ increases, showing the fidelity gap between single-copy UA purification and $n$-copy conventional purification under amplitude damping noise with $p=0.05$.
Thus, particularly at small $p$, a simple pre-processing layer can replace tens of noisy copies in conventional purification, revealing a substantially more sample-efficient route to quantum information recovery.
Additional numerical results and technical details are provided in the Appendix, including extrapolations suggesting that conventional purification may require on the order of $10^3$ noisy copies to match the fidelity achieved by single-copy UA protocols, highlighting its practical advantage.


\noindent \textbf{Distributed Purification}---Large-scale quantum systems are intrinsically fragile, making monolithic architectures impractical. 
A more feasible approach is to interconnect smaller quantum processors and distribute tasks across them, establishing the distributed paradigm as the natural framework for quantum information processing under realistic conditions. 
In this setting, purification is performed between remote nodes using only local operations and classical communication (LOCC). 
Protocols defined under these constraints are known as distributed purification, with entanglement distillation as a representative example.
Extending the spatiotemporal framework to this setting reveals advantages over conventional approaches: 
forward-assisted protocols achieve performances beyond the limits of post-processing-only architectures, the inclusion of pre-processing improves efficiency by reducing the number of noisy samples required, and, most importantly, the framework reshapes the underlying physical constraints, enabling purification in regimes that remain inaccessible within conventional formulations.


Analogous to Thm.~\ref{thm:MT_Hierarchy_Purification_Performance}, a corresponding performance hierarchy emerges for distributed purification. 
We defer the formal analysis to the Appendix and focus here on the most experimentally accessible regime: protocols based solely on PreP. 
To isolate the performance gain, we consider a $n$-to-1 purification targeting maximally entangled states under independent amplitude damping noise $\mN_{\mathrm{AD}}(p)\otimes\mN_{\mathrm{AD}}(q)$. 
In the conventional approach, purification is applied after the noise via LOCC; 
to avoid its complexity in both characterization and optimization, we instead relax post-processing to PPTp operations (Fig.~\ref{fig:MT_Purification_2}(a3)). 
From a dynamical-resource-theoretic perspective~\cite{bauml2019resourcetheoryentanglementbipartite,PhysRevLett.125.180505,xing2023fundamentallimitationscommunicationquantum,glc7-xy8t}, the noise can be viewed as a process that degrades an ideal entangling operation into a noisy entangling gate, which naturally motivates the introduction of pre-processing prior to the noisy gate (Fig.~\ref{fig:MT_Purification_2}(a4)). 
Restricting to local $y$-axis rotations $\mathrm{R}_{y}(\alpha)\otimes\mathrm{R}_{y}(\beta)$, even a single-copy protocol surpasses conventional schemes that require multiple copies (Fig.~\ref{fig:MT_Purification_2}(b)), achieving both higher fidelity and reduced sample consumption.
Additional results, including systematic multi-copy comparisons, are provided in the Appendix.

The no-go theorem for distributed purification is revisited to assess whether it reflects a fundamental limitation or can be overcome within the spatiotemporal framework. 
To this end, we recall the no-purification theorem for Bell states in the conventional setting~\cite{3bb1-pmtp}, which states that

\begin{lem}[\textbf{No-Purification for Bell States~\cite{3bb1-pmtp}}]
\label{lem:MT_No_Purification_Bell}
    For Bell states subjected to local depolarizing noise $\mN_{\mathrm{D}}(p)\otimes\mN_{\mathrm{D}}(q)$, no efficient 2-to-1 purification protocol exists under $\mathrm{PPTp}$ post-processing, even when an arbitrary success probability is allowed.
\end{lem}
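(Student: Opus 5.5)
The plan is to bound the fidelity of any probabilistic PPTp post-processing by the single-copy benchmark. Write the two noisy copies as $\rho_i^{\otimes 2}$, where $\rho_i=(\mN_{\mathrm{D}}(p)\otimes\mN_{\mathrm{D}}(q))(\Phi_i)$ and $\Phi_i$ ranges over the four Bell states. Local depolarizing noise on one Bell pair is equivalent to global depolarization with parameter $r=(1-p)(1-q)$, so
\[
\rho_i = r\,\Phi_i + (1-r)\,\1/4 .
\]
Each noisy state is therefore Bell-diagonal with the same spectrum, only permuted. The benchmark fidelity is $f_0=r+(1-r)/4$.

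A probabilistic protocol is a CP, trace-non-increasing map $\mE$ that is PPT-preserving across the $A_1A_2|B_1B_2$ cut. Its figure of merit is the conditional average fidelity
\[
\frac{\sum_i \Tr[\mE(\rho_i^{\otimes 2})\Phi_i]}{\sum_i \Tr[\mE(\rho_i^{\otimes 2})]}.
\]
The goal is to show this ratio is at most $f_0$.

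\textbf{Step 1 (twirling).} The ensemble is invariant under the Pauli group acting as $(\sigma\otimes\sigma^{*})$. So I will twirl $\mE$ jointly: apply $U^{\otimes 2}$ on the inputs and $U^{\dagger}$ on the output, with $U$ running over local Paulis, plus the permutation swapping the two copies. This leaves both the numerator and the denominator unchanged and preserves the PPTp property. Without loss of generality, the Choi operator $J^{\mE}$ then lies in the commutant of this group, which is spanned by products of Bell projectors on the three pairs. That reduces $J^{\mE}$ to a vector of $4\times 4\times 4$ nonnegative coefficients $c_{jkl}$, symmetrized in $j,k$.

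\textbf{Step 2 (linear program).} Express the numerator and denominator linearly in $c$. Express the PPT condition on $J^{\mE}$ as linear inequalities, using the fact that the partial transpose of a Bell projector is a known linear combination of Bell projectors (the $4\times 4$ matrix $\tfrac12(\1-2\,\text{flip})$-type relation). Then fix the success probability, i.e.\ the denominator, to $1$ by homogeneity (Charnes–Cooper). This turns the problem into an LP in $c$.

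\textbf{Step 3 (dual certificate).} Exhibit a feasible dual solution certifying that the optimum is at most $f_0$. Equivalently, I need to show
\[
\sum_i \Tr\!\left[J\left((\rho_i^{\T})^{\otimes 2}\otimes(\Phi_i-f_0\1)\right)\right]\le 0
\]
for all PPT, positive $J$ in the twirled class. A natural route is to find an operator $W$ such that
\[
\sum_i(\rho_i^{\T})^{\otimes 2}\otimes(\Phi_i-f_0\1)=-P-Q^{\T_B},
\]
with $P,Q\geqslant 0$. The candidate uses the fact that $\sum_i \rho_i^{\otimes 2}\otimes \Phi_i$ has a PPT-witness structure.

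\textbf{Main obstacle.} I expect Step 3 to be the hard part: finding this decomposition analytically, uniformly in $r$. The hope is that the partial transpose maps Bell-sum structures to swap-like operators, whose spectra are computable. If a clean analytic certificate fails, I would instead solve the small LP symbolically in $r$ and verify that the optimum equals $f_0$ for all $r\in[0,1]$. Note also that the lemma is cited from \cite{3bb1-pmtp}, so ultimately it can be invoked directly from there.
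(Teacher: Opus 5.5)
The paper gives no proof of this lemma. It is imported from Ref.~\cite{3bb1-pmtp}, in the PPT, arbitrary-success-probability form of its Supplementary Material~G, so your closing remark is exactly the paper's route. As a self-contained argument, your framework is the right one. Set $X\coloneqq\sum_i\rho_i^{\otimes 2}\otimes(\Phi_i-f_0\1)$; all operators are real, so the transposes are immaterial. Then $\sum_i\Tr[\mE(\rho_i^{\otimes 2})\Phi_i]-f_0\sum_i\Tr[\mE(\rho_i^{\otimes 2})]=\Tr[J^{\mE}X]$. A decomposition $-X=P+Q^{\T_B}$ with $P,Q\geqslant 0$ settles every success probability at once, because the inequality is homogeneous. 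However, the proposal stops short of a proof: Step~3 is the entire content and is left open.

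Step~1 is also misstated. Twirling with one and the same bilateral Pauli on all three pairs does not push $J^{\mE}$ into the span of products of Bell projectors. The commutant still contains the off-diagonal operators $|\Phi_a\Phi_b\Phi_c\rangle\langle\Phi_{a'}\Phi_{b'}\Phi_{c'}|$ with $a\oplus b\oplus c=a'\oplus b'\oplus c'$. You would need independent bilateral twirls on each pair; these are legitimate, since each $\rho_i$ and each $\Phi_i$ is individually invariant under $\sigma\otimes\sigma^{*}$.

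In fact no twirl is needed at all. Label the Bell states by $a\in\mathbb{Z}_2^2$ via $\Phi_a=(\mathrm{X}^{a_1}\mathrm{Z}^{a_2}\otimes\1)\Phi^{+}(\mathrm{X}^{a_1}\mathrm{Z}^{a_2}\otimes\1)^{\dagger}$. Then $X$ is already diagonal in the product Bell basis of $(A_1B_1)(A_2B_2)(A_{\mathrm{o}}B_{\mathrm{o}})$. The partial transpose preserves this diagonal algebra: $\Phi_a^{\T_B}=\tfrac12\sum_b s_{a\oplus b}\Phi_b$, with $s=-1$ only on the singlet. So you may look for diagonal $P,Q$. (Minor: with this paper's convention $\mN_{\mathrm{D}}(p)(\rho)=p\rho+(1-p)\1/2$, one has $r=pq$ rather than $(1-p)(1-q)$.)

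Such a certificate exists and is explicit. Put $e\coloneqq(1-r)/4$, so $f_0=r+e=1-3e$. Let $\Pi_{\neq}$ project onto the product Bell states $\Phi_a\otimes\Phi_b\otimes\Phi_c$ with $a,b,c$ pairwise distinct, and $\Pi_{=}$ onto those with $a=b\neq c$. The diagonal entries of $-X$ are:
$-3ef_0r$ if $a=b=c$;
$ef_0r$ if exactly one of $a,b$ equals $c$;
$r(1-5e+9e^2)$ if $a=b\neq c$;
$er(2-3e)$ if $a,b,c$ are pairwise distinct.
On the other hand, $\Pi_{\neq}^{\T_B}$ is diagonal with entries $\tfrac12-2\,\delta_{a=b=c}$. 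Hence
\[
-X=\underbrace{r\,(1-6e+12e^{2})\,\Pi_{=}+er\,\Pi_{\neq}}_{P\,\geqslant\,0}+\bigl(\underbrace{2ef_0r\,\Pi_{\neq}}_{Q\,\geqslant\,0}\bigr)^{\T_B},
\]
where $1-6e+12e^{2}>0$ because its discriminant is negative. Therefore $\Tr[J^{\mE}X]=-\Tr[J^{\mE}P]-\Tr[(J^{\mE})^{\T_B}Q]\leqslant 0$ for every CP, PPT-preserving $\mE$. This holds uniformly in $r\in[0,1]$ and in the success probability. The trivial keep-one-copy protocol gives $\Tr[J^{\mE}P]=\Tr[(J^{\mE})^{\T_B}Q]=0$, so the bound $f_0$ is tight. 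With this identity your Step~3 is closed, and Steps~1--2 become unnecessary.
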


Restricting to the symmetric regime $p=q$ simplifies the analysis while retaining the essential features.
A pre-processing stage is introduced, consisting of local $y$-axis rotations $\mathrm{R}_{y}(\alpha)\otimes\mathrm{R}_{y}(\beta)$ applied prior to the noisy entangling operation. 
When combined with probabilistic PPTp post-processing, this sequence defines a UA purification (Fig.~\ref{fig:MT_Purification_2}(a5)). 
Varying the pre-processing parameters $\alpha$ and $\beta$ enables a direct comparison with the no-purification baseline (Fig.~\ref{fig:MT_Purification_2}(a2)) in terms of average fidelity. 
Figure~\ref{fig:MT_Purification_2}(c) shows a clear advantage at success probability 0.1, demonstrating that genuine purification, forbidden in the conventional setting, becomes attainable once spatiotemporal structure is incorporated.
This leads directly to the following theorem.

\begin{thm}[\textbf{Efficient FA Purification for Bell States}]
\label{thm:MT_Efficient_Purification_Bell}
Bell states subjected to local depolarizing noise $\mN_{\mathrm{D}}(p)\otimes\mN_{\mathrm{D}}(p)$ can be efficiently purified in the 2-to-1 setting when forward-assisted strategies are employed, as demonstrated by UA purifications in Fig.~\ref{fig:MT_Purification_2}(a5).
\end{thm}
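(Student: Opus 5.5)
Since the statement is existential, the plan is to exhibit one explicit UA protocol of the form in Fig.~\ref{fig:MT_Purification_2}(a5) and show that its average fidelity, conditioned on success, strictly exceeds the no-purification baseline of Fig.~\ref{fig:MT_Purification_2}(a2) for some range of $p$. For the symmetric noise $\mN_{\mathrm{D}}(p)\otimes\mN_{\mathrm{D}}(p)$ and any Bell state $\phi$, that baseline is
\[
F_{0}(p)=(1-p)^{2}+\tfrac{1}{4}\bigl(1-(1-p)^{2}\bigr),
\]
since two-sided local depolarizing maps a Bell state to an isotropic state of visibility $(1-p)^{2}$. The point to exploit is why Lemma~\ref{lem:MT_No_Purification_Bell} holds. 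Two identical noisy copies $\rho_{\phi}^{\otimes 2}$, averaged over the Bell ensemble, carry no usable correlation for PPTp post-processing. A pre-processing layer applied \emph{before} the noisy entangling gate, however, can make the two copies differ in a way known to the post-processor. Since $\mN_{\mathrm{D}}$ is unitarily covariant, a local $\mathrm{R}_{y}(\alpha)\otimes\mathrm{R}_{y}(\beta)$ on the second copy commutes through the noise. Hence $\Omega$ acting on the pre-processed inputs yields $\rho_{\phi}\otimes(U_{\alpha\beta}\rho_{\phi}U_{\alpha\beta}^{\dagger})$. These joint input states are not of the form excluded by Lemma~\ref{lem:MT_No_Purification_Bell}.

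Concretely, I would fix the pre-processing on the second copy and write the probabilistic PPTp post-processing as an SDP. Its variables are $J^{\mathrm{succ}},J^{\mathrm{fail}}\geqslant0$ with PPT partial transposes across the Alice/Bob cut, $J^{\mathrm{succ}}+J^{\mathrm{fail}}$ the Choi operator of a channel, and the success probability $\Tr[J^{\mathrm{succ}}\cdot\rho^{\T}]$ fixed to $0.1$, averaged over the ensemble. The objective is the conditional average fidelity. This is the PostP SDP of Eq.~\eqref{eq:MT_Conventional_Purification}, specialised to the PPTp/distributed setting of Lemma~\ref{lem:MT_Fundamental_Limits_Performance}, evaluated on the modified input states. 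To make the certificate analytic, I would twirl with the symmetries shared by the ensemble and the noise: local Pauli conjugations $\sigma\otimes\sigma^{*}$ on each copy, together with the residual symmetry of the chosen rotation. This reduces $J^{\mathrm{succ}}$ to a handful of real parameters, each weighting a projector onto a product of Bell-basis components, and turns the problem into a small LP with PPT conditions as linear inequalities.

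The key step is then to exhibit one explicit feasible point for some $p$ (and $\alpha,\beta$), guided by the numerics in Fig.~\ref{fig:MT_Purification_2}(c), for which the objective strictly exceeds $F_{0}(p)$. By continuity, the advantage then persists on an open interval of $p$. Verifying feasibility amounts to checking that finitely many eigenvalues of a block-diagonal operator and its partial transpose are nonnegative; the rest is bookkeeping.

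The main obstacle is finding this certificate in closed form. The twirled parameters may still couple awkwardly once $\alpha,\beta$ are generic, and the PPT inequalities need not have a clean solution. My first attempt would be $\alpha,\beta$ equal to multiples of $\pi/2$: then the second copy is mapped to a \emph{different} Bell state in a known way, the input becomes Bell-diagonal, and the post-processing resembles a recurrence-type parity check with post-selection. If no strict gain appears there, I would extract the optimal numerical primal solution at a rational point and round it to exact rationals. Strict improvement over $F_{0}$ would then be verified by exact arithmetic, with strict feasibility of the numerical solution leaving room for the rounding.
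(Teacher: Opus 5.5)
The key step of your plan cannot succeed: with the pre-processing you describe, no feasible point beats the baseline. You model the layer as a fixed local unitary $U_{\alpha\beta}=\mathrm{R}_{y}(\alpha)\otimes\mathrm{R}_{y}(\beta)$ that, by unitary covariance of $\mN_{\mathrm{D}}$, simply reappears on the noisy output. The post-processor therefore receives $\rho_\phi\otimes U_{\alpha\beta}\rho_\phi U_{\alpha\beta}^{\dagger}=(\id\otimes\mU_{\alpha\beta})(\rho_\phi^{\otimes 2})$. But $U_{\alpha\beta}$ is known and local across the Alice/Bob cut. Hence $\mE\mapsto\mE\circ(\id\otimes\mU_{\alpha\beta})$ is a bijection of the (probabilistic) PPTp post-processings that preserves success probabilities and objective values. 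Your SDP therefore has exactly the same optimum as the conventional one on $\rho_\phi^{\otimes 2}$, and Lemma~\ref{lem:MT_No_Purification_Bell} pins that optimum to the baseline.

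This is the paper's own observation that local unital pre-processing placed between the Bell state and the depolarizing noise can always be absorbed into the post-processing (Thm.~\ref{thm:Ineffectiveness_LUnital_PreP}). That theorem is stated for LOCC, but the absorption argument is identical for PPTp. Your fallback with $\alpha,\beta$ multiples of $\pi/2$ fails for the same reason: a known local relabelling of the second copy is undone locally, and a recurrence-type parity check is LOCC, hence already excluded. Making the two copies `differ in a known way' is not a resource. There is also an internal tension in your setup. You say the layer acts before the noisy entangling gate, yet you let it commute through to a local unitary on $\rho_\phi$. That is only true if it acts after the ideal gate, directly on the Bell state, which is exactly the useless placement.

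The missing idea is where the pre-processing sits. In the paper (Eq.~\eqref{eq:psi_st}), the rotation $\mathrm{R}_{y}(s)\otimes\mathrm{R}_{y}(t)$ acts on the fixed input $\ket{00}$ \emph{before} the ideal entangling gate $\mathrm{CNOT}\circ(\mathrm{H}\otimes\id)$. The Bell label is imprinted afterwards by $\mathrm{Z}^{i}\mathrm{X}^{j}\otimes\id$, the same rotation is used on both copies, and the targets remain the Bell states $\psi(i,j)$. Conjugated by the CNOT, the rotation is no longer a local unitary on the Bell states. The prepared states are generically non-maximally entangled (product for $s=\pi/2$), so the noisy ensemble is not the one covered by the no-go lemma and cannot be mapped back to it by free operations. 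The paper attributes the gain to this breaking of the Bell-ensemble symmetry, which PPTp post-processing then exploits (Tab.~\ref{tab:Nogo_Protocols_Random}).

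The theorem itself is established numerically. The PPT post-processing SDP is solved over a grid of $(s,t)$, and $\Delta=B_c-B_a>0$ is exhibited for several $p$, both deterministically and at success probability $0.1$. Your certification pipeline (symmetry reduction, a rounded rational primal point, exact verification) would be a good way to upgrade those numerics to a rigorous proof, but only once applied to this pre-entangling-gate protocol. Finally, with the paper's Kraus operators $\mN_{\mathrm{D}}(p)(\rho)=p\rho+(1-p)\1/2$. The baseline is therefore $(1+3p^{2})/4$, not $(1-p)^{2}+\tfrac14\bigl(1-(1-p)^{2}\bigr)$.
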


Physical insight into Thm.~\ref{thm:MT_Efficient_Purification_Bell} rests on two key aspects. 
First, the pre-processing is resource-theoretically free~\cite{RevModPhys.91.025001}: 
it consists only of local rotations and therefore introduces no additional resource relative to conventional purification, while nevertheless achieving a strict performance gain. 
Second, its function is not to directly enhance fidelity, but to alter the symmetry inherent to Bell states. 
Breaking these symmetries exposes otherwise inaccessible operational degrees of freedom that subsequent post-processing can exploit. 
As a result, efficient purification becomes achievable in the deterministic regime under the spatiotemporal framework.
The same mechanism also applies to deterministic purification: 
the introduced pre-processing layer continues to bypass the no-purification theorem, with further details provided in the Appendix.


\noindent \textbf{Discussions}---Dealing with noise remains a central challenge in quantum information processing and a key step toward fault-tolerant quantum computing. 
When multiple noisy copies of a state are available, purification protocols seek to recover a state that more closely approximates the original input prior to the action of noise. 
Conventional approaches, however, adopt a static viewpoint, treating noisy states as resources and applying purification only after the noise has acted. 
Here, we develop instead a dynamical perspective, constructing a spatiotemporal framework that captures the inherently dynamical nature of noise, giving rise to forward-assisted purification. 
This framework extends achievable performance beyond conventional limits, enables comparable, or even superior, fidelity to be attained with fewer noisy copies, and renders efficient purification feasible in regimes inaccessible to conventional approaches.
More fundamentally, these results indicate that the commonly perceived limitations of purification do not reflect fundamental constraints of quantum mechanics, but rather stem from the static perspective imposed on the problem. 
Once the spatiotemporal framework is incorporated, new operational pathways emerge, redefining the limits of quantum purification.

Beyond the results established here, several directions remain open and merit further investigation.
Benchmarking against conventional protocols that operate on multiple noisy copies is achieved through algorithms grounded in representation-theoretic methods, which enable substantial reductions of the underlying SDPs in the global setting; however, these techniques do not transfer directly to distributed scenarios, where the analysis therefore remains confined to the few-copy regime, and extending them will require more refined optimization together with symmetry-adapted constructions. 
At a conceptual level, although purification and quantum error correction both confront noise, they are separated by a fundamental structural distinction: quantum error correction is intrinsically built around encoding, whereas such pre-processing is absent in conventional purification. 
This asymmetry not only motivates the systematic incorporation of pre-processing into purification but also reveals a unifying perspective in which quantum error correction emerges naturally as a limiting case of our spatiotemporal framework. 
Viewed from this standpoint, the framework does more than reconcile two established paradigms; it delineates a broader operational landscape in which spatiotemporal structure plays a central role, pointing toward new forms of quantum error corrections, particularly dynamical codes, that remain to be systematically developed.


\noindent \textbf{Algorithmic Framework}---This section gives a compact account of the algorithmic ideas that make the purification semidefinite programming computationally accessible beyond the few-copy regime. 
By exploiting the representation-theoretic structure of the problem, our approach replaces the direct formulation, already limited to roughly eight copies, with a matrix dimension $2^8$, by a symmetry-adapted formulation that reaches fifty copies, corresponding to a dimension of $2^{50}$. 
The workflow is sketched in Fig.~\ref{fig:MT_Complexity}.
Full technical details, together with the general framework of forward-assisted purification, are given in the Supplementary Information.

\begin{figure}[t]
    \centering
    \includegraphics[width=\linewidth]{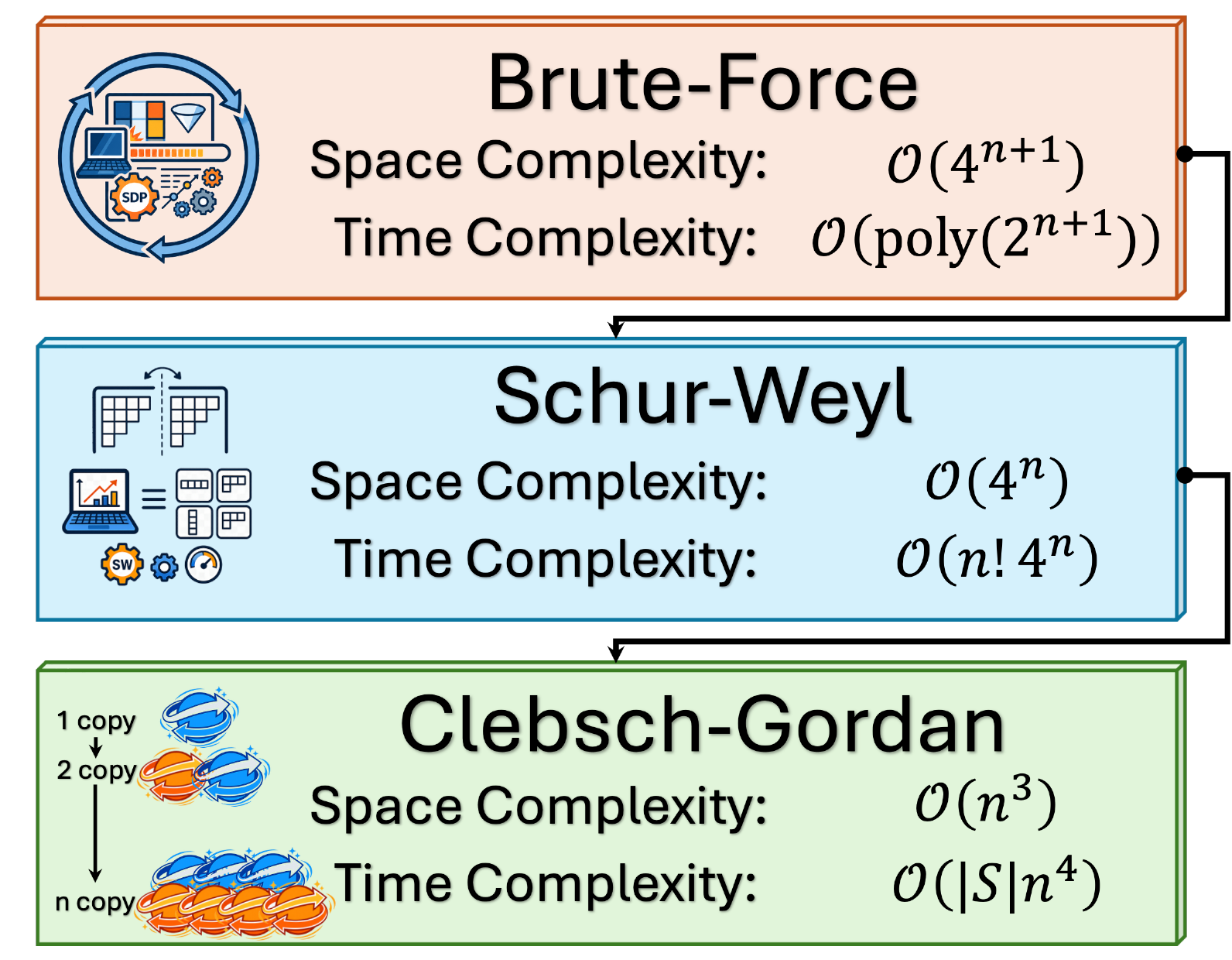}
    \caption{\textbf{Reducing Computational Complexity through Symmetry.} 
    A direct SDP formulation of conventional $n$-to-1 purification (see Eq.~\eqref{eq:Method_Conventional_Purification}) incurs exponential memory and time costs, making brute-force computation rapidly intractable as the number of input copies grows (orange). 
    The first layer of our algorithm exploits Schur-Weyl duality to resolve the permutation symmetry and decompose the problem into smaller representation-theoretic blocks (see Eq.~\eqref{eq:Method_SW_State}), reducing the memory footprint but still requiring an expensive block-construction step (blue). 
    The second layer removes this bottleneck by introducing a Clebsch-Gordan recursion, which constructs the blocks iteratively (see Eq.~\eqref{eq:Method_CG_Recursion}) from one copy to $n$ copies without enumerating the symmetric group or forming the full $2^n$-dimensional operators (green). 
    This two-layer representation-theoretic algorithm yields an exponential acceleration of the block construction and pushes the conventional purification benchmark to approximately 50 copies.}
    \label{fig:MT_Complexity}
\end{figure}


\noindent \textbf{Fundamental Limits of Conventional Purification.} 
We benchmark conventional $n$-to-1 purification, in which recovery is restricted to post-processing (see Fig.~\ref{fig:MT_Sketch}), by optimizing over the Choi operator $J^{\mE}_{\mathrm{i}\mathrm{o}}$ of a completely positive trace-preserving (CPTP) map $\mE$ acting on $n$ noisy input copies. 
For an input ensemble $\mS$ and a noise channel $\mN$, the optimization maximizes the average fidelity (see Eq.~\eqref{eq:MT_Conventional_Purification})
\begin{align}\label{eq:Method_Conventional_Purification}
    F_{\ast}=
    \max \quad 
    & 
    \Tr[J^{\mE}_{\mathrm{i}\mathrm{o}}\cdot 
    \left(\frac{1}{|\mS|}\sum_{\psi\in\mS}\left(\mN(\psi)^{\T}\right)^{\otimes n}_{\mathrm{i}}
    \otimes\psi_{\mathrm{o}}\right)
    ] \nonumber
    \\
    \text{s.t.} \quad 
    &J^{\mE}_{\mathrm{i}\mathrm{o}}\geqslant0,\,\, \Tr_{\mathrm{o}}[J^{\theta^{\mathrm{Post}}}_{\mathrm{i}\mathrm{o}}]=\1_{\mathrm{i}}.    
\end{align}
between the purified output $\mE(\mN(\psi)^{\otimes n})$ on system $\mathrm{o}$ and the corresponding ideal target state $\psi$, where $\mathrm{i}$ denotes the $n$-copy input system. 
A direct solution is memory-intensive, because the Choi operator acts on a Hilbert space of dimension $d^{n+1}$, with $d$ the single-copy Hilbert-space dimension. 
To access the many-copy regime, we exploit the permutation symmetry of $\mN(\psi)^{\otimes n}$.


\noindent \textbf{Schur-Weyl Duality.}
Since the $n$ input copies are prepared independently and identically, operator $\mN(\psi)^{\otimes n}$ commutes with the action of the symmetric group $\mathfrak{S}_{n}$.
We exploit this symmetry through Schur-Weyl duality, which gives
\begin{align}\label{eq:Method_SW}
    (\mathbb{C}^{d})^{\otimes n}
    \simeq\bigoplus_{\lambda\vdash n,\,\ell(\lambda)\leqslant d}
    V_{\lambda}\otimes W_{\lambda}.
\end{align}
Here $\lambda$ labels a Young diagram with at most $d$ rows, $V_{\lambda}$ is an irreducible representation of $\mathrm{SU}(d)$, and $W_{\lambda}$ is an irreducible representation of $\mathfrak{S}_{n}$. 

It follows that any permutation-invariant operator, including the tensor power of each noisy input state, decomposes as
\begin{align}\label{eq:Method_SW_State}
    \left(\mN(\psi)^{\T}\right)^{\otimes n} = \bigoplus_{\lambda} \left(\mN(\psi)^{\T}\right)^{(n)}_{\lambda}\otimes \1_{W_{\lambda}},
\end{align}
where $\left(\mN(\psi)^{\T}\right)^{(n)}_{\lambda}$ acts on the $\mathrm{SU}(d)$ irreducible sector $V_{\lambda}$, and $\1_{W_{\lambda}}$ acts on the corresponding multiplicity space $W_{\lambda}$.
The same symmetry can be imposed on the Choi operator $J^{\mE}$ without loss of optimality: averaging any feasible Choi operator over permutations preserves complete positivity, trace preservation, and the objective value. 

The original semidefinite programming of $F_{\ast}$ therefore decomposes into independent symmetry sectors,
\begin{align}
    F_{\lambda}=
    \max \quad 
    & 
    \Tr\left[J_{\lambda}\Xi_{\lambda}\right] \nonumber
    \\
    \text{s.t.} \quad 
    &J_{\lambda}\geqslant0,\,\, \Tr_{\mathrm{o}}[J_{\lambda}]=\1_{V_{\lambda}},    
\end{align}
with sector objective
\begin{align}
    \Xi_{\lambda} 
    \coloneqq 
    \frac{1}{|\mathcal{S}|} \sum_{\psi\in\mS} \left(\mN(\psi)^{\T}\right)^{(n)}_{\lambda} \otimes 
    \psi_{\mathrm{o}}.
\end{align}
The post-processing fidelity is then recovered by summing over sectors with the appropriate symmetric-group multiplicities,
\begin{align}
    F_{\ast} = \sum_{\lambda} d_{\lambda}F_{\lambda},
\end{align}
with $d_{\lambda}\coloneqq\dim W_{\lambda}$.
Such a symmetry-adapted formulation replaces a semidefinite programming on a space of dimension $d^{n+1}$ by a collection of much smaller block problems, whose largest representation-theoretic sector grows only polynomially, of order $\mO(n^{d-1})$, for fixed local dimension $d$.

This Schur-Weyl reduction is the first layer of our algorithm. 
It reveals the symmetry sectors of the purification problem and removes the redundant multiplicity degrees of freedom carried by the symmetric-group representations. 
The full procedure then works directly within these sectors: instead of constructing exponentially large tensor-power operators and subsequently block diagonalizing them, it builds the required blocks recursively from the representation-theoretic structure itself. 
This sector construction is the key step that avoids the original tensor-product basis and makes the many-copy computation feasible.


\noindent \textbf{Character Formula Construction.}
For a general local dimension $d$, we first constructed the reduced blocks through the projector $\Pi_{\lambda}$ onto each sector,
\begin{align}
    \Pi_{\lambda} = \frac{d_{\lambda}}{n!} \sum_{\pi \in \mathfrak{S}_{n}} \chi_{\lambda}(\pi^{-1})P_{\pi},
\end{align}
where $P_{\pi}$ permutes the $n$ tensor factors and $\chi_{\lambda}$ is the character of the irreducible $\mathfrak{S}_{n}$ representation $W_{\lambda}$.
Diagonalizing $\Pi_{\lambda}$ identifies the $\lambda$-isotypic subspace. 
A canonical copy of $V_{\lambda}$ was then selected using a Young symmetrizer, and the resulting basis was orthonormalized by QR decomposition or singular value decomposition.
This produced an isometry $\Phi_{\lambda}$ into the chosen representation block, from which the reduced inputs $\left(\mN(\psi)^{\T}\right)^{(n)}_{\lambda}$ (see Eq.~\ref{eq:Method_SW_State}) are obtained as
\begin{align}\label{eq:Method_Phi}
    \left(\mN(\psi)^{\T}\right)^{(n)}_{\lambda} = \Phi_{\lambda}^{\dagger} \left(\mN(\psi)^{\T}\right)^{\otimes n} \Phi_{\lambda}.
\end{align}
The block SDPs were solved independently for each $\lambda$, with the final objective obtained by summing the sector contributions weighted by the multiplicities $d_{\lambda}$.
This character formula construction provides a direct symmetry-adapted implementation of the Schur-Weyl duality. 
It removes the redundant symmetric-group degrees of freedom and replaces the original SDP by a collection of smaller sectors. 
Its bottleneck, however, lies in the block construction itself: 
explicitly summing over permutations introduces a factorial construction cost, scaling as $\mO(n!d^{2n})$ in time and $\mO(d^{2n})$ in memory.
Thus, while Schur-Weyl duality reveals the correct reduced structure, the character formula construction remains computationally prohibitive in the many-copy regime. 
This motivates the recursive qubit construction introduced below, which builds the reduced blocks directly in the symmetry-adapted basis and avoids forming the exponentially large tensor product operators.


\noindent \textbf{Clebsch-Gordan Recursion.}
As discussed in the previous section, the first layer of our algorithm, based on Schur-Weyl duality, exposes the symmetry structure of the problem but does not by itself remove the computational bottleneck in many-copy purification: 
a direct character-formula implementation still requires explicit construction of large tensor product operators and has prohibitive scaling. 
We now introduce the second layer of the algorithm, a Clebsch-Gordan recursion specialized to the qubit case. 
The recursion constructs the blocks iteratively, without enumerating permutations or forming the full $2^n$-dimensional operators. 
This recursive algorithm is the core technical ingredient that makes the many-copy SDP computationally accessible.

In particular, for qubit systems, i.e., $d=2$, the $\lambda$-sectors (see Eq.~\ref{eq:Method_SW}) can be labelled equivalently by the total spin $j$. 
We therefore replace the explicit construction of $\Phi_{\lambda}$ (see Eq.~\ref{eq:Method_Phi}) with a Clebsch-Gordan recursion that builds the reduced spin-sector blocks.
Let $R^{(n)}_{j}$ denote the matrix representation of the reduced block $\left(\mN(\psi)^{\T}\right)^{(n)}_{j}$ associated with the spin $j$-sector of $n$-to-1 purification.
The recursion starts from the single qubit case, $n=1$, where the only admissible sector is $j=1/2$.
The corresponding block is given by
\begin{align}
    R^{(1)}_{\tfrac{1}{2}}[m_1,m_2] 
    = 
    \mN(\psi)^{\T}[s_1,s_2],
\end{align}
where $m_1, m_2 \in \{-1/2, +1/2\}$ are the magnetic quantum numbers indexing the rows and columns of the spin $j$-sector block, respectively. 
Here, $\mN(\psi)^{\T}$ is the single qubit input, and $s_1, s_2 \in \{0, 1\}$ are the corresponding row and column indices in the computational basis, such that $\mN(\psi)^{\T}[s_1,s_2] = \langle s_1|\mN(\psi)^{\T}|s_2\rangle$. 
The linear shift $s_k = 1/2 - m_k$ provides a direct mapping from the physical magnetic quantum numbers to the computational basis states.

To integrate the $n$-th qubit into the system, we recursively construct the $n$-qubit coupled block $R^{(n)}_{j'}$ from an admissible parent block $R^{(n-1)}_{j}$. By the standard rules of angular momentum addition, the parent spin must satisfy $j \in \{j'-1/2, j'+1/2\}$. 
Crucially, because the global state $\left(\mN(\psi)^{\T}\right)^{\otimes n}$ is permutation-invariant, Schur's Lemma guarantees that the reduced matrix $R^{(n)}_{j'}$ is perfectly degenerate across the multiplicity space $W_{j'}$. 
Consequently, all copies of the spin-$j'$ multiplet -- regardless of whether they descend from the parent spin $j = j'-1/2$ or $j = j'+1/2$ -- yield the strictly identical matrix $R^{(n)}_{j'}$. We therefore do not sum over the parent spin $j$. Instead, we arbitrarily select any single valid parent $j$ (e.g., $j=j'+1/2$ if $j'=0$, and for other values of $j'$ we can choose either $j = j'-1/2$ or $j = j'+1/2$) and evaluate the recursion.

Incorporating the physical constraint of angular momentum conservation, any new magnetic nmuber $m_k$ can be written from the old magnetic number $m'_k$ and added spin value $s_k$, i.e.,
\begin{align}
    m_k = m'_k - \frac{1}{2} + s_k \, .
\end{align}
the recursion relation for the target matrix element at row $m_1'$ and column $m_2'$, routed through a chosen parent $j$, is given by the recursion relation
\begin{widetext}
\begin{align}\label{eq:Method_CG_Recursion}
    R^{(n)}_{j'}[m_1',m_2'] 
    = 
    \sum_{s_1,s_2\in\{0,1\}} 
    c^{j'\leftarrow j}_{s_1}(m_1') \, c^{j'\leftarrow j}_{s_2}(m_2') \,\mN(\psi)^{\T}[s_1,s_2] 
    \times 
    R^{(n-1)}_{j}\left[m_1'-\frac{1}{2}+s_1, \, m_2'-\frac{1}{2}+s_2\right].
\end{align}
\end{widetext}
Here $c^{j'\leftarrow j}_{s}(m')$ is the Clebsch-Gordan (CG) coefficient for coupling a parent parent spin-$j$ sector with an additional spin-$1/2$ degree of freedom to the total-spin sector $j'$ with magnetic quantum number $m'$. 
These coefficients encode the angular momentum entering the recursive change of basis.
The summation over $s_1$ and $s_2$ essentially traces over the single-qubit degrees of freedom weighted by these CG amplitudes. 
Contributions requiring parent magnetic quantum numbers outside the allowed range
\begin{align}
    \left|m_k' - \frac{1}{2} + s_k\right|> j,
\end{align}
are set to zero, since they lie outside the spin-$j$ representation and therefore do not correspond to admissible angular momentum states.

Although Schur's lemma ensures that any admissible parent $j$-sector gives the same reduced block, boundary sectors restrict the available choices. 
For example, when the target sector is the singlet sector, i.e., $j'=0$, the parent spin is necessarily $j=1/2$. 
To make the recursion well defined for all spin sectors, we therefore record the CG coefficients for both coupling branches, using the standard Condon-Shortley phase convention~\cite{condon1935theory}.
For the branch in which the parent spin is smaller, $j=j'-1/2$, the coefficients are given by
\begin{align}
    c^{j'\leftarrow j}_{0}(m') &= \sqrt{\frac{j'+m'}{2j'}}, 
\end{align}
and
\begin{align}
    c^{j'\leftarrow j}_{1}(m') &= \sqrt{\frac{j'-m'}{2j'}}.
\end{align}
For the complementary branch, in which the parent spin is larger, $j=j'+1/2$, the corresponding coefficients read
\begin{align}
    c^{j'\leftarrow j}_{0}(m') 
    = 
    -\sqrt{\frac{j'-m'+1}{2j'+2}}, 
\end{align}
and
\begin{align}
    c^{j'\leftarrow j}_{1}(m') 
    = 
    \sqrt{\frac{j'+m'+1}{2j'+2}}.
\end{align}

With the recursion specified, the algorithm proceeds directly at the level of spin sectors. 
For each pure input state $\psi \in \mS$, we first compute the corresponding noisy single-copy state $\mN(\psi)$.
The Clebsch-Gordan recursion is then applied iteratively to $\mN(\psi)^{\T}$, generating the full family of $n$-copy spin-sector blocks $\left(\mN(\psi)^{\T}\right)^{(n)}_{j}$.
For each sector $j$, the ensemble-averaged objective operator is constructed by combining these blocks with the corresponding target output states
\begin{align}
    \Xi_j \coloneqq \frac{1}{|\mS|} \sum_{\psi\in\mS} \left(\mN(\psi)^{\T}\right)^{(n)}_{j}
    \otimes 
    \psi_{\mathrm{o}}.
\end{align}

Since the optimization decouples across spin sectors, the SDP characterizing the purification limit for conventional approaches decomposes into independent sectors. 
For each $j$-sector, the optimal contribution is obtained from
\begin{align}
    F_{j}=
    \max \quad 
    & 
    \Tr[J_j\Xi_j] \nonumber
    \\
    \text{s.t.} \quad 
    &J_j\geqslant0,\,\, \Tr_{\mathrm{o}}[J_j]=\1_{V_j},    
\end{align}
where $V_j$ denotes the irreducible representation of $\mathrm{SU}(2)$ associated with total spin $j$.
The fidelity $F_{\ast}$ (see Eq.~\eqref{eq:MT_Conventional_Purification}) is then recovered by summing the sector optima with their multiplicities
\begin{align}
    F_{\ast} = \sum_j d_j F_j,
\end{align}
where $d_j$ is the multiplicity of the spin-$j$ sector.
This completes the second layer of the algorithm, enabling the many-copy purification limit to be evaluated recursively from the spin sectors.


\noindent \textbf{Exponential Speedup.}
The Clebsch-Gordan recursion bypasses the explicit enumeration of the $n!$ elements of $\mathfrak{S}_{n}$ and avoids forming the full $2^n$-dimensional operators.
For each input state, the block construction scales as $\mO(n^4)$ in time and $\mO(n^3)$ in memory, giving a total cost of $\mO(|\mS|n^4)$ for an ensemble $\mS$.
The resulting SDPs decouple across the admissible spin sectors and can therefore be solved independently, allowing a simple parallel implementation. 
This recursive construction is the computational engine behind the many-copy conventional purification benchmarks in this work, making regimes with tens of input copies accessible well beyond the reach of the unreduced SDP formulation.

Although developed here for quantum purification, the approach is more general.
It applies broadly to SDPs in quantum information processing that involve $n$ identically prepared states, where permutation symmetry creates large redundant degrees of freedom. 
By combining symmetry reduction with recursive block construction, the method offers a compact route to many-copy optimization problems, extending the scale at which operational questions in quantum theory can be explored.


\noindent \textbf{Acknowledgments}---This research is supported by A*STAR under its Career Development Fund (C243512002).
Yunlong Xiao thanks Penghui Yao for valuable discussions during his visit to the Quantum Software Lab at the University of Edinburgh.
Fei Meng acknowledges the support from the Quantum Theory Group at the University of Glasgow. 
Jinge Bao is supported by the Quantum Advantage Pathfinder (QAP) project of UKRI Engineering and Physical Sciences Research Council under grant No.~EP/X026167/1.


\bibliography{Bib}

\begin{thebibliography}{96}%
\makeatletter
\providecommand \@ifxundefined [1]{%
 \@ifx{#1\undefined}
}%
\providecommand \@ifnum [1]{%
 \ifnum #1\expandafter \@firstoftwo
 \else \expandafter \@secondoftwo
 \fi
}%
\providecommand \@ifx [1]{%
 \ifx #1\expandafter \@firstoftwo
 \else \expandafter \@secondoftwo
 \fi
}%
\providecommand \natexlab [1]{#1}%
\providecommand \enquote  [1]{``#1''}%
\providecommand \bibnamefont  [1]{#1}%
\providecommand \bibfnamefont [1]{#1}%
\providecommand \citenamefont [1]{#1}%
\providecommand \href@noop [0]{\@secondoftwo}%
\providecommand \href [0]{\begingroup \@sanitize@url \@href}%
\providecommand \@href[1]{\@@startlink{#1}\@@href}%
\providecommand \@@href[1]{\endgroup#1\@@endlink}%
\providecommand \@sanitize@url [0]{\catcode `\\12\catcode `\$12\catcode
  `\&12\catcode `\#12\catcode `\^12\catcode `\_12\catcode `\%12\relax}%
\providecommand \@@startlink[1]{}%
\providecommand \@@endlink[0]{}%
\providecommand \url  [0]{\begingroup\@sanitize@url \@url }%
\providecommand \@url [1]{\endgroup\@href {#1}{\urlprefix }}%
\providecommand \urlprefix  [0]{URL }%
\providecommand \Eprint [0]{\href }%
\providecommand \doibase [0]{https://doi.org/}%
\providecommand \selectlanguage [0]{\@gobble}%
\providecommand \bibinfo  [0]{\@secondoftwo}%
\providecommand \bibfield  [0]{\@secondoftwo}%
\providecommand \translation [1]{[#1]}%
\providecommand \BibitemOpen [0]{}%
\providecommand \bibitemStop [0]{}%
\providecommand \bibitemNoStop [0]{.\EOS\space}%
\providecommand \EOS [0]{\spacefactor3000\relax}%
\providecommand \BibitemShut  [1]{\csname bibitem#1\endcsname}%
\let\auto@bib@innerbib\@empty
\bibitem [{\citenamefont {Shor}(1994)}]{365700}%
  \BibitemOpen
  \bibfield  {author} {\bibinfo {author} {\bibfnamefont {P.~W.}\ \bibnamefont
  {Shor}},\ }\bibfield  {title} {\bibinfo {title} {Algorithms for quantum
  computation: discrete logarithms and factoring},\ }in\ \href
  {https://doi.org/10.1109/SFCS.1994.365700} {\emph {\bibinfo {booktitle}
  {Proceedings of the 35th Annual Symposium on Foundations of Computer
  Science}}}\ (\bibinfo {year} {1994})\ pp.\ \bibinfo {pages}
  {124--134}\BibitemShut {NoStop}%
\bibitem [{\citenamefont {Grover}(1996)}]{10.1145/237814.237866}%
  \BibitemOpen
  \bibfield  {author} {\bibinfo {author} {\bibfnamefont {L.~K.}\ \bibnamefont
  {Grover}},\ }\bibfield  {title} {\bibinfo {title} {A fast quantum mechanical
  algorithm for database search},\ }in\ \href
  {https://doi.org/10.1145/237814.237866} {\emph {\bibinfo {booktitle}
  {Proceedings of the 28th Annual ACM Symposium on Theory of Computing}}}\
  (\bibinfo {year} {1996})\ p.\ \bibinfo {pages} {212–219}\BibitemShut
  {NoStop}%
\bibitem [{\citenamefont {Arute}\ \emph {et~al.}(2019)\citenamefont {Arute},
  \citenamefont {Arya}, \citenamefont {Babbush}, \citenamefont {Bacon},
  \citenamefont {Bardin}, \citenamefont {Barends}, \citenamefont {Biswas},
  \citenamefont {Boixo}, \citenamefont {Brandao}, \citenamefont {Buell},
  \citenamefont {Burkett}, \citenamefont {Chen}, \citenamefont {Chen},
  \citenamefont {Chiaro}, \citenamefont {Collins}, \citenamefont {Courtney},
  \citenamefont {Dunsworth}, \citenamefont {Farhi}, \citenamefont {Foxen},
  \citenamefont {Fowler}, \citenamefont {Gidney}, \citenamefont {Giustina},
  \citenamefont {Graff}, \citenamefont {Guerin}, \citenamefont {Habegger},
  \citenamefont {Harrigan}, \citenamefont {Hartmann}, \citenamefont {Ho},
  \citenamefont {Hoffmann}, \citenamefont {Huang}, \citenamefont {Humble},
  \citenamefont {Isakov}, \citenamefont {Jeffrey}, \citenamefont {Jiang},
  \citenamefont {Kafri}, \citenamefont {Kechedzhi}, \citenamefont {Kelly},
  \citenamefont {Klimov}, \citenamefont {Knysh}, \citenamefont {Korotkov},
  \citenamefont {Kostritsa}, \citenamefont {Landhuis}, \citenamefont
  {Lindmark}, \citenamefont {Lucero}, \citenamefont {Lyakh}, \citenamefont
  {Mandr{\`a}}, \citenamefont {McClean}, \citenamefont {McEwen}, \citenamefont
  {Megrant}, \citenamefont {Mi}, \citenamefont {Michielsen}, \citenamefont
  {Mohseni}, \citenamefont {Mutus}, \citenamefont {Naaman}, \citenamefont
  {Neeley}, \citenamefont {Neill}, \citenamefont {Niu}, \citenamefont {Ostby},
  \citenamefont {Petukhov}, \citenamefont {Platt}, \citenamefont {Quintana},
  \citenamefont {Rieffel}, \citenamefont {Roushan}, \citenamefont {Rubin},
  \citenamefont {Sank}, \citenamefont {Satzinger}, \citenamefont {Smelyanskiy},
  \citenamefont {Sung}, \citenamefont {Trevithick}, \citenamefont
  {Vainsencher}, \citenamefont {Villalonga}, \citenamefont {White},
  \citenamefont {Yao}, \citenamefont {Yeh}, \citenamefont {Zalcman},
  \citenamefont {Neven},\ and\ \citenamefont {Martinis}}]{Arute2019}%
  \BibitemOpen
  \bibfield  {author} {\bibinfo {author} {\bibfnamefont {F.}~\bibnamefont
  {Arute}}, \bibinfo {author} {\bibfnamefont {K.}~\bibnamefont {Arya}},
  \bibinfo {author} {\bibfnamefont {R.}~\bibnamefont {Babbush}}, \bibinfo
  {author} {\bibfnamefont {D.}~\bibnamefont {Bacon}}, \bibinfo {author}
  {\bibfnamefont {J.~C.}\ \bibnamefont {Bardin}}, \bibinfo {author}
  {\bibfnamefont {R.}~\bibnamefont {Barends}}, \bibinfo {author} {\bibfnamefont
  {R.}~\bibnamefont {Biswas}}, \bibinfo {author} {\bibfnamefont
  {S.}~\bibnamefont {Boixo}}, \bibinfo {author} {\bibfnamefont {F.~G. S.~L.}\
  \bibnamefont {Brandao}}, \bibinfo {author} {\bibfnamefont {D.~A.}\
  \bibnamefont {Buell}}, \bibinfo {author} {\bibfnamefont {B.}~\bibnamefont
  {Burkett}}, \bibinfo {author} {\bibfnamefont {Y.}~\bibnamefont {Chen}},
  \bibinfo {author} {\bibfnamefont {Z.}~\bibnamefont {Chen}}, \bibinfo {author}
  {\bibfnamefont {B.}~\bibnamefont {Chiaro}}, \bibinfo {author} {\bibfnamefont
  {R.}~\bibnamefont {Collins}}, \bibinfo {author} {\bibfnamefont
  {W.}~\bibnamefont {Courtney}}, \bibinfo {author} {\bibfnamefont
  {A.}~\bibnamefont {Dunsworth}}, \bibinfo {author} {\bibfnamefont
  {E.}~\bibnamefont {Farhi}}, \bibinfo {author} {\bibfnamefont
  {B.}~\bibnamefont {Foxen}}, \bibinfo {author} {\bibfnamefont
  {A.}~\bibnamefont {Fowler}}, \bibinfo {author} {\bibfnamefont
  {C.}~\bibnamefont {Gidney}}, \bibinfo {author} {\bibfnamefont
  {M.}~\bibnamefont {Giustina}}, \bibinfo {author} {\bibfnamefont
  {R.}~\bibnamefont {Graff}}, \bibinfo {author} {\bibfnamefont
  {K.}~\bibnamefont {Guerin}}, \bibinfo {author} {\bibfnamefont
  {S.}~\bibnamefont {Habegger}}, \bibinfo {author} {\bibfnamefont {M.~P.}\
  \bibnamefont {Harrigan}}, \bibinfo {author} {\bibfnamefont {M.~J.}\
  \bibnamefont {Hartmann}}, \bibinfo {author} {\bibfnamefont {A.}~\bibnamefont
  {Ho}}, \bibinfo {author} {\bibfnamefont {M.}~\bibnamefont {Hoffmann}},
  \bibinfo {author} {\bibfnamefont {T.}~\bibnamefont {Huang}}, \bibinfo
  {author} {\bibfnamefont {T.~S.}\ \bibnamefont {Humble}}, \bibinfo {author}
  {\bibfnamefont {S.~V.}\ \bibnamefont {Isakov}}, \bibinfo {author}
  {\bibfnamefont {E.}~\bibnamefont {Jeffrey}}, \bibinfo {author} {\bibfnamefont
  {Z.}~\bibnamefont {Jiang}}, \bibinfo {author} {\bibfnamefont
  {D.}~\bibnamefont {Kafri}}, \bibinfo {author} {\bibfnamefont
  {K.}~\bibnamefont {Kechedzhi}}, \bibinfo {author} {\bibfnamefont
  {J.}~\bibnamefont {Kelly}}, \bibinfo {author} {\bibfnamefont {P.~V.}\
  \bibnamefont {Klimov}}, \bibinfo {author} {\bibfnamefont {S.}~\bibnamefont
  {Knysh}}, \bibinfo {author} {\bibfnamefont {A.}~\bibnamefont {Korotkov}},
  \bibinfo {author} {\bibfnamefont {F.}~\bibnamefont {Kostritsa}}, \bibinfo
  {author} {\bibfnamefont {D.}~\bibnamefont {Landhuis}}, \bibinfo {author}
  {\bibfnamefont {M.}~\bibnamefont {Lindmark}}, \bibinfo {author}
  {\bibfnamefont {E.}~\bibnamefont {Lucero}}, \bibinfo {author} {\bibfnamefont
  {D.}~\bibnamefont {Lyakh}}, \bibinfo {author} {\bibfnamefont
  {S.}~\bibnamefont {Mandr{\`a}}}, \bibinfo {author} {\bibfnamefont {J.~R.}\
  \bibnamefont {McClean}}, \bibinfo {author} {\bibfnamefont {M.}~\bibnamefont
  {McEwen}}, \bibinfo {author} {\bibfnamefont {A.}~\bibnamefont {Megrant}},
  \bibinfo {author} {\bibfnamefont {X.}~\bibnamefont {Mi}}, \bibinfo {author}
  {\bibfnamefont {K.}~\bibnamefont {Michielsen}}, \bibinfo {author}
  {\bibfnamefont {M.}~\bibnamefont {Mohseni}}, \bibinfo {author} {\bibfnamefont
  {J.}~\bibnamefont {Mutus}}, \bibinfo {author} {\bibfnamefont
  {O.}~\bibnamefont {Naaman}}, \bibinfo {author} {\bibfnamefont
  {M.}~\bibnamefont {Neeley}}, \bibinfo {author} {\bibfnamefont
  {C.}~\bibnamefont {Neill}}, \bibinfo {author} {\bibfnamefont {M.~Y.}\
  \bibnamefont {Niu}}, \bibinfo {author} {\bibfnamefont {E.}~\bibnamefont
  {Ostby}}, \bibinfo {author} {\bibfnamefont {A.}~\bibnamefont {Petukhov}},
  \bibinfo {author} {\bibfnamefont {J.~C.}\ \bibnamefont {Platt}}, \bibinfo
  {author} {\bibfnamefont {C.}~\bibnamefont {Quintana}}, \bibinfo {author}
  {\bibfnamefont {E.~G.}\ \bibnamefont {Rieffel}}, \bibinfo {author}
  {\bibfnamefont {P.}~\bibnamefont {Roushan}}, \bibinfo {author} {\bibfnamefont
  {N.~C.}\ \bibnamefont {Rubin}}, \bibinfo {author} {\bibfnamefont
  {D.}~\bibnamefont {Sank}}, \bibinfo {author} {\bibfnamefont {K.~J.}\
  \bibnamefont {Satzinger}}, \bibinfo {author} {\bibfnamefont {V.}~\bibnamefont
  {Smelyanskiy}}, \bibinfo {author} {\bibfnamefont {K.~J.}\ \bibnamefont
  {Sung}}, \bibinfo {author} {\bibfnamefont {M.~D.}\ \bibnamefont
  {Trevithick}}, \bibinfo {author} {\bibfnamefont {A.}~\bibnamefont
  {Vainsencher}}, \bibinfo {author} {\bibfnamefont {B.}~\bibnamefont
  {Villalonga}}, \bibinfo {author} {\bibfnamefont {T.}~\bibnamefont {White}},
  \bibinfo {author} {\bibfnamefont {Z.~J.}\ \bibnamefont {Yao}}, \bibinfo
  {author} {\bibfnamefont {P.}~\bibnamefont {Yeh}}, \bibinfo {author}
  {\bibfnamefont {A.}~\bibnamefont {Zalcman}}, \bibinfo {author} {\bibfnamefont
  {H.}~\bibnamefont {Neven}},\ and\ \bibinfo {author} {\bibfnamefont {J.~M.}\
  \bibnamefont {Martinis}},\ }\bibfield  {title} {\bibinfo {title} {Quantum
  supremacy using a programmable superconducting processor},\ }\href
  {https://doi.org/10.1038/s41586-019-1666-5} {\bibfield  {journal} {\bibinfo
  {journal} {Nature}\ }\textbf {\bibinfo {volume} {574}},\ \bibinfo {pages}
  {505} (\bibinfo {year} {2019})}\BibitemShut {NoStop}%
\bibitem [{\citenamefont {Zhong}\ \emph {et~al.}(2020)\citenamefont {Zhong},
  \citenamefont {Wang}, \citenamefont {Deng}, \citenamefont {Ming-Cheng},
  \citenamefont {Peng}, \citenamefont {Yi-Han}, \citenamefont {Qian},
  \citenamefont {Wu}, \citenamefont {Ding}, \citenamefont {Hu}, \citenamefont
  {Hu}, \citenamefont {Yang}, \citenamefont {Zhang}, \citenamefont {Li},
  \citenamefont {Li}, \citenamefont {Jiang}, \citenamefont {Gan}, \citenamefont
  {Yang}, \citenamefont {You}, \citenamefont {Wang}, \citenamefont {Li},
  \citenamefont {Liu}, \citenamefont {Lu},\ and\ \citenamefont
  {Pan}}]{doi:10.1126/science.abe8770}%
  \BibitemOpen
  \bibfield  {author} {\bibinfo {author} {\bibfnamefont {H.-S.}\ \bibnamefont
  {Zhong}}, \bibinfo {author} {\bibfnamefont {H.}~\bibnamefont {Wang}},
  \bibinfo {author} {\bibfnamefont {Y.-H.}\ \bibnamefont {Deng}}, \bibinfo
  {author} {\bibfnamefont {C.}~\bibnamefont {Ming-Cheng}}, \bibinfo {author}
  {\bibfnamefont {L.-C.}\ \bibnamefont {Peng}}, \bibinfo {author}
  {\bibfnamefont {L.}~\bibnamefont {Yi-Han}}, \bibinfo {author} {\bibfnamefont
  {J.}~\bibnamefont {Qian}}, \bibinfo {author} {\bibfnamefont {D.}~\bibnamefont
  {Wu}}, \bibinfo {author} {\bibfnamefont {X.}~\bibnamefont {Ding}}, \bibinfo
  {author} {\bibfnamefont {Y.}~\bibnamefont {Hu}}, \bibinfo {author}
  {\bibfnamefont {P.}~\bibnamefont {Hu}}, \bibinfo {author} {\bibfnamefont
  {X.-Y.}\ \bibnamefont {Yang}}, \bibinfo {author} {\bibfnamefont {W.-J.}\
  \bibnamefont {Zhang}}, \bibinfo {author} {\bibfnamefont {H.}~\bibnamefont
  {Li}}, \bibinfo {author} {\bibfnamefont {Y.}~\bibnamefont {Li}}, \bibinfo
  {author} {\bibfnamefont {X.}~\bibnamefont {Jiang}}, \bibinfo {author}
  {\bibfnamefont {L.}~\bibnamefont {Gan}}, \bibinfo {author} {\bibfnamefont
  {G.}~\bibnamefont {Yang}}, \bibinfo {author} {\bibfnamefont {L.}~\bibnamefont
  {You}}, \bibinfo {author} {\bibfnamefont {Z.}~\bibnamefont {Wang}}, \bibinfo
  {author} {\bibfnamefont {L.}~\bibnamefont {Li}}, \bibinfo {author}
  {\bibfnamefont {N.-L.}\ \bibnamefont {Liu}}, \bibinfo {author} {\bibfnamefont
  {C.-Y.}\ \bibnamefont {Lu}},\ and\ \bibinfo {author} {\bibfnamefont {J.-W.}\
  \bibnamefont {Pan}},\ }\bibfield  {title} {\bibinfo {title} {Quantum
  computational advantage using photons},\ }\href
  {https://doi.org/10.1126/science.abe8770} {\bibfield  {journal} {\bibinfo
  {journal} {Science}\ }\textbf {\bibinfo {volume} {370}},\ \bibinfo {pages}
  {1460} (\bibinfo {year} {2020})}\BibitemShut {NoStop}%
\bibitem [{\citenamefont {Bennett}\ \emph {et~al.}(1993)\citenamefont
  {Bennett}, \citenamefont {Brassard}, \citenamefont {Cr\'epeau}, \citenamefont
  {Jozsa}, \citenamefont {Peres},\ and\ \citenamefont
  {Wootters}}]{PhysRevLett.70.1895}%
  \BibitemOpen
  \bibfield  {author} {\bibinfo {author} {\bibfnamefont {C.~H.}\ \bibnamefont
  {Bennett}}, \bibinfo {author} {\bibfnamefont {G.}~\bibnamefont {Brassard}},
  \bibinfo {author} {\bibfnamefont {C.}~\bibnamefont {Cr\'epeau}}, \bibinfo
  {author} {\bibfnamefont {R.}~\bibnamefont {Jozsa}}, \bibinfo {author}
  {\bibfnamefont {A.}~\bibnamefont {Peres}},\ and\ \bibinfo {author}
  {\bibfnamefont {W.~K.}\ \bibnamefont {Wootters}},\ }\bibfield  {title}
  {\bibinfo {title} {Teleporting an unknown quantum state via dual classical
  and einstein-podolsky-rosen channels},\ }\href
  {https://doi.org/10.1103/PhysRevLett.70.1895} {\bibfield  {journal} {\bibinfo
   {journal} {Physical Review Letters}\ }\textbf {\bibinfo {volume} {70}},\
  \bibinfo {pages} {1895} (\bibinfo {year} {1993})}\BibitemShut {NoStop}%
\bibitem [{\citenamefont {Bouwmeester}\ \emph {et~al.}(1997)\citenamefont
  {Bouwmeester}, \citenamefont {Pan}, \citenamefont {Mattle}, \citenamefont
  {Eibl}, \citenamefont {Weinfurter},\ and\ \citenamefont
  {Zeilinger}}]{Bouwmeester1997}%
  \BibitemOpen
  \bibfield  {author} {\bibinfo {author} {\bibfnamefont {D.}~\bibnamefont
  {Bouwmeester}}, \bibinfo {author} {\bibfnamefont {J.-W.}\ \bibnamefont
  {Pan}}, \bibinfo {author} {\bibfnamefont {K.}~\bibnamefont {Mattle}},
  \bibinfo {author} {\bibfnamefont {M.}~\bibnamefont {Eibl}}, \bibinfo {author}
  {\bibfnamefont {H.}~\bibnamefont {Weinfurter}},\ and\ \bibinfo {author}
  {\bibfnamefont {A.}~\bibnamefont {Zeilinger}},\ }\bibfield  {title} {\bibinfo
  {title} {Experimental quantum teleportation},\ }\href
  {https://doi.org/10.1038/37539} {\bibfield  {journal} {\bibinfo  {journal}
  {Nature}\ }\textbf {\bibinfo {volume} {390}},\ \bibinfo {pages} {575}
  (\bibinfo {year} {1997})}\BibitemShut {NoStop}%
\bibitem [{\citenamefont {Kimble}(2008)}]{Kimble2008}%
  \BibitemOpen
  \bibfield  {author} {\bibinfo {author} {\bibfnamefont {H.~J.}\ \bibnamefont
  {Kimble}},\ }\bibfield  {title} {\bibinfo {title} {The quantum internet},\
  }\href {https://doi.org/10.1038/nature07127} {\bibfield  {journal} {\bibinfo
  {journal} {Nature}\ }\textbf {\bibinfo {volume} {453}},\ \bibinfo {pages}
  {1023} (\bibinfo {year} {2008})}\BibitemShut {NoStop}%
\bibitem [{\citenamefont {Ren}\ \emph {et~al.}(2017)\citenamefont {Ren},
  \citenamefont {Xu}, \citenamefont {Yong}, \citenamefont {Zhang},
  \citenamefont {Liao}, \citenamefont {Yin}, \citenamefont {Liu}, \citenamefont
  {Cai}, \citenamefont {Yang}, \citenamefont {Li}, \citenamefont {Yang},
  \citenamefont {Han}, \citenamefont {Yao}, \citenamefont {Li}, \citenamefont
  {Wu}, \citenamefont {Wan}, \citenamefont {Liu}, \citenamefont {Liu},
  \citenamefont {Kuang}, \citenamefont {He}, \citenamefont {Shang},
  \citenamefont {Guo}, \citenamefont {Zheng}, \citenamefont {Tian},
  \citenamefont {Zhu}, \citenamefont {Liu}, \citenamefont {Lu}, \citenamefont
  {Shu}, \citenamefont {Chen}, \citenamefont {Peng}, \citenamefont {Wang},\
  and\ \citenamefont {Pan}}]{Ren2017}%
  \BibitemOpen
  \bibfield  {author} {\bibinfo {author} {\bibfnamefont {J.-G.}\ \bibnamefont
  {Ren}}, \bibinfo {author} {\bibfnamefont {P.}~\bibnamefont {Xu}}, \bibinfo
  {author} {\bibfnamefont {H.-L.}\ \bibnamefont {Yong}}, \bibinfo {author}
  {\bibfnamefont {L.}~\bibnamefont {Zhang}}, \bibinfo {author} {\bibfnamefont
  {S.-K.}\ \bibnamefont {Liao}}, \bibinfo {author} {\bibfnamefont
  {J.}~\bibnamefont {Yin}}, \bibinfo {author} {\bibfnamefont {W.-Y.}\
  \bibnamefont {Liu}}, \bibinfo {author} {\bibfnamefont {W.-Q.}\ \bibnamefont
  {Cai}}, \bibinfo {author} {\bibfnamefont {M.}~\bibnamefont {Yang}}, \bibinfo
  {author} {\bibfnamefont {L.}~\bibnamefont {Li}}, \bibinfo {author}
  {\bibfnamefont {K.-X.}\ \bibnamefont {Yang}}, \bibinfo {author}
  {\bibfnamefont {X.}~\bibnamefont {Han}}, \bibinfo {author} {\bibfnamefont
  {Y.-Q.}\ \bibnamefont {Yao}}, \bibinfo {author} {\bibfnamefont
  {J.}~\bibnamefont {Li}}, \bibinfo {author} {\bibfnamefont {H.-Y.}\
  \bibnamefont {Wu}}, \bibinfo {author} {\bibfnamefont {S.}~\bibnamefont
  {Wan}}, \bibinfo {author} {\bibfnamefont {L.}~\bibnamefont {Liu}}, \bibinfo
  {author} {\bibfnamefont {D.-Q.}\ \bibnamefont {Liu}}, \bibinfo {author}
  {\bibfnamefont {Y.-W.}\ \bibnamefont {Kuang}}, \bibinfo {author}
  {\bibfnamefont {Z.-P.}\ \bibnamefont {He}}, \bibinfo {author} {\bibfnamefont
  {P.}~\bibnamefont {Shang}}, \bibinfo {author} {\bibfnamefont
  {C.}~\bibnamefont {Guo}}, \bibinfo {author} {\bibfnamefont {R.-H.}\
  \bibnamefont {Zheng}}, \bibinfo {author} {\bibfnamefont {K.}~\bibnamefont
  {Tian}}, \bibinfo {author} {\bibfnamefont {Z.-C.}\ \bibnamefont {Zhu}},
  \bibinfo {author} {\bibfnamefont {N.-L.}\ \bibnamefont {Liu}}, \bibinfo
  {author} {\bibfnamefont {C.-Y.}\ \bibnamefont {Lu}}, \bibinfo {author}
  {\bibfnamefont {R.}~\bibnamefont {Shu}}, \bibinfo {author} {\bibfnamefont
  {Y.-A.}\ \bibnamefont {Chen}}, \bibinfo {author} {\bibfnamefont {C.-Z.}\
  \bibnamefont {Peng}}, \bibinfo {author} {\bibfnamefont {J.-Y.}\ \bibnamefont
  {Wang}},\ and\ \bibinfo {author} {\bibfnamefont {J.-W.}\ \bibnamefont
  {Pan}},\ }\bibfield  {title} {\bibinfo {title} {Ground-to-satellite quantum
  teleportation},\ }\href {https://doi.org/10.1038/nature23675} {\bibfield
  {journal} {\bibinfo  {journal} {Nature}\ }\textbf {\bibinfo {volume} {549}},\
  \bibinfo {pages} {70} (\bibinfo {year} {2017})}\BibitemShut {NoStop}%
\bibitem [{\citenamefont {Xu}\ \emph {et~al.}(2020)\citenamefont {Xu},
  \citenamefont {Ma}, \citenamefont {Zhang}, \citenamefont {Lo},\ and\
  \citenamefont {Pan}}]{RevModPhys.92.025002}%
  \BibitemOpen
  \bibfield  {author} {\bibinfo {author} {\bibfnamefont {F.}~\bibnamefont
  {Xu}}, \bibinfo {author} {\bibfnamefont {X.}~\bibnamefont {Ma}}, \bibinfo
  {author} {\bibfnamefont {Q.}~\bibnamefont {Zhang}}, \bibinfo {author}
  {\bibfnamefont {H.-K.}\ \bibnamefont {Lo}},\ and\ \bibinfo {author}
  {\bibfnamefont {J.-W.}\ \bibnamefont {Pan}},\ }\bibfield  {title} {\bibinfo
  {title} {Secure quantum key distribution with realistic devices},\ }\href
  {https://doi.org/10.1103/RevModPhys.92.025002} {\bibfield  {journal}
  {\bibinfo  {journal} {Reviews of Modern Physics}\ }\textbf {\bibinfo {volume}
  {92}},\ \bibinfo {pages} {025002} (\bibinfo {year} {2020})}\BibitemShut
  {NoStop}%
\bibitem [{\citenamefont {Giovannetti}\ \emph {et~al.}(2004)\citenamefont
  {Giovannetti}, \citenamefont {Lloyd},\ and\ \citenamefont
  {Maccone}}]{doi:10.1126/science.1104149}%
  \BibitemOpen
  \bibfield  {author} {\bibinfo {author} {\bibfnamefont {V.}~\bibnamefont
  {Giovannetti}}, \bibinfo {author} {\bibfnamefont {S.}~\bibnamefont {Lloyd}},\
  and\ \bibinfo {author} {\bibfnamefont {L.}~\bibnamefont {Maccone}},\
  }\bibfield  {title} {\bibinfo {title} {Quantum-enhanced measurements: Beating
  the standard quantum limit},\ }\href
  {https://doi.org/10.1126/science.1104149} {\bibfield  {journal} {\bibinfo
  {journal} {Science}\ }\textbf {\bibinfo {volume} {306}},\ \bibinfo {pages}
  {1330} (\bibinfo {year} {2004})}\BibitemShut {NoStop}%
\bibitem [{\citenamefont {Giovannetti}\ \emph {et~al.}(2006)\citenamefont
  {Giovannetti}, \citenamefont {Lloyd},\ and\ \citenamefont
  {Maccone}}]{PhysRevLett.96.010401}%
  \BibitemOpen
  \bibfield  {author} {\bibinfo {author} {\bibfnamefont {V.}~\bibnamefont
  {Giovannetti}}, \bibinfo {author} {\bibfnamefont {S.}~\bibnamefont {Lloyd}},\
  and\ \bibinfo {author} {\bibfnamefont {L.}~\bibnamefont {Maccone}},\
  }\bibfield  {title} {\bibinfo {title} {Quantum metrology},\ }\href
  {https://doi.org/10.1103/PhysRevLett.96.010401} {\bibfield  {journal}
  {\bibinfo  {journal} {Physicsal Review Letters}\ }\textbf {\bibinfo {volume}
  {96}},\ \bibinfo {pages} {010401} (\bibinfo {year} {2006})}\BibitemShut
  {NoStop}%
\bibitem [{\citenamefont {Giovannetti}\ \emph {et~al.}(2011)\citenamefont
  {Giovannetti}, \citenamefont {Lloyd},\ and\ \citenamefont
  {Maccone}}]{Giovannetti2011}%
  \BibitemOpen
  \bibfield  {author} {\bibinfo {author} {\bibfnamefont {V.}~\bibnamefont
  {Giovannetti}}, \bibinfo {author} {\bibfnamefont {S.}~\bibnamefont {Lloyd}},\
  and\ \bibinfo {author} {\bibfnamefont {L.}~\bibnamefont {Maccone}},\
  }\bibfield  {title} {\bibinfo {title} {Advances in quantum metrology},\
  }\href {https://doi.org/10.1038/nphoton.2011.35} {\bibfield  {journal}
  {\bibinfo  {journal} {Nature Photonics}\ }\textbf {\bibinfo {volume} {5}},\
  \bibinfo {pages} {222} (\bibinfo {year} {2011})}\BibitemShut {NoStop}%
\bibitem [{\citenamefont {Degen}\ \emph {et~al.}(2017)\citenamefont {Degen},
  \citenamefont {Reinhard},\ and\ \citenamefont
  {Cappellaro}}]{RevModPhys.89.035002}%
  \BibitemOpen
  \bibfield  {author} {\bibinfo {author} {\bibfnamefont {C.~L.}\ \bibnamefont
  {Degen}}, \bibinfo {author} {\bibfnamefont {F.}~\bibnamefont {Reinhard}},\
  and\ \bibinfo {author} {\bibfnamefont {P.}~\bibnamefont {Cappellaro}},\
  }\bibfield  {title} {\bibinfo {title} {Quantum sensing},\ }\href
  {https://doi.org/10.1103/RevModPhys.89.035002} {\bibfield  {journal}
  {\bibinfo  {journal} {Reviews of Modern Physics}\ }\textbf {\bibinfo {volume}
  {89}},\ \bibinfo {pages} {035002} (\bibinfo {year} {2017})}\BibitemShut
  {NoStop}%
\bibitem [{\citenamefont {Wineland}(2013)}]{RevModPhys.85.1103}%
  \BibitemOpen
  \bibfield  {author} {\bibinfo {author} {\bibfnamefont {D.~J.}\ \bibnamefont
  {Wineland}},\ }\bibfield  {title} {\bibinfo {title} {Nobel lecture:
  Superposition, entanglement, and raising schr\"odinger's cat},\ }\href
  {https://doi.org/10.1103/RevModPhys.85.1103} {\bibfield  {journal} {\bibinfo
  {journal} {Reviews of Modern Physics}\ }\textbf {\bibinfo {volume} {85}},\
  \bibinfo {pages} {1103} (\bibinfo {year} {2013})}\BibitemShut {NoStop}%
\bibitem [{\citenamefont {Horodecki}\ \emph {et~al.}(2009)\citenamefont
  {Horodecki}, \citenamefont {Horodecki}, \citenamefont {Horodecki},\ and\
  \citenamefont {Horodecki}}]{RevModPhys.81.865}%
  \BibitemOpen
  \bibfield  {author} {\bibinfo {author} {\bibfnamefont {R.}~\bibnamefont
  {Horodecki}}, \bibinfo {author} {\bibfnamefont {P.}~\bibnamefont
  {Horodecki}}, \bibinfo {author} {\bibfnamefont {M.}~\bibnamefont
  {Horodecki}},\ and\ \bibinfo {author} {\bibfnamefont {K.}~\bibnamefont
  {Horodecki}},\ }\bibfield  {title} {\bibinfo {title} {Quantum entanglement},\
  }\href {https://doi.org/10.1103/RevModPhys.81.865} {\bibfield  {journal}
  {\bibinfo  {journal} {Review of Modern Physics}\ }\textbf {\bibinfo {volume}
  {81}},\ \bibinfo {pages} {865} (\bibinfo {year} {2009})}\BibitemShut
  {NoStop}%
\bibitem [{\citenamefont {Shor}(1995)}]{PhysRevA.52.R2493}%
  \BibitemOpen
  \bibfield  {author} {\bibinfo {author} {\bibfnamefont {P.~W.}\ \bibnamefont
  {Shor}},\ }\bibfield  {title} {\bibinfo {title} {Scheme for reducing
  decoherence in quantum computer memory},\ }\href
  {https://doi.org/10.1103/PhysRevA.52.R2493} {\bibfield  {journal} {\bibinfo
  {journal} {Physical Review A}\ }\textbf {\bibinfo {volume} {52}},\ \bibinfo
  {pages} {R2493} (\bibinfo {year} {1995})}\BibitemShut {NoStop}%
\bibitem [{\citenamefont {Cory}\ \emph {et~al.}(1998)\citenamefont {Cory},
  \citenamefont {Price}, \citenamefont {Maas}, \citenamefont {Knill},
  \citenamefont {Laflamme}, \citenamefont {Zurek}, \citenamefont {Havel},\ and\
  \citenamefont {Somaroo}}]{PhysRevLett.81.2152}%
  \BibitemOpen
  \bibfield  {author} {\bibinfo {author} {\bibfnamefont {D.~G.}\ \bibnamefont
  {Cory}}, \bibinfo {author} {\bibfnamefont {M.~D.}\ \bibnamefont {Price}},
  \bibinfo {author} {\bibfnamefont {W.}~\bibnamefont {Maas}}, \bibinfo {author}
  {\bibfnamefont {E.}~\bibnamefont {Knill}}, \bibinfo {author} {\bibfnamefont
  {R.}~\bibnamefont {Laflamme}}, \bibinfo {author} {\bibfnamefont {W.~H.}\
  \bibnamefont {Zurek}}, \bibinfo {author} {\bibfnamefont {T.~F.}\ \bibnamefont
  {Havel}},\ and\ \bibinfo {author} {\bibfnamefont {S.~S.}\ \bibnamefont
  {Somaroo}},\ }\bibfield  {title} {\bibinfo {title} {Experimental quantum
  error correction},\ }\href {https://doi.org/10.1103/PhysRevLett.81.2152}
  {\bibfield  {journal} {\bibinfo  {journal} {Physical Review Letters}\
  }\textbf {\bibinfo {volume} {81}},\ \bibinfo {pages} {2152} (\bibinfo {year}
  {1998})}\BibitemShut {NoStop}%
\bibitem [{\citenamefont {Terhal}(2015)}]{RevModPhys.87.307}%
  \BibitemOpen
  \bibfield  {author} {\bibinfo {author} {\bibfnamefont {B.~M.}\ \bibnamefont
  {Terhal}},\ }\bibfield  {title} {\bibinfo {title} {Quantum error correction
  for quantum memories},\ }\href {https://doi.org/10.1103/RevModPhys.87.307}
  {\bibfield  {journal} {\bibinfo  {journal} {Reviews of Modern Physics}\
  }\textbf {\bibinfo {volume} {87}},\ \bibinfo {pages} {307} (\bibinfo {year}
  {2015})}\BibitemShut {NoStop}%
\bibitem [{\citenamefont {Acharya}\ \emph {et~al.}(2023)\citenamefont
  {Acharya}, \citenamefont {Aleiner}, \citenamefont {Allen}, \citenamefont
  {Andersen}, \citenamefont {Ansmann}, \citenamefont {Arute}, \citenamefont
  {Arya}, \citenamefont {Asfaw}, \citenamefont {Atalaya}, \citenamefont
  {Babbush}, \citenamefont {Bacon}, \citenamefont {Bardin}, \citenamefont
  {Basso}, \citenamefont {Bengtsson}, \citenamefont {Boixo}, \citenamefont
  {Bortoli}, \citenamefont {Bourassa}, \citenamefont {Bovaird}, \citenamefont
  {Brill}, \citenamefont {Broughton}, \citenamefont {Buckley}, \citenamefont
  {Buell}, \citenamefont {Burger}, \citenamefont {Burkett}, \citenamefont
  {Bushnell}, \citenamefont {Chen}, \citenamefont {Chen}, \citenamefont
  {Chiaro}, \citenamefont {Cogan}, \citenamefont {Collins}, \citenamefont
  {Conner}, \citenamefont {Courtney}, \citenamefont {Crook}, \citenamefont
  {Curtin}, \citenamefont {Debroy}, \citenamefont {Del Toro~Barba},
  \citenamefont {Demura}, \citenamefont {Dunsworth}, \citenamefont {Eppens},
  \citenamefont {Erickson}, \citenamefont {Faoro}, \citenamefont {Farhi},
  \citenamefont {Fatemi}, \citenamefont {Flores~Burgos}, \citenamefont
  {Forati}, \citenamefont {Fowler}, \citenamefont {Foxen}, \citenamefont
  {Giang}, \citenamefont {Gidney}, \citenamefont {Gilboa}, \citenamefont
  {Giustina}, \citenamefont {Grajales~Dau}, \citenamefont {Gross},
  \citenamefont {Habegger}, \citenamefont {Hamilton}, \citenamefont {Harrigan},
  \citenamefont {Harrington}, \citenamefont {Higgott}, \citenamefont {Hilton},
  \citenamefont {Hoffmann}, \citenamefont {Hong}, \citenamefont {Huang},
  \citenamefont {Huff}, \citenamefont {Huggins}, \citenamefont {Ioffe},
  \citenamefont {Isakov}, \citenamefont {Iveland}, \citenamefont {Jeffrey},
  \citenamefont {Jiang}, \citenamefont {Jones}, \citenamefont {Juhas},
  \citenamefont {Kafri}, \citenamefont {Kechedzhi}, \citenamefont {Kelly},
  \citenamefont {Khattar}, \citenamefont {Khezri}, \citenamefont
  {Kieferov{\'a}}, \citenamefont {Kim}, \citenamefont {Kitaev}, \citenamefont
  {Klimov}, \citenamefont {Klots}, \citenamefont {Korotkov}, \citenamefont
  {Kostritsa}, \citenamefont {Kreikebaum}, \citenamefont {Landhuis},
  \citenamefont {Laptev}, \citenamefont {Lau}, \citenamefont {Laws},
  \citenamefont {Lee}, \citenamefont {Lee}, \citenamefont {Lester},
  \citenamefont {Lill}, \citenamefont {Liu}, \citenamefont {Locharla},
  \citenamefont {Lucero}, \citenamefont {Malone}, \citenamefont {Marshall},
  \citenamefont {Martin}, \citenamefont {McClean}, \citenamefont {McCourt},
  \citenamefont {McEwen}, \citenamefont {Megrant}, \citenamefont
  {Meurer~Costa}, \citenamefont {Mi}, \citenamefont {Miao}, \citenamefont
  {Mohseni}, \citenamefont {Montazeri}, \citenamefont {Morvan}, \citenamefont
  {Mount}, \citenamefont {Mruczkiewicz}, \citenamefont {Naaman}, \citenamefont
  {Neeley}, \citenamefont {Neill}, \citenamefont {Nersisyan}, \citenamefont
  {Neven}, \citenamefont {Newman}, \citenamefont {Ng}, \citenamefont {Nguyen},
  \citenamefont {Nguyen}, \citenamefont {Niu}, \citenamefont {O'Brien},
  \citenamefont {Opremcak}, \citenamefont {Platt}, \citenamefont {Petukhov},
  \citenamefont {Potter}, \citenamefont {Pryadko}, \citenamefont {Quintana},
  \citenamefont {Roushan}, \citenamefont {Rubin}, \citenamefont {Saei},
  \citenamefont {Sank}, \citenamefont {Sankaragomathi}, \citenamefont
  {Satzinger}, \citenamefont {Schurkus}, \citenamefont {Schuster},
  \citenamefont {Shearn}, \citenamefont {Shorter}, \citenamefont {Shvarts},
  \citenamefont {Skruzny}, \citenamefont {Smelyanskiy}, \citenamefont {Smith},
  \citenamefont {Sterling}, \citenamefont {Strain}, \citenamefont {Szalay},
  \citenamefont {Torres}, \citenamefont {Vidal}, \citenamefont {Villalonga},
  \citenamefont {Vollgraff~Heidweiller}, \citenamefont {White}, \citenamefont
  {Xing}, \citenamefont {Yao}, \citenamefont {Yeh}, \citenamefont {Yoo},
  \citenamefont {Young}, \citenamefont {Zalcman}, \citenamefont {Zhang},
  \citenamefont {Zhu},\ and\ \citenamefont {AI}}]{Acharya2023}%
  \BibitemOpen
  \bibfield  {author} {\bibinfo {author} {\bibfnamefont {R.}~\bibnamefont
  {Acharya}}, \bibinfo {author} {\bibfnamefont {I.}~\bibnamefont {Aleiner}},
  \bibinfo {author} {\bibfnamefont {R.}~\bibnamefont {Allen}}, \bibinfo
  {author} {\bibfnamefont {T.~I.}\ \bibnamefont {Andersen}}, \bibinfo {author}
  {\bibfnamefont {M.}~\bibnamefont {Ansmann}}, \bibinfo {author} {\bibfnamefont
  {F.}~\bibnamefont {Arute}}, \bibinfo {author} {\bibfnamefont
  {K.}~\bibnamefont {Arya}}, \bibinfo {author} {\bibfnamefont {A.}~\bibnamefont
  {Asfaw}}, \bibinfo {author} {\bibfnamefont {J.}~\bibnamefont {Atalaya}},
  \bibinfo {author} {\bibfnamefont {R.}~\bibnamefont {Babbush}}, \bibinfo
  {author} {\bibfnamefont {D.}~\bibnamefont {Bacon}}, \bibinfo {author}
  {\bibfnamefont {J.~C.}\ \bibnamefont {Bardin}}, \bibinfo {author}
  {\bibfnamefont {J.}~\bibnamefont {Basso}}, \bibinfo {author} {\bibfnamefont
  {A.}~\bibnamefont {Bengtsson}}, \bibinfo {author} {\bibfnamefont
  {S.}~\bibnamefont {Boixo}}, \bibinfo {author} {\bibfnamefont
  {G.}~\bibnamefont {Bortoli}}, \bibinfo {author} {\bibfnamefont
  {A.}~\bibnamefont {Bourassa}}, \bibinfo {author} {\bibfnamefont
  {J.}~\bibnamefont {Bovaird}}, \bibinfo {author} {\bibfnamefont
  {L.}~\bibnamefont {Brill}}, \bibinfo {author} {\bibfnamefont
  {M.}~\bibnamefont {Broughton}}, \bibinfo {author} {\bibfnamefont {B.~B.}\
  \bibnamefont {Buckley}}, \bibinfo {author} {\bibfnamefont {D.~A.}\
  \bibnamefont {Buell}}, \bibinfo {author} {\bibfnamefont {T.}~\bibnamefont
  {Burger}}, \bibinfo {author} {\bibfnamefont {B.}~\bibnamefont {Burkett}},
  \bibinfo {author} {\bibfnamefont {N.}~\bibnamefont {Bushnell}}, \bibinfo
  {author} {\bibfnamefont {Y.}~\bibnamefont {Chen}}, \bibinfo {author}
  {\bibfnamefont {Z.}~\bibnamefont {Chen}}, \bibinfo {author} {\bibfnamefont
  {B.}~\bibnamefont {Chiaro}}, \bibinfo {author} {\bibfnamefont
  {J.}~\bibnamefont {Cogan}}, \bibinfo {author} {\bibfnamefont
  {R.}~\bibnamefont {Collins}}, \bibinfo {author} {\bibfnamefont
  {P.}~\bibnamefont {Conner}}, \bibinfo {author} {\bibfnamefont
  {W.}~\bibnamefont {Courtney}}, \bibinfo {author} {\bibfnamefont {A.~L.}\
  \bibnamefont {Crook}}, \bibinfo {author} {\bibfnamefont {B.}~\bibnamefont
  {Curtin}}, \bibinfo {author} {\bibfnamefont {D.~M.}\ \bibnamefont {Debroy}},
  \bibinfo {author} {\bibfnamefont {A.}~\bibnamefont {Del Toro~Barba}},
  \bibinfo {author} {\bibfnamefont {S.}~\bibnamefont {Demura}}, \bibinfo
  {author} {\bibfnamefont {A.}~\bibnamefont {Dunsworth}}, \bibinfo {author}
  {\bibfnamefont {D.}~\bibnamefont {Eppens}}, \bibinfo {author} {\bibfnamefont
  {C.}~\bibnamefont {Erickson}}, \bibinfo {author} {\bibfnamefont
  {L.}~\bibnamefont {Faoro}}, \bibinfo {author} {\bibfnamefont
  {E.}~\bibnamefont {Farhi}}, \bibinfo {author} {\bibfnamefont
  {R.}~\bibnamefont {Fatemi}}, \bibinfo {author} {\bibfnamefont
  {L.}~\bibnamefont {Flores~Burgos}}, \bibinfo {author} {\bibfnamefont
  {E.}~\bibnamefont {Forati}}, \bibinfo {author} {\bibfnamefont {A.~G.}\
  \bibnamefont {Fowler}}, \bibinfo {author} {\bibfnamefont {B.}~\bibnamefont
  {Foxen}}, \bibinfo {author} {\bibfnamefont {W.}~\bibnamefont {Giang}},
  \bibinfo {author} {\bibfnamefont {C.}~\bibnamefont {Gidney}}, \bibinfo
  {author} {\bibfnamefont {D.}~\bibnamefont {Gilboa}}, \bibinfo {author}
  {\bibfnamefont {M.}~\bibnamefont {Giustina}}, \bibinfo {author}
  {\bibfnamefont {A.}~\bibnamefont {Grajales~Dau}}, \bibinfo {author}
  {\bibfnamefont {J.~A.}\ \bibnamefont {Gross}}, \bibinfo {author}
  {\bibfnamefont {S.}~\bibnamefont {Habegger}}, \bibinfo {author}
  {\bibfnamefont {M.~C.}\ \bibnamefont {Hamilton}}, \bibinfo {author}
  {\bibfnamefont {M.~P.}\ \bibnamefont {Harrigan}}, \bibinfo {author}
  {\bibfnamefont {S.~D.}\ \bibnamefont {Harrington}}, \bibinfo {author}
  {\bibfnamefont {O.}~\bibnamefont {Higgott}}, \bibinfo {author} {\bibfnamefont
  {J.}~\bibnamefont {Hilton}}, \bibinfo {author} {\bibfnamefont
  {M.}~\bibnamefont {Hoffmann}}, \bibinfo {author} {\bibfnamefont
  {S.}~\bibnamefont {Hong}}, \bibinfo {author} {\bibfnamefont {T.}~\bibnamefont
  {Huang}}, \bibinfo {author} {\bibfnamefont {A.}~\bibnamefont {Huff}},
  \bibinfo {author} {\bibfnamefont {W.~J.}\ \bibnamefont {Huggins}}, \bibinfo
  {author} {\bibfnamefont {L.~B.}\ \bibnamefont {Ioffe}}, \bibinfo {author}
  {\bibfnamefont {S.~V.}\ \bibnamefont {Isakov}}, \bibinfo {author}
  {\bibfnamefont {J.}~\bibnamefont {Iveland}}, \bibinfo {author} {\bibfnamefont
  {E.}~\bibnamefont {Jeffrey}}, \bibinfo {author} {\bibfnamefont
  {Z.}~\bibnamefont {Jiang}}, \bibinfo {author} {\bibfnamefont
  {C.}~\bibnamefont {Jones}}, \bibinfo {author} {\bibfnamefont
  {P.}~\bibnamefont {Juhas}}, \bibinfo {author} {\bibfnamefont
  {D.}~\bibnamefont {Kafri}}, \bibinfo {author} {\bibfnamefont
  {K.}~\bibnamefont {Kechedzhi}}, \bibinfo {author} {\bibfnamefont
  {J.}~\bibnamefont {Kelly}}, \bibinfo {author} {\bibfnamefont
  {T.}~\bibnamefont {Khattar}}, \bibinfo {author} {\bibfnamefont
  {M.}~\bibnamefont {Khezri}}, \bibinfo {author} {\bibfnamefont
  {M.}~\bibnamefont {Kieferov{\'a}}}, \bibinfo {author} {\bibfnamefont
  {S.}~\bibnamefont {Kim}}, \bibinfo {author} {\bibfnamefont {A.}~\bibnamefont
  {Kitaev}}, \bibinfo {author} {\bibfnamefont {P.~V.}\ \bibnamefont {Klimov}},
  \bibinfo {author} {\bibfnamefont {A.~R.}\ \bibnamefont {Klots}}, \bibinfo
  {author} {\bibfnamefont {A.~N.}\ \bibnamefont {Korotkov}}, \bibinfo {author}
  {\bibfnamefont {F.}~\bibnamefont {Kostritsa}}, \bibinfo {author}
  {\bibfnamefont {J.~M.}\ \bibnamefont {Kreikebaum}}, \bibinfo {author}
  {\bibfnamefont {D.}~\bibnamefont {Landhuis}}, \bibinfo {author}
  {\bibfnamefont {P.}~\bibnamefont {Laptev}}, \bibinfo {author} {\bibfnamefont
  {K.-M.}\ \bibnamefont {Lau}}, \bibinfo {author} {\bibfnamefont
  {L.}~\bibnamefont {Laws}}, \bibinfo {author} {\bibfnamefont {J.}~\bibnamefont
  {Lee}}, \bibinfo {author} {\bibfnamefont {K.}~\bibnamefont {Lee}}, \bibinfo
  {author} {\bibfnamefont {B.~J.}\ \bibnamefont {Lester}}, \bibinfo {author}
  {\bibfnamefont {A.}~\bibnamefont {Lill}}, \bibinfo {author} {\bibfnamefont
  {W.}~\bibnamefont {Liu}}, \bibinfo {author} {\bibfnamefont {A.}~\bibnamefont
  {Locharla}}, \bibinfo {author} {\bibfnamefont {E.}~\bibnamefont {Lucero}},
  \bibinfo {author} {\bibfnamefont {F.~D.}\ \bibnamefont {Malone}}, \bibinfo
  {author} {\bibfnamefont {J.}~\bibnamefont {Marshall}}, \bibinfo {author}
  {\bibfnamefont {O.}~\bibnamefont {Martin}}, \bibinfo {author} {\bibfnamefont
  {J.~R.}\ \bibnamefont {McClean}}, \bibinfo {author} {\bibfnamefont
  {T.}~\bibnamefont {McCourt}}, \bibinfo {author} {\bibfnamefont
  {M.}~\bibnamefont {McEwen}}, \bibinfo {author} {\bibfnamefont
  {A.}~\bibnamefont {Megrant}}, \bibinfo {author} {\bibfnamefont
  {B.}~\bibnamefont {Meurer~Costa}}, \bibinfo {author} {\bibfnamefont
  {X.}~\bibnamefont {Mi}}, \bibinfo {author} {\bibfnamefont {K.~C.}\
  \bibnamefont {Miao}}, \bibinfo {author} {\bibfnamefont {M.}~\bibnamefont
  {Mohseni}}, \bibinfo {author} {\bibfnamefont {S.}~\bibnamefont {Montazeri}},
  \bibinfo {author} {\bibfnamefont {A.}~\bibnamefont {Morvan}}, \bibinfo
  {author} {\bibfnamefont {E.}~\bibnamefont {Mount}}, \bibinfo {author}
  {\bibfnamefont {W.}~\bibnamefont {Mruczkiewicz}}, \bibinfo {author}
  {\bibfnamefont {O.}~\bibnamefont {Naaman}}, \bibinfo {author} {\bibfnamefont
  {M.}~\bibnamefont {Neeley}}, \bibinfo {author} {\bibfnamefont
  {C.}~\bibnamefont {Neill}}, \bibinfo {author} {\bibfnamefont
  {A.}~\bibnamefont {Nersisyan}}, \bibinfo {author} {\bibfnamefont
  {H.}~\bibnamefont {Neven}}, \bibinfo {author} {\bibfnamefont
  {M.}~\bibnamefont {Newman}}, \bibinfo {author} {\bibfnamefont {J.~H.}\
  \bibnamefont {Ng}}, \bibinfo {author} {\bibfnamefont {A.}~\bibnamefont
  {Nguyen}}, \bibinfo {author} {\bibfnamefont {M.}~\bibnamefont {Nguyen}},
  \bibinfo {author} {\bibfnamefont {M.~Y.}\ \bibnamefont {Niu}}, \bibinfo
  {author} {\bibfnamefont {T.~E.}\ \bibnamefont {O'Brien}}, \bibinfo {author}
  {\bibfnamefont {A.}~\bibnamefont {Opremcak}}, \bibinfo {author}
  {\bibfnamefont {J.}~\bibnamefont {Platt}}, \bibinfo {author} {\bibfnamefont
  {A.}~\bibnamefont {Petukhov}}, \bibinfo {author} {\bibfnamefont
  {R.}~\bibnamefont {Potter}}, \bibinfo {author} {\bibfnamefont {L.~P.}\
  \bibnamefont {Pryadko}}, \bibinfo {author} {\bibfnamefont {C.}~\bibnamefont
  {Quintana}}, \bibinfo {author} {\bibfnamefont {P.}~\bibnamefont {Roushan}},
  \bibinfo {author} {\bibfnamefont {N.~C.}\ \bibnamefont {Rubin}}, \bibinfo
  {author} {\bibfnamefont {N.}~\bibnamefont {Saei}}, \bibinfo {author}
  {\bibfnamefont {D.}~\bibnamefont {Sank}}, \bibinfo {author} {\bibfnamefont
  {K.}~\bibnamefont {Sankaragomathi}}, \bibinfo {author} {\bibfnamefont
  {K.~J.}\ \bibnamefont {Satzinger}}, \bibinfo {author} {\bibfnamefont {H.~F.}\
  \bibnamefont {Schurkus}}, \bibinfo {author} {\bibfnamefont {C.}~\bibnamefont
  {Schuster}}, \bibinfo {author} {\bibfnamefont {M.~J.}\ \bibnamefont
  {Shearn}}, \bibinfo {author} {\bibfnamefont {A.}~\bibnamefont {Shorter}},
  \bibinfo {author} {\bibfnamefont {V.}~\bibnamefont {Shvarts}}, \bibinfo
  {author} {\bibfnamefont {J.}~\bibnamefont {Skruzny}}, \bibinfo {author}
  {\bibfnamefont {V.}~\bibnamefont {Smelyanskiy}}, \bibinfo {author}
  {\bibfnamefont {W.~C.}\ \bibnamefont {Smith}}, \bibinfo {author}
  {\bibfnamefont {G.}~\bibnamefont {Sterling}}, \bibinfo {author}
  {\bibfnamefont {D.}~\bibnamefont {Strain}}, \bibinfo {author} {\bibfnamefont
  {M.}~\bibnamefont {Szalay}}, \bibinfo {author} {\bibfnamefont
  {A.}~\bibnamefont {Torres}}, \bibinfo {author} {\bibfnamefont
  {G.}~\bibnamefont {Vidal}}, \bibinfo {author} {\bibfnamefont
  {B.}~\bibnamefont {Villalonga}}, \bibinfo {author} {\bibfnamefont
  {C.}~\bibnamefont {Vollgraff~Heidweiller}}, \bibinfo {author} {\bibfnamefont
  {T.}~\bibnamefont {White}}, \bibinfo {author} {\bibfnamefont
  {C.}~\bibnamefont {Xing}}, \bibinfo {author} {\bibfnamefont {Z.~J.}\
  \bibnamefont {Yao}}, \bibinfo {author} {\bibfnamefont {P.}~\bibnamefont
  {Yeh}}, \bibinfo {author} {\bibfnamefont {J.}~\bibnamefont {Yoo}}, \bibinfo
  {author} {\bibfnamefont {G.}~\bibnamefont {Young}}, \bibinfo {author}
  {\bibfnamefont {A.}~\bibnamefont {Zalcman}}, \bibinfo {author} {\bibfnamefont
  {Y.}~\bibnamefont {Zhang}}, \bibinfo {author} {\bibfnamefont
  {N.}~\bibnamefont {Zhu}},\ and\ \bibinfo {author} {\bibfnamefont {G.~Q.}\
  \bibnamefont {AI}},\ }\bibfield  {title} {\bibinfo {title} {Suppressing
  quantum errors by scaling a surface code logical qubit},\ }\href
  {https://doi.org/10.1038/s41586-022-05434-1} {\bibfield  {journal} {\bibinfo
  {journal} {Nature}\ }\textbf {\bibinfo {volume} {614}},\ \bibinfo {pages}
  {676} (\bibinfo {year} {2023})}\BibitemShut {NoStop}%
\bibitem [{\citenamefont {Bluvstein}\ \emph {et~al.}(2024)\citenamefont
  {Bluvstein}, \citenamefont {Evered}, \citenamefont {Geim}, \citenamefont
  {Li}, \citenamefont {Zhou}, \citenamefont {Manovitz}, \citenamefont {Ebadi},
  \citenamefont {Cain}, \citenamefont {Kalinowski}, \citenamefont {Hangleiter},
  \citenamefont {Bonilla~Ataides}, \citenamefont {Maskara}, \citenamefont
  {Cong}, \citenamefont {Gao}, \citenamefont {Sales~Rodriguez}, \citenamefont
  {Karolyshyn}, \citenamefont {Semeghini}, \citenamefont {Gullans},
  \citenamefont {Greiner}, \citenamefont {Vuleti{\'c}},\ and\ \citenamefont
  {Lukin}}]{Bluvstein2024}%
  \BibitemOpen
  \bibfield  {author} {\bibinfo {author} {\bibfnamefont {D.}~\bibnamefont
  {Bluvstein}}, \bibinfo {author} {\bibfnamefont {S.~J.}\ \bibnamefont
  {Evered}}, \bibinfo {author} {\bibfnamefont {A.~A.}\ \bibnamefont {Geim}},
  \bibinfo {author} {\bibfnamefont {S.~H.}\ \bibnamefont {Li}}, \bibinfo
  {author} {\bibfnamefont {H.}~\bibnamefont {Zhou}}, \bibinfo {author}
  {\bibfnamefont {T.}~\bibnamefont {Manovitz}}, \bibinfo {author}
  {\bibfnamefont {S.}~\bibnamefont {Ebadi}}, \bibinfo {author} {\bibfnamefont
  {M.}~\bibnamefont {Cain}}, \bibinfo {author} {\bibfnamefont {M.}~\bibnamefont
  {Kalinowski}}, \bibinfo {author} {\bibfnamefont {D.}~\bibnamefont
  {Hangleiter}}, \bibinfo {author} {\bibfnamefont {J.~P.}\ \bibnamefont
  {Bonilla~Ataides}}, \bibinfo {author} {\bibfnamefont {N.}~\bibnamefont
  {Maskara}}, \bibinfo {author} {\bibfnamefont {I.}~\bibnamefont {Cong}},
  \bibinfo {author} {\bibfnamefont {X.}~\bibnamefont {Gao}}, \bibinfo {author}
  {\bibfnamefont {P.}~\bibnamefont {Sales~Rodriguez}}, \bibinfo {author}
  {\bibfnamefont {T.}~\bibnamefont {Karolyshyn}}, \bibinfo {author}
  {\bibfnamefont {G.}~\bibnamefont {Semeghini}}, \bibinfo {author}
  {\bibfnamefont {M.~J.}\ \bibnamefont {Gullans}}, \bibinfo {author}
  {\bibfnamefont {M.}~\bibnamefont {Greiner}}, \bibinfo {author} {\bibfnamefont
  {V.}~\bibnamefont {Vuleti{\'c}}},\ and\ \bibinfo {author} {\bibfnamefont
  {M.~D.}\ \bibnamefont {Lukin}},\ }\bibfield  {title} {\bibinfo {title}
  {Logical quantum processor based on reconfigurable atom arrays},\ }\href
  {https://doi.org/10.1038/s41586-023-06927-3} {\bibfield  {journal} {\bibinfo
  {journal} {Nature}\ }\textbf {\bibinfo {volume} {626}},\ \bibinfo {pages}
  {58} (\bibinfo {year} {2024})}\BibitemShut {NoStop}%
\bibitem [{\citenamefont {Li}\ and\ \citenamefont
  {Benjamin}(2017)}]{PhysRevX.7.021050}%
  \BibitemOpen
  \bibfield  {author} {\bibinfo {author} {\bibfnamefont {Y.}~\bibnamefont
  {Li}}\ and\ \bibinfo {author} {\bibfnamefont {S.~C.}\ \bibnamefont
  {Benjamin}},\ }\bibfield  {title} {\bibinfo {title} {Efficient variational
  quantum simulator incorporating active error minimization},\ }\href
  {https://doi.org/10.1103/PhysRevX.7.021050} {\bibfield  {journal} {\bibinfo
  {journal} {Physical Review X}\ }\textbf {\bibinfo {volume} {7}},\ \bibinfo
  {pages} {021050} (\bibinfo {year} {2017})}\BibitemShut {NoStop}%
\bibitem [{\citenamefont {Temme}\ \emph {et~al.}(2017)\citenamefont {Temme},
  \citenamefont {Bravyi},\ and\ \citenamefont
  {Gambetta}}]{PhysRevLett.119.180509}%
  \BibitemOpen
  \bibfield  {author} {\bibinfo {author} {\bibfnamefont {K.}~\bibnamefont
  {Temme}}, \bibinfo {author} {\bibfnamefont {S.}~\bibnamefont {Bravyi}},\ and\
  \bibinfo {author} {\bibfnamefont {J.~M.}\ \bibnamefont {Gambetta}},\
  }\bibfield  {title} {\bibinfo {title} {Error mitigation for short-depth
  quantum circuits},\ }\href {https://doi.org/10.1103/PhysRevLett.119.180509}
  {\bibfield  {journal} {\bibinfo  {journal} {Physical Review Letters}\
  }\textbf {\bibinfo {volume} {119}},\ \bibinfo {pages} {180509} (\bibinfo
  {year} {2017})}\BibitemShut {NoStop}%
\bibitem [{\citenamefont {Endo}\ \emph {et~al.}(2018)\citenamefont {Endo},
  \citenamefont {Benjamin},\ and\ \citenamefont {Li}}]{PhysRevX.8.031027}%
  \BibitemOpen
  \bibfield  {author} {\bibinfo {author} {\bibfnamefont {S.}~\bibnamefont
  {Endo}}, \bibinfo {author} {\bibfnamefont {S.~C.}\ \bibnamefont {Benjamin}},\
  and\ \bibinfo {author} {\bibfnamefont {Y.}~\bibnamefont {Li}},\ }\bibfield
  {title} {\bibinfo {title} {Practical quantum error mitigation for near-future
  applications},\ }\href {https://doi.org/10.1103/PhysRevX.8.031027} {\bibfield
   {journal} {\bibinfo  {journal} {Physical Review X}\ }\textbf {\bibinfo
  {volume} {8}},\ \bibinfo {pages} {031027} (\bibinfo {year}
  {2018})}\BibitemShut {NoStop}%
\bibitem [{\citenamefont {Endo}\ \emph {et~al.}(2021)\citenamefont {Endo},
  \citenamefont {Cai}, \citenamefont {Benjamin},\ and\ \citenamefont
  {Yuan}}]{doi:10.7566/JPSJ.90.032001}%
  \BibitemOpen
  \bibfield  {author} {\bibinfo {author} {\bibfnamefont {S.}~\bibnamefont
  {Endo}}, \bibinfo {author} {\bibfnamefont {Z.}~\bibnamefont {Cai}}, \bibinfo
  {author} {\bibfnamefont {S.~C.}\ \bibnamefont {Benjamin}},\ and\ \bibinfo
  {author} {\bibfnamefont {X.}~\bibnamefont {Yuan}},\ }\bibfield  {title}
  {\bibinfo {title} {Hybrid quantum-classical algorithms and quantum error
  mitigation},\ }\href {https://doi.org/10.7566/JPSJ.90.032001} {\bibfield
  {journal} {\bibinfo  {journal} {Journal of the Physical Society of Japan}\
  }\textbf {\bibinfo {volume} {90}},\ \bibinfo {pages} {032001} (\bibinfo
  {year} {2021})}\BibitemShut {NoStop}%
\bibitem [{\citenamefont {Cai}\ \emph {et~al.}(2023)\citenamefont {Cai},
  \citenamefont {Babbush}, \citenamefont {Benjamin}, \citenamefont {Endo},
  \citenamefont {Huggins}, \citenamefont {Li}, \citenamefont {McClean},\ and\
  \citenamefont {O'Brien}}]{RevModPhys.95.045005}%
  \BibitemOpen
  \bibfield  {author} {\bibinfo {author} {\bibfnamefont {Z.}~\bibnamefont
  {Cai}}, \bibinfo {author} {\bibfnamefont {R.}~\bibnamefont {Babbush}},
  \bibinfo {author} {\bibfnamefont {S.~C.}\ \bibnamefont {Benjamin}}, \bibinfo
  {author} {\bibfnamefont {S.}~\bibnamefont {Endo}}, \bibinfo {author}
  {\bibfnamefont {W.~J.}\ \bibnamefont {Huggins}}, \bibinfo {author}
  {\bibfnamefont {Y.}~\bibnamefont {Li}}, \bibinfo {author} {\bibfnamefont
  {J.~R.}\ \bibnamefont {McClean}},\ and\ \bibinfo {author} {\bibfnamefont
  {T.~E.}\ \bibnamefont {O'Brien}},\ }\bibfield  {title} {\bibinfo {title}
  {Quantum error mitigation},\ }\href
  {https://doi.org/10.1103/RevModPhys.95.045005} {\bibfield  {journal}
  {\bibinfo  {journal} {Reviews of Modern Physics}\ }\textbf {\bibinfo {volume}
  {95}},\ \bibinfo {pages} {045005} (\bibinfo {year} {2023})}\BibitemShut
  {NoStop}%
\bibitem [{\citenamefont {Cirac}\ \emph {et~al.}(1999)\citenamefont {Cirac},
  \citenamefont {Ekert},\ and\ \citenamefont
  {Macchiavello}}]{PhysRevLett.82.4344}%
  \BibitemOpen
  \bibfield  {author} {\bibinfo {author} {\bibfnamefont {J.~I.}\ \bibnamefont
  {Cirac}}, \bibinfo {author} {\bibfnamefont {A.~K.}\ \bibnamefont {Ekert}},\
  and\ \bibinfo {author} {\bibfnamefont {C.}~\bibnamefont {Macchiavello}},\
  }\bibfield  {title} {\bibinfo {title} {Optimal purification of single
  qubits},\ }\href {https://doi.org/10.1103/PhysRevLett.82.4344} {\bibfield
  {journal} {\bibinfo  {journal} {Physical Review Letters}\ }\textbf {\bibinfo
  {volume} {82}},\ \bibinfo {pages} {4344} (\bibinfo {year}
  {1999})}\BibitemShut {NoStop}%
\bibitem [{\citenamefont {Keyl}\ and\ \citenamefont {Werner}(2001)}]{Keyl2001}%
  \BibitemOpen
  \bibfield  {author} {\bibinfo {author} {\bibfnamefont {M.}~\bibnamefont
  {Keyl}}\ and\ \bibinfo {author} {\bibfnamefont {R.~F.}\ \bibnamefont
  {Werner}},\ }\bibfield  {title} {\bibinfo {title} {The rate of optimal
  purification procedures},\ }\href {https://doi.org/10.1007/PL00001027}
  {\bibfield  {journal} {\bibinfo  {journal} {Annales Henri Poincar{\'e}}\
  }\textbf {\bibinfo {volume} {2}},\ \bibinfo {pages} {1} (\bibinfo {year}
  {2001})}\BibitemShut {NoStop}%
\bibitem [{\citenamefont {Fiur\'a\ifmmode~\check{s}\else
  \v{s}\fi{}ek}(2004)}]{PhysRevA.70.032308}%
  \BibitemOpen
  \bibfield  {author} {\bibinfo {author} {\bibfnamefont {J.}~\bibnamefont
  {Fiur\'a\ifmmode~\check{s}\else \v{s}\fi{}ek}},\ }\bibfield  {title}
  {\bibinfo {title} {Optimal probabilistic cloning and purification of quantum
  states},\ }\href {https://doi.org/10.1103/PhysRevA.70.032308} {\bibfield
  {journal} {\bibinfo  {journal} {Physical Review A}\ }\textbf {\bibinfo
  {volume} {70}},\ \bibinfo {pages} {032308} (\bibinfo {year}
  {2004})}\BibitemShut {NoStop}%
\bibitem [{\citenamefont {Li}\ \emph {et~al.}(2025{\natexlab{a}})\citenamefont
  {Li}, \citenamefont {Fu}, \citenamefont {Isogawa}, \citenamefont {Silva},\
  and\ \citenamefont {Chuang}}]{li2025optimalquantumpurityamplification}%
  \BibitemOpen
  \bibfield  {author} {\bibinfo {author} {\bibfnamefont {Z.}~\bibnamefont
  {Li}}, \bibinfo {author} {\bibfnamefont {H.}~\bibnamefont {Fu}}, \bibinfo
  {author} {\bibfnamefont {T.}~\bibnamefont {Isogawa}}, \bibinfo {author}
  {\bibfnamefont {C.}~\bibnamefont {Silva}},\ and\ \bibinfo {author}
  {\bibfnamefont {I.}~\bibnamefont {Chuang}},\ }\href@noop {} {\bibinfo {title}
  {Optimal quantum purity amplification}},\ \bibinfo {howpublished} {ArXiv
  preprints} (\bibinfo {year} {2025}{\natexlab{a}}),\ \Eprint
  {https://arxiv.org/abs/2409.18167} {arXiv:2409.18167 [quant-ph]} \BibitemShut
  {NoStop}%
\bibitem [{\citenamefont {Childs}\ \emph {et~al.}(2025)\citenamefont {Childs},
  \citenamefont {Fu}, \citenamefont {Leung}, \citenamefont {Li}, \citenamefont
  {Ozols},\ and\ \citenamefont {Vyas}}]{Childs2025streamingquantum}%
  \BibitemOpen
  \bibfield  {author} {\bibinfo {author} {\bibfnamefont {A.~M.}\ \bibnamefont
  {Childs}}, \bibinfo {author} {\bibfnamefont {H.}~\bibnamefont {Fu}}, \bibinfo
  {author} {\bibfnamefont {D.}~\bibnamefont {Leung}}, \bibinfo {author}
  {\bibfnamefont {Z.}~\bibnamefont {Li}}, \bibinfo {author} {\bibfnamefont
  {M.}~\bibnamefont {Ozols}},\ and\ \bibinfo {author} {\bibfnamefont
  {V.}~\bibnamefont {Vyas}},\ }\bibfield  {title} {\bibinfo {title} {Streaming
  quantum state purification},\ }\href
  {https://doi.org/10.22331/q-2025-01-21-1603} {\bibfield  {journal} {\bibinfo
  {journal} {Quantum}\ }\textbf {\bibinfo {volume} {9}},\ \bibinfo {pages}
  {1603} (\bibinfo {year} {2025})}\BibitemShut {NoStop}%
\bibitem [{\citenamefont {Grier}\ \emph {et~al.}(2025)\citenamefont {Grier},
  \citenamefont {Leung}, \citenamefont {Li}, \citenamefont {Pashayan},\ and\
  \citenamefont {Schaeffer}}]{grier2025streamingquantumstatepurification}%
  \BibitemOpen
  \bibfield  {author} {\bibinfo {author} {\bibfnamefont {D.}~\bibnamefont
  {Grier}}, \bibinfo {author} {\bibfnamefont {D.}~\bibnamefont {Leung}},
  \bibinfo {author} {\bibfnamefont {Z.}~\bibnamefont {Li}}, \bibinfo {author}
  {\bibfnamefont {H.}~\bibnamefont {Pashayan}},\ and\ \bibinfo {author}
  {\bibfnamefont {L.}~\bibnamefont {Schaeffer}},\ }\href@noop {} {\bibinfo
  {title} {Streaming quantum state purification for general mixed states}},\
  \bibinfo {howpublished} {ArXiv preprints} (\bibinfo {year} {2025}),\ \Eprint
  {https://arxiv.org/abs/2503.22644} {arXiv:2503.22644 [quant-ph]} \BibitemShut
  {NoStop}%
\bibitem [{\citenamefont {Yao}\ \emph {et~al.}(2025)\citenamefont {Yao},
  \citenamefont {Chen}, \citenamefont {Huang}, \citenamefont {Chen},
  \citenamefont {Fu},\ and\ \citenamefont {Wang}}]{Yao_2025}%
  \BibitemOpen
  \bibfield  {author} {\bibinfo {author} {\bibfnamefont {H.}~\bibnamefont
  {Yao}}, \bibinfo {author} {\bibfnamefont {Y.-A.}\ \bibnamefont {Chen}},
  \bibinfo {author} {\bibfnamefont {E.}~\bibnamefont {Huang}}, \bibinfo
  {author} {\bibfnamefont {K.}~\bibnamefont {Chen}}, \bibinfo {author}
  {\bibfnamefont {H.}~\bibnamefont {Fu}},\ and\ \bibinfo {author}
  {\bibfnamefont {X.}~\bibnamefont {Wang}},\ }\bibfield  {title} {\bibinfo
  {title} {Protocols and trade-offs of quantum state purification},\ }\href
  {https://doi.org/10.1088/2058-9565/add17e} {\bibfield  {journal} {\bibinfo
  {journal} {Quantum Science and Technology}\ }\textbf {\bibinfo {volume}
  {10}},\ \bibinfo {pages} {035020} (\bibinfo {year} {2025})}\BibitemShut
  {NoStop}%
\bibitem [{\citenamefont {Bennett}\ \emph {et~al.}(1996)\citenamefont
  {Bennett}, \citenamefont {Brassard}, \citenamefont {Popescu}, \citenamefont
  {Schumacher}, \citenamefont {Smolin},\ and\ \citenamefont
  {Wootters}}]{PhysRevLett.76.722}%
  \BibitemOpen
  \bibfield  {author} {\bibinfo {author} {\bibfnamefont {C.~H.}\ \bibnamefont
  {Bennett}}, \bibinfo {author} {\bibfnamefont {G.}~\bibnamefont {Brassard}},
  \bibinfo {author} {\bibfnamefont {S.}~\bibnamefont {Popescu}}, \bibinfo
  {author} {\bibfnamefont {B.}~\bibnamefont {Schumacher}}, \bibinfo {author}
  {\bibfnamefont {J.~A.}\ \bibnamefont {Smolin}},\ and\ \bibinfo {author}
  {\bibfnamefont {W.~K.}\ \bibnamefont {Wootters}},\ }\bibfield  {title}
  {\bibinfo {title} {Purification of noisy entanglement and faithful
  teleportation via noisy channels},\ }\href
  {https://doi.org/10.1103/PhysRevLett.76.722} {\bibfield  {journal} {\bibinfo
  {journal} {Physical Review Letters}\ }\textbf {\bibinfo {volume} {76}},\
  \bibinfo {pages} {722} (\bibinfo {year} {1996})}\BibitemShut {NoStop}%
\bibitem [{\citenamefont {Pan}\ \emph {et~al.}(2001{\natexlab{a}})\citenamefont
  {Pan}, \citenamefont {Simon}, \citenamefont {Brukner},\ and\ \citenamefont
  {Zeilinger}}]{Pan2001}%
  \BibitemOpen
  \bibfield  {author} {\bibinfo {author} {\bibfnamefont {J.-W.}\ \bibnamefont
  {Pan}}, \bibinfo {author} {\bibfnamefont {C.}~\bibnamefont {Simon}}, \bibinfo
  {author} {\bibfnamefont {{\v C}.}~\bibnamefont {Brukner}},\ and\ \bibinfo
  {author} {\bibfnamefont {A.}~\bibnamefont {Zeilinger}},\ }\bibfield  {title}
  {\bibinfo {title} {Entanglement purification for quantum communication},\
  }\href {https://doi.org/10.1038/35074041} {\bibfield  {journal} {\bibinfo
  {journal} {Nature}\ }\textbf {\bibinfo {volume} {410}},\ \bibinfo {pages}
  {1067} (\bibinfo {year} {2001}{\natexlab{a}})}\BibitemShut {NoStop}%
\bibitem [{\citenamefont {Pan}\ \emph {et~al.}(2003)\citenamefont {Pan},
  \citenamefont {Gasparoni}, \citenamefont {Ursin}, \citenamefont {Weihs},\
  and\ \citenamefont {Zeilinger}}]{Pan2003}%
  \BibitemOpen
  \bibfield  {author} {\bibinfo {author} {\bibfnamefont {J.-W.}\ \bibnamefont
  {Pan}}, \bibinfo {author} {\bibfnamefont {S.}~\bibnamefont {Gasparoni}},
  \bibinfo {author} {\bibfnamefont {R.}~\bibnamefont {Ursin}}, \bibinfo
  {author} {\bibfnamefont {G.}~\bibnamefont {Weihs}},\ and\ \bibinfo {author}
  {\bibfnamefont {A.}~\bibnamefont {Zeilinger}},\ }\bibfield  {title} {\bibinfo
  {title} {Experimental entanglement purification of arbitrary unknown
  states},\ }\href {https://doi.org/10.1038/nature01623} {\bibfield  {journal}
  {\bibinfo  {journal} {Nature}\ }\textbf {\bibinfo {volume} {423}},\ \bibinfo
  {pages} {417} (\bibinfo {year} {2003})}\BibitemShut {NoStop}%
\bibitem [{\citenamefont {Reichle}\ \emph {et~al.}(2006)\citenamefont
  {Reichle}, \citenamefont {Leibfried}, \citenamefont {Knill}, \citenamefont
  {Britton}, \citenamefont {Blakestad}, \citenamefont {Jost}, \citenamefont
  {Langer}, \citenamefont {Ozeri}, \citenamefont {Seidelin},\ and\
  \citenamefont {Wineland}}]{Reichle2006}%
  \BibitemOpen
  \bibfield  {author} {\bibinfo {author} {\bibfnamefont {R.}~\bibnamefont
  {Reichle}}, \bibinfo {author} {\bibfnamefont {D.}~\bibnamefont {Leibfried}},
  \bibinfo {author} {\bibfnamefont {E.}~\bibnamefont {Knill}}, \bibinfo
  {author} {\bibfnamefont {J.}~\bibnamefont {Britton}}, \bibinfo {author}
  {\bibfnamefont {R.~B.}\ \bibnamefont {Blakestad}}, \bibinfo {author}
  {\bibfnamefont {J.~D.}\ \bibnamefont {Jost}}, \bibinfo {author}
  {\bibfnamefont {C.}~\bibnamefont {Langer}}, \bibinfo {author} {\bibfnamefont
  {R.}~\bibnamefont {Ozeri}}, \bibinfo {author} {\bibfnamefont
  {S.}~\bibnamefont {Seidelin}},\ and\ \bibinfo {author} {\bibfnamefont
  {D.~J.}\ \bibnamefont {Wineland}},\ }\bibfield  {title} {\bibinfo {title}
  {Experimental purification of two-atom entanglement},\ }\href
  {https://doi.org/10.1038/nature05146} {\bibfield  {journal} {\bibinfo
  {journal} {Nature}\ }\textbf {\bibinfo {volume} {443}},\ \bibinfo {pages}
  {838} (\bibinfo {year} {2006})}\BibitemShut {NoStop}%
\bibitem [{\citenamefont {D\"ur}\ \emph {et~al.}(1999)\citenamefont {D\"ur},
  \citenamefont {Briegel}, \citenamefont {Cirac},\ and\ \citenamefont
  {Zoller}}]{PhysRevA.59.169}%
  \BibitemOpen
  \bibfield  {author} {\bibinfo {author} {\bibfnamefont {W.}~\bibnamefont
  {D\"ur}}, \bibinfo {author} {\bibfnamefont {H.-J.}\ \bibnamefont {Briegel}},
  \bibinfo {author} {\bibfnamefont {J.~I.}\ \bibnamefont {Cirac}},\ and\
  \bibinfo {author} {\bibfnamefont {P.}~\bibnamefont {Zoller}},\ }\bibfield
  {title} {\bibinfo {title} {Quantum repeaters based on entanglement
  purification},\ }\href {https://doi.org/10.1103/PhysRevA.59.169} {\bibfield
  {journal} {\bibinfo  {journal} {Physical Review A}\ }\textbf {\bibinfo
  {volume} {59}},\ \bibinfo {pages} {169} (\bibinfo {year} {1999})}\BibitemShut
  {NoStop}%
\bibitem [{\citenamefont {Pan}\ \emph {et~al.}(2001{\natexlab{b}})\citenamefont
  {Pan}, \citenamefont {Simon}, \citenamefont {Brukner},\ and\ \citenamefont
  {Zeilinger}}]{Pan2001Entanglement}%
  \BibitemOpen
  \bibfield  {author} {\bibinfo {author} {\bibfnamefont {J.-W.}\ \bibnamefont
  {Pan}}, \bibinfo {author} {\bibfnamefont {C.}~\bibnamefont {Simon}}, \bibinfo
  {author} {\bibfnamefont {{\v C}.}~\bibnamefont {Brukner}},\ and\ \bibinfo
  {author} {\bibfnamefont {A.}~\bibnamefont {Zeilinger}},\ }\bibfield  {title}
  {\bibinfo {title} {Entanglement purification for quantum communication},\
  }\href {https://doi.org/10.1038/35074041} {\bibfield  {journal} {\bibinfo
  {journal} {Nature}\ }\textbf {\bibinfo {volume} {410}},\ \bibinfo {pages}
  {1067} (\bibinfo {year} {2001}{\natexlab{b}})}\BibitemShut {NoStop}%
\bibitem [{\citenamefont {D\"ur}\ and\ \citenamefont
  {Briegel}(2003)}]{PhysRevLett.90.067901}%
  \BibitemOpen
  \bibfield  {author} {\bibinfo {author} {\bibfnamefont {W.}~\bibnamefont
  {D\"ur}}\ and\ \bibinfo {author} {\bibfnamefont {H.-J.}\ \bibnamefont
  {Briegel}},\ }\bibfield  {title} {\bibinfo {title} {Entanglement purification
  for quantum computation},\ }\href
  {https://doi.org/10.1103/PhysRevLett.90.067901} {\bibfield  {journal}
  {\bibinfo  {journal} {Physical Review Letters}\ }\textbf {\bibinfo {volume}
  {90}},\ \bibinfo {pages} {067901} (\bibinfo {year} {2003})}\BibitemShut
  {NoStop}%
\bibitem [{\citenamefont {Hu}\ \emph {et~al.}(2021)\citenamefont {Hu},
  \citenamefont {Huang}, \citenamefont {Sheng}, \citenamefont {Zhou},
  \citenamefont {Liu}, \citenamefont {Guo}, \citenamefont {Zhang},
  \citenamefont {Xing}, \citenamefont {Huang}, \citenamefont {Li},\ and\
  \citenamefont {Guo}}]{PhysRevLett.126.010503}%
  \BibitemOpen
  \bibfield  {author} {\bibinfo {author} {\bibfnamefont {X.-M.}\ \bibnamefont
  {Hu}}, \bibinfo {author} {\bibfnamefont {C.-X.}\ \bibnamefont {Huang}},
  \bibinfo {author} {\bibfnamefont {Y.-B.}\ \bibnamefont {Sheng}}, \bibinfo
  {author} {\bibfnamefont {L.}~\bibnamefont {Zhou}}, \bibinfo {author}
  {\bibfnamefont {B.-H.}\ \bibnamefont {Liu}}, \bibinfo {author} {\bibfnamefont
  {Y.}~\bibnamefont {Guo}}, \bibinfo {author} {\bibfnamefont {C.}~\bibnamefont
  {Zhang}}, \bibinfo {author} {\bibfnamefont {W.-B.}\ \bibnamefont {Xing}},
  \bibinfo {author} {\bibfnamefont {Y.-F.}\ \bibnamefont {Huang}}, \bibinfo
  {author} {\bibfnamefont {C.-F.}\ \bibnamefont {Li}},\ and\ \bibinfo {author}
  {\bibfnamefont {G.-C.}\ \bibnamefont {Guo}},\ }\bibfield  {title} {\bibinfo
  {title} {Long-distance entanglement purification for quantum communication},\
  }\href {https://doi.org/10.1103/PhysRevLett.126.010503} {\bibfield  {journal}
  {\bibinfo  {journal} {Physical Review Letters}\ }\textbf {\bibinfo {volume}
  {126}},\ \bibinfo {pages} {010503} (\bibinfo {year} {2021})}\BibitemShut
  {NoStop}%
\bibitem [{\citenamefont {Yan}\ \emph {et~al.}(2022)\citenamefont {Yan},
  \citenamefont {Zhong}, \citenamefont {Chang}, \citenamefont {Bienfait},
  \citenamefont {Chou}, \citenamefont {Conner}, \citenamefont {Dumur},
  \citenamefont {Grebel}, \citenamefont {Povey},\ and\ \citenamefont
  {Cleland}}]{PhysRevLett.128.080504}%
  \BibitemOpen
  \bibfield  {author} {\bibinfo {author} {\bibfnamefont {H.}~\bibnamefont
  {Yan}}, \bibinfo {author} {\bibfnamefont {Y.}~\bibnamefont {Zhong}}, \bibinfo
  {author} {\bibfnamefont {H.-S.}\ \bibnamefont {Chang}}, \bibinfo {author}
  {\bibfnamefont {A.}~\bibnamefont {Bienfait}}, \bibinfo {author}
  {\bibfnamefont {M.-H.}\ \bibnamefont {Chou}}, \bibinfo {author}
  {\bibfnamefont {C.~R.}\ \bibnamefont {Conner}}, \bibinfo {author}
  {\bibfnamefont {E.}~\bibnamefont {Dumur}}, \bibinfo {author} {\bibfnamefont
  {J.}~\bibnamefont {Grebel}}, \bibinfo {author} {\bibfnamefont {R.~G.}\
  \bibnamefont {Povey}},\ and\ \bibinfo {author} {\bibfnamefont {A.~N.}\
  \bibnamefont {Cleland}},\ }\bibfield  {title} {\bibinfo {title} {Entanglement
  purification and protection in a superconducting quantum network},\ }\href
  {https://doi.org/10.1103/PhysRevLett.128.080504} {\bibfield  {journal}
  {\bibinfo  {journal} {Physical Review Letters}\ }\textbf {\bibinfo {volume}
  {128}},\ \bibinfo {pages} {080504} (\bibinfo {year} {2022})}\BibitemShut
  {NoStop}%
\bibitem [{\citenamefont {Fang}\ and\ \citenamefont
  {Liu}(2020)}]{PhysRevLett.125.060405}%
  \BibitemOpen
  \bibfield  {author} {\bibinfo {author} {\bibfnamefont {K.}~\bibnamefont
  {Fang}}\ and\ \bibinfo {author} {\bibfnamefont {Z.-W.}\ \bibnamefont {Liu}},\
  }\bibfield  {title} {\bibinfo {title} {No-go theorems for quantum resource
  purification},\ }\href {https://doi.org/10.1103/PhysRevLett.125.060405}
  {\bibfield  {journal} {\bibinfo  {journal} {Physical Review Letters}\
  }\textbf {\bibinfo {volume} {125}},\ \bibinfo {pages} {060405} (\bibinfo
  {year} {2020})}\BibitemShut {NoStop}%
\bibitem [{\citenamefont {Fang}\ and\ \citenamefont
  {Liu}(2022)}]{PRXQuantum.3.010337}%
  \BibitemOpen
  \bibfield  {author} {\bibinfo {author} {\bibfnamefont {K.}~\bibnamefont
  {Fang}}\ and\ \bibinfo {author} {\bibfnamefont {Z.-W.}\ \bibnamefont {Liu}},\
  }\bibfield  {title} {\bibinfo {title} {No-go theorems for quantum resource
  purification: New approach and channel theory},\ }\href
  {https://doi.org/10.1103/PRXQuantum.3.010337} {\bibfield  {journal} {\bibinfo
   {journal} {PRX Quantum}\ }\textbf {\bibinfo {volume} {3}},\ \bibinfo {pages}
  {010337} (\bibinfo {year} {2022})}\BibitemShut {NoStop}%
\bibitem [{\citenamefont {Regula}\ and\ \citenamefont
  {Takagi}(2021)}]{Regula2021}%
  \BibitemOpen
  \bibfield  {author} {\bibinfo {author} {\bibfnamefont {B.}~\bibnamefont
  {Regula}}\ and\ \bibinfo {author} {\bibfnamefont {R.}~\bibnamefont
  {Takagi}},\ }\bibfield  {title} {\bibinfo {title} {Fundamental limitations on
  distillation of quantum channel resources},\ }\href
  {https://doi.org/10.1038/s41467-021-24699-0} {\bibfield  {journal} {\bibinfo
  {journal} {Nature Communications}\ }\textbf {\bibinfo {volume} {12}},\
  \bibinfo {pages} {4411} (\bibinfo {year} {2021})}\BibitemShut {NoStop}%
\bibitem [{\citenamefont {Zang}\ \emph {et~al.}(2025)\citenamefont {Zang},
  \citenamefont {Chen}, \citenamefont {Chitambar}, \citenamefont {Suchara},\
  and\ \citenamefont {Zhong}}]{PhysRevLett.134.190803}%
  \BibitemOpen
  \bibfield  {author} {\bibinfo {author} {\bibfnamefont {A.}~\bibnamefont
  {Zang}}, \bibinfo {author} {\bibfnamefont {X.}~\bibnamefont {Chen}}, \bibinfo
  {author} {\bibfnamefont {E.}~\bibnamefont {Chitambar}}, \bibinfo {author}
  {\bibfnamefont {M.}~\bibnamefont {Suchara}},\ and\ \bibinfo {author}
  {\bibfnamefont {T.}~\bibnamefont {Zhong}},\ }\bibfield  {title} {\bibinfo
  {title} {No-go theorems for universal entanglement purification},\ }\href
  {https://doi.org/10.1103/PhysRevLett.134.190803} {\bibfield  {journal}
  {\bibinfo  {journal} {Physical Review Letters}\ }\textbf {\bibinfo {volume}
  {134}},\ \bibinfo {pages} {190803} (\bibinfo {year} {2025})}\BibitemShut
  {NoStop}%
\bibitem [{\citenamefont {He}\ \emph {et~al.}(2026)\citenamefont {He},
  \citenamefont {Zhu}, \citenamefont {Yao}, \citenamefont {Liu}, \citenamefont
  {Li},\ and\ \citenamefont {Wang}}]{bdw8-k91v}%
  \BibitemOpen
  \bibfield  {author} {\bibinfo {author} {\bibfnamefont {K.}~\bibnamefont
  {He}}, \bibinfo {author} {\bibfnamefont {C.}~\bibnamefont {Zhu}}, \bibinfo
  {author} {\bibfnamefont {H.}~\bibnamefont {Yao}}, \bibinfo {author}
  {\bibfnamefont {J.}~\bibnamefont {Liu}}, \bibinfo {author} {\bibfnamefont
  {Y.}~\bibnamefont {Li}},\ and\ \bibinfo {author} {\bibfnamefont
  {X.}~\bibnamefont {Wang}},\ }\bibfield  {title} {\bibinfo {title} {No-go
  theorems for universal quantum state purification via classically simulable
  operations},\ }\href {https://doi.org/10.1103/bdw8-k91v} {\bibfield
  {journal} {\bibinfo  {journal} {Physical Review Letters}\ }\textbf {\bibinfo
  {volume} {136}},\ \bibinfo {pages} {090204} (\bibinfo {year}
  {2026})}\BibitemShut {NoStop}%
\bibitem [{\citenamefont {Zhao}\ \emph {et~al.}(2026)\citenamefont {Zhao},
  \citenamefont {Chen}, \citenamefont {Zhao}, \citenamefont {Zhu},
  \citenamefont {Chiribella},\ and\ \citenamefont {Wang}}]{3bb1-pmtp}%
  \BibitemOpen
  \bibfield  {author} {\bibinfo {author} {\bibfnamefont {B.}~\bibnamefont
  {Zhao}}, \bibinfo {author} {\bibfnamefont {Y.-A.}\ \bibnamefont {Chen}},
  \bibinfo {author} {\bibfnamefont {X.}~\bibnamefont {Zhao}}, \bibinfo {author}
  {\bibfnamefont {C.}~\bibnamefont {Zhu}}, \bibinfo {author} {\bibfnamefont
  {G.}~\bibnamefont {Chiribella}},\ and\ \bibinfo {author} {\bibfnamefont
  {X.}~\bibnamefont {Wang}},\ }\bibfield  {title} {\bibinfo {title} {Power and
  limitations of distributed quantum state purification},\ }\href
  {https://doi.org/10.1103/3bb1-pmtp} {\bibfield  {journal} {\bibinfo
  {journal} {Physical Review Letters}\ }\textbf {\bibinfo {volume} {136}},\
  \bibinfo {pages} {090203} (\bibinfo {year} {2026})}\BibitemShut {NoStop}%
\bibitem [{\citenamefont {Chiribella}\ \emph
  {et~al.}(2008{\natexlab{a}})\citenamefont {Chiribella}, \citenamefont
  {D'Ariano},\ and\ \citenamefont {Perinotti}}]{Chiribella_2008}%
  \BibitemOpen
  \bibfield  {author} {\bibinfo {author} {\bibfnamefont {G.}~\bibnamefont
  {Chiribella}}, \bibinfo {author} {\bibfnamefont {G.~M.}\ \bibnamefont
  {D'Ariano}},\ and\ \bibinfo {author} {\bibfnamefont {P.}~\bibnamefont
  {Perinotti}},\ }\bibfield  {title} {\bibinfo {title} {Transforming quantum
  operations: Quantum supermaps},\ }\href
  {https://doi.org/10.1209/0295-5075/83/30004} {\bibfield  {journal} {\bibinfo
  {journal} {Europhysics Letters}\ }\textbf {\bibinfo {volume} {83}},\ \bibinfo
  {pages} {30004} (\bibinfo {year} {2008}{\natexlab{a}})}\BibitemShut {NoStop}%
\bibitem [{\citenamefont {Gour}(2019)}]{8678741}%
  \BibitemOpen
  \bibfield  {author} {\bibinfo {author} {\bibfnamefont {G.}~\bibnamefont
  {Gour}},\ }\bibfield  {title} {\bibinfo {title} {Comparison of quantum
  channels by superchannels},\ }\href
  {https://doi.org/10.1109/TIT.2019.2907989} {\bibfield  {journal} {\bibinfo
  {journal} {IEEE Transactions on Information Theory}\ }\textbf {\bibinfo
  {volume} {65}},\ \bibinfo {pages} {5880} (\bibinfo {year}
  {2019})}\BibitemShut {NoStop}%
\bibitem [{\citenamefont
  {Xiao}(2025)}]{xiao2025superchanneltearsgeneralizedoccams}%
  \BibitemOpen
  \bibfield  {author} {\bibinfo {author} {\bibfnamefont {Y.}~\bibnamefont
  {Xiao}},\ }\href@noop {} {\bibinfo {title} {Superchannel without tears: A
  generalized occam's razor for quantum processes}},\ \bibinfo {howpublished}
  {ArXiv preprints} (\bibinfo {year} {2025}),\ \Eprint
  {https://arxiv.org/abs/2512.02493} {arXiv:2512.02493 [quant-ph]} \BibitemShut
  {NoStop}%
\bibitem [{\citenamefont {Jamio\l{}kowski}(1972)}]{JAMIOLKOWSKI1972275}%
  \BibitemOpen
  \bibfield  {author} {\bibinfo {author} {\bibfnamefont {A.}~\bibnamefont
  {Jamio\l{}kowski}},\ }\bibfield  {title} {\bibinfo {title} {Linear
  transformations which preserve trace and positive semidefiniteness of
  operators},\ }\href
  {https://doi.org/https://doi.org/10.1016/0034-4877(72)90011-0} {\bibfield
  {journal} {\bibinfo  {journal} {Reports on Mathematical Physics}\ }\textbf
  {\bibinfo {volume} {3}},\ \bibinfo {pages} {275} (\bibinfo {year}
  {1972})}\BibitemShut {NoStop}%
\bibitem [{\citenamefont {Choi}(1975)}]{CHOI1975285}%
  \BibitemOpen
  \bibfield  {author} {\bibinfo {author} {\bibfnamefont {M.-D.}\ \bibnamefont
  {Choi}},\ }\bibfield  {title} {\bibinfo {title} {Completely positive linear
  maps on complex matrices},\ }\href
  {https://doi.org/10.1016/0024-3795(75)90075-0} {\bibfield  {journal}
  {\bibinfo  {journal} {Linear Algebra and its Applications}\ }\textbf
  {\bibinfo {volume} {10}},\ \bibinfo {pages} {285} (\bibinfo {year}
  {1975})}\BibitemShut {NoStop}%
\bibitem [{\citenamefont {Leung}\ and\ \citenamefont
  {Matthews}(2015)}]{10.1109/TIT.2015.2439953}%
  \BibitemOpen
  \bibfield  {author} {\bibinfo {author} {\bibfnamefont {D.}~\bibnamefont
  {Leung}}\ and\ \bibinfo {author} {\bibfnamefont {W.}~\bibnamefont
  {Matthews}},\ }\bibfield  {title} {\bibinfo {title} {On the power of
  ppt-preserving and non-signalling codes},\ }\href
  {https://doi.org/10.1109/TIT.2015.2439953} {\bibfield  {journal} {\bibinfo
  {journal} {IEEE Transactions on Information Theory}\ }\textbf {\bibinfo
  {volume} {61}},\ \bibinfo {pages} {4486–4499} (\bibinfo {year}
  {2015})}\BibitemShut {NoStop}%
\bibitem [{\citenamefont {Bennett}\ and\ \citenamefont
  {Brassard}(2014)}]{BENNETT20147}%
  \BibitemOpen
  \bibfield  {author} {\bibinfo {author} {\bibfnamefont {C.~H.}\ \bibnamefont
  {Bennett}}\ and\ \bibinfo {author} {\bibfnamefont {G.}~\bibnamefont
  {Brassard}},\ }\bibfield  {title} {\bibinfo {title} {Quantum cryptography:
  Public key distribution and coin tossing},\ }\href
  {https://doi.org/10.1016/j.tcs.2014.05.025} {\bibfield  {journal} {\bibinfo
  {journal} {Theoretical Computer Science}\ }\textbf {\bibinfo {volume}
  {560}},\ \bibinfo {pages} {7} (\bibinfo {year} {2014})}\BibitemShut {NoStop}%
\bibitem [{\citenamefont {Bäuml}\ \emph {et~al.}(2019)\citenamefont {Bäuml},
  \citenamefont {Das}, \citenamefont {Wang},\ and\ \citenamefont
  {Wilde}}]{bauml2019resourcetheoryentanglementbipartite}%
  \BibitemOpen
  \bibfield  {author} {\bibinfo {author} {\bibfnamefont {S.}~\bibnamefont
  {Bäuml}}, \bibinfo {author} {\bibfnamefont {S.}~\bibnamefont {Das}},
  \bibinfo {author} {\bibfnamefont {X.}~\bibnamefont {Wang}},\ and\ \bibinfo
  {author} {\bibfnamefont {M.~M.}\ \bibnamefont {Wilde}},\ }\href@noop {}
  {\bibinfo {title} {Resource theory of entanglement for bipartite quantum
  channels}},\ \bibinfo {howpublished} {ArXiv preprints} (\bibinfo {year}
  {2019}),\ \Eprint {https://arxiv.org/abs/1907.04181} {arXiv:1907.04181
  [quant-ph]} \BibitemShut {NoStop}%
\bibitem [{\citenamefont {Gour}\ and\ \citenamefont
  {Scandolo}(2020)}]{PhysRevLett.125.180505}%
  \BibitemOpen
  \bibfield  {author} {\bibinfo {author} {\bibfnamefont {G.}~\bibnamefont
  {Gour}}\ and\ \bibinfo {author} {\bibfnamefont {C.~M.}\ \bibnamefont
  {Scandolo}},\ }\bibfield  {title} {\bibinfo {title} {Dynamical
  entanglement},\ }\href {https://doi.org/10.1103/PhysRevLett.125.180505}
  {\bibfield  {journal} {\bibinfo  {journal} {Physical Review Letters}\
  }\textbf {\bibinfo {volume} {125}},\ \bibinfo {pages} {180505} (\bibinfo
  {year} {2020})}\BibitemShut {NoStop}%
\bibitem [{\citenamefont {Xing}\ \emph {et~al.}(2023)\citenamefont {Xing},
  \citenamefont {Feng}, \citenamefont {Fan}, \citenamefont {Ma}, \citenamefont
  {Bharti}, \citenamefont {Koh},\ and\ \citenamefont
  {Xiao}}]{xing2023fundamentallimitationscommunicationquantum}%
  \BibitemOpen
  \bibfield  {author} {\bibinfo {author} {\bibfnamefont {J.}~\bibnamefont
  {Xing}}, \bibinfo {author} {\bibfnamefont {T.}~\bibnamefont {Feng}}, \bibinfo
  {author} {\bibfnamefont {Z.}~\bibnamefont {Fan}}, \bibinfo {author}
  {\bibfnamefont {H.}~\bibnamefont {Ma}}, \bibinfo {author} {\bibfnamefont
  {K.}~\bibnamefont {Bharti}}, \bibinfo {author} {\bibfnamefont {D.~E.}\
  \bibnamefont {Koh}},\ and\ \bibinfo {author} {\bibfnamefont {Y.}~\bibnamefont
  {Xiao}},\ }\href@noop {} {\bibinfo {title} {Fundamental limitations on
  communication over a quantum network}},\ \bibinfo {howpublished} {ArXiv
  preprints} (\bibinfo {year} {2023}),\ \Eprint
  {https://arxiv.org/abs/2306.04983} {arXiv:2306.04983 [quant-ph]} \BibitemShut
  {NoStop}%
\bibitem [{\citenamefont {Li}\ \emph {et~al.}(2025{\natexlab{b}})\citenamefont
  {Li}, \citenamefont {Xing}, \citenamefont {Qu}, \citenamefont {Gao},
  \citenamefont {Xiao}, \citenamefont {Liu}, \citenamefont {Xiao},\ and\
  \citenamefont {Xue}}]{glc7-xy8t}%
  \BibitemOpen
  \bibfield  {author} {\bibinfo {author} {\bibfnamefont {Y.}~\bibnamefont
  {Li}}, \bibinfo {author} {\bibfnamefont {J.}~\bibnamefont {Xing}}, \bibinfo
  {author} {\bibfnamefont {D.}~\bibnamefont {Qu}}, \bibinfo {author}
  {\bibfnamefont {H.}~\bibnamefont {Gao}}, \bibinfo {author} {\bibfnamefont
  {L.}~\bibnamefont {Xiao}}, \bibinfo {author} {\bibfnamefont {J.-M.}\
  \bibnamefont {Liu}}, \bibinfo {author} {\bibfnamefont {Y.}~\bibnamefont
  {Xiao}},\ and\ \bibinfo {author} {\bibfnamefont {P.}~\bibnamefont {Xue}},\
  }\bibfield  {title} {\bibinfo {title} {Temporal asymmetry in entanglement
  distillation},\ }\href {https://doi.org/10.1103/glc7-xy8t} {\bibfield
  {journal} {\bibinfo  {journal} {Physical Review Letters}\ }\textbf {\bibinfo
  {volume} {135}},\ \bibinfo {pages} {170801} (\bibinfo {year}
  {2025}{\natexlab{b}})}\BibitemShut {NoStop}%
\bibitem [{\citenamefont {Chitambar}\ and\ \citenamefont
  {Gour}(2019)}]{RevModPhys.91.025001}%
  \BibitemOpen
  \bibfield  {author} {\bibinfo {author} {\bibfnamefont {E.}~\bibnamefont
  {Chitambar}}\ and\ \bibinfo {author} {\bibfnamefont {G.}~\bibnamefont
  {Gour}},\ }\bibfield  {title} {\bibinfo {title} {Quantum resource theories},\
  }\href {https://doi.org/10.1103/RevModPhys.91.025001} {\bibfield  {journal}
  {\bibinfo  {journal} {Review of Modern Physics}\ }\textbf {\bibinfo {volume}
  {91}},\ \bibinfo {pages} {025001} (\bibinfo {year} {2019})}\BibitemShut
  {NoStop}%
\bibitem [{\citenamefont {Condon}\ and\ \citenamefont
  {Shortley}(1935)}]{condon1935theory}%
  \BibitemOpen
  \bibfield  {author} {\bibinfo {author} {\bibfnamefont {E.~U.}\ \bibnamefont
  {Condon}}\ and\ \bibinfo {author} {\bibfnamefont {G.~H.}\ \bibnamefont
  {Shortley}},\ }\href
  {https://www.cambridge.org/sg/universitypress/subjects/physics/atomic-physics-molecular-physics-and-chemical-physics/theory-atomic-spectra?format=PB&isbn=9780521092098}
  {\emph {\bibinfo {title} {The Theory of Atomic Spectra}}}\ (\bibinfo
  {publisher} {Cambridge University Press},\ \bibinfo {year}
  {1935})\BibitemShut {NoStop}%
\bibitem [{\citenamefont {Wood}\ \emph {et~al.}(2015)\citenamefont {Wood},
  \citenamefont {Biamonte},\ and\ \citenamefont
  {Cory}}]{wood2015tensornetworksgraphicalcalculus}%
  \BibitemOpen
  \bibfield  {author} {\bibinfo {author} {\bibfnamefont {C.~J.}\ \bibnamefont
  {Wood}}, \bibinfo {author} {\bibfnamefont {J.~D.}\ \bibnamefont {Biamonte}},\
  and\ \bibinfo {author} {\bibfnamefont {D.~G.}\ \bibnamefont {Cory}},\
  }\href@noop {} {\bibinfo {title} {Tensor networks and graphical calculus for
  open quantum systems}},\ \bibinfo {howpublished} {ArXiv preprints} (\bibinfo
  {year} {2015}),\ \Eprint {https://arxiv.org/abs/1111.6950} {arXiv:1111.6950
  [quant-ph]} \BibitemShut {NoStop}%
\bibitem [{\citenamefont {Coecke}\ and\ \citenamefont
  {Kissinger}(2017)}]{Coecke_Kissinger_2017}%
  \BibitemOpen
  \bibfield  {author} {\bibinfo {author} {\bibfnamefont {B.}~\bibnamefont
  {Coecke}}\ and\ \bibinfo {author} {\bibfnamefont {A.}~\bibnamefont
  {Kissinger}},\ }\href {https://doi.org/10.1017/9781316219317} {\emph
  {\bibinfo {title} {Picturing Quantum Processes: A First Course in Quantum
  Theory and Diagrammatic Reasoning}}}\ (\bibinfo  {publisher} {Cambridge
  University Press},\ \bibinfo {year} {2017})\BibitemShut {NoStop}%
\bibitem [{\citenamefont {Bridgeman}\ and\ \citenamefont
  {Chubb}(2017)}]{Bridgeman_2017}%
  \BibitemOpen
  \bibfield  {author} {\bibinfo {author} {\bibfnamefont {J.~C.}\ \bibnamefont
  {Bridgeman}}\ and\ \bibinfo {author} {\bibfnamefont {C.~T.}\ \bibnamefont
  {Chubb}},\ }\bibfield  {title} {\bibinfo {title} {Hand-waving and
  interpretive dance: An introductory course on tensor networks},\ }\href
  {https://doi.org/10.1088/1751-8121/aa6dc3} {\bibfield  {journal} {\bibinfo
  {journal} {Journal of Physics A: Mathematical and Theoretical}\ }\textbf
  {\bibinfo {volume} {50}},\ \bibinfo {pages} {223001} (\bibinfo {year}
  {2017})}\BibitemShut {NoStop}%
\bibitem [{\citenamefont
  {Biamonte}(2020)}]{biamonte2020lecturesquantumtensornetworks}%
  \BibitemOpen
  \bibfield  {author} {\bibinfo {author} {\bibfnamefont {J.~D.}\ \bibnamefont
  {Biamonte}},\ }\href@noop {} {\bibinfo {title} {Lectures on quantum tensor
  networks}},\ \bibinfo {howpublished} {ArXiv preprints} (\bibinfo {year}
  {2020}),\ \Eprint {https://arxiv.org/abs/1912.10049} {arXiv:1912.10049
  [quant-ph]} \BibitemShut {NoStop}%
\bibitem [{\citenamefont {Collura}\ \emph {et~al.}(2024)\citenamefont
  {Collura}, \citenamefont {Lami}, \citenamefont {Ranabhat},\ and\
  \citenamefont {Santini}}]{Collura_2024}%
  \BibitemOpen
  \bibfield  {author} {\bibinfo {author} {\bibfnamefont {M.}~\bibnamefont
  {Collura}}, \bibinfo {author} {\bibfnamefont {G.}~\bibnamefont {Lami}},
  \bibinfo {author} {\bibfnamefont {N.}~\bibnamefont {Ranabhat}},\ and\
  \bibinfo {author} {\bibfnamefont {A.}~\bibnamefont {Santini}},\ }\href
  {https://doi.org/10.22323/9788898587049} {\emph {\bibinfo {title} {Tensor
  Network Techniques for Quantum Computation}}}\ (\bibinfo  {publisher} {SISSA
  Medialab s.r.l.},\ \bibinfo {year} {2024})\BibitemShut {NoStop}%
\bibitem [{\citenamefont {Nielsen}\ and\ \citenamefont
  {Chuang}(2010)}]{Nielsen_Chuang_2010}%
  \BibitemOpen
  \bibfield  {author} {\bibinfo {author} {\bibfnamefont {M.~A.}\ \bibnamefont
  {Nielsen}}\ and\ \bibinfo {author} {\bibfnamefont {I.~L.}\ \bibnamefont
  {Chuang}},\ }\href {https://doi.org/10.1017/CBO9780511976667} {\emph
  {\bibinfo {title} {Quantum Computation and Quantum Information}}}\ (\bibinfo
  {publisher} {Cambridge University Press},\ \bibinfo {year}
  {2010})\BibitemShut {NoStop}%
\bibitem [{\citenamefont {Caruso}\ \emph {et~al.}(2014)\citenamefont {Caruso},
  \citenamefont {Giovannetti}, \citenamefont {Lupo},\ and\ \citenamefont
  {Mancini}}]{RevModPhys.86.1203}%
  \BibitemOpen
  \bibfield  {author} {\bibinfo {author} {\bibfnamefont {F.}~\bibnamefont
  {Caruso}}, \bibinfo {author} {\bibfnamefont {V.}~\bibnamefont {Giovannetti}},
  \bibinfo {author} {\bibfnamefont {C.}~\bibnamefont {Lupo}},\ and\ \bibinfo
  {author} {\bibfnamefont {S.}~\bibnamefont {Mancini}},\ }\bibfield  {title}
  {\bibinfo {title} {Quantum channels and memory effects},\ }\href
  {https://doi.org/10.1103/RevModPhys.86.1203} {\bibfield  {journal} {\bibinfo
  {journal} {Reviews of Modern Physics}\ }\textbf {\bibinfo {volume} {86}},\
  \bibinfo {pages} {1203} (\bibinfo {year} {2014})}\BibitemShut {NoStop}%
\bibitem [{\citenamefont {Wilde}(2017)}]{Wilde_2017}%
  \BibitemOpen
  \bibfield  {author} {\bibinfo {author} {\bibfnamefont {M.~M.}\ \bibnamefont
  {Wilde}},\ }\href {https://doi.org/10.1017/9781316809976} {\emph {\bibinfo
  {title} {Quantum Information Theory}}}\ (\bibinfo  {publisher} {Cambridge
  University Press},\ \bibinfo {year} {2017})\BibitemShut {NoStop}%
\bibitem [{\citenamefont {Watrous}(2018)}]{Watrous_2018}%
  \BibitemOpen
  \bibfield  {author} {\bibinfo {author} {\bibfnamefont {J.}~\bibnamefont
  {Watrous}},\ }\href {https://doi.org/10.1017/9781316848142} {\emph {\bibinfo
  {title} {The Theory of Quantum Information}}}\ (\bibinfo  {publisher}
  {Cambridge University Press},\ \bibinfo {year} {2018})\BibitemShut {NoStop}%
\bibitem [{\citenamefont {Khatri}\ and\ \citenamefont
  {Wilde}(2024)}]{khatri2024principlesquantumcommunicationtheory}%
  \BibitemOpen
  \bibfield  {author} {\bibinfo {author} {\bibfnamefont {S.}~\bibnamefont
  {Khatri}}\ and\ \bibinfo {author} {\bibfnamefont {M.~M.}\ \bibnamefont
  {Wilde}},\ }\href@noop {} {\bibinfo {title} {Principles of quantum
  communication theory: A modern approach}},\ \bibinfo {howpublished} {Arxiv
  preprints} (\bibinfo {year} {2024}),\ \Eprint
  {https://arxiv.org/abs/2011.04672} {arXiv:2011.04672 [quant-ph]} \BibitemShut
  {NoStop}%
\bibitem [{\citenamefont {Rains}(2001)}]{959270}%
  \BibitemOpen
  \bibfield  {author} {\bibinfo {author} {\bibfnamefont {E.~M.}\ \bibnamefont
  {Rains}},\ }\bibfield  {title} {\bibinfo {title} {A semidefinite program for
  distillable entanglement},\ }\href {https://doi.org/10.1109/18.959270}
  {\bibfield  {journal} {\bibinfo  {journal} {IEEE Transactions on Information
  Theory}\ }\textbf {\bibinfo {volume} {47}},\ \bibinfo {pages} {2921}
  (\bibinfo {year} {2001})}\BibitemShut {NoStop}%
\bibitem [{\citenamefont {Chiribella}\ \emph
  {et~al.}(2008{\natexlab{b}})\citenamefont {Chiribella}, \citenamefont
  {D'Ariano},\ and\ \citenamefont {Perinotti}}]{PhysRevLett.101.060401}%
  \BibitemOpen
  \bibfield  {author} {\bibinfo {author} {\bibfnamefont {G.}~\bibnamefont
  {Chiribella}}, \bibinfo {author} {\bibfnamefont {G.~M.}\ \bibnamefont
  {D'Ariano}},\ and\ \bibinfo {author} {\bibfnamefont {P.}~\bibnamefont
  {Perinotti}},\ }\bibfield  {title} {\bibinfo {title} {Quantum circuit
  architecture},\ }\href {https://doi.org/10.1103/PhysRevLett.101.060401}
  {\bibfield  {journal} {\bibinfo  {journal} {Physical Review Letters}\
  }\textbf {\bibinfo {volume} {101}},\ \bibinfo {pages} {060401} (\bibinfo
  {year} {2008}{\natexlab{b}})}\BibitemShut {NoStop}%
\bibitem [{\citenamefont {Chiribella}\ \emph {et~al.}(2009)\citenamefont
  {Chiribella}, \citenamefont {D'Ariano},\ and\ \citenamefont
  {Perinotti}}]{PhysRevA.80.022339}%
  \BibitemOpen
  \bibfield  {author} {\bibinfo {author} {\bibfnamefont {G.}~\bibnamefont
  {Chiribella}}, \bibinfo {author} {\bibfnamefont {G.~M.}\ \bibnamefont
  {D'Ariano}},\ and\ \bibinfo {author} {\bibfnamefont {P.}~\bibnamefont
  {Perinotti}},\ }\bibfield  {title} {\bibinfo {title} {Theoretical framework
  for quantum networks},\ }\href {https://doi.org/10.1103/PhysRevA.80.022339}
  {\bibfield  {journal} {\bibinfo  {journal} {Physical Review A}\ }\textbf
  {\bibinfo {volume} {80}},\ \bibinfo {pages} {022339} (\bibinfo {year}
  {2009})}\BibitemShut {NoStop}%
\bibitem [{\citenamefont {Beckman}\ \emph {et~al.}(2001)\citenamefont
  {Beckman}, \citenamefont {Gottesman}, \citenamefont {Nielsen},\ and\
  \citenamefont {Preskill}}]{PhysRevA.64.052309}%
  \BibitemOpen
  \bibfield  {author} {\bibinfo {author} {\bibfnamefont {D.}~\bibnamefont
  {Beckman}}, \bibinfo {author} {\bibfnamefont {D.}~\bibnamefont {Gottesman}},
  \bibinfo {author} {\bibfnamefont {M.~A.}\ \bibnamefont {Nielsen}},\ and\
  \bibinfo {author} {\bibfnamefont {J.}~\bibnamefont {Preskill}},\ }\bibfield
  {title} {\bibinfo {title} {Causal and localizable quantum operations},\
  }\href {https://doi.org/10.1103/PhysRevA.64.052309} {\bibfield  {journal}
  {\bibinfo  {journal} {Physical Review A}\ }\textbf {\bibinfo {volume} {64}},\
  \bibinfo {pages} {052309} (\bibinfo {year} {2001})}\BibitemShut {NoStop}%
\bibitem [{\citenamefont {Eggeling}\ \emph {et~al.}(2002)\citenamefont
  {Eggeling}, \citenamefont {Schlingemann},\ and\ \citenamefont
  {Werner}}]{Eggeling_2002}%
  \BibitemOpen
  \bibfield  {author} {\bibinfo {author} {\bibfnamefont {T.}~\bibnamefont
  {Eggeling}}, \bibinfo {author} {\bibfnamefont {D.}~\bibnamefont
  {Schlingemann}},\ and\ \bibinfo {author} {\bibfnamefont {R.~F.}\ \bibnamefont
  {Werner}},\ }\bibfield  {title} {\bibinfo {title} {Semicausal operations are
  semilocalizable},\ }\href {https://doi.org/10.1209/epl/i2002-00579-4}
  {\bibfield  {journal} {\bibinfo  {journal} {Europhysics Letters}\ }\textbf
  {\bibinfo {volume} {57}},\ \bibinfo {pages} {782} (\bibinfo {year}
  {2002})}\BibitemShut {NoStop}%
\bibitem [{\citenamefont {Piani}\ \emph {et~al.}(2006)\citenamefont {Piani},
  \citenamefont {Horodecki}, \citenamefont {Horodecki},\ and\ \citenamefont
  {Horodecki}}]{PhysRevA.74.012305}%
  \BibitemOpen
  \bibfield  {author} {\bibinfo {author} {\bibfnamefont {M.}~\bibnamefont
  {Piani}}, \bibinfo {author} {\bibfnamefont {M.}~\bibnamefont {Horodecki}},
  \bibinfo {author} {\bibfnamefont {P.}~\bibnamefont {Horodecki}},\ and\
  \bibinfo {author} {\bibfnamefont {R.}~\bibnamefont {Horodecki}},\ }\bibfield
  {title} {\bibinfo {title} {Properties of quantum nonsignaling boxes},\ }\href
  {https://doi.org/10.1103/PhysRevA.74.012305} {\bibfield  {journal} {\bibinfo
  {journal} {Physical Review A}\ }\textbf {\bibinfo {volume} {74}},\ \bibinfo
  {pages} {012305} (\bibinfo {year} {2006})}\BibitemShut {NoStop}%
\bibitem [{\citenamefont {Horodecki}\ \emph {et~al.}(2000)\citenamefont
  {Horodecki}, \citenamefont {Horodecki},\ and\ \citenamefont
  {Horodecki}}]{Horodecki01022000}%
  \BibitemOpen
  \bibfield  {author} {\bibinfo {author} {\bibfnamefont {P.}~\bibnamefont
  {Horodecki}}, \bibinfo {author} {\bibfnamefont {M.}~\bibnamefont
  {Horodecki}},\ and\ \bibinfo {author} {\bibfnamefont {R.}~\bibnamefont
  {Horodecki}},\ }\bibfield  {title} {\bibinfo {title} {Binding entanglement
  channels},\ }\href {https://doi.org/10.1080/09500340008244047} {\bibfield
  {journal} {\bibinfo  {journal} {Journal of Modern Optics}\ }\textbf {\bibinfo
  {volume} {47}},\ \bibinfo {pages} {347} (\bibinfo {year} {2000})}\BibitemShut
  {NoStop}%
\bibitem [{\citenamefont {Uhlmann}(1976)}]{UHLMANN1976273}%
  \BibitemOpen
  \bibfield  {author} {\bibinfo {author} {\bibfnamefont {A.}~\bibnamefont
  {Uhlmann}},\ }\bibfield  {title} {\bibinfo {title} {The “transition
  probability” in the state space of a $\ast$-algebra},\ }\href
  {https://doi.org/10.1016/0034-4877(76)90060-4} {\bibfield  {journal}
  {\bibinfo  {journal} {Reports on Mathematical Physics}\ }\textbf {\bibinfo
  {volume} {9}},\ \bibinfo {pages} {273} (\bibinfo {year} {1976})}\BibitemShut
  {NoStop}%
\bibitem [{\citenamefont {Jorza}(1994)}]{Jozsa01121994}%
  \BibitemOpen
  \bibfield  {author} {\bibinfo {author} {\bibfnamefont {R.}~\bibnamefont
  {Jorza}},\ }\bibfield  {title} {\bibinfo {title} {Fidelity for mixed quantum
  states},\ }\href {https://doi.org/10.1080/09500349414552171} {\bibfield
  {journal} {\bibinfo  {journal} {Journal of Modern Optics}\ }\textbf {\bibinfo
  {volume} {41}},\ \bibinfo {pages} {2315} (\bibinfo {year}
  {1994})}\BibitemShut {NoStop}%
\bibitem [{\citenamefont {Shor}(1977)}]{Shor1977}%
  \BibitemOpen
  \bibfield  {author} {\bibinfo {author} {\bibfnamefont {N.~Z.}\ \bibnamefont
  {Shor}},\ }\bibfield  {title} {\bibinfo {title} {Cut-off method with space
  extension in convex programming problems},\ }\href
  {https://doi.org/10.1007/BF01071394} {\bibfield  {journal} {\bibinfo
  {journal} {Cybernetics}\ }\textbf {\bibinfo {volume} {13}},\ \bibinfo {pages}
  {94} (\bibinfo {year} {1977})}\BibitemShut {NoStop}%
\bibitem [{\citenamefont {G.}(1980)}]{KHACHIYAN198053}%
  \BibitemOpen
  \bibfield  {author} {\bibinfo {author} {\bibfnamefont {K.~L.}\ \bibnamefont
  {G.}},\ }\bibfield  {title} {\bibinfo {title} {Polynomial algorithms in
  linear programming},\ }\href {https://doi.org/10.1016/0041-5553(80)90061-0}
  {\bibfield  {journal} {\bibinfo  {journal} {USSR Computational Mathematics
  and Mathematical Physics}\ }\textbf {\bibinfo {volume} {20}},\ \bibinfo
  {pages} {53} (\bibinfo {year} {1980})}\BibitemShut {NoStop}%
\bibitem [{\citenamefont {Nemirovski}\ and\ \citenamefont
  {Yudin}(1983)}]{nemirovskij1983problem}%
  \BibitemOpen
  \bibfield  {author} {\bibinfo {author} {\bibfnamefont {A.~S.}\ \bibnamefont
  {Nemirovski}}\ and\ \bibinfo {author} {\bibfnamefont {D.~B.}\ \bibnamefont
  {Yudin}},\ }\href@noop {} {\emph {\bibinfo {title} {Problem complexity and
  method efficiency in optimization}}}\ (\bibinfo  {publisher}
  {Wiley-Interscience},\ \bibinfo {year} {1983})\BibitemShut {NoStop}%
\bibitem [{\citenamefont {Tarasov}\ \emph {et~al.}(1988)\citenamefont
  {Tarasov}, \citenamefont {Khachiyan},\ and\ \citenamefont
  {{\`E}rlikh}}]{TarKhaErl88}%
  \BibitemOpen
  \bibfield  {author} {\bibinfo {author} {\bibfnamefont {S.~P.}\ \bibnamefont
  {Tarasov}}, \bibinfo {author} {\bibfnamefont {L.~G.}\ \bibnamefont
  {Khachiyan}},\ and\ \bibinfo {author} {\bibfnamefont {I.~I.}\ \bibnamefont
  {{\`E}rlikh}},\ }\bibfield  {title} {\bibinfo {title} {The method of
  inscribed ellipsoids},\ }\href {http://mi.mathnet.ru/dan7782} {\bibfield
  {journal} {\bibinfo  {journal} {Soviet Mathematics Doklady}\ }\textbf
  {\bibinfo {volume} {37}},\ \bibinfo {pages} {226} (\bibinfo {year}
  {1988})}\BibitemShut {NoStop}%
\bibitem [{\citenamefont {Vaidya}(1989)}]{63500}%
  \BibitemOpen
  \bibfield  {author} {\bibinfo {author} {\bibfnamefont {P.~M.}\ \bibnamefont
  {Vaidya}},\ }\bibfield  {title} {\bibinfo {title} {A new algorithm for
  minimizing convex functions over convex sets},\ }in\ \href
  {https://doi.org/10.1109/SFCS.1989.63500} {\emph {\bibinfo {booktitle}
  {Proceedings of the 30th Annual Symposium on Foundations of Computer
  Science}}}\ (\bibinfo {year} {1989})\ pp.\ \bibinfo {pages}
  {338--343}\BibitemShut {NoStop}%
\bibitem [{\citenamefont {Nesterov}\ and\ \citenamefont
  {Nemirovski}(1992)}]{Nesterov01011992}%
  \BibitemOpen
  \bibfield  {author} {\bibinfo {author} {\bibfnamefont {Y.}~\bibnamefont
  {Nesterov}}\ and\ \bibinfo {author} {\bibfnamefont {A.}~\bibnamefont
  {Nemirovski}},\ }\bibfield  {title} {\bibinfo {title} {Conic formulation of a
  convex programming problem and duality †},\ }\href
  {https://doi.org/10.1080/10556789208805510} {\bibfield  {journal} {\bibinfo
  {journal} {Optimization Methods and Software}\ }\textbf {\bibinfo {volume}
  {1}},\ \bibinfo {pages} {95} (\bibinfo {year} {1992})}\BibitemShut {NoStop}%
\bibitem [{\citenamefont {Nesterov}\ and\ \citenamefont
  {Nemirovski}(1994)}]{nesterov1994interior}%
  \BibitemOpen
  \bibfield  {author} {\bibinfo {author} {\bibfnamefont {Y.}~\bibnamefont
  {Nesterov}}\ and\ \bibinfo {author} {\bibfnamefont {A.}~\bibnamefont
  {Nemirovski}},\ }\href {https://doi.org/10.1137/1.9781611970791} {\emph
  {\bibinfo {title} {Interior-Point Polynomial Algorithms in Convex
  Programming}}}\ (\bibinfo  {publisher} {Society for Industrial and Applied
  Mathematics},\ \bibinfo {year} {1994})\BibitemShut {NoStop}%
\bibitem [{\citenamefont {Anstreicher}(2000)}]{10.1287/moor.25.3.365.12212}%
  \BibitemOpen
  \bibfield  {author} {\bibinfo {author} {\bibfnamefont {K.~M.}\ \bibnamefont
  {Anstreicher}},\ }\bibfield  {title} {\bibinfo {title} {The volumetric
  barrier for semidefinite programming},\ }\href
  {https://doi.org/10.1287/moor.25.3.365.12212} {\bibfield  {journal} {\bibinfo
   {journal} {Mathematics of Operations ResearchVol}\ }\textbf {\bibinfo
  {volume} {25}},\ \bibinfo {pages} {365–380} (\bibinfo {year}
  {2000})}\BibitemShut {NoStop}%
\bibitem [{\citenamefont {Sivaramakrishnan}\ and\ \citenamefont
  {Mitchell}(2007)}]{sivaramakrishnan2007properties}%
  \BibitemOpen
  \bibfield  {author} {\bibinfo {author} {\bibfnamefont {K.~K.}\ \bibnamefont
  {Sivaramakrishnan}}\ and\ \bibinfo {author} {\bibfnamefont {J.~E.}\
  \bibnamefont {Mitchell}},\ }\bibfield  {title} {\bibinfo {title} {Properties
  of a cutting plane method for semidefinite programming},\ }\href
  {http://www.yokohamapublishers.jp/online2/pjov9.html} {\bibfield  {journal}
  {\bibinfo  {journal} {Pacific Journal of Optimization}\ }\textbf {\bibinfo
  {volume} {8}},\ \bibinfo {pages} {779} (\bibinfo {year} {2007})}\BibitemShut
  {NoStop}%
\bibitem [{\citenamefont {Lee}\ \emph {et~al.}(2015)\citenamefont {Lee},
  \citenamefont {Sidford},\ and\ \citenamefont {Wong}}]{7354442}%
  \BibitemOpen
  \bibfield  {author} {\bibinfo {author} {\bibfnamefont {Y.~T.}\ \bibnamefont
  {Lee}}, \bibinfo {author} {\bibfnamefont {A.}~\bibnamefont {Sidford}},\ and\
  \bibinfo {author} {\bibfnamefont {S.~C.-W.}\ \bibnamefont {Wong}},\
  }\bibfield  {title} {\bibinfo {title} {A faster cutting plane method and its
  implications for combinatorial and convex optimization},\ }in\ \href
  {https://doi.org/10.1109/FOCS.2015.68} {\emph {\bibinfo {booktitle}
  {Proceedings of the 56th Annual Symposium on Foundations of Computer
  Science}}}\ (\bibinfo {year} {2015})\ pp.\ \bibinfo {pages}
  {1049--1065}\BibitemShut {NoStop}%
\bibitem [{\citenamefont {Jiang}\ \emph
  {et~al.}(2020{\natexlab{a}})\citenamefont {Jiang}, \citenamefont {Lee},
  \citenamefont {Song},\ and\ \citenamefont {Wong}}]{10.1145/3357713.3384284}%
  \BibitemOpen
  \bibfield  {author} {\bibinfo {author} {\bibfnamefont {H.}~\bibnamefont
  {Jiang}}, \bibinfo {author} {\bibfnamefont {Y.~T.}\ \bibnamefont {Lee}},
  \bibinfo {author} {\bibfnamefont {Z.}~\bibnamefont {Song}},\ and\ \bibinfo
  {author} {\bibfnamefont {S.~C.-W.}\ \bibnamefont {Wong}},\ }\bibfield
  {title} {\bibinfo {title} {An improved cutting plane method for convex
  optimization, convex-concave games, and its applications},\ }in\ \href
  {https://doi.org/10.1145/3357713.3384284} {\emph {\bibinfo {booktitle}
  {Proceedings of the 52nd Annual ACM SIGACT Symposium on Theory of
  Computing}}}\ (\bibinfo {year} {2020})\ p.\ \bibinfo {pages}
  {944–953}\BibitemShut {NoStop}%
\bibitem [{\citenamefont {Jiang}\ \emph
  {et~al.}(2020{\natexlab{b}})\citenamefont {Jiang}, \citenamefont {Kathuria},
  \citenamefont {Lee}, \citenamefont {Padmanabhan},\ and\ \citenamefont
  {Song}}]{9317892}%
  \BibitemOpen
  \bibfield  {author} {\bibinfo {author} {\bibfnamefont {H.}~\bibnamefont
  {Jiang}}, \bibinfo {author} {\bibfnamefont {T.}~\bibnamefont {Kathuria}},
  \bibinfo {author} {\bibfnamefont {Y.~T.}\ \bibnamefont {Lee}}, \bibinfo
  {author} {\bibfnamefont {S.}~\bibnamefont {Padmanabhan}},\ and\ \bibinfo
  {author} {\bibfnamefont {Z.}~\bibnamefont {Song}},\ }\bibfield  {title}
  {\bibinfo {title} {A faster interior point method for semidefinite
  programming},\ }in\ \href {https://doi.org/10.1109/FOCS46700.2020.00089}
  {\emph {\bibinfo {booktitle} {Proceedings of the 61st Annual Symposium on
  Foundations of Computer Science}}}\ (\bibinfo {year} {2020})\ pp.\ \bibinfo
  {pages} {910--918}\BibitemShut {NoStop}%
\bibitem [{\citenamefont {Parzygnat}\ \emph {et~al.}(2024)\citenamefont
  {Parzygnat}, \citenamefont {Fullwood}, \citenamefont {Buscemi},\ and\
  \citenamefont {Chiribella}}]{PhysRevLett.132.110203}%
  \BibitemOpen
  \bibfield  {author} {\bibinfo {author} {\bibfnamefont {A.~J.}\ \bibnamefont
  {Parzygnat}}, \bibinfo {author} {\bibfnamefont {J.}~\bibnamefont {Fullwood}},
  \bibinfo {author} {\bibfnamefont {F.}~\bibnamefont {Buscemi}},\ and\ \bibinfo
  {author} {\bibfnamefont {G.}~\bibnamefont {Chiribella}},\ }\bibfield  {title}
  {\bibinfo {title} {Virtual quantum broadcasting},\ }\href
  {https://doi.org/10.1103/PhysRevLett.132.110203} {\bibfield  {journal}
  {\bibinfo  {journal} {Physical Review Letters}\ }\textbf {\bibinfo {volume}
  {132}},\ \bibinfo {pages} {110203} (\bibinfo {year} {2024})}\BibitemShut
  {NoStop}%
\bibitem [{\citenamefont {Xiao}\ \emph
  {et~al.}(2025{\natexlab{a}})\citenamefont {Xiao}, \citenamefont {Liu},\ and\
  \citenamefont {Liu}}]{z2pr-zbwl}%
  \BibitemOpen
  \bibfield  {author} {\bibinfo {author} {\bibfnamefont {Y.}~\bibnamefont
  {Xiao}}, \bibinfo {author} {\bibfnamefont {X.}~\bibnamefont {Liu}},\ and\
  \bibinfo {author} {\bibfnamefont {Z.}~\bibnamefont {Liu}},\ }\bibfield
  {title} {\bibinfo {title} {No practical quantum broadcasting: Even
  virtually},\ }\href {https://doi.org/10.1103/z2pr-zbwl} {\bibfield  {journal}
  {\bibinfo  {journal} {Physical Review Letters}\ }\textbf {\bibinfo {volume}
  {135}},\ \bibinfo {pages} {090202} (\bibinfo {year}
  {2025}{\natexlab{a}})}\BibitemShut {NoStop}%
\bibitem [{\citenamefont {Xiao}\ \emph
  {et~al.}(2025{\natexlab{b}})\citenamefont {Xiao}, \citenamefont {Liu},\ and\
  \citenamefont {Liu}}]{8g6j-w7ld}%
  \BibitemOpen
  \bibfield  {author} {\bibinfo {author} {\bibfnamefont {Y.}~\bibnamefont
  {Xiao}}, \bibinfo {author} {\bibfnamefont {X.}~\bibnamefont {Liu}},\ and\
  \bibinfo {author} {\bibfnamefont {Z.}~\bibnamefont {Liu}},\ }\bibfield
  {title} {\bibinfo {title} {No practical quantum broadcasting: General
  framework},\ }\href {https://doi.org/10.1103/8g6j-w7ld} {\bibfield  {journal}
  {\bibinfo  {journal} {Physical Review Research}\ }\textbf {\bibinfo {volume}
  {7}},\ \bibinfo {pages} {033194} (\bibinfo {year}
  {2025}{\natexlab{b}})}\BibitemShut {NoStop}%
\bibitem [{\citenamefont {Okada}\ and\ \citenamefont
  {Buscemi}(2025)}]{okada2025virtualphasecovariantquantumbroadcasting}%
  \BibitemOpen
  \bibfield  {author} {\bibinfo {author} {\bibfnamefont {R.}~\bibnamefont
  {Okada}}\ and\ \bibinfo {author} {\bibfnamefont {F.}~\bibnamefont
  {Buscemi}},\ }\href@noop {} {\bibinfo {title} {Virtual phase-covariant
  quantum broadcasting for qubits}},\ \bibinfo {howpublished} {ArXiv preprints}
  (\bibinfo {year} {2025}),\ \Eprint {https://arxiv.org/abs/2511.20014}
  {arXiv:2511.20014 [quant-ph]} \BibitemShut {NoStop}%
\bibitem [{\citenamefont {Wang}\ and\ \citenamefont
  {Xiao}(2026)}]{wang2026practicalquantumbroadcasting}%
  \BibitemOpen
  \bibfield  {author} {\bibinfo {author} {\bibfnamefont {X.}~\bibnamefont
  {Wang}}\ and\ \bibinfo {author} {\bibfnamefont {Y.}~\bibnamefont {Xiao}},\
  }\href@noop {} {\bibinfo {title} {Practical quantum broadcasting}},\ \bibinfo
  {howpublished} {ArXiv preprints} (\bibinfo {year} {2026}),\ \Eprint
  {https://arxiv.org/abs/2603.19089} {arXiv:2603.19089 [quant-ph]} \BibitemShut
  {NoStop}%
\end{thebibliography}%


\newpage

\onecolumngrid

\appendix

\tableofcontents

\section*{Appendices}

The appendices develop, in a self-contained manner, the conceptual and technical foundations underlying forward-assisted quantum state purification, and trace their consequences across both global and distributed settings.
Section~\ref{sec:Preliminaries} establishes the formal groundwork, introducing the language of quantum channels and superchannels, the noise models considered, and the structural role of Bell states. 
Section~\ref{sec:Forward-Assisted_Purification} then formulates purification within a genuinely spatiotemporal framework, casting forward-assisted protocols as optimizations over superchannels and characterizing their fundamental limits via semidefinite programming. 
Building on this foundation, Sec.~\ref{sec:Advantages_GQSP} demonstrates the resulting operational advantages in the global setting, revealing a clear hierarchy of protocols and identifying regimes where pre-processing and quantum memory yield substantial gains. 
Section~\ref{sec:Advantages_DQSP} extends these insights to distributed scenarios, where pre-processing alone can already activate performance beyond conventional limits, and where forward-assisted protocols overcome established no-go constraints. 
Against this backdrop, Sec.~\ref{sec:Outline} provides a concise guide to the logical structure of the appendix and the progression of ideas developed throughout.


\section{Outline}\label{sec:Outline}

This section provides a concise overview of the appendix, distilling its structure and central results into a unified narrative that guides the reader through the logical progression of the analysis. 
Figure~\ref{fig:Outline} presents a schematic overview of this architecture. 
The exposition is organized into four interlocking parts, each centred on a distinct theme and arranged to build cumulatively toward the full framework.

\begin{figure}[htbp]
    \centering   
    \includegraphics[width=1\textwidth]{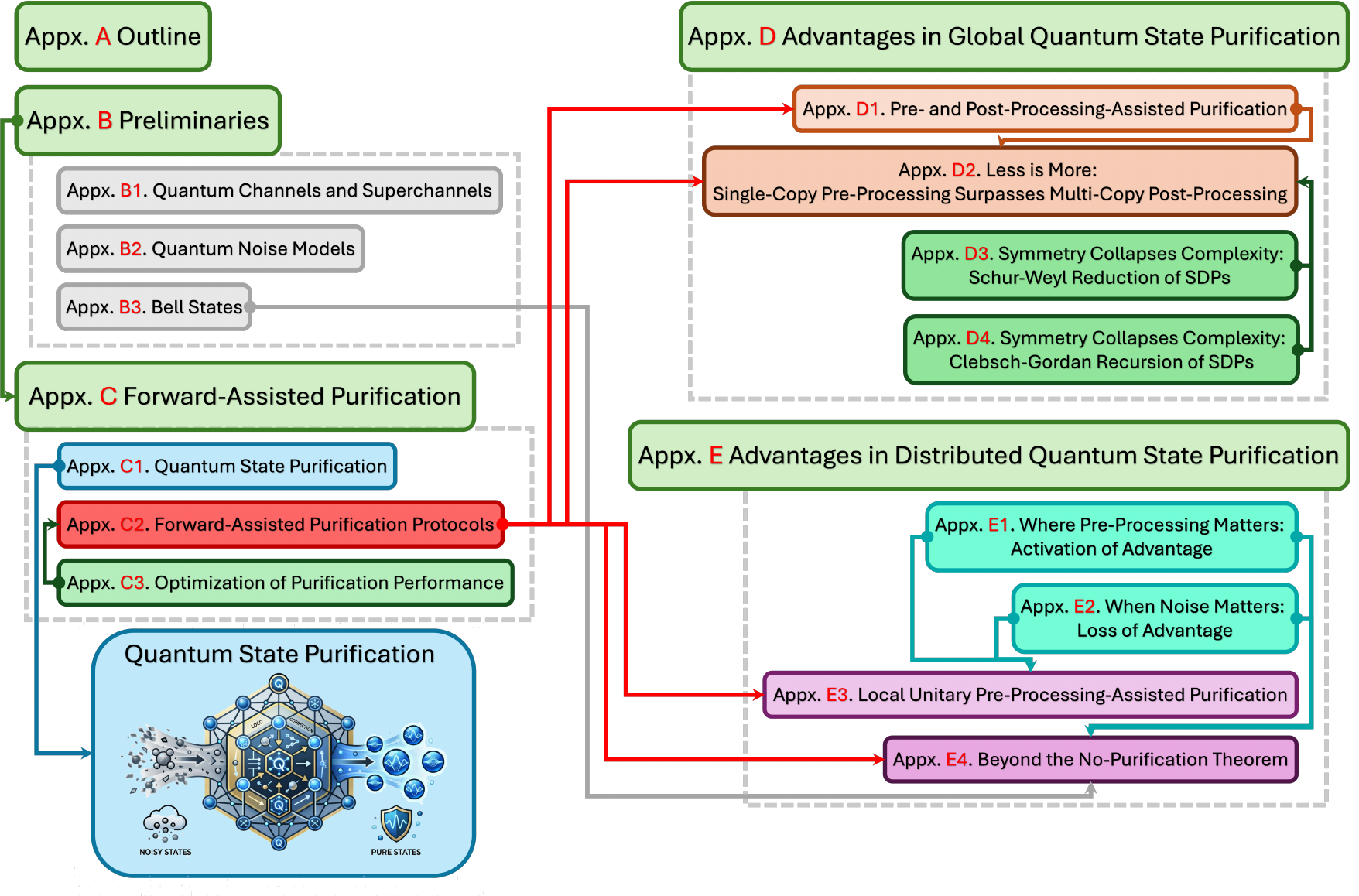}
    \caption{\textbf{Schematic Overview}. 
        The diagram traces the conceptual progression from foundational definitions to the operational advantages in forward-assisted (FA) purifications. 
        Section~\ref{sec:Preliminaries} establishes the underlying formalism, including quantum channels, superchannels, noise models, and Bell states, which supports Sec.~\ref{sec:Forward-Assisted_Purification}, where the FA framework is developed within the superchannel formalism and its performance characterized via semidefinite programs (SDPs). 
        On this basis, Secs.~\ref{sec:Advantages_GQSP} and \ref{sec:Advantages_DQSP} demonstrate the resulting advantages in global and distributed settings, respectively, identifying regimes where pre-processing yields performance beyond conventional limits and where symmetry reduces computational complexity. 
        Arrows encode the logical dependencies and propagation of ideas across the appendix.
    }
    \label{fig:Outline}
\end{figure}

\begin{itemize}
  \item \textbf{Foundational Preparations.}
  The first part Sec.~\ref{sec:Preliminaries} lays the formal groundwork for the framework by fixing the notation and core structures that support the subsequent analysis. 
  Subsection~\ref{subsec:Channels_Superchannels} introduces the language of quantum channels and their higher-order extension to superchannels, providing the natural setting for spatiotemporal purification via forward-assisted protocols. Subsection~\ref{subsec:Quantum_Noise_Models} specifies the noise models considered throughout, capturing the dominant error mechanisms relevant to realistic implementations. 
  Subsection~\ref{subsec:Bell_States} introduces the Bell basis, which serves as the central object for the analysis of the no-purification theorem in distributed scenarios.

  \item \textbf{Spatiotemporal Purifications.}
  The second part Sec.~\ref{sec:Forward-Assisted_Purification} revisits the foundations of quantum state purification and recasts them within a genuinely spatiotemporal setting, moving beyond the conventional single-time-point paradigm to a framework that explicitly accounts for the dynamical structure of noise.
  Subsection~\ref{subsec:Quantum_State_Purification} recalls the standard approach, in which noisy states are treated as static resources and purification is applied only after noise has taken place.
  While conceptually straightforward, this viewpoint neglects the temporal structure intrinsic to realistic noise processes. 
  In fact, noise is naturally described at the level of quantum channels, and its most general manipulation is captured by superchannels. 
  To incorporate this missing layer, we elevate the description to the level of superchannels --- the most general transformations of quantum dynamics --- and, within this formalism, formulate forward-assisted purification as a unified spatiotemporal framework for noisy state purification. 
  As developed in Subsec.~\ref{subsec:Forward_Assisted_Purification_Protocols}, this approach encompasses a broad family of protocols shaped by distinct physical constraints, allowing their operational features to be compared on equal footing. 
  Subsection~\ref{subsec:Optimization_Purification_Performance} then shows that the fundamental limits of these protocols admit a concise characterization in terms of semidefinite programs (SDPs), yielding a computationally tractable formulation that enables systematic comparison with conventional purification schemes.

  \item \textbf{Global Purifications.}
  The third part Sec.~\ref{sec:Advantages_GQSP}, brings the spatiotemporal framework to the setting of global quantum state purification, demonstrating that forward-assisted (FA) protocols can outperform conventional schemes restricted to post-processing alone. 
  Subsection~\ref{subsec:PreP_Post} places the various FA strategies, including pre-processing-augmented (PreP), $\mathrm{PPT}\cap \mathrm{NS}$, $\mathrm{PPT}$, and $\mathrm{NS}$, within a unified benchmark against conventional purification. 
  A clear hierarchy emerges. 
  Even the simplest FA protocol, namely PreP, which operates without quantum memory, already exceeds the conventional baseline. 
  Incorporating quantum memory, through $\mathrm{PPT}\cap \mathrm{NS}$, $\mathrm{PPT}$, or $\mathrm{NS}$, yields progressively stronger improvements.
  This superior performance stems directly from the enlarged operational flexibility afforded by our spatiotemporal framework.
  Subsection~\ref{subsec:Less_is_More} isolates the role of pre-processing and exposes a sharp separation. 
  In some noise regimes, a single-copy PreP already surpasses conventional purification using more than 50 copies.
  Extrapolation sharpens this contrast: 
  matching the PreP performance would require on the order of $10^{3}$ noisy copies;
  approximately 1918 under a linear fit and 2431 under a log-exponential fit.
  This reflects a substantial reduction in sample consumption enabled by the proposed protocol and signals a distinct scaling behaviour arising from our spatiotemporal framework.
  The optimal performance of each protocol is obtained by solving an associated semidefinite program (SDP). 
  For qubit systems, brute-force semidefinite programming implementations become prohibitive beyond roughly 8 copies. 
  Subsections~\ref{subsec:SDP_Symmetry} and \ref{subsec:SDP_CG} overcome this barrier by combining Schur-Weyl duality with Clebsch-Gordan recursion, reducing the optimization to symmetry-adapted blocks. 
  This reduction extends the accessible regime to around 50 copies, enabling the comparisons analyzed above.

  \item \textbf{Distributed Purifications.}
  The fourth part, Sec.~\ref{sec:Advantages_DQSP}, extends the spatiotemporal framework to distributed quantum state purification and demonstrates that forward-assisted (FA) protocols can outperform schemes restricted to post-processing alone. 
  In contrast to the global setting analyzed in Sec.~\ref{sec:Advantages_GQSP}, this advantage can already be realized solely through pre-processing, without any subsequent post-processing stage. 
  Subsection~\ref{subsec:Where_PreP_Matters} isolates the role of pre-processing under amplitude damping noise, while Subsec.~\ref{subsec:When_Noise_Matters} examines the corresponding behaviour under depolarizing channels.
  Building on these results, Subsec.~\ref{subsec:LU_Pre_Processing} formulates the full spatiotemporal framework in the distributed setting and establishes that forward-assisted protocols attain performance beyond the reach of conventional approaches. 
  In particular, a single-copy pre-processing scheme is shown to surpass conventional purification requiring 4 copies. 
  Finally, Subsec.~\ref{subsec:Beyond_NP_Theorem} demonstrates that forward-assisted protocols can circumvent previously established no-go results, most notably, the absence of purification for Bell states, and introduces efficient purification schemes for this fundamental class of states.
\end{itemize}

This appendix develops a unified spatiotemporal perspective on quantum state purification, tracing a coherent progression from foundational structures to concrete operational advantages of our forward-assisted purification protocols.
Elevating the description from static states to dynamical transformations through superchannels brings a broad class of protocols into a single optimization-based framework. 
This unification enables systematic comparison with conventional approaches and exposes regimes in which pre-processing, temporal structure, and symmetry emerge as decisive resources. 
The resulting performance hierarchies, together with symmetry-driven reductions in computational complexity, provide both conceptual novelty and practical tractability. 
These elements collectively establish the foundation for understanding and leveraging the advantages of forward-assisted purification in both global and distributed settings.


\section{Preliminaries}\label{sec:Preliminaries}

This section collects the notations, conventions, and standard constructions underlying the main text. 
It establishes a unified formal framework for quantum channels and superchannels in Subsec.~\ref{subsec:Channels_Superchannels}, summarizes the canonical noise models relevant to realistic implementations in Subsec.~\ref{subsec:Quantum_Noise_Models}, and introduces the Bell states in Subsec.~\ref{subsec:Bell_States} that serve as a central object of study in the distributed purification protocols considered here. 
These elements establish a coherent technical foundation for the spatiotemporal framework of purification --- termed forward-assisted strategies --- developed in the main sections.


\subsection{Quantum Channels and Superchannels}\label{subsec:Channels_Superchannels}

We begin by establishing the formal framework underlying quantum channels and their higher-order generalization, quantum superchannels. 
This structure provides a unified and compositional description of quantum dynamics, in which physical transformations are represented at the level of processes rather than states. 
Central to this formulation is the Choi–Jamio{\l}kowski isomorphism and the associated link product, which together enable a compact operator representation of sequential and higher-order maps. 
Within this setting, superchannels naturally decompose into pre-processing, memory, and post-processing stages, making explicit the temporal and causal structure of quantum operations. 
This perspective underpins our subsequent development of a spatiotemporal framework for forward-assisted purification, in which structural constraints --- classicality, non-signalling (NS), and positive partial transpose preservation (PPTp) --- govern the admissible transformations.

Throughout this work, we adopt the following notational conventions.
System $A$ is associated with a Hilbert space $\mH_A$, and is referred to simply as $A$ when no ambiguity arises. 
Quantum states are represented by positive semidefinite operators, i.e., $\rho \geqslant 0$, with unit trace, i.e., $\Tr[\rho]=1$.
General physical transformations are described by quantum channels, i.e., completely positive and trace-preserving (CPTP) linear maps. 
It is convenient to represent such maps via the Choi–Jamio{\l}kowski isomorphism~\cite{JAMIOLKOWSKI1972275,CHOI1975285}, which associates each channel with an operator on a bipartite space. 
For a channel $\mE \colon A \to B$, the corresponding Choi operator $J^{\mE}_{AB}$ is defined as
\begin{align}\label{eq:Choi}
    J^{\mE}_{AB} \coloneqq \id_{A} \otimes \mE_{A' \to B} \left(\Gamma_{AA'}\right),
\end{align}
where the system $A'$ is isomorphic to $A$, and $\Gamma \coloneqq \ketbra{\Gamma}{\Gamma}$ denotes the unnormalized maximally entangled state (UMES), with
\begin{align}\label{eq:UMES}
    \ket{\Gamma} \coloneqq \sum_i \ket{ii}.
\end{align}
When $\dim A = d$, the corresponding maximally entangled state (MES) $\phi^{+} \coloneqq \ketbra{\phi^{+}}{\phi^{+}}$ is given by
\begin{align}\label{eq:MES}
    \ket{\phi^{+}} \coloneqq \frac{1}{\sqrt{d}} \sum_i \ket{ii} = \frac{1}{\sqrt{d}} \ket{\Gamma}.
\end{align}
Subsystem labels are omitted whenever they are clear from context.

Within the Choi–Jamio{\l}kowski representation, the defining constraints of a quantum channel take a particularly compact operator form. 
CP is encoded as
\begin{align}\label{eq:CP}
    J^{\mE}_{AB}\geqslant0,
\end{align}
while TP imposes the normalization condition
\begin{align}\label{eq:TP}
    \Tr_{B}[J^{\mE}_{AB}] = \1_{A}.
\end{align}
Expressed in this way, dynamical constraints are recast as algebraic properties of the Choi operator $J^{\mE}$, a reformulation that is especially amenable to tensor-network representations, where positivity and normalization acquire an immediate graphical interpretation~\cite{wood2015tensornetworksgraphicalcalculus,Coecke_Kissinger_2017,Bridgeman_2017,biamonte2020lecturesquantumtensornetworks,Collura_2024}.
Comprehensive treatments of quantum channels can be found in~\cite{Nielsen_Chuang_2010,RevModPhys.86.1203,Wilde_2017,Watrous_2018,khatri2024principlesquantumcommunicationtheory}.

We next introduce the formalism of quantum superchannels, together with the composition of quantum channels via the link product, beginning with its definition.

\begin{mydef}{Link Product~\cite{959270,PhysRevLett.101.060401,PhysRevA.80.022339}}{Link_Product}
    Given two operators $M$ and $N$ sharing a common subsystem $A$, their link product, denoted by $M \star N$, is defined as
    \begin{align}\label{eq:Link_Product}
        M\star N \coloneqq \Tr_{A}[M\cdot N^{\T_{A}}],
    \end{align}
    where $^{\T_{A}}$ denotes the partial transpose taken with respect to subsystem $A$.
\end{mydef}

As a direct application of the link product, we consider the Choi operator associated with the sequential composition of quantum channels. 
In particular, given channels $\mE \colon A \to B$ and $\mF \colon B \to C$, with Choi operators $J^{\mE}$ and $J^{\mF}$, respectively, the Choi operator of the composed channel $\mF \circ \mE$ is then given by
\begin{align}
    J^{\mF \circ \mE} = J^{\mE}\star J^{\mF}.
\end{align}
Here, the ordering of physical systems is fixed throughout. 
Under this convention, the link product is invariant under subsystem permutations. 
When subsystems are written explicitly, inputs and outputs are arranged from left to right, with multiple systems ordered according to their temporal sequence. Earlier systems appear to the left.

Given a quantum channel $\mE \colon B \to C$, the most general transformation of quantum channels is described by a superchannel $\theta \colon AC\to BD$, comprising three components~\cite{Chiribella_2008,8678741,xiao2025superchanneltearsgeneralizedoccams}: a pre-processing (PreP) map $\theta^{\mathrm{Pre}}$, a post-processing (PostP) map $\theta^{\mathrm{Post}}$, and an auxiliary channel (memory or simply MeM) $\theta^{\mathrm{Mem}}$ that mediates correlations between them. 
While the memory can often be absorbed into either the pre- or post-processing, we retain all three components here for clarity. Their physical implementation is illustrated in Fig.~\ref{fig:Superchannel}.

\begin{figure}[htbp]
    \centering   
    \includegraphics[width=0.5\textwidth]{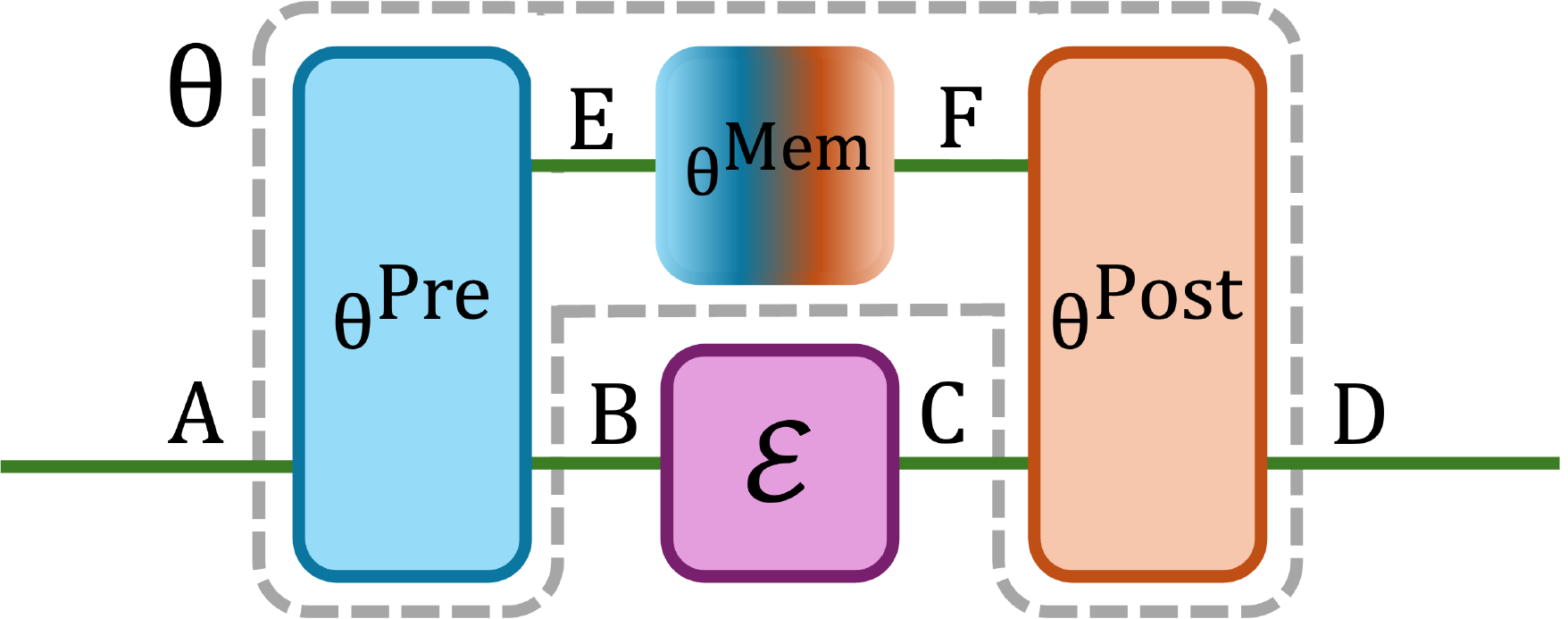}
    \caption{\textbf{General Structure of a Quantum Superchannel}. 
        A quantum superchannel $\theta$ describes the most general transformation of a quantum channel $\mE$, comprising a pre-processing map $\theta^{\mathrm{Pre}}$, a post-processing map $\theta^{\mathrm{Post}}$, and an intermediate memory $\theta^{\mathrm{Mem}}$ that correlates the two. 
        While this memory can be absorbed into either stage --- yielding the familiar two-part representation --- here all three components are retained explicitly to expose the structure of forward-assisted purification in Subsec.~\ref{subsec:Forward_Assisted_Purification_Protocols}.
    }
    \label{fig:Superchannel}
\end{figure}

Superchannels, in their most general form, are both theoretically unconstrained and experimentally demanding, rendering them of limited direct relevance for practical implementations. 
Indeed, access to arbitrary superchannels would trivialize many nontrivial tasks in quantum information processing. 
This motivates restricting attention to physically realizable and mathematically tractable subclasses. 
In this work, we focus on structural constraints that are central to forward-assisted (FA) purification, namely classicality (CL), non-signalling (NS), and positive partial transpose preservation (PPTp, or simply PPT). These properties are introduced in turn.

\begin{mydef}{Classical}{Classical_Channel}
    A channel $\mE$ is termed classical if it remains invariant under completely dephasing $\Delta$ on all its subsystems; 
    that is
    \begin{align}
        \Delta\circ\mE\circ\Delta = \mE.
    \end{align}
    Here, the completely dephasing operation is defined as $\Delta(\cdot) = \sum_i\bra{i}\cdot\ket{i}\ketbra{i}{i}$ for a fixed basis $\{\ket{i}\}$.
\end{mydef}

While classicality in Def.~\ref{def:Classical_Channel} constrains the internal structure of a channel through invariance under dephasing, it does not impose restrictions on the flow of information between processes. 
To capture such directional constraints, one is led to the notion of non-signalling (NS), which formalizes the absence of causal influence from one process to another. 
In particular, one-way NS specifies that the output on a given process remains insensitive to inputs on another, thereby encoding a form of causal independence at the level of quantum channels. 
This condition is made precise below in terms of the associated Choi operator.

\begin{mydef}{One-Way Non-Signalling}{One_Way_NS}
    A bipartite quantum channel $\theta \colon AC \to BD$ is one-way non-signalling from $(C \to D)$ to $(A \to B)$ if the marginal output on $B$ is independent of the input state on $C$.
    In terms of the Choi operator $J^{\theta}$, this condition reads
    \begin{align}
        \Tr_{D}[J^{\theta}] = J^{\theta}_{AB}\otimes\frac{1}{d_{C}}\1_{C},
    \end{align}
    where $J^{\theta}_{AB} \coloneqq \Tr_{CD}[J^{\theta}]$ denotes the reduced operator on systems $AB$, and $d_C$ is the dimension of system $C$.
    Channels satisfying this condition are also referred to as semi-causal in the early literature~\cite{PhysRevA.64.052309,Eggeling_2002,PhysRevA.74.012305}.
\end{mydef}

One-way NS of Def.~\ref{def:One_Way_NS} encodes a directional notion of causal independence, precluding information flow from one process to another while permitting it in the reverse direction. 
A stronger constraint is obtained by imposing this condition symmetrically, thereby excluding signalling in both directions. 
This gives rise to the notion of NS, which characterizes bipartite channels whose outputs are independent of inputs across processes in either causal direction.

\begin{mydef}{Non-Signalling}{NS}
    A bipartite quantum channel $\theta \colon AC \to BD$ is said to be non-signalling if it is both one-way non-signalling from $(C \to D)$ to $(A \to B)$ and from $(A \to B)$ to $(C \to D)$.
\end{mydef}

Non-signalling in Def.~\ref{def:NS} constrains the causal structure of bipartite channels by restricting the flow of information between processes, but does not directly limit the nature of the correlations they generate. 
A complementary constraint is provided by positivity under partial transposition, which acts at the level of correlations themselves. 
This gives rise to the notion of PPT, first introduced for bipartite states and subsequently extended to channels and superchannels.

\begin{mydef}{PPT State}{PPT_State}
    A bipartite quantum state $\rho_{AB}$ is said to be PPT if it remains positive semidefinite under partial transposition with respect to one subsystem, that is, under the action of $\T_B$ (or equivalently $\T_A$):
    \begin{align}
        \rho_{AB}^{\T_{B}} \geqslant 0.
    \end{align}
\end{mydef}

The PPT criterion in Def.~\ref{def:PPT_State} extends naturally from states to dynamical maps through the Choi–Jamio\l kowski representation, allowing structural constraints on correlations to be lifted to the level of quantum channels. 
In this setting, a particularly relevant class is formed by PPT-binding channels, also known as Horodecki channels~\cite{Horodecki01022000}.
This notion is formalized below.

\begin{mydef}{PPT-Binding Channel}{PPT_Binding_Channel}
    A quantum channel $\mE \colon A \to B$ is said to be PPT-binding, or simply PPT, if its Choi operator is PPT with respect to the partial transpose on one subsystem, that is,
    \begin{align}
        (J^{\mE})^{\T_{B}} \geqslant 0.
    \end{align}
\end{mydef}

The PPT-binding condition in Def.~\ref{def:PPT_Binding_Channel} constrains the structure of a channel itself. 
A complementary viewpoint focuses instead on how channels act on inputs, leading to the notion of PPT preservation. 
Rather than imposing a condition on the Choi operator alone, this requirement enforces that the channel maps PPT states to PPT states, thereby extending the PPT constraint from static correlations to bipartite channels.

\begin{mydef}{PPT Preserving (PPTp)}{PPT_p}
    Consider a bipartite quantum channel $\mE \colon A_1 B_1 \to A_2 B_2$.
    The channel $\mE$ is said to be positive-partial-transpose preserving (PPT-preserving, or simply PPT) with respect to the $B$ subsystems if the following condition holds
    \begin{align}
        (J^{\mE})^{\T_{B_1B_2}} \geqslant 0.
    \end{align}
\end{mydef}

In what follows, we do not distinguish between PPT states (see Def.~\ref{def:PPT_State}) and PPTp channels (see Def.~\ref{def:PPT_p}), as the notion extends naturally to settings with multiple inputs and outputs by specifying the subsystems with respect to which the partial transpose is taken.
Within this general framework, two closely related concepts --- often conflated, particularly for superchannels acting on bipartite systems at each time step --- require careful distinction. 
The first concerns superchannels that are PPT with respect to the post-processing stage. 
For instance, for the class of superchannels illustrated in Fig.~\ref{fig:Bipartite_Superchannel}, we define the PPT superchannel accordingly.

\begin{figure}[htbp]
    \centering   
    \includegraphics[width=0.4\textwidth]{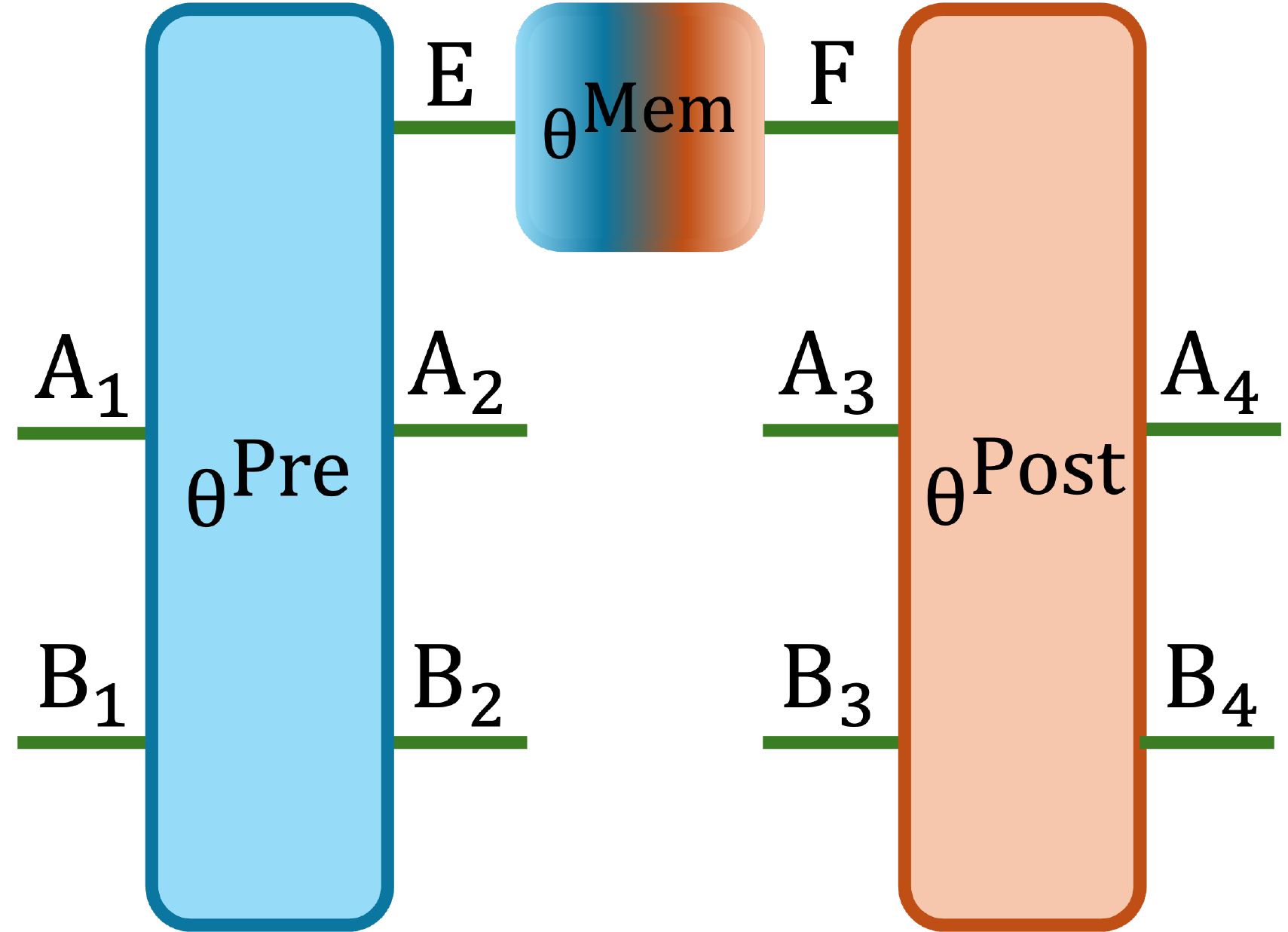}
    \caption{\textbf{Bipartite Superchannel}. 
        The superchannel $\theta$ is decomposed into a pre-processing map $\theta^{\mathrm{Pre}}$, a memory channel $\theta^{\mathrm{Mem}}$, and a post-processing map $\theta^{\mathrm{Post}}$. 
        The pre-processing transforms the input systems $A_1B_1$ into $A_2B_2$ while generating a memory system $E$, which is transmitted through $\theta^{\mathrm{Mem}}$ to $F$. 
        The post-processing then combines $A_3B_3$ with the memory $F$ to produce the final outputs $A_4B_4$.
    }
    \label{fig:Bipartite_Superchannel}
\end{figure}

\begin{mydef}{PPT Superchannel with respect to Post-Processing}{PPT_PostP}
    For the superchannel shown in Fig.~\ref{fig:Bipartite_Superchannel}, we define it to be PPT if its associated Choi operator remains positive under partial transposition on the subsystems corresponding to the post-processing stage; that is,
    \begin{align}
        (J^{\theta})^{\T_{\mathrm{Post}}} = (J^{\theta})^{\T_{A_3B_3A_4B_4}} \geqslant 0.
    \end{align}
    Here, the memory systems $E$ and $F$ are treated as latent degrees of freedom and are not explicitly taken into account.
\end{mydef}

The second notion concerns superchannels (see Fig.~\ref{fig:Bipartite_Superchannel}) whose post-processing map is PPT. 
In this case, the PPT condition on the post-processing stage coincides with the definition of PPT-preserving (see Def.~\ref{def:PPT_p}), upon replacing 
$B_1B_2$ with $B_3B_4$.

Collectively, this subsection establishes a unified and operationally transparent description of quantum dynamics at the level of channels and superchannels.
The Choi-Jamio{\l}kowski isomorphism and link product provide a compositional calculus in which both sequential and higher-order transformations can be expressed and analyzed on equal footing. 
Within this structure, superchannels emerge as the most general maps acting on quantum processes, with their decomposition into pre-processing, memory, and post-processing stages making explicit the temporal and causal organization of quantum operations. 
Imposing physically motivated constraints --- classical, NS, and PPT --- then delineates tractable and experimentally relevant subclasses, each capturing a distinct facet of admissible dynamics. 
The distinction between PPT conditions applied at the level of the entire superchannel and those restricted to the post-processing stage further refines this landscape, clarifying how structural constraints propagate across different layers of the transformation. 
This formalism not only unifies these notions within a single mathematical language, but also sets the stage for systematically quantifying the capabilities and limitations of forward-assisted purification protocols developed in Subsec.~\ref{subsec:Forward_Assisted_Purification_Protocols}.


\subsection{Quantum Noise Models}\label{subsec:Quantum_Noise_Models}

Realistic quantum information processing is inevitably affected by noise, arising from imperfect control, environmental interactions, and measurement limitations. 
Capturing these effects in a tractable yet physically meaningful manner is essential for both analysis and protocol design. 
In this subsection, we focus on a set of typical local noise models (see Tab.~\ref{tab:Quantum_Noise_Models}), parameterized by a single noise strength, which encapsulate the dominant error mechanisms encountered in practice. 
These models admit concise Kraus representations, enabling a unified operator description that will serve as the basis for our subsequent investigation of quantum state purification protocols.

Mathematically, each noise model $\mN$ admits a Kraus decomposition~\cite{xiao2025superchanneltearsgeneralizedoccams}, namely
\begin{align}
    \mN(\cdot) = \sum_{i}K_i(\cdot)K_i^{\dagger}.
\end{align}
In particular, we consider the following canonical noise models widely used in quantum information processing:

\begin{table}[htbp]
    \centering
    \begin{tblr}{
      colspec = {l || c | c | c | c},
      row{1,2} = {bg=gray!50, font=\bfseries}, 
      column{1} = {bg=gray!10},                
      hlines,                                  
      vlines,                                  
      cells = {m},                             
      row{1} = {c},                            
    }
      \SetCell[c=5]{c} Characterization of Quantum Noise Models & & & & \\
      Types of Quantum Noise & Operator $K_0$ & Operator $K_1$ & Operator $K_2$ & Operator $K_3$ \\
      Bit Flip Channel $\mN_{\mathrm{BF}}$ & $\sqrt{p} \ \1$ & $\sqrt{1-p} \ \mathrm{X}$ & $\0$ & $\0$ \\
      Phase Flip Channel $\mN_{\mathrm{PF}}$ & $\sqrt{p} \ \1$ & $\sqrt{1-p} \ \mathrm{Z}$ & $\0$ & $\0$ \\
      Depolarizing Channel $\mN_{\mathrm{D}}$ & $\sqrt{\frac{1+3p}{4}} \ \1$ & $\sqrt{\frac{1-p}{4}} \ \mathrm{X}$ & $\sqrt{\frac{1-p}{4}} \ \mathrm{Y}$ & $\sqrt{\frac{1-p}{4}} \ \mathrm{Z}$ \\
      Amplitude Damping Channel $\mN_{\mathrm{AD}}$ & $\begin{bmatrix} 1 & 0 \\ 0 & \sqrt{p} \ \end{bmatrix}$ & $\begin{bmatrix} 0 & \sqrt{1-p} \,\, \\ 0 & 0 \end{bmatrix}$ & $\0$ & $\0$ \\
    \end{tblr}
    \caption{\textbf{Kraus Decompositions of Quantum Noise Models}.
        We consider four typical quantum noise channels: the bit flip channel $\mN_{\mathrm{BF}}$, phase flip channel $\mN_{\mathrm{PF}}$, depolarizing channel $\mN_{\mathrm{D}}$, and amplitude damping channel $\mN_{\mathrm{AD}}$. 
        The operators $\1, \mathrm{X}, \mathrm{Y}$, and $\mathrm{Z}$ denote the Pauli matrices.
    }
    \label{tab:Quantum_Noise_Models}
\end{table}

The bit flip (BF) channel $\mN_{\mathrm{BF}}$ models stochastic transitions between computational basis states, capturing classical bit errors arising from imperfect control or readout. 
The phase flip (PF) channel $\mN_{\mathrm{PF}}$ describes random phase inversions, which preserve the diagonal elements of the density matrix in the computational basis while degrading the off-diagonal coherence terms, as typically induced by environmental dephasing. 
The depolarizing (D) channel $\mN_{\mathrm{D}}$ represents isotropic noise, in which the state is randomly subjected to Pauli errors, driving it towards the maximally mixed state and serving as a canonical model of unbiased decoherence. 
In contrast, the amplitude damping (AD) channel $\mN_{\mathrm{AD}}$ captures energy relaxation processes, such as spontaneous emission, where excitations decay irreversibly to the ground state, altering both the diagonal incoherences and off-diagonal coherences.


\subsection{Bell States}\label{subsec:Bell_States}

Bell states constitute a canonical set of maximally entangled states and serve as a fundamental resource in quantum information processing. 
Their symmetry and extremal correlation properties make them a natural benchmark for tasks involving entanglement manipulation, particularly in distributed purification protocols. 
In this subsection, we introduce the standard Bell basis for two qubits and fix the notation that will be used throughout, setting the stage for the analysis of purification strategies, including entanglement distillations, and their performance under noise.

Mathematically, in the bipartite qubit setting, the four Bell states are defined as follows.
\begin{align}\label{eq:Bell}
    \ket{\Phi^{+}} &= \frac{1}{\sqrt{2}} \left( \ket{0}_A \otimes \ket{0}_B + \ket{1}_A \otimes \ket{1}_B \right), \\
    \ket{\Phi^{-}} &= \frac{1}{\sqrt{2}} \left( \ket{0}_A \otimes \ket{0}_B - \ket{1}_A \otimes \ket{1}_B \right), \\
    \ket{\Psi^{+}} &= \frac{1}{\sqrt{2}} \left( \ket{0}_A \otimes \ket{1}_B + \ket{1}_A \otimes \ket{0}_B \right), \\
    \ket{\Psi^{-}} &= \frac{1}{\sqrt{2}} \left( \ket{0}_A \otimes \ket{1}_B - \ket{1}_A \otimes \ket{0}_B \right),
\end{align}
which collectively form an orthonormal basis for the two-qubit Hilbert space, with their collection written as 
\begin{align}\label{eq:Bell_Set}
    \mS_{\mathrm{Bell}} \coloneqq \{\Phi^{+}, \Phi^{-}, \Psi^{+}, \Psi^{-}\}.
\end{align}
This work demonstrates that previously established no-go results for Bell states purification can be circumvented within the forward-assisted (FA) framework, specifically through pre-processing (PreP). 
Details are provided in Subsec.~\ref{subsec:Beyond_NP_Theorem}.


\section{Forward-Assisted Purification}\label{sec:Forward-Assisted_Purification}

This section develops the framework of forward-assisted (FA) quantum state purification, providing the formal and operational foundations underlying the main results. 
The presentation proceeds in three stages. 
It begins with a concise formulation of conventional purification protocols in Subsec.~\ref{subsec:Quantum_State_Purification}, establishing the baseline against which improvements are assessed. 
It then introduces the general FA framework within the language of superchannels in Subsec.~\ref{subsec:Forward_Assisted_Purification_Protocols}, capturing the full spatiotemporal structure of noise manipulation through pre-processing, memory, and post-processing. 
Finally, it formulates the fundamental limits of purification performance as semidefinite programs (SDPs) in Subsec.~\ref{subsec:Optimization_Purification_Performance}, enabling a systematic and computable characterization of different protocol classes and the constraints that define them.


\subsection{Quantum State Purification}\label{subsec:Quantum_State_Purification}

Quantum state purification addresses the fundamental task of recovering high-quality quantum states from noisy preparations, a central challenge in realistic quantum information processing. 
Imperfections arising from environmental interactions and device limitations inevitably degrade ideal states, motivating strategies that leverage multiple noisy copies to reconstruct states of higher fidelity. 
In this subsection, we formalize the purification problem, introduce the conventional paradigm --- referred to here as post-processing purification --- and establish the performance benchmarks that will serve as a reference for assessing the advantages of the spatiotemporal framework, namely forward-assisted purification, developed later.

Consider a quantum state $\psi \in \mS$. 
In practice, before being deployed in any task, it is typically degraded by environmental noise or device imperfections, resulting in a transformed state $\mN(\psi)$; that is,
\begin{align}
    \psi \xrightarrow{\text{Noise}} \mN(\psi),
\end{align}
where $\mN$ denotes a noisy quantum channel. 
The goal of quantum state purification is to determine whether, given multiple copies of the noisy state $\mN(\psi)^{\otimes n}$, one can recover an output state that more closely approximates the original state $\psi$. 

In the conventional approach, a quantum channel $\mE$ acts on $n$ copies of the noisy state $\mN(\psi)$, yielding an output state of the form
\begin{align}\label{eq:Conventional_Purification}
    \mE(\left(\mN(\psi)\right)^{\otimes n}).
\end{align}
This process is illustrated in Fig.~\ref{fig:Conventional_Purification}.
Performance is quantified by the average fidelity defined below, 
\begin{align}
    F(\mE,\mN,\mS):=\frac{1}{|\mS|}\sum_{\psi\in\mS}F(\mE(\left(\mN(\psi)\right)^{\otimes n}), \psi),
\end{align}
where $F$ on the right-hand side denotes the fidelity between quantum states~\cite{UHLMANN1976273,Jozsa01121994}, defined as $F(\rho,\sigma)\coloneqq(\Tr[\sqrt{\sqrt{\rho}\sigma\sqrt{\rho}}])^2$.
When one of the states is pure, say $\sigma=\psi$, this reduce to $F(\rho,\psi)\equiv\Tr[\rho\cdot\psi]$.
Throughout, we use the same symbol $F$ to denote both state fidelity and average fidelity; 
the intended meaning is always clear from context.

\begin{figure}[htbp]
    \centering   
    \includegraphics[width=0.37\textwidth]{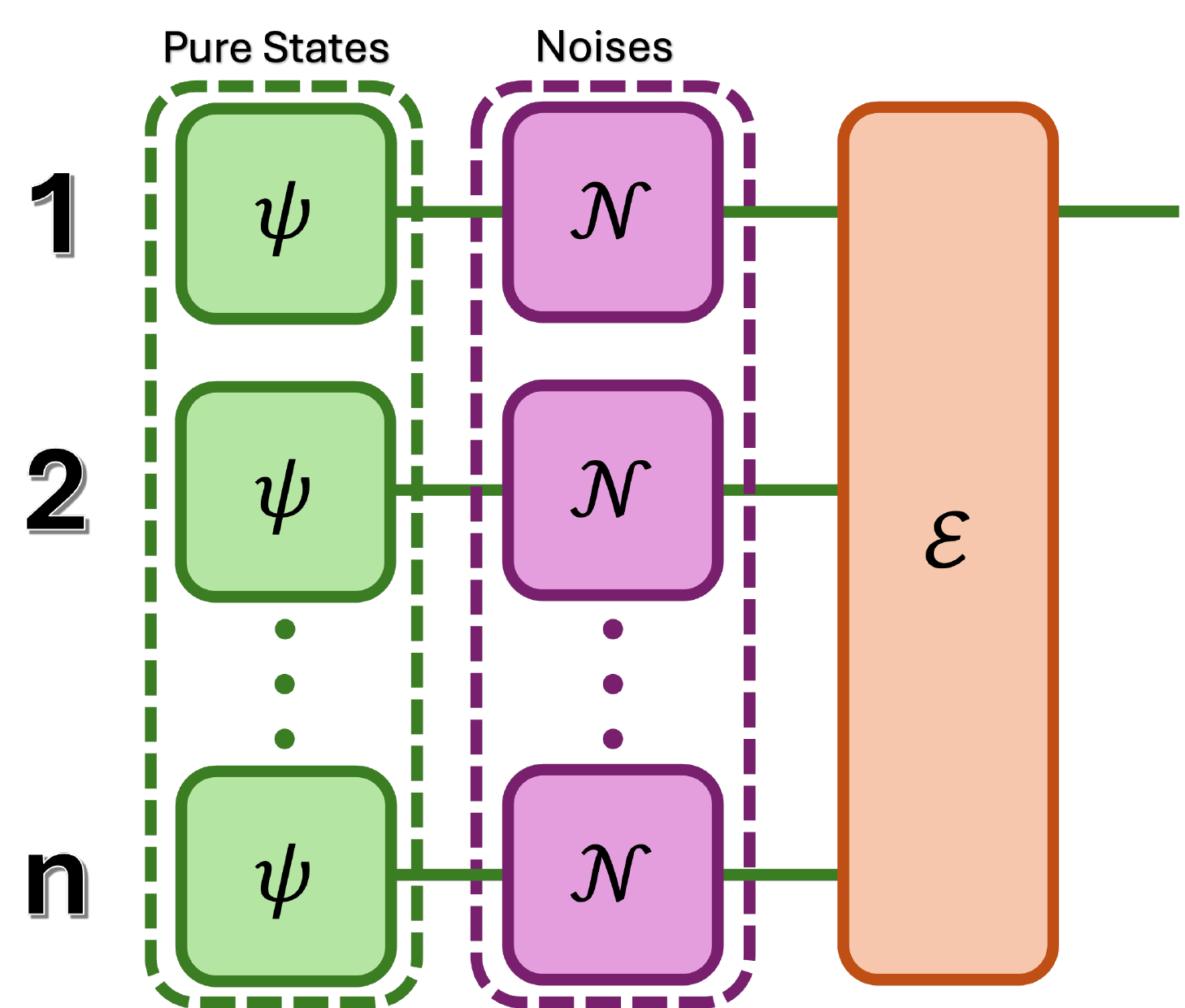}
    \caption{\textbf{Conventional Quantum State Purification}. 
        In conventional quantum state purification, $n$ copies of the noisy state $\mN(\psi)$ are prepared and processed collectively by a purification channel $\mE$. 
        Within the superchannel framework, this conventional protocol (orange) can be interpreted as a post-processing map (see Fig.~\ref{fig:Superchannel}) acting on $n$ parallel uses of the noise channel $\mN^{\otimes n}$ (purple), which acts on the input pure states (green).
    }
    \label{fig:Conventional_Purification}
\end{figure}

The benchmark is given by the average fidelity in the absence of any purification, written as
\begin{align}\label{eq:Purification_Benchmarking}
    F(\mN,\mS):=\frac{1}{|\mS|}\sum_{\psi\in\mS}F(\mN(\psi), \psi).
\end{align}
With these two protocols --- one with purification and one without --- in place, we can now formally introduce the notion of efficient purification.

\begin{mydef}{Efficient Purification}{Efficient_Purification}
    A purification protocol $\mE$ is said to be efficient if it satisfies
    \begin{align}
        F(\mE,\mN,\mS) > F(\mN,\mS),
    \end{align}
    that is, if it yields a nontrivial improvement in fidelity.
\end{mydef}

\noindent
The focus is on the regime where the inequality is strict, as equality can always be achieved by the trivial protocol that discards all but a single copy (see Fig.~\ref{fig:Trivial_Purification}).

\begin{figure}[htbp]
    \centering   
    \includegraphics[width=0.37\textwidth]{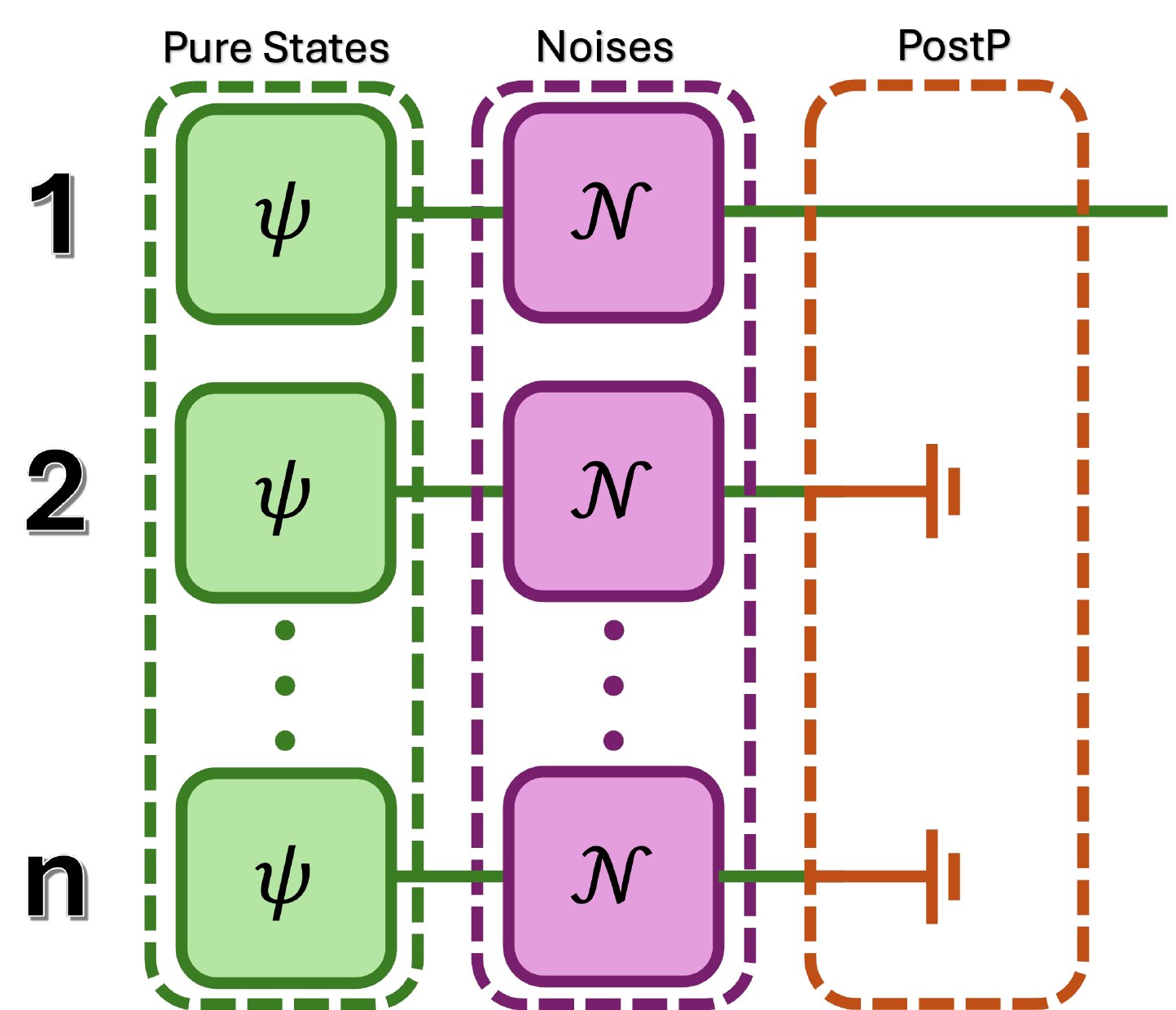}
    \caption{\textbf{Trivial Purification}. 
        $n$ identical copies of the noisy state $\mN(\psi)$ are available, i.e., $\mN(\psi)^{\otimes n}$, but only a single copy is retained while the remaining $n-1$ copies are discarded. 
        In this case, no collective processing is performed, and the resulting fidelity reduces to that of a single noisy copy, as given in Eq.~\eqref{eq:Purification_Benchmarking}.
    }
    \label{fig:Trivial_Purification}
\end{figure}


\subsection{Forward-Assisted Purification Protocols}\label{subsec:Forward_Assisted_Purification_Protocols}

This subsection develops the central framework of this work: forward-assisted (FA) purification. 
Moving beyond conventional approaches that treat noisy states as static objects, we adopt a dynamical perspective in which noise is modeled as a quantum process and manipulated at the level of channels~\cite{10.1109/TIT.2015.2439953}. 
This viewpoint naturally leads to a formulation in terms of superchannels~\cite{Chiribella_2008,8678741,xiao2025superchanneltearsgeneralizedoccams}, where pre-processing, memory, and post-processing are treated on equal footing. 
Within this spatiotemporal framework, purification protocols acquire a richer structure, enabling transformations that are inaccessible to standard post-processing schemes. 
We formalize this setting and introduce the corresponding classes of FA protocols, which will serve as the basis for the performance analysis and structural results developed in the following sections.

The conventional approach treats the noisy state $\mN(\psi)$ as a static object, thereby overlooking the intrinsically dynamical nature of noise $\mN$ (see Fig.~\ref{fig:Conventional_Purification}). 
In reality, noise is described by quantum channels, which are inherently dynamical processes.
To fully exploit the possibilities for addressing noise --- specifically, for purifying it --- one must consider the most general transformations of quantum channels, described by superchannels~\cite{Chiribella_2008,8678741,xiao2025superchanneltearsgeneralizedoccams}.
These comprise pre-processing and post-processing stages connected by a memory channel, with pre-processing applied before the noise and post-processing after, as demonstrated in Fig.~\ref{fig:Superchannel}.

In quantum error correction, these stages are analogous to encoding and decoding. 
A key distinction, however, is that superchannels may incorporate quantum memory $\theta^{\mathrm{Mem}}$ linking the pre- and post-processing stages. 
More precisely, given noise channels $\mN^{\otimes n}$, its manipulation is described by a superchannel $\Theta$, whose action can be expressed as
\begin{align}\label{eq:Superchannel}
    \theta(\mN^{\otimes n})
    =
    \theta^{\text{Post}}
    \circ\left(\theta^{\mathrm{Mem}}\otimes\mN^{\otimes n}\right)\circ
    \theta^{\text{Pre}},
\end{align}
where $\theta^{\text{Pre}}$ and $\theta^{\text{Post}}$ represent the pre-processing and post-processing operations, respectively. 
The resulting purified state takes the form
\begin{align}
    \theta(\mN^{\otimes n})(\psi^{\otimes n})
    =
    \theta^{\text{Post}}
    \circ\left(\theta^{\mathrm{Mem}}\otimes\mN^{\otimes n}\right)\circ
    \theta^{\text{Pre}}
    (\psi^{\otimes n}),
\end{align}
as illustrated in Fig.~\ref{fig:FA_Purification}.

\begin{figure}[htbp]
    \centering   
    \includegraphics[width=0.5\textwidth]{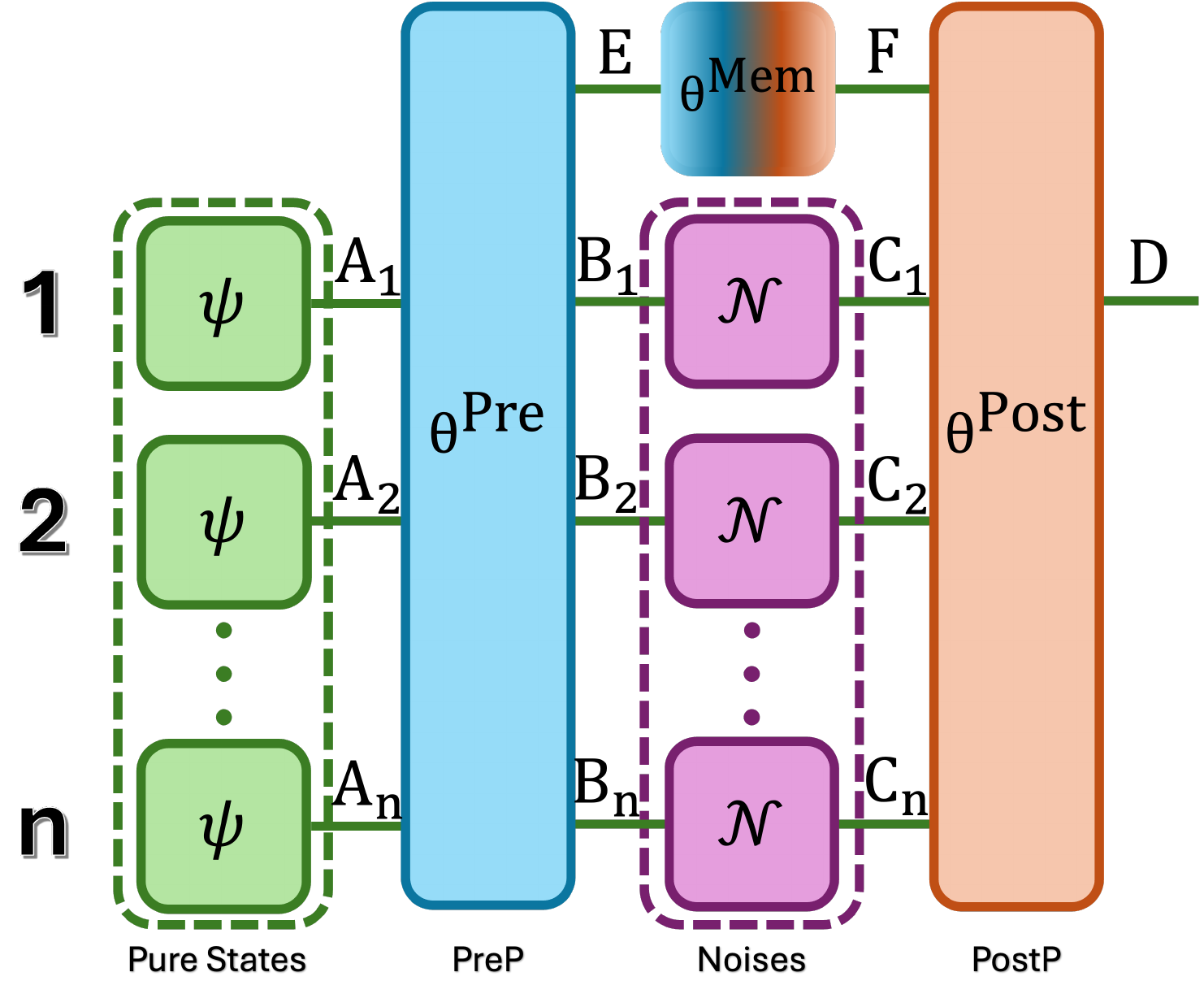}
    \caption{\textbf{Forward-Assisted Purification}. 
        $n$ identical copies of a pure state $\psi$ are first processed by a pre-processing map $\theta^{\mathrm{Pre}}$, whose outputs are subjected to independent noisy channels $\mathcal{N}$. 
        A memory channel $\theta^{\mathrm{Mem}}$ correlates the pre- and post-processing stages (see Fig.~\ref{fig:Superchannel}). 
        The final state is obtained via a post-processing map $\theta^{\mathrm{Post}}$, yielding the purified output on system $D$. This framework captures general spatiotemporal manipulations of noisy quantum states, with conventional purification (see Fig.~\ref{fig:Conventional_Purification}) recovered as the special case without pre-processing or memory.
    }
    \label{fig:FA_Purification}
\end{figure}

Allowing for pre-processing within quantum state purification fundamentally distinguishes the present framework from existing approaches. 
In the absence of pre-processing, that is, by setting $\theta^{\text{Post}} = \mE$ (see Eq.~\eqref{eq:Conventional_Purification}) and neglecting the pre-processing stage $\theta^{\text{Pre}}$, the framework reduces to the conventional purification protocol in Fig.~\ref{fig:Conventional_Purification}. 
In general, however, the inclusion of pre-processing $\theta^{\text{Pre}}$ leads to qualitatively different behavior.
Further details are provided in the subsequent subsections.

\begin{table}[htbp]
    \centering
    \begin{tblr}{
      colspec = {l || c | c | c | c},
      row{1,2} = {bg=gray!50, font=\bfseries}, 
      column{1} = {bg=gray!10},                
      hlines,                                  
      vlines,                                  
      cells = {m},                             
      row{1} = {c},                            
      cell{3}{3} = {bg=mGreen}, cell{3}{4,5} = {bg=mRed},
      cell{4}{3,4} = {bg=mRed}, cell{4}{5} = {bg=mGreen},
      cell{5}{3,5} = {bg=mGreen}, cell{5}{4} = {bg=mRed},
      cell{6-10}{3-5} = {bg=mGreen},
    }
      \SetCell[c=5]{c} Forward-Assisted Purification Protocols & & & & \\
      Types of Purification Protocols & Superchannel & PreP & Memory & PostP \\
      Pre-Processing (PreP) & $\theta=\theta^{\mathrm{Pre}}$ & $\checkmark$ & $\times$ & $\times$ \\
      Post-Processing (PostP) & $\theta=\theta^{\mathrm{Post}}$ & $\times$ & $\times$ & $\checkmark$ \\
      Unassisted (UA) & $\theta=\theta^{\mathrm{Post}}\circ\theta^{\mathrm{Pre}}$ & $\checkmark$ & $\times$ & $\checkmark$ \\
      Entanglement-Assisted (EA) & $\theta=\theta^{\mathrm{Post}}\circ\theta^{\mathrm{Pre}}(\phi^{+})$ & $\checkmark$ & 
      $\checkmark$ & 
      $\checkmark$ \\
      Non-Signalling (NS) & $\theta\in\mathrm{NS}$ (see Def.~\ref{def:NS}) & $\checkmark$ & 
      $\checkmark$ & 
      $\checkmark$ \\
      Forward-Classical-Assisted (FCA) & $\theta^{\mathrm{Mem}}\in\mathrm{CL}$ (see Def.~\ref{def:Classical_Channel}) & $\checkmark$ & 
      $\checkmark$ & 
      $\checkmark$ \\
      Forward-Horodecki-Assisted (FHA) & $\theta^{\mathrm{Mem}}\in\mathrm{CL}$ (see Def.~\ref{def:PPT_Binding_Channel}) & $\checkmark$ & 
      $\checkmark$ & 
      $\checkmark$ \\
      Positive Partial Transpose (PPT) & $(J^{\theta})^{\T_{\mathrm{Post}}}\geqslant0$ & $\checkmark$ & 
      $\checkmark$ & 
      $\checkmark$ \\
    \end{tblr}
    \caption{\textbf{Forward-Assisted Purification Protocols}.
    Generalizations of conventional purification within the post-processing (PostP) setting are considered, all falling under the framework of forward-assisted purification. 
    A checkmark indicates the presence of the corresponding quantum process, while a cross denotes its absence. 
    In the last row, $^{\T_{\mathrm{post}}}$ denotes partial transposition with respect to all subsystems associated with post-processing.
    }
    \label{tab:Forward_Assisted_Purification}
\end{table}

Section~\ref{subsec:Channels_Superchannels} introduced several classes of quantum channels, including classical (see Def.~\ref{def:Classical_Channel}), PPT (see Def.~\ref{def:PPT_Binding_Channel}), and NS (see Def.~\ref{def:NS}) channels. 
Building on this framework, different types of forward-assisted (FA) purification protocols can now be defined, as summarized in Tab.~\ref{tab:Forward_Assisted_Purification}.
Among these, entanglement-assisted (EA) purification is distinct, and its realization is illustrated in Fig.~\ref{fig:EA}.

\begin{figure}[htbp]
    \centering   
    \includegraphics[width=0.5\textwidth]{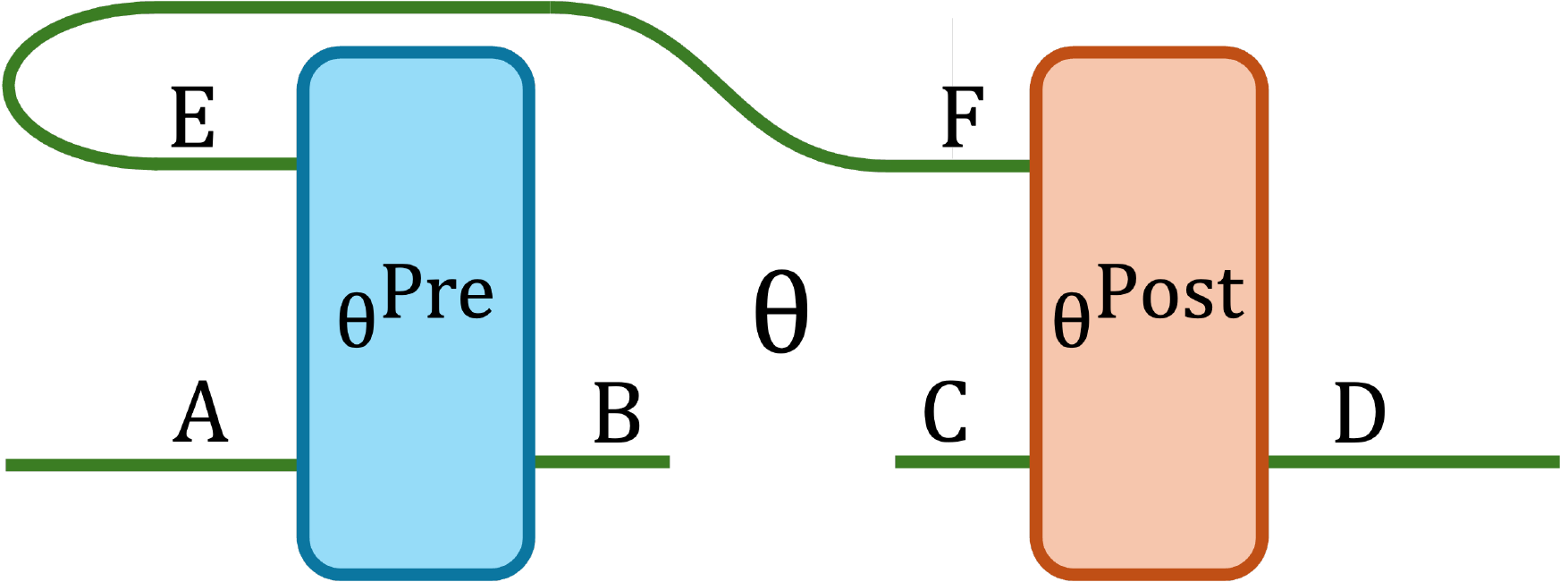}
    \caption{\textbf{Entanglement-Assisted (EA) Purification}. 
        The pre- and post-processing stages of the superchannel are connected via a maximally entangled state. 
        More generally, the maximally entangled state $\phi^{+}$ (see Eq.~\eqref{eq:MES}) can be replaced by an arbitrary shared entangled state between pre- and post-processing.
    }
    \label{fig:EA}
\end{figure}

The inclusion relations among the FA purification protocols listed in Tab.~\ref{tab:Forward_Assisted_Purification} are shown in the Venn diagram of Fig.~\ref{fig:Venn}.
Here, unassisted (UA) purification refers to protocols that do not employ quantum memory, encompassing, for example, PreP and PostP purification schemes.

\begin{figure}[htbp]
    \centering   
    \includegraphics[width=0.7\textwidth]{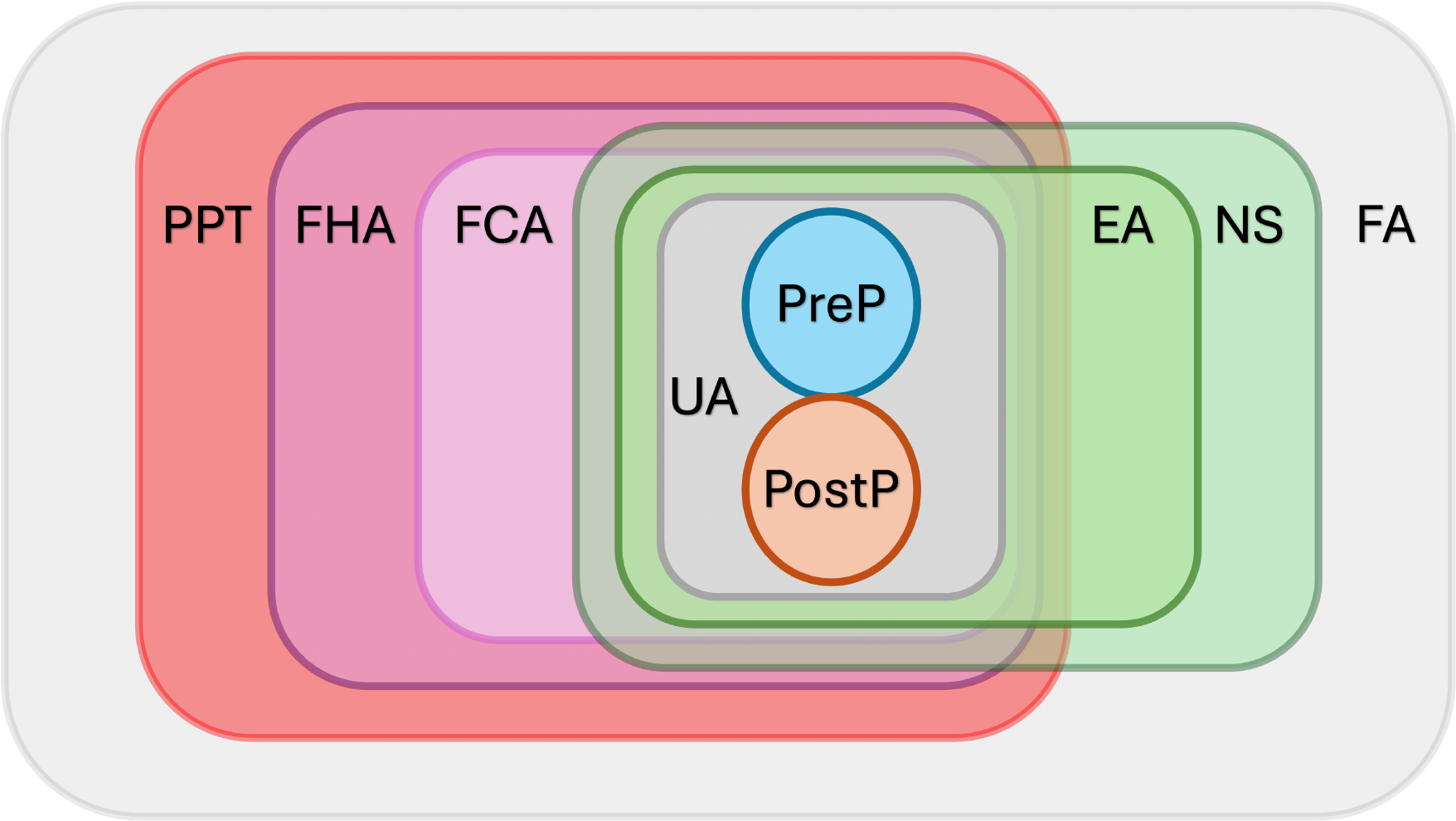}
    \caption{\textbf{Hierarchy of Forward-Assisted Purification Protocols}. 
        Venn diagram illustrating the inclusion relations among different classes of FA purification protocols. 
        The outermost region represents the full FA set. Subclasses defined by structural constraints, non-signalling (NS), entanglement-assisted (EA), and positive partial transpose (PPT), form nested and overlapping regions, with further refinements such as FHA and FCA indicated accordingly. 
        The inner region highlights Unassisted (UA) purifications, composed of pre-processing (PreP) and post-processing (PostP) protocols. 
        The diagram makes explicit the hierarchical organization and intersections among these protocol classes listed in Tab.~\ref{tab:Forward_Assisted_Purification}.
    }
    \label{fig:Venn}
\end{figure}

Since classical channels (see Def.~\ref{def:Classical_Channel}) form a special case of PPT-binding channels (see Def.~\ref{def:PPT_Binding_Channel}), FCA is naturally a subset of FHA (see Fig.~\ref{fig:Venn}). 
We now establish two further inclusion relations: entanglement-assisted (EA) purification is a subset of non-signalling (NS), and FHA is a subset of PPT. These are formalized in the following lemmas.

\begin{mylem}{Hierarchical Inclusion Relation}{Hierarchical_Inclusion_Relation_1}
    Entanglement-assisted (EA) purification protocols are a subset of non-signalling (NS) purification protocols, that is,
    \begin{align}
        \mathrm{EA} \subset \mathrm{NS}.
    \end{align}
\end{mylem}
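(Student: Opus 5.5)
To prove the inclusion, we take an arbitrary EA superchannel $\theta=\theta^{\mathrm{Post}}\circ\theta^{\mathrm{Pre}}(\phi^{+})$ and verify directly that its Choi operator $J^{\theta}$ satisfies both one-way non-signalling conditions of Def.~\ref{def:One_Way_NS}, hence is NS in the sense of Def.~\ref{def:NS}. Concretely, we regard $\theta$ as a bipartite channel $AC\to BD$. The pre-processing $\theta^{\mathrm{Pre}}\colon A\to B$ acts on one half $E$ of the shared state $\phi^{+}_{EF}$, and the post-processing $\theta^{\mathrm{Post}}\colon CF\to D$ acts on the other half $F$ together with the noise output $C$. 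Thus $\theta^{\mathrm{Pre}}$ maps $AE\to B$ (or $A\to BE'$; either convention works). Using the link product (Def.~\ref{def:Link_Product}), we write
\begin{align}
J^{\theta}_{ABCD}=J^{\mathrm{Pre}}_{AEB}\star\phi^{+}_{EF}\star J^{\mathrm{Post}}_{CFD}.
\end{align}
Positivity of $J^{\theta}$ and the normalisation $\Tr_{BD}[J^{\theta}]=\1_{AC}$ follow from the standard fact that a link product of positive operators is positive, together with the trace-preservation of each component.

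For the condition from $(C\to D)$ to $(A\to B)$, we trace out $D$. Trace-preservation of $\theta^{\mathrm{Post}}$ gives $\Tr_D[J^{\mathrm{Post}}_{CFD}]=\1_{CF}$, so $\Tr_D[J^{\theta}]=(J^{\mathrm{Pre}}\star\phi^{+}\star\1_{F})\otimes\1_C$. The reduced operator $J^{\theta}_{AB}$ is $d_C$ times the bracketed factor, which yields $\Tr_D[J^{\theta}]=J^{\theta}_{AB}\otimes\1_C/d_C$. Operationally, $B$ depends only on $A$ and the marginal $\Tr_F\phi^{+}=\1/d$.

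For the reverse direction, $(A\to B)$ to $(C\to D)$, we trace out $B$. Trace-preservation of $\theta^{\mathrm{Pre}}$ as a map $AE\to B$ gives $\Tr_B[J^{\mathrm{Pre}}]=\1_{AE}$. Linking $\1_E$ with $\phi^{+}_{EF}$ gives $\Tr_E\phi^{+}=\1_F/d$, a state on $F$ independent of $A$. Hence $\Tr_B[J^{\theta}]=\1_A\otimes(\1_F/d\star J^{\mathrm{Post}}_{CFD})$, and this equals $J^{\theta}_{CD}\otimes\1_A/d_A$ after matching normalisations.

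The main point requiring care is this second direction. It is where the entangled (rather than signalling) nature of the memory matters: we must use that the pre-processing, viewed with $E$ as an input, discards $E$ in a trace-preserving way, so the reduced state on $F$ carries no information about $A$. If the paper's convention instead has $\theta^{\mathrm{Pre}}\colon A\to BE$ with $\phi^{+}$ inserted differently, we would reorganise accordingly. The same argument applies verbatim with $\phi^{+}$ replaced by any shared state, since only its marginals enter the calculation. Strictness of the inclusion is not needed for the hierarchy in Thm.~\ref{thm:MT_Hierarchy_Purification_Performance}, so we treat $\subset$ as $\subseteq$.
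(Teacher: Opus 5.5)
Your proposal is correct and follows the paper's proof: both write $J^{\theta}$ as the link product of $J^{\mathrm{Pre}}_{AEB}$, $\phi^{+}_{EF}$ and $J^{\mathrm{Post}}_{CFD}$, and use trace preservation of $\theta^{\mathrm{Pre}}\colon AE\to B$ together with $\Tr_{E}[\phi^{+}_{EF}]=\1_{F}/d_{F}$ to obtain $\Tr_{B}[J^{\theta}]=\1_{A}\otimes\frac{1}{d_{F}}\Tr_{F}[J^{\mathrm{Post}}_{CFD}]$. The paper checks only this direction, because $\Tr_{D}[J^{\theta}]=J^{\theta}_{AB}\otimes\1_{C}/d_{C}$ is already part of the superchannel constraints, so your extra check is redundant but harmless; the one slip is your parenthetical claim that the convention $A\to BE'$ would work equally well, since a system $E'$ output by the pre-processing and forwarded to the post-processing is a quantum memory through which $A$ can signal to $D$.
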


\begin{proof}
Consider an EA protocol $\theta = \theta^{\mathrm{Post}}\circ\theta^{\mathrm{Pre}}(\phi^{+})$ (see Fig.~\ref{fig:EA}), whose Choi operator is given by    
\begin{align}
    J^{\theta}=
    (J^{\mathrm{Pre}}_{EAB}\otimes J^{\mathrm{Post}}_{FCD})\star\phi^{+}_{EF},
\end{align}
where $\star$ denotes the link product (see Def.~\ref{def:Link_Product}) between quantum processes. 
Here, $J^{\mathrm{Pre}}$ and $J^{\mathrm{Post}}$ are the Choi operators associated with the pre-processing and post-processing channels, respectively.
It suffices to show that $\theta$ is one-way non-signalling from $(A \to B)$ to $(C \to D)$ (see Def.~\ref{def:One_Way_NS}). 
To this end, we trace out the output system $B$, which yields
\begin{align}
    \Tr_{B}[J^{\theta}]=
    &\Tr_{BEF}[(J^{\mathrm{Pre}}_{EAB}\otimes J^{\mathrm{Post}}_{FCD})\cdot
    \phi^{+}_{EF}]\\
    =
    &\Tr_{EF}[(\1_{EA}\otimes J^{\mathrm{Post}}_{FCD})\cdot
    \phi^{+}_{EF}]\\
    =
    &\1_{A}\otimes\frac{1}{d_{F}}
    \Tr_{F}[J^{\mathrm{Post}}_{FCD}],
\end{align}
where $d_{F}$ denotes the dimension of subsystem $F$.
It is also straightforward to verify that the remaining part on systems $C$ and $D$ constitutes the Choi operator of a quantum channel. 
This can be confirmed by tracing over the output system $D$, which leads to
\begin{align}
    \Tr_{D}[\frac{1}{d_{F}}
    \Tr_{F}[J^{\mathrm{Post}}_{FCD}]]
    =
    \frac{1}{d_{F}}\Tr_{F}[\1_{FC}]
    =
    \1_{C}.
\end{align}
This establishes TP, while CP is satisfied by construction. 
Taken together, these conditions imply that the operator is a valid Choi operator of a quantum channel, completing the proof.
\end{proof}

Having established the inclusion of entanglement-assisted (EA) protocols within the non-signalling (NS) class, we next turn to a complementary structural relation arising from positivity constraints. 
In particular, we show that forward-Horodecki-assisted (FHA) protocols are naturally contained within the broader class of PPT protocols, thereby further refining the hierarchical organization of forward-assisted (FA) purification strategies.

\begin{mylem}{Hierarchical Inclusion Relation}{Hierarchical_Inclusion_Relation_2}
    Every forward-Horodecki-assisted (FHA) purification protocol is a special case of a Positive Partial Transpose (PPT) purification protocol, being PPT with respect to the post-processing systems, i.e.,
    \begin{align}
        \mathrm{FHA}\subset\mathrm{PPT}.
    \end{align}
\end{mylem}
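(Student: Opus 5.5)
Mirroring the proof of Lemma~\ref{lem:Hierarchical_Inclusion_Relation_1}, I will write the Choi operator of an FHA protocol as a link product of its three components. With the memory $\theta^{\mathrm{Mem}}\colon E\to F$ having a PPT-binding Choi operator (Def.~\ref{def:PPT_Binding_Channel}), this reads
\begin{align}
    J^{\theta}=\left(J^{\mathrm{Pre}}_{AEB}\otimes J^{\mathrm{Post}}_{CFD}\right)\star J^{\mathrm{Mem}}_{EF}
    =\Tr_{EF}\!\left[\left(J^{\mathrm{Pre}}_{AEB}\otimes J^{\mathrm{Post}}_{CFD}\right)\cdot \left(J^{\mathrm{Mem}}_{EF}\right)^{\T_{EF}}\right].
\end{align}
The post-processing systems are $C$ and $D$. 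I therefore need to show $(J^{\theta})^{\T_{CD}}\geqslant 0$, which is the PPT condition in Def.~\ref{def:PPT_PostP} with the memory treated as latent.

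The central step is to push the partial transpose on $CD$ inside the trace over $EF$. Since $\T_{CD}$ acts on systems that are not traced, it commutes with $\Tr_{EF}$ and only affects $J^{\mathrm{Post}}$. The next move is to rewrite the link product so that the transpose lands on $F$ as well. I use the identity $\Tr_{F}[X_{F}Y_{F}^{\T_F}]=\Tr_F[X_F^{\T_F}Y_F]$ on the $F$ factor. This gives
\begin{align}
    (J^{\theta})^{\T_{CD}}=\Tr_{EF}\!\left[\left(J^{\mathrm{Pre}}_{AEB}\otimes (J^{\mathrm{Post}}_{CFD})^{\T_{CFD}}\right)\cdot \left(J^{\mathrm{Mem}}_{EF}\right)^{\T_{E}}\right],
\end{align}
which is a link product of $J^{\mathrm{Pre}}$, the fully transposed $(J^{\mathrm{Post}})^{\T}$, and $J^{\mathrm{Mem}}$ with transposition only on $E$ (in the link-product convention, $J^{\mathrm{Mem}}$ with a $\T_F$ applied).

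I then argue positivity factor by factor. $J^{\mathrm{Pre}}\geqslant 0$ because $\theta^{\mathrm{Pre}}$ is a channel. $(J^{\mathrm{Post}})^{\T_{CFD}}$ is a full transpose of a positive operator, hence positive. $(J^{\mathrm{Mem}})^{\T_F}\geqslant0$ is exactly the Horodecki assumption; by the equivalence of $\T_E$ and $\T_F$ noted in Def.~\ref{def:PPT_State}, $\T_E$ works equally well. The link product of positive operators is positive, since $\Tr_{EF}[(P\otimes Q)\, R^{\T}]$ with $P,Q,R\geqslant0$ is the Choi operator of a composition of CP maps. Hence $(J^{\theta})^{\T_{CD}}\geqslant0$.

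The remaining constraints of Eq.~\eqref{eq:MT_FA_Fundamental_Limit} are trace preservation and the one-way NS causal ordering. These hold automatically because every FHA protocol is a genuine superchannel, so $\theta\in\mathrm{PPT}$. The main obstacle I expect is bookkeeping in the transposition step: verifying carefully which systems carry transposes after moving $\T_F$ across the trace, so that the memory factor ends up partially transposed on exactly one side rather than fully transposed. A full transpose would trivially preserve positivity and so would not use the Horodecki assumption, which would signal an error.
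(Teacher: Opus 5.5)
Your proposal is correct and follows essentially the same route as the paper. You write $J^{\theta}$ as the link product of $J^{\mathrm{Pre}}$, $J^{\mathrm{Post}}$ and $J^{\mathrm{Mem}}$, move $\T_{CD}$ inside $\Tr_{EF}$, trade the $\T_{F}$ on the memory for a full transpose of $J^{\mathrm{Post}}$, and conclude from $(J^{\mathrm{Mem}}_{EF})^{\T_E}\geqslant0$. You also spell out two steps the paper leaves implicit: why the resulting link product of positive operators is positive, and why the remaining superchannel constraints (trace preservation and one-way non-signalling) hold automatically.
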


\begin{proof}
If a purification protocol is FHA, then its Choi operator admits the form
\begin{align}
    J^{\theta}=
    (J^{\mathrm{Pre}}_{AEB}\otimes J^{\mathrm{Post}}_{FCD})\star
    J^{\mathrm{Mem}}_{EF},
\end{align}
where $J^{\mathrm{Mem}}$ denotes the Choi operator of the memory channel $\theta^{\mathrm{Mem}}$ linking the pre-processing and post-processing stages, as demonstrated in Fig.~\ref{fig:Superchannel}.   
To verify the PPT condition, we take the partial transpose over systems $C$ and $D$, yielding
\begin{align}
    (J^{\theta})^{\T_{CD}}
    =
    &\left(\Tr_{EF}[\left(J^{\mathrm{Pre}}_{AEB}\otimes J^{\mathrm{Post}}_{FCD}\right)
    \cdot
    \left(J^{\mathrm{Mem}}_{EF}\right)^{\T_{EF}}]
    \right)^{\T_{CD}}\\
    =
    &\left(\Tr_{EF}[\left(J^{\mathrm{Pre}}_{AEB}\otimes \left(J^{\mathrm{Post}}_{FCD}\right)^{\T_{FCD}}\right)
    \cdot
    \left(J^{\mathrm{Mem}}_{EF}\right)^{\T_{E}}]
    \right).
\end{align}
As channel $\theta^{\mathrm{Mem}}$ is Horodecki (see Def.~\ref{def:PPT_Binding_Channel}), namely PPT, we have
\begin{align}
    \left(J^{\mathrm{Mem}}_{EF}\right)^{\T_{E}}\geqslant0.
\end{align}
This in turn implies that $(J^{\theta})^{\T_{CD}}\geqslant0$, meaning superchannel $\theta$ is PPT (see Tab.~\ref{tab:Forward_Assisted_Purification}), which completes our proof.
\end{proof}

\begin{figure}[htbp]
    \centering   
    \includegraphics[width=1\textwidth]{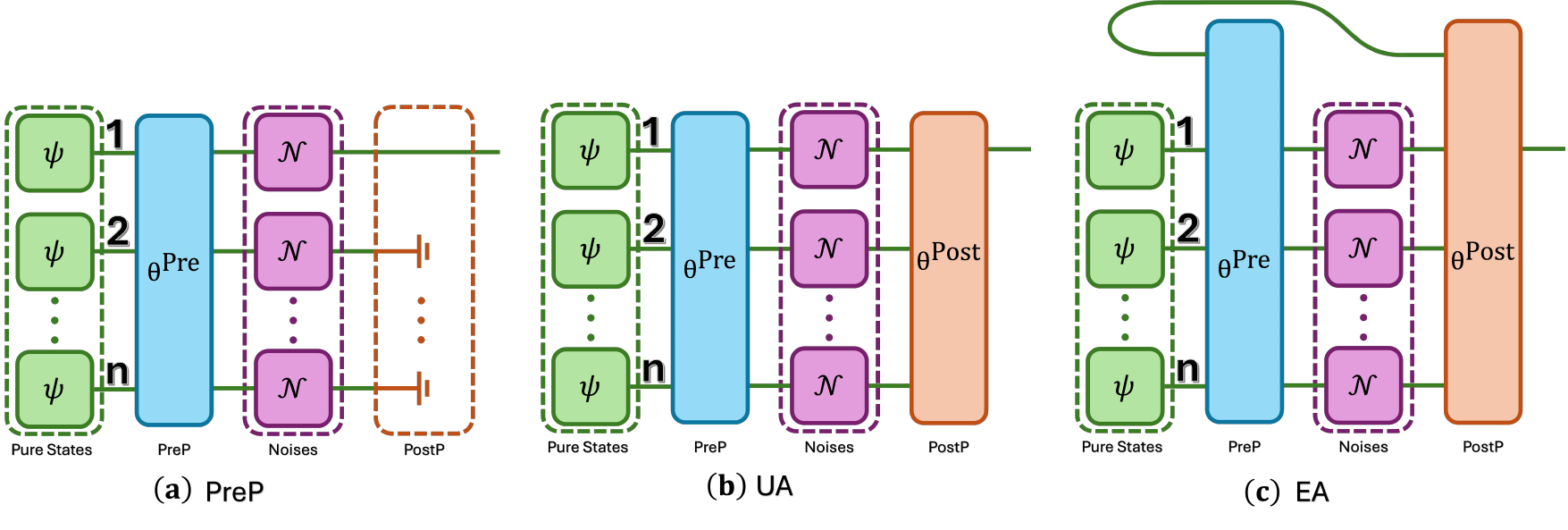}
    \caption{\textbf{Representative Forward-Assisted (FA) Purifications}. 
        (a) Pre-processing (PreP) purification, where only pre-processing is applied prior to the noise channels. 
        In the absence of post-processing, the protocol cannot map $n$ inputs to a single output; consequently, $n-1$ systems are traced out, and only one is retained. 
        For comparison, the conventional post-processing–only purification is shown in Fig.~\ref{fig:Conventional_Purification}.
        (b) Unassisted (UA) purification, where both pre-processing and post-processing are implemented without any quantum memory connecting them, analogous to standard quantum error correction settings.
        (c) Entanglement-assisted (EA) purification, where a maximally entangled state is shared between the pre-processing and post-processing stages, enhancing the purification performance.
    }
    \label{fig:PreP_UA_EA}
\end{figure}

To build intuition for the different types of forward-assisted (FA) purification, representative realizations are provided and illustrated in Fig.~\ref{fig:PreP_UA_EA}.

Within the framework of forward-assisted (FA) purification (see Fig.~\ref{fig:Venn}), both PPT and NS classes are of broad interest and play a central role in the study of quantum information processing, being well motivated from both physical and operational perspectives. 
Importantly, their performance can be formulated as semidefinite programs (SDPs), making them computationally tractable. 
Beyond these classes, particular attention is given to the role of pre-processing (PreP), represented by the blue circle in Fig.~\ref{fig:Venn} (see also Fig.~\ref{fig:PreP_UA_EA}(a)), which is more readily implementable in experimental settings. 
Its performance is contrasted with that of conventional purification protocols, which rely solely on post-processing (PostP), shown as the orange circle in Fig.~\ref{fig:Venn} (see also Fig.~\ref{fig:Conventional_Purification}).
As will become clear, incorporating PreP is not merely advantageous, but, in certain regimes, essential for achieving improved purification performance.


\subsection{Optimization of Purification Performance}
\label{subsec:Optimization_Purification_Performance}

A unified optimization framework is established for quantifying the ultimate performance of forward-assisted (FA) purification protocols. 
Building on the superchannel formulation introduced earlier, the achievable fidelity is cast as a semidefinite program (SDP) in which the Choi operator of the underlying superchannel serves as the central optimization variable, and operational constraints --- such as CP, TP, and structural conditions dictated by the protocol class (see Tab.~\ref{tab:Forward_Assisted_Purification}) --- define the feasible set. 
The formulation is presented at the level of global purification with $n$ copies, thereby encompassing the most general setting of interest. Within this framework, conventional post-processing schemes (see Fig.~\ref{fig:Conventional_Purification}) emerge as special cases, while more general FA protocols, including those constrained by PPT (see Def.~\ref{def:PPT_PostP}) and NS (see Def.~\ref{def:NS}) conditions, are treated on equal footing, making explicit how distinct operational restrictions govern the fundamental limits of purification performance.

For notational simplicity, we denote by $A$ the collection of all input systems $A_1, \ldots, A_n$, namely $A:=A_1, \ldots, A_n$, and adopt the same convention for $B$ and $C$. 
Under this shorthand, general FA purification of Fig.~\ref{fig:FA_Purification} admits the following simplified representation in Fig.~\ref{fig:FA_Purification_Simplified}.

\begin{figure}[htbp]
    \centering   
    \includegraphics[width=0.5\textwidth]{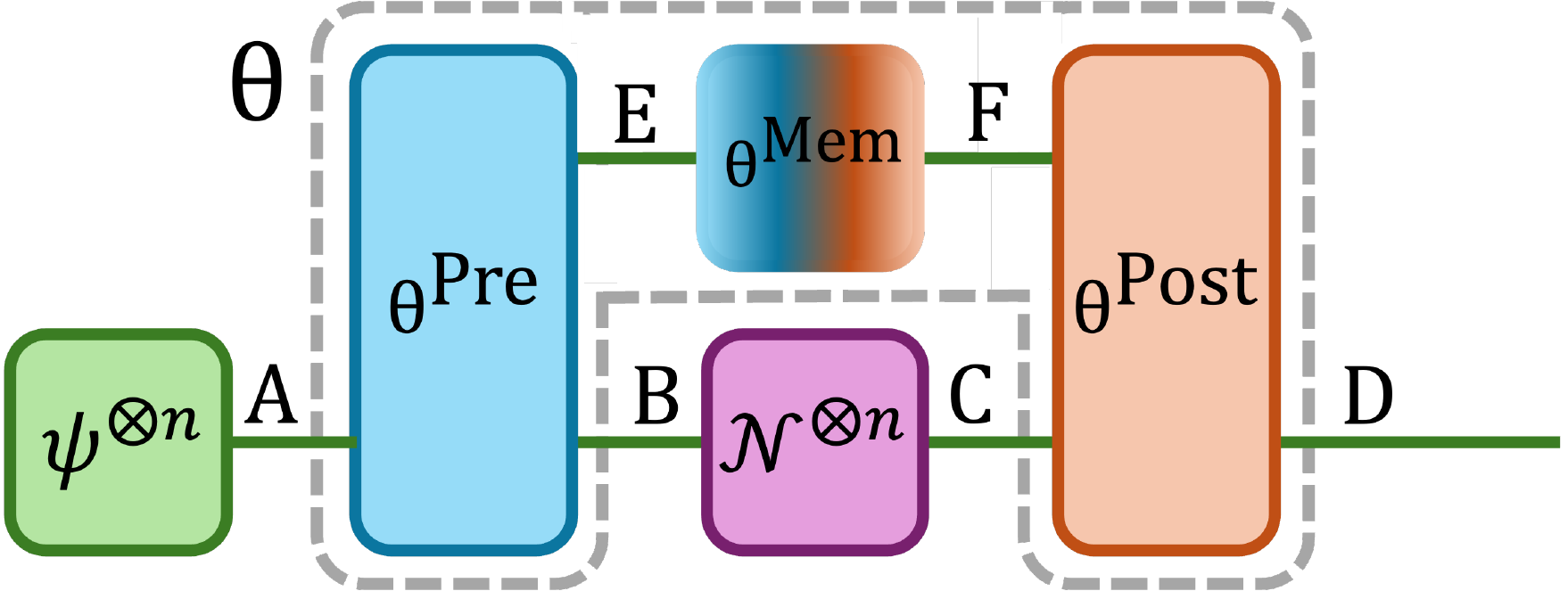}
    \caption{\textbf{Compact Representation of Forward-Assisted Purification}. 
        Schematic representation of forward-assisted (FA) purification in compact form. 
        The $n$-copy input state $\psi^{\otimes n}$ on system $A$ is processed by a pre-processing map $\theta^{\mathrm{Pre}}$, followed by $n$ parallel noisy channels $\mathcal{N}^{\otimes n}$ acting on system $B$. 
        A memory channel $\theta^{\mathrm{Mem}}$ links the pre- and post-processing stages, enabling temporal correlations. 
        The final output on system $D$ is obtained via a post-processing map $\theta^{\mathrm{Post}}$. 
        The dashed boundary denotes the overall superchannel $\theta$, capturing the full spatiotemporal transformation from input to output.
    }
    \label{fig:FA_Purification_Simplified}
\end{figure}

When input states are drawn uniformly from a set $\mS$, and each copy undergoes noise described by $\mN$, an FA purification protocol is characterized by a superchannel $\theta(\cdot)=\theta^{\mathrm{Post}}\circ\left(\theta^{\mathrm{Mem}}\otimes\cdot\right)\circ\theta^{\mathrm{Pre}}$ satisfying a prescribed property $\mP$, as illustrated in Fig.~\ref{fig:FA_Purification_Simplified}, the corresponding purification limit is then given by
\begin{align}\label{eq:FA_Fundamental_Limit}
    F_{\mP}
    \coloneqq
    \max \quad 
    & 
    \Tr[J^{\theta}_{ABCD}\cdot \left( \Psi_{AD}\otimes\Omega_{BC}
    \right)]
    \\
    \text{s.t.} \quad 
    &J^{\theta}\geqslant0, \Tr_{BD}[J^{\theta}]=\1_{AC}, 
    \Tr_{D}[J^{\theta}]=
    J^{\theta}_{AB}\otimes\frac{1}{d_C}\1_{C},
    \theta\in\mP,
\end{align}
where the operators $\Psi$ and $\Omega$ appearing in Eq.~\eqref{eq:FA_Fundamental_Limit} are defined as follows
\begin{align}\label{eq:Psi}
    \Psi_{AD} \coloneqq \frac{1}{|\mS|}\sum\psi^{\otimes n}_{A}\otimes\psi^{\T}_{D},
\end{align}
and
\begin{align}\label{eq:Omega}
    \Omega_{BC} \coloneqq (J^{\mN})^{\otimes n}_{BC}.
\end{align}

Conventional purification protocols are restricted to post-processing alone (see Fig.~\ref{fig:Conventional_Purification}), with pre-processing reduced to identity maps $\id$ from $A_i$ to $B_i$ on each subsystem. Under this constraint, the optimal achievable performance admits the following characterization.
\begin{align}\label{eq:PostP_Fundamental_Limit}
    F_{\mathrm{PostP}}
    \coloneqq
    \max \quad 
    & 
    \Tr[J^{\theta^{\mathrm{Post}}}_{CD}\cdot \left(\frac{1}{|\mS|}\sum_{\psi\in\mS}\psi^{\otimes n}_{A}\otimes\psi_{D}\right)\cdot
    \left((J^{\mN})^{\otimes n}_{AC}\right)^{\T_{AC}}
    ]
    \\
    \text{s.t.} \quad 
    &J^{\theta^{\mathrm{Post}}}\geqslant0, \Tr_{D}[J^{\theta^{\mathrm{Post}}}]=\1_{C}.
\end{align}

In the setting where only pre-processing is permitted, post-processing reduces to an identity map $\id$ from $C_1$ to $D$, while the remaining systems $C_2,\ldots, C_n$ are traced out, as illustrated in Fig.~\ref{fig:PreP_UA_EA}(a). 
Under this constraint, the optimal achievable performance is characterized by the following expression.
\begin{align}\label{eq:PreP_Fundamental_Limit}
    F_{\mathrm{PreP}}
    \coloneqq
    \max \quad 
    & 
    \Tr[J^{\theta^{\mathrm{PreP}}}_{AB}\cdot \left(\frac{1}{|\mS|}\sum_{\psi\in\mS}\psi^{\otimes n}_{A}\otimes\psi_{C_1}\right)\cdot
    \left((J^{\mN})^{\otimes n}_{BC}\right)]
    \\
    \text{s.t.} \quad 
    &J^{\theta^{\mathrm{PreP}}}\geqslant0, \Tr_{B}[J^{\theta^{\mathrm{PreP}}}]=\1_{A}.
\end{align}

Returning to the forward-assisted (FA) setting, attention is first restricted to protocols constrained by PPT operations with respect to the post-processing systems, for which the achievable performance takes the form
\begin{align}\label{eq:PPT_Fundamental_Limit}
    F_{\mathrm{PPT}}
    \coloneqq
    \max \quad 
    & 
    \Tr[J^{\theta}_{ABCD}\cdot \left( \Psi_{AD}\otimes\Omega_{BC}
    \right)]
    \\
    \text{s.t.} \quad 
    &J^{\theta}\geqslant0, \Tr_{BD}[J^{\theta}]=\1_{AC}, 
    \Tr_{D}[J^{\theta}]=
    J^{\theta}_{AB}\otimes\frac{1}{d_C}\1_{C},
    (J^{\theta})^{\T_{CD}}\geqslant0.
\end{align}
The second FA protocol concerns NS purification, for which the achievable performance is given by
\begin{align}\label{eq:NS_Fundamental_Limit}
    F_{\mathrm{NS}}
    \coloneqq
    \max \quad 
    & 
    \Tr[J^{\theta}_{ABCD}\cdot \left( \Psi_{AD}\otimes\Omega_{BC}
    \right)]
    \\
    \text{s.t.} \quad 
    &J^{\theta}\geqslant0, \Tr_{BD}[J^{\theta}]=\1_{AC}, 
    \Tr_{D}[J^{\theta}]=
    J^{\theta}_{AB}\otimes\frac{1}{d_C}\1_{C},
    \Tr_{B}[J^{\theta}]=\frac{1}{d_A}\1_{A}\otimes
    J^{\theta}_{CD}.
\end{align}
If the FA purification protocol satisfies both PPT and NS constraints, the corresponding performance is expressed as
\begin{align}\label{eq:NSPPT_Fundamental_Limit}
    F_{\mathrm{PPT}\cap\mathrm{NS}}
    \coloneqq
    \max \quad 
    & 
    \Tr[J^{\theta}_{ABCD}\cdot \left( \Psi_{AD}\otimes\Omega_{BC}
    \right)]
    \\
    \text{s.t.} \quad 
    &J^{\theta}\geqslant0, \Tr_{BD}[J^{\theta}]=\1_{AC}, 
    \Tr_{D}[J^{\theta}]=
    J^{\theta}_{AB}\otimes\frac{1}{d_C}\1_{C},
    \Tr_{B}[J^{\theta}]=\frac{1}{d_A}\1_{A}\otimes
    J^{\theta}_{CD},
    (J^{\theta})^{\T_{CD}}\geqslant0.
\end{align}

The analysis in this subsection shows that forward-assisted purification can be cast as a unified optimization over superchannels, with operational constraints directly shaping the admissible transformations. 
These constraints do not form a simple hierarchy, but instead define partially overlapping classes, such as PPT and NS protocols, together with their subclasses FHA, FCA, and EA, as illustrated in Fig.~\ref{fig:Venn}. 
Each constraint targets a distinct aspect of the dynamics: PPT restricts the structure of correlations, NS enforces causal independence, and entanglement assistance expands the available resources in a complementary manner. 
Consequently, purification performance does not admit a universal ordering across these classes, but is determined by how the constraints intersect. 
The resulting semidefinite programs provide a unified and computable description of these regimes, making explicit the trade-off between physical constraints and achievable fidelity, and identifying the regimes in which forward-assisted strategies surpass post-processing alone.
Further analysis and numerical results are presented in the following sections.


\section{Advantages in Global Quantum State Purification}\label{sec:Advantages_GQSP}

In this section, we investigate forward-assisted (FA) purification in the global setting, focusing on how temporal structure and pre-processing reshape the attainable limits of purification.
Subsection~\ref{subsec:PreP_Post} benchmarks FA purifications against conventional schemes restricted to post-processing, establishing that the inclusion of pre-processing already leads to a pronounced performance advantage. 
Subsection~\ref{subsec:Less_is_More} sharpens this insight by showing that a single-copy protocol --- integrating pre-processing (PreP) with post-processing --- can surpass multi-copy purification strategies based solely on post-processing (PostP). 
Notably, we identify noise regimes in which a single-copy PreP outperforms PostP schemes operating on up to 1905 copies, revealing a striking gain in sample efficiency.
Subsection~\ref{subsec:SDP_Symmetry} develops the technical framework underpinning our numerical experiments. 
By leveraging Schur-Weyl duality together with Clebsch-Gordan recursion, the underlying SDPs can be reduced to small blocks, rendering previously inaccessible regimes tractable.
Without this structure, direct SDP implementations become prohibitive beyond roughly 8 copies even for qubit systems. 
Further technical details are provided in the corresponding subsections.


\subsection{Pre- and Post-Processing Purification}\label{subsec:PreP_Post}

From a mathematical perspective, forward-assisted (FA) purification can, in principle, achieve higher performance than conventional protocols based solely on post-processing. 
From a physical standpoint, however, the central question is how such advantages can be realized experimentally. 
In particular, it is important to ask whether performance gains persist under practically accessible restrictions of FA protocols.
In this subsection, we compare purification schemes based solely on pre-processing with those relying only on post-processing. 
This comparison highlights the essential role of pre-processing and examines its simplest experimentally viable form --- local unitary operations --- in enabling enhanced purification.

We consider input states $\psi$ drawn from a set $\mS$, subject to amplitude-damping noise $\mN_{\mathrm{AD}}(p)$, which maps $\psi$ to its noisy counterpart $\mN_{\mathrm{AD}}(p)(\psi)$.
The protocols under consideration are illustrated in Fig.~\ref{fig:Global_Purifications_Pre_Post}.

\begin{figure}[htbp]
    \centering   
    \includegraphics[width=1\textwidth]{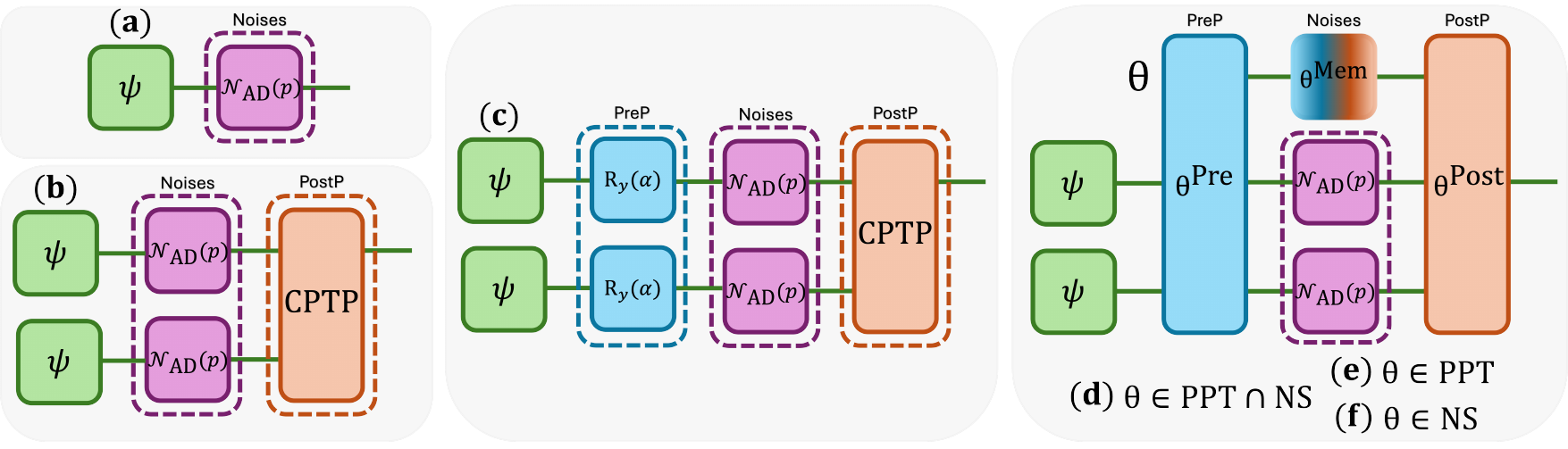}
    \caption{\textbf{Forward-Assisted Global Purifications}. 
        The input is a quantum state $\psi\in\mS$, and noise is modeled by an amplitude damping channel $\mN_{\mathrm{AD}}(p)$. 
        (a) Single-copy baseline without purification, where the input state $\psi$ undergoes amplitude damping noise $\mN_{\mathrm{AD}}(p)$.
        (b) Two-copy purification with CPTP post-processing applied after noise.
        (c) Forward-assisted scheme with local unitary (LU) pre-processing $\mathrm{R}_{y}(\alpha)$ prior to noise, followed by CPTP post-processing.
        (d)–(f) General forward-assisted (FA) purification protocols described by a superchannel $\theta$ (see Fig.~\ref{fig:Superchannel}), incorporating pre-processing $\theta^{\mathrm{Pre}}$, a memory channel $\theta^{\mathrm{Mem}}$, and post-processing $\theta^{\mathrm{Post}}$, subject to different constraints: (d) $\theta\in \mathrm{PPT}\cap \mathrm{NS}$, (e) $\theta\in \mathrm{PPT}$, and (f) $\theta\in \mathrm{NS}$.
        The performance of all protocols is quantified by the average fidelity (see Eq.~\eqref{eq:FA_Fundamental_Limit}) with respect to the target $\psi\in\mS$.
    }
    \label{fig:Global_Purifications_Pre_Post}
\end{figure}

Consider first the simplest setting and evaluate the performance of the protocols introduced in Fig.~\ref{fig:Global_Purifications_Pre_Post}. 
Specifically, we define a set of states $\{\psi(x_i)\}$ as
\begin{align}\label{eq:psi_alpha_i}
    \ket{\psi(x_i)}:=\sqrt{x_i}\ket{0}+\sqrt{1-x_i}\ket{1},
\end{align}
and consider the ensemble
\begin{align}\label{eq:psi_03_09}
    \{\psi(0.3), \psi(0.9)\},
\end{align}
where each state is sampled with equal probability $1/2$. 
Let $\varphi_x$ denote the output state produced from the initial state $\psi$ under protocol $x\in\{a,b,c,d,e,f\}$ in Fig.~\ref{fig:Global_Purifications_Pre_Post}, namely
\begin{align}
    \psi\xrightarrow{\text{protocol}\,\, x\,\,\text{in Fig.~\ref{fig:Global_Purifications_Pre_Post}}}
    \varphi_x,
\end{align}
and define the corresponding performance measure $G_x$ as the average fidelity, given by
\begin{align}\label{eq:G_x}
    G_x \coloneqq \max_{\text{protocol}\,\, x}\frac{1}{|\mS|}\sum_{\psi\in\mS}F(\varphi_x,\psi).
\end{align}
where $F(\varphi_x,\psi) = \Tr[\varphi_x\cdot\psi]$ denotes the quantum fidelity between states.
The optimization under each protocol can be formulated as a semidefinite program (SDP), enabling efficient evaluation; further details are provided in Subsec.~\ref{subsec:Optimization_Purification_Performance}.
The resulting performance of the protocols is presented in Fig.~\ref{fig:Global_Num_Point1}.

\begin{figure}[htbp]
    \centering   
    \includegraphics[width=1\textwidth]{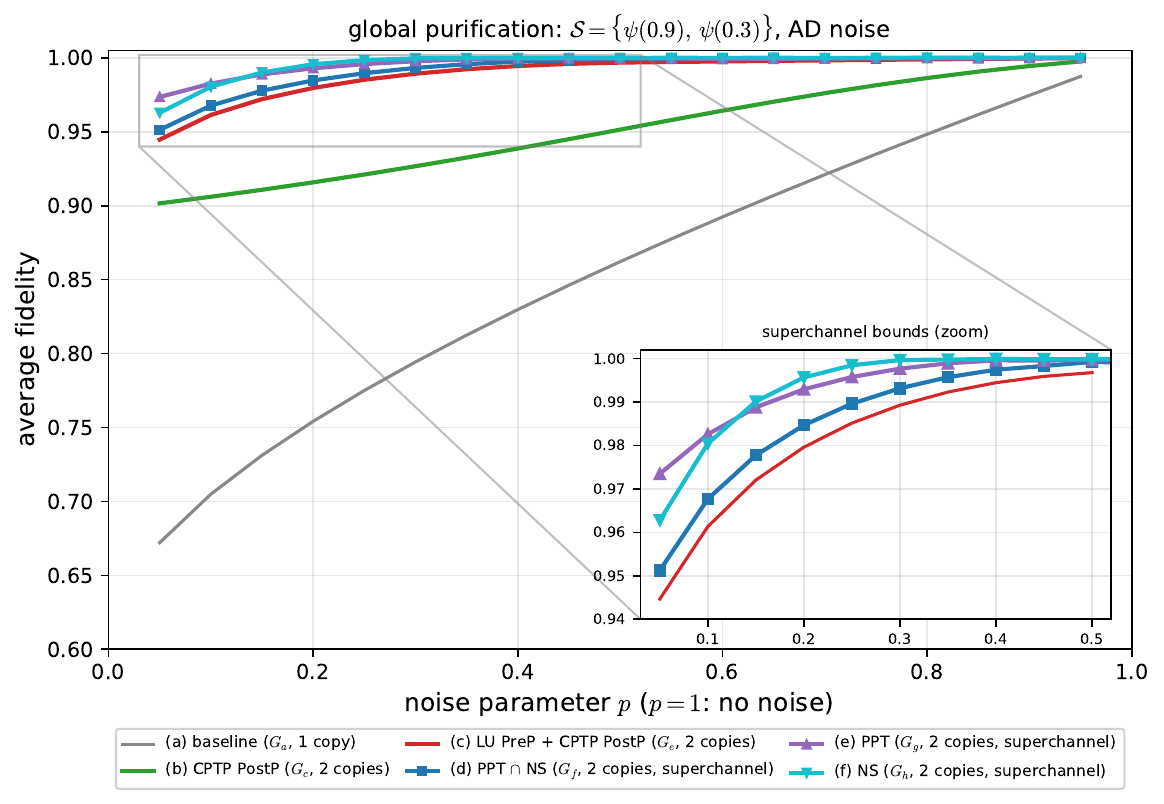}
    \caption{\textbf{Comparison Across Global Purification Protocols}. 
        Average fidelity $G_x$ (see Eq.~\eqref{eq:G_x}) as a function of the noise parameter $p$ (see Fig.~\ref{fig:Global_Purifications_Pre_Post}) for global purification with input ensemble $\mS=\{\psi(0.9),\psi(0.3)\}$ under amplitude damping (AD) noise $\mN_{\mathrm{AD}}(p)$. 
        The baseline (a) corresponds to single-copy transmission without purification, while (b)–(f) compare two-copy strategies including CPTP post-processing, LU pre-processing followed by CPTP post-processing, and forward-assisted (FA) protocols implemented via superchannels subject to PPT, NS, or combined PPT $\cap$ NS constraints. 
        The inset provides a magnified view of the high-fidelity regime, highlighting the separation between different FA purification protocols. 
        Across all noise regimes, FA protocols exhibit enhanced performance.
    }
    \label{fig:Global_Num_Point1}
\end{figure}

Numerical simulations, shown in Fig.~\ref{fig:Global_Num_Point1}, demonstrate that all forward-assisted (FA) purification protocols including $\mathrm{PPT}\cap \mathrm{NS}$ (see Fig.~\ref{fig:Global_Purifications_Pre_Post}(d)), $\mathrm{PPT}$ (see Fig.~\ref{fig:Global_Purifications_Pre_Post}(e)), and $\mathrm{NS}$ (see Fig.~\ref{fig:Global_Purifications_Pre_Post}(f)) classes or even pre-processing-augmented schemes (see Fig.~\ref{fig:Global_Purifications_Pre_Post}(c)), surpass the fundamental limits previously established for schemes restricted to post-processing alone, thereby exceeding the performance of the conventional approach. 
These results highlight a key omission in conventional approaches: the roles of pre-processing and memory (see Fig.~\ref{fig:FA_Purification_Simplified}) --- both across spatially separated systems and over temporal evolution --- have largely been neglected. 
Once these elements are incorporated within the unified framework of FA purification, a marked performance enhancement emerges.
A detailed comparison arises when restricting FA protocols to specific structural classes. 
When FA protocols are simultaneously constrained to be PPT and NS, their performance is strictly inferior to that achieved under either constraint individually (see Fig.~\ref{fig:Global_Num_Point1}), namely
\begin{align}
    G_d\leqslant G_e,
\end{align}
and
\begin{align}
    G_d\leqslant G_f.
\end{align}
As illustrated in Fig.~\ref{fig:Venn} and substantiated by Fig.~\ref{fig:Global_Num_Point1}, neither PPT (whose performance is captured by the purple curve in Fig.~\ref{fig:Global_Num_Point1}) nor NS (represented by the cyan curve in Fig.~\ref{fig:Global_Num_Point1}) forms a subset of the other, and their relative performance depends sensitively on the parameter regime. 
In terms of average fidelity (see Eq.~\eqref{eq:FA_Fundamental_Limit}), no uniform hierarchy emerges: each class outperforms the other in complementary regions of the parameter space.

Despite these advantages over conventional post-processing schemes (see Fig.~\ref{fig:Global_Purifications_Pre_Post}(b)), both PPT- and NS-assisted protocols may require complex operational structures and sustained quantum memory (see Fig.~\ref{fig:Global_Purifications_Pre_Post}(d)-(f)), posing significant challenges for near-term experimental implementation. 
This raises a natural question: can one retain an advantage without relying on quantum memory?
The answer is affirmative. 
Even in the absence of quantum memory, a simple strategy (see Fig.~\ref{fig:Global_Purifications_Pre_Post}(c)) --- combining local unitary pre-processing with standard post-processing --- already exceeds the performance of conventional purification protocols (see Fig.~\ref{fig:Global_Num_Point1}). 
This observation underscores the operational power of pre-processing and points to a practically accessible route toward enhanced quantum state purification.

Before closing this subsection, we summarise the hierarchical relations among the different forward-assisted purification protocols in the following theorem.

\begin{mythm}{Hierarchy of Forward-Assisted Purification in the Global Setting}{Hierarchy_Global}
In the global setting (see Fig.~\ref{fig:Global_Purifications_Pre_Post}), the achievable fidelities of forward-assisted purification protocols organize into a well-defined hierarchy, given by
\begin{align}
    G_a\leqslant G_b\leqslant G_c\leqslant G_d\leqslant G_e,
\end{align}
which, under an alternative admissible constraint, extends to
\begin{align}
    G_a\leqslant G_b\leqslant G_c\leqslant G_d\leqslant G_f.
\end{align}
\end{mythm}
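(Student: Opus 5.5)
The plan is to read every $G_x$ in Eq.~\eqref{eq:G_x} as the optimum of a maximisation of one fixed linear functional over a set of admissible superchannels. For protocols (d)--(f) this is the objective $\Tr[J^{\theta}\cdot(\Psi_{AD}\otimes\Omega_{BC})]$ of Eq.~\eqref{eq:FA_Fundamental_Limit}. Each inequality in the chains will then follow from a feasible-set inclusion: if every protocol of type $x$ is realised by an admissible superchannel of type $y$ with the same average fidelity, then $G_x\leqslant G_y$. So the work reduces to exhibiting embeddings (a)$\hookrightarrow$(b)$\hookrightarrow$(c)$\hookrightarrow$(d), and noting (d)$\subseteq$(e) and (d)$\subseteq$(f).

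\textbf{First two steps.} The single-copy baseline (a) is realised inside (b) by the trivial purification of Fig.~\ref{fig:Trivial_Purification}: keep one copy and trace out the other. This is a CPTP post-processing map, and its fidelity equals Eq.~\eqref{eq:Purification_Benchmarking}, so $G_a\leqslant G_b$. Next, choosing rotation angle $\alpha=0$ in protocol (c) makes the local-unitary pre-processing the identity. The protocol then reduces exactly to (b), with the post-processing still optimised over all CPTP maps, so $G_b\leqslant G_c$.

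\textbf{Main step, (c)$\hookrightarrow$(d).} A memoryless protocol consisting of a unitary pre-processing $\theta^{\mathrm{Pre}}\colon A\to B$ followed by a CPTP $\theta^{\mathrm{Post}}\colon C\to D$ has product Choi operator
\begin{align}
    J^{\theta}=J^{\mathrm{Pre}}_{AB}\otimes J^{\mathrm{Post}}_{CD}.
\end{align}
I would check the constraints of Eq.~\eqref{eq:NSPPT_Fundamental_Limit} one by one.
\begin{itemize}
    \item Positivity is immediate.
    \item Trace preservation holds because $\Tr_{BD}[J^{\theta}]=\1_A\otimes\1_C$.
    \item One-way non-signalling from $(C\to D)$ to $(A\to B)$ holds because $\Tr_D[J^{\theta}]=J^{\mathrm{Pre}}_{AB}\otimes\1_C$, and $J^{\theta}_{AB}=\Tr_{CD}[J^{\theta}]=d_C J^{\mathrm{Pre}}_{AB}$, which matches the $1/d_C$ normalisation.
    \item The reverse direction holds because $\Tr_B[J^{\theta}]=\1_A\otimes J^{\mathrm{Post}}_{CD}=\frac{1}{d_A}\1_A\otimes J^{\theta}_{CD}$, since $J^{\theta}_{CD}=d_A J^{\mathrm{Post}}_{CD}$.
    \item For PPT, $(J^{\theta})^{\T_{CD}}=J^{\mathrm{Pre}}_{AB}\otimes(J^{\mathrm{Post}}_{CD})^{\T}$. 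A full transpose of a positive operator is positive, so this is $\geqslant0$.
\end{itemize}
Hence every protocol (c) is a $\mathrm{PPT}\cap\mathrm{NS}$ superchannel. This is the structural content of Lemma~\ref{lem:Hierarchical_Inclusion_Relation_1} with trivial memory, and of Lemma~\ref{lem:Hierarchical_Inclusion_Relation_2} with a one-dimensional, hence classical and Horodecki, memory.

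\textbf{Objective matching and the last steps.} The step I expect to need the most care is showing that the objective of Eq.~\eqref{eq:FA_Fundamental_Limit}, evaluated on this product $J^{\theta}$, really equals the average fidelity of protocol (c). I would compute the output state by the link product,
\begin{align}
    \psi^{\otimes n}\star J^{\mathrm{Pre}}\star\Omega\star J^{\mathrm{Post}},
\end{align}
and pair it with $\psi_D$. Unfolding the partial transposes in Def.~\ref{def:Link_Product}, one must see that the transposes land consistently as $\psi^{\T}_D$ in $\Psi_{AD}$ and as the untransposed $\Omega_{BC}$, i.e.\ that the transpose convention in Eq.~\eqref{eq:Psi} is correct. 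Any mismatch there would have to be absorbed by a relabelling, e.g.\ $J\mapsto J^{\T}$, which preserves all constraints. Once this is verified, $G_c\leqslant G_d$. Finally, $G_d\leqslant G_e$ and $G_d\leqslant G_f$ hold because the feasible set of Eq.~\eqref{eq:NSPPT_Fundamental_Limit} is obtained from those of Eqs.~\eqref{eq:PPT_Fundamental_Limit} and~\eqref{eq:NS_Fundamental_Limit} by adding constraints, and adding constraints to a maximisation can only lower its value.
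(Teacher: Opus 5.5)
Your proposal is correct and uses the same feasible-set-inclusion reasoning the paper relies on. The paper states the theorem without a written proof, appealing to the nested protocol classes of its Venn diagram and the lemmas $\mathrm{EA}\subset\mathrm{NS}$ and $\mathrm{FHA}\subset\mathrm{PPT}$; your explicit check that memoryless product Choi operators $J^{\mathrm{Pre}}_{AB}\otimes J^{\mathrm{Post}}_{CD}$ lie in $\mathrm{PPT}\cap\mathrm{NS}$ makes the step $G_c\leqslant G_d$ airtight. Your caution about transposes is warranted: with the paper's Choi and link-product conventions, unfolding the circuit yields the full transpose of $\Psi_{AD}\otimes\Omega_{BC}$, and your relabelling $J\mapsto J^{\T}$ preserves every constraint (including $(J^{\theta})^{\T_{CD}}\geqslant0$), so it disposes of the mismatch without changing any optimal value.
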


This subsection delineates the structure of forward-assisted purification in the global setting. By placing all protocols within a unified optimisation framework and resolving their performance, a clear hierarchy of achievable fidelities emerges. 
Forward-assisted schemes systematically outperform post-processing-only strategies (see Fig.~\ref{fig:Global_Purifications_Pre_Post}(b)), reflecting an intrinsic enlargement of the operational landscape once pre-processing and spatiotemporal framework are admitted. 
Strikingly, this advantage already manifests at its most minimal level: local-unitary pre-processing, without any additional quantum memory, suffices to surpass the conventional limit. 
Beyond this baseline, the comparison between PPT- (see Fig.~\ref{fig:Global_Purifications_Pre_Post}(e)) and NS-constrained (see Fig.~\ref{fig:Global_Purifications_Pre_Post}(f)) protocols reveals a more intricate structure, with no universal ordering and a pronounced dependence on the noise regime. 
Taken together, these results isolate the mechanisms that drive performance enhancement and clarify the operational role of the spatiotemporal framework, thereby establishing a principled foundation for the distributed setting analyzed next.


\subsection{Less is More: 
Single-Copy Pre-Processing Surpasses Multi-Copy Post-Processing}\label{subsec:Less_is_More}

\begin{figure}[htbp]
    \centering   
    \includegraphics[width=1\textwidth]{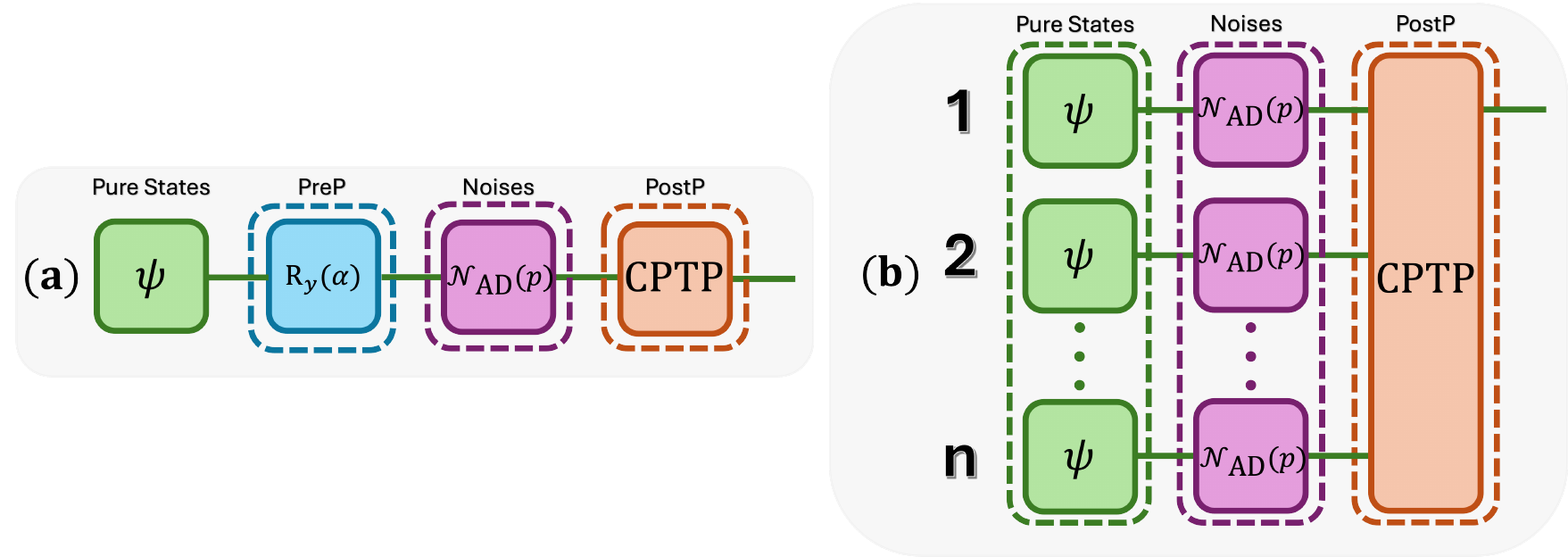}
    \caption{\textbf{Multi-Copy Global Purification}. 
        The input is a quantum state $\psi\in\mS$, and noise is modeled by an amplitude damping channel $\mN_{\mathrm{AD}}(p)$. 
        (a) Single-copy purification protocol incorporating local unitary pre-processing $\mathrm{R}_{y}(\alpha)$ prior to the action of amplitude damping noise $\mN_{\mathrm{AD}}(p)$, followed by a general CPTP post-processing map.
        (b) Conventional multi-copy purification protocol, where $n$ identical noisy copies --- each obtained by applying $\mN_{\mathrm{AD}}(p)$ to the input state $\psi$ --- are collectively processed via a global CPTP operation without pre-processing.
    }
    \label{fig:Global_Multiple_Copies}
\end{figure}

In the previous subsection, we demonstrated that incorporating pre-processing into quantum state purification enables performance beyond what is achievable with conventional protocols that rely solely on post-processing (see Fig.~\ref{fig:Global_Num_Point1}). 
A central practical consideration, however, concerns the number of noisy copies required. 
Beyond improvements in average fidelity, it is essential to ask whether pre-processing can also offer an advantage in terms of sample efficiency.
More specifically, can an $n$-copy protocol with pre-processing and post-processing outperform a conventional protocol that relies on post-processing alone using a strictly larger number $m>n$ of copies?
We show that this is indeed the case. 
Remarkably, even a single-copy pre-processing strategy, when combined with post-processing, can surpass multi-copy post-processing purification.
The protocols under consideration are illustrated in Fig.~\ref{fig:Global_Multiple_Copies}.
In particular, we identify initial states and regimes of noise parameters in which a single-copy pre-processing-augmented protocol (see Fig.~\ref{fig:Global_Multiple_Copies}(a)) outperforms conventional purification (see Fig.~\ref{fig:Global_Multiple_Copies}(b)) that employs 1905 copies; that is
\begin{align}
    \text{1-copy}\,\, \mathrm{PreP}\,\,\text{Purification}
    >
    \text{1905-copy}\,\, \mathrm{PostP}\,\,\text{Purification}.
\end{align}
Here, $>$ denotes superior performance in terms of average fidelity.
These findings establish that incorporating pre-processing is not merely beneficial but essential: it simultaneously enhances achievable performance and reduces the number of noisy copies required, thereby improving the overall sample efficiency of quantum state purification.

We begin with the minimal setting of quantum state purification, in which the ensemble of initial states comprises only two elements.
For the ensembles specified in Eq.~\eqref{eq:psi_03_09}, i.e., $\{\psi(0.3), \psi(0.9)\}$, Fig.~\ref{fig:Global_Num_Scaling} presents the average fidelity as a function of the noise parameter $p$.
Numerical results in Fig.~\ref{fig:Global_Num_Scaling} show that even when local-unitary pre-processing is applied to a single noisy copy, the resulting protocol consistently outperforms conventional purification schemes based solely on post-processing using up to three copies (see Fig.~\ref{fig:Global_Multiple_Copies}(b)). 
For larger numbers of copies ($n=4,5,6$), the pre-processing-augmented protocol remains competitive and continues to exhibit superior performance over a broad range of noise strengths.

\begin{figure}[htbp]
    \centering   
    \includegraphics[width=1\textwidth]{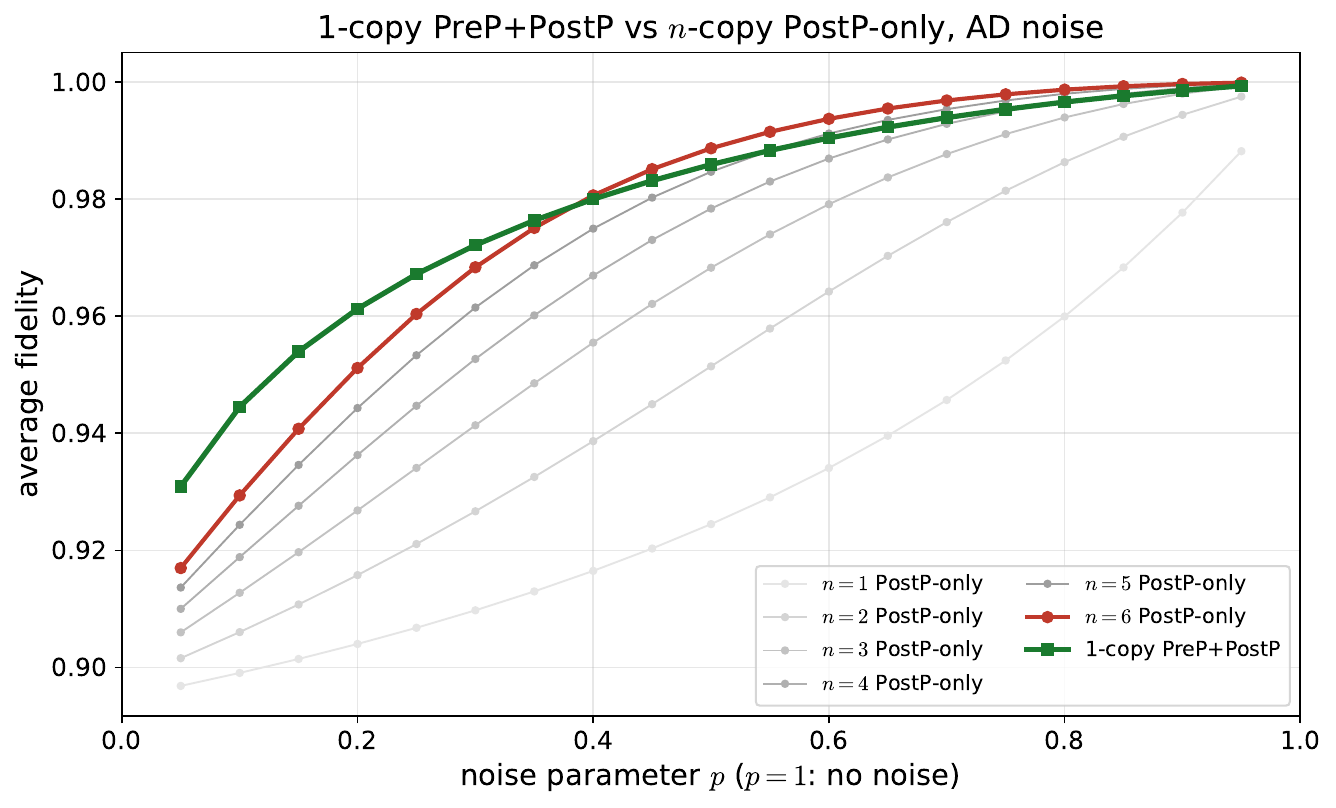}
    \caption{\textbf{Single-Copy LU Pre-Processing vs Multi-Copy Post-Processing}. 
        Average fidelity as a function of the noise parameter $p$ for the ensembles defined in Eq.~\eqref{eq:psi_03_09}. 
        The green curve shows the single-copy protocol combining local unitary (LU) pre-processing with post-processing (see Fig.~\ref{fig:Global_Multiple_Copies}(a)), while the grey curves correspond to conventional post-processing-only protocols (see Fig.~\ref{fig:Global_Multiple_Copies}(b)) using $n=1$ to $5$ copies, and the red curve to $n=6$.
        The results show that single-copy pre-processing can outperform multi-copy post-processing across a broad range of noise strengths, demonstrating a clear advantage in both performance and sample efficiency.
    }
    \label{fig:Global_Num_Scaling}
\end{figure}

\begin{figure}[htbp]
    \centering   
    \includegraphics[width=1\textwidth]{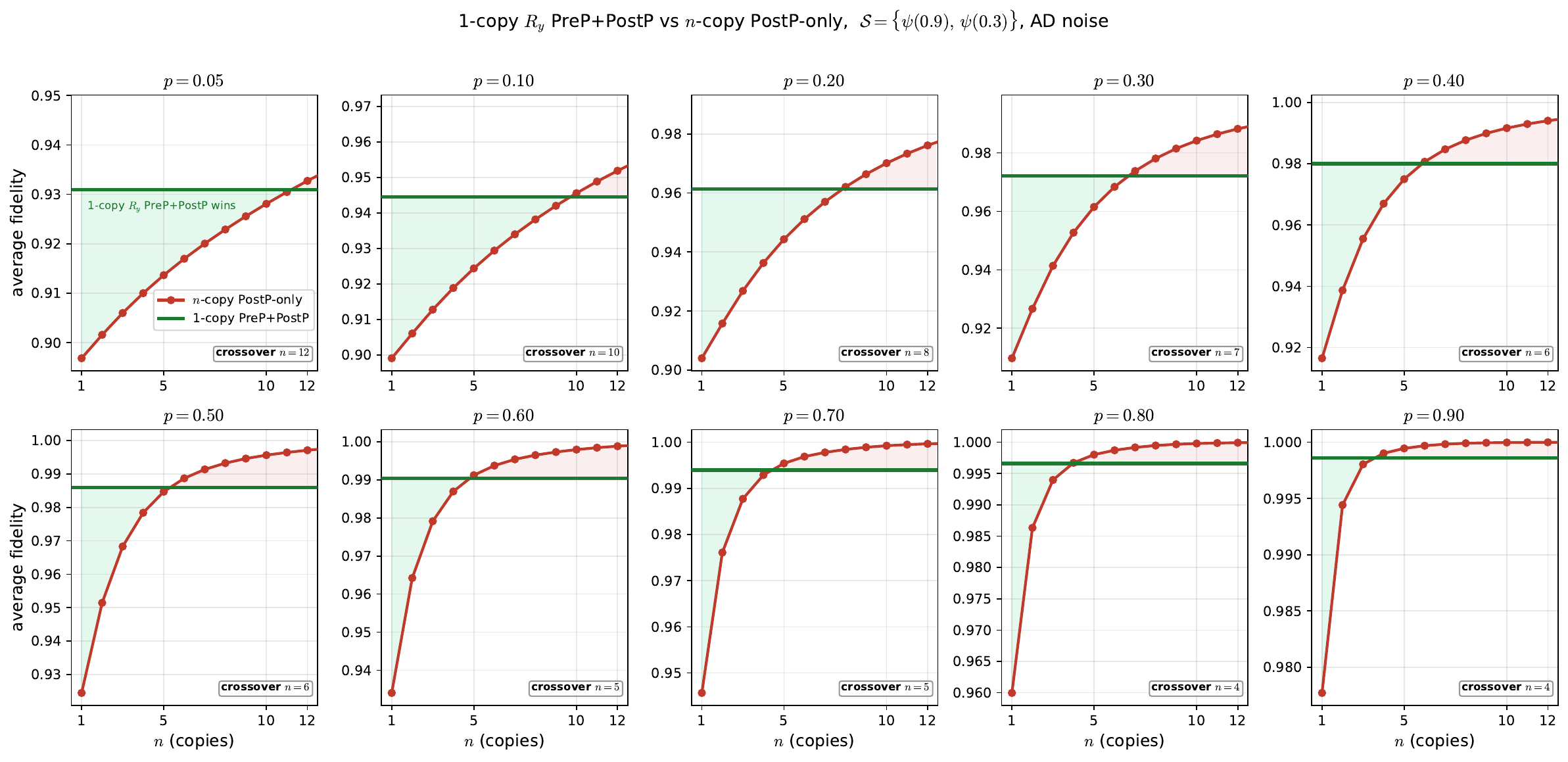}
    \caption{\textbf{Single-Copy PreP vs Multi-Copy PostP across Noise Regimes}. 
        Average fidelity as a function of the number of copies $n$ for fixed amplitude damping noise strengths $p$. 
        Each panel corresponds to a different value of $p$, with red curves representing conventional post-processing-only purification using $n$ copies (see Fig.~\ref{fig:Global_Multiple_Copies}(b)), and the green line indicating the performance of the single-copy protocol combining local-unitary pre-processing with post-processing (see Fig.~\ref{fig:Global_Multiple_Copies}(a)). 
        The shaded region highlights the parameter regime where the single-copy pre-processing protocol exceeds the performance of multi-copy post-processing. 
        Across a broad range of noise strengths, single-copy pre-processing surpasses the performance of protocols using multiple copies, demonstrating a clear advantage in sample efficiency.
    }
    \label{fig:Global_Num_Point2_1}
\end{figure}

\begin{figure}[htbp]
    \centering   
    \includegraphics[width=1\textwidth]{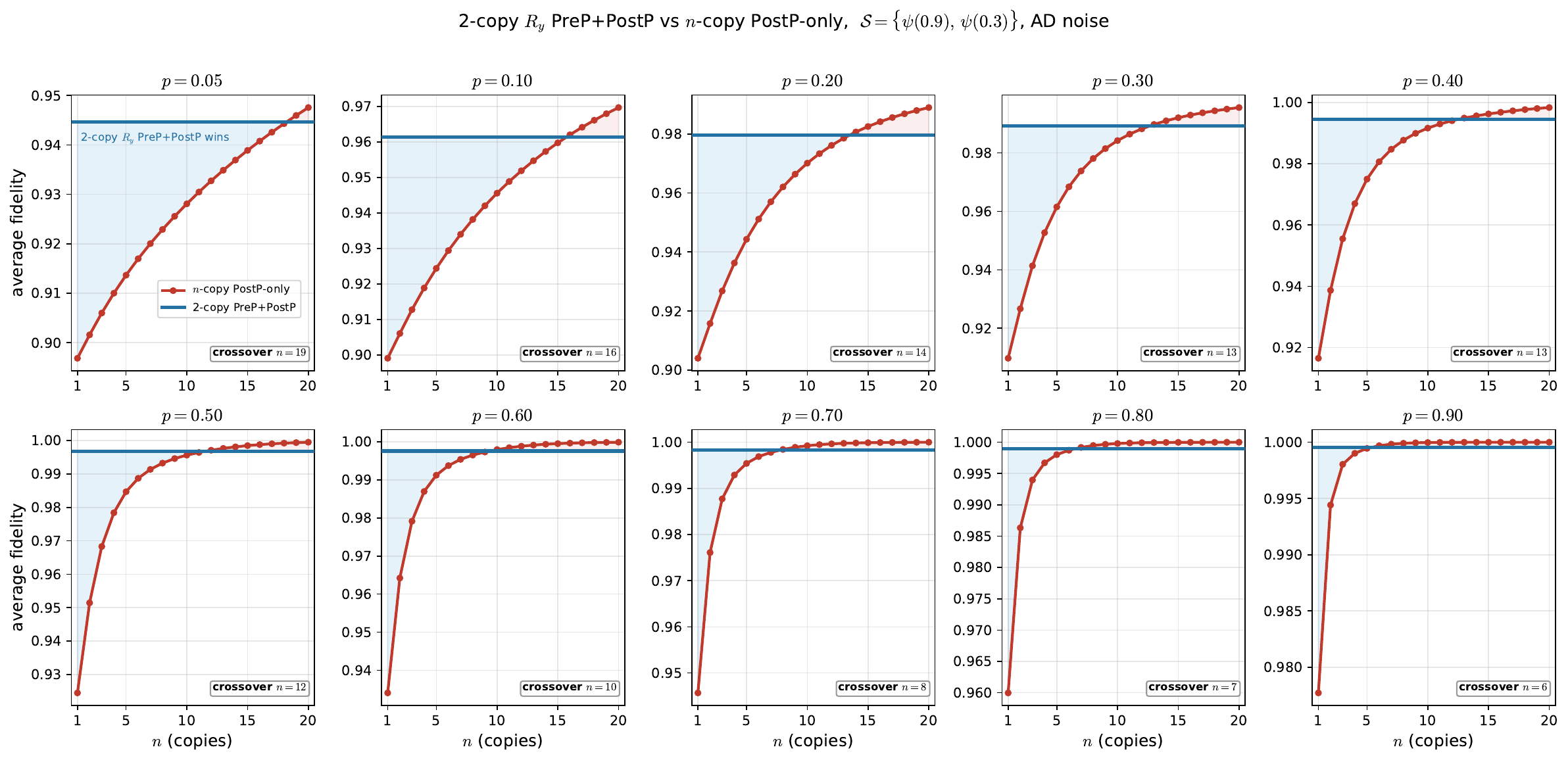}
    \caption{\textbf{Two-Copy PreP vs Multi-Copy PostP across Noise Regimes}. 
        Average fidelity as a function of the number of copies $n$ for fixed amplitude damping noise strengths $p$.
        Each panel corresponds to a different noise level. 
        The red curves depict conventional purification based solely on post-processing with $n$ copies (see Fig.~\ref{fig:Global_Multiple_Copies}(b)), whereas the blue line represents the performance of the two-copy protocol augmented with local-unitary pre-processing.
        The shaded region delineates the regime in which the two-copy pre-processing–augmented strategy surpasses all post-processing-only schemes, even those supplied with a larger number of copies. 
        Over a wide range of noise parameters, this separation persists, showing that incorporating pre-processing reduces the number of required copies while maintaining higher fidelity, thereby significantly improving sample efficiency relative to conventional purification protocols.
    }
    \label{fig:Global_Num_Point2_2}
\end{figure}

Such behavior highlights the pivotal role of pre-processing in determining purification performance. 
Even in the most elementary setting, access to a pre-processing stage enables a protocol operating on a single noisy copy to conventional --- and in broad regimes surpass --- schemes that rely on multiple copies yet lack temporal structure.
More detailed comparisons between single-copy pre-processing with post-processing (see Fig.~\ref{fig:Global_Multiple_Copies}(a)) and multi-copy post-processing (see Fig.~\ref{fig:Global_Multiple_Copies}(b)) are presented in Fig.~\ref{fig:Global_Num_Point2_1}.

Figure~\ref{fig:Global_Num_Point2_1} reveals a striking role of pre-processing in reshaping the resource requirements of purification. 
Even when restricted to a single noisy copy, the inclusion of local-unitary pre-processing elevates performance to a level that rivals, and in a wide parameter regime surpasses, that of conventional protocols relying on multiple copies and post-processing alone. 
The crossover behavior observed across different noise strengths shows that increasing the number of copies cannot compensate for the absence of an appropriate pre-processing stage. 
This identifies pre-processing not as a minor refinement, but as a fundamentally distinct operational ingredient that unlocks otherwise inaccessible performance gains. 
In particular, it demonstrates that tailoring the input prior to noise can be more powerful than accumulating additional noisy resources, thereby establishing pre-processing as a key mechanism for achieving enhanced sample efficiency in realistic quantum information tasks.

A natural next step is to examine how this advantage evolves when two pre-processing-augmented noisy states are available for purification. 
Extending the numerical analysis to this regime, we compare protocols that combine pre-processing with post-processing on two copies against conventional strategies that rely solely on post-processing, even when supplied with a larger number of copies. 
This setting isolates the role of pre-processing in the multi-copy regime and tests whether its benefit persists beyond the single-copy scenario. The resulting comparison, presented in Fig.~\ref{fig:Global_Num_Point2_2}, shows that the advantage not only survives but becomes more pronounced, indicating that appropriately structured pre-processing continues to reshape the effective resource landscape even when additional copies are available.

The analysis is extended to ensembles comprising a larger set of initial pure states, thereby moving beyond minimal instances to more representative configurations. 
To make direct contact with quantum cryptographic applications, attention is then restricted to the BB84 protocol~\cite{BENNETT20147}, for which the relevant ensemble consists of the four qubit states
\begin{align}\label{eq:BB84}
    \{\ketbra{0}{0}, \ketbra{1}{1}, \ketbra{+}{+}, \ketbra{-}{-}\}.
\end{align}
The performance of the single-copy protocol, combining pre-processing with subsequent post-processing, is evaluated and benchmarked against conventional strategies that rely solely on post-processing. 
The numerical results, illustrated in Fig.~\ref{fig:BB84}, demonstrate a clear and systematic advantage of the forward-assisted approach across the parameter regime considered.

\begin{figure}[htbp]
    \centering   
    \includegraphics[width=1\textwidth]{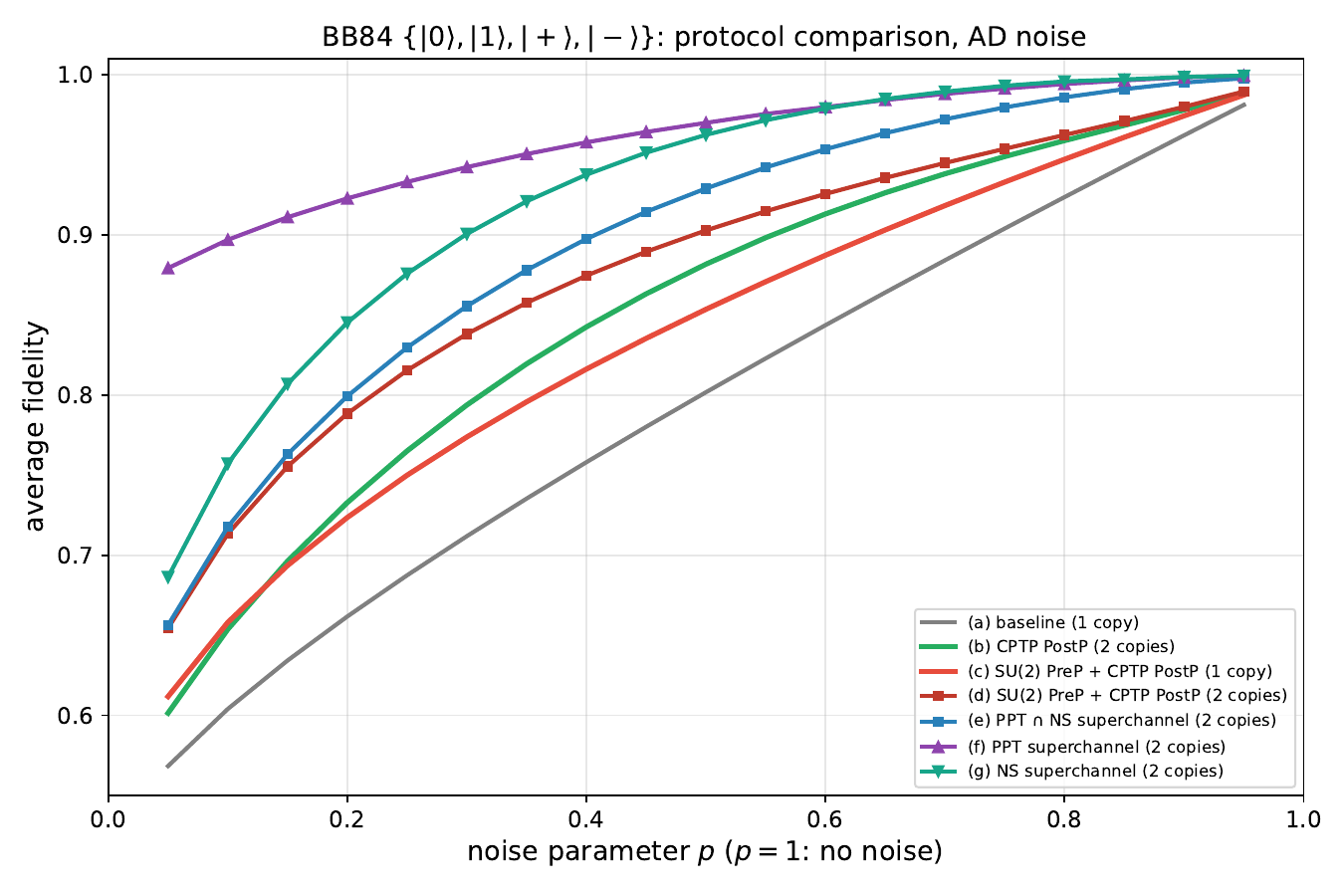}
    \caption{\textbf{Forward-Assisted Purification in BB84}. 
        Average fidelity for the BB84 ensemble $\{\ket{0}, \ket{1}, \ket{+}, \ket{-}\}$ as a function of the noise parameter $p$ under amplitude damping. 
        The comparison spans baseline (a), conventional CPTP post-processing with two copies (b), single- and two-copy protocols augmented by SU(2) pre-processing (c)–(d), and forward-assisted strategies implemented via superchannels under PPT (f), NS (g), and their intersection (e). 
        Across the full noise regime, for the two-copy case, protocols incorporating pre-processing or forward-assisted structures consistently outperform post-processing-only approaches. 
        Among these, generic forward-assisted purification schemes --- particularly those constrained by $\mathrm{PPT}\cap\mathrm{NS}$, as well as the individual $\mathrm{PPT}$ and $\mathrm{NS}$ classes --- achieve the highest fidelities.
    }
    \label{fig:BB84}
\end{figure}

Figure~\ref{fig:BB84} highlights the advantage of forward-assisted purification in a cryptographic setting through the BB84 ensemble under amplitude damping noise. 
Even at the single-copy level, the inclusion of pre-processing enables performance that surpasses conventional post-processing strategies operating on two copies in certain parameter regimes. 
When two copies are available, protocols augmented with pre-processing already outperform post-processing-only approaches, while general forward-assisted implementations based on superchannels provide a further, systematic enhancement across the entire noise range. 
In particular, schemes constrained by PPT, NS, and especially their intersection achieve the highest fidelities, maintaining a clear separation from standard methods even in moderate- and high-noise regimes. 
These results indicate that exploiting structured dynamical resources across multiple temporal stages, rather than restricting to operations at a single time point, leads to a qualitative performance gain, directly strengthening the state purification and the robustness of quantum key distribution under realistic noise.

The scope of the analysis broadens to ensembles comprising a wider variety of initial states, enabling a more stringent and comprehensive assessment of the protocol under realistic conditions. 
In particular, we consider several distinct ensembles, including the following examples
\begin{align}
    &\left\{\psi(0.3), \psi(0.5), \psi(0.9)\right\},\\
    &\left\{\psi(0.3), \psi(0.5), \psi(0.7), \psi(0.9)\right\},\\
    &\left\{\psi(0.1), \psi(0.3), \psi(0.5), \psi(0.7), \psi(0.9)\right\},
\end{align}
where the state $\psi(x_i)$ is defined in Eq.~\eqref{eq:psi_alpha_i}.
For the ensembles of quantum states considered here, the performance comparison between pre-processing-augmented purification and conventional post-processing-only purification is presented in Fig.~\ref{fig:Global_Num_Point3_1}.

\begin{figure}[htbp]
    \centering   
    \includegraphics[width=1\textwidth]{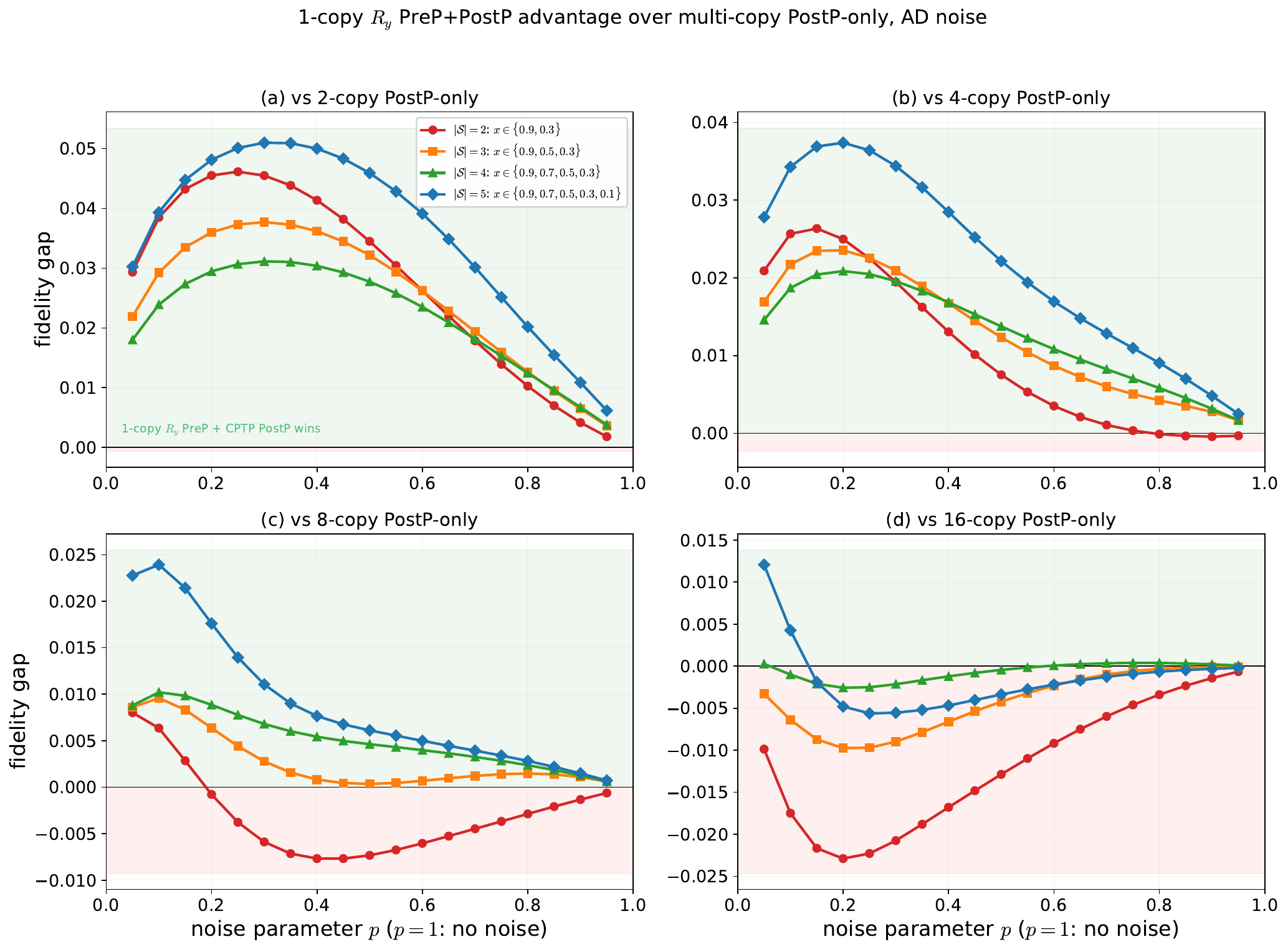}
    \caption{\textbf{Single-Copy Pre-Processing vs Multi-Copy Post-Processing across State Ensembles}. 
        Fidelity gap between single-copy pre-processing-augmented purification (local $\mathrm{R}_{y}$ rotation followed by CPTP post-processing, see Fig.~\ref{fig:Global_Multiple_Copies}(a)) and conventional multi-copy post-processing-only protocols (see Fig.~\ref{fig:Global_Multiple_Copies}(b)) under amplitude damping noise. 
        Panels (a)–(d) compare against 2-, 4-, 8-, and 16-copy post-processing schemes, respectively, for different state ensembles. 
        Positive values (shaded green) indicate regimes where the single-copy pre-processing-augmented protocol outperforms the multi-copy conventional purification, while negative values (shaded red) mark the opposite. 
        Across a broad range of noise parameters, a single pre-processed copy achieves a clear performance advantage over multi-copy post-processing strategies, highlighting a gain in sample efficiency.
    }
    \label{fig:Global_Num_Point3_1}
\end{figure}

Figure~\ref{fig:Global_Num_Point3_1} presents a systematic comparison between single-copy pre-processing-augmented purification (see Fig.~\ref{fig:Global_Multiple_Copies}(a)) and conventional multi-copy post-processing-only protocols (see Fig.~\ref{fig:Global_Multiple_Copies}(b)) across different state ensembles under amplitude damping noise. 
The fidelity gap is plotted as a function of the noise parameter $p$, with panels Fig.~\ref{fig:Global_Num_Point3_1}(a)–(d) benchmarking against 2-, 4-, 8-, and 16-copy post-processing strategies, respectively. 
For small to moderate copy numbers, the single-copy protocol consistently achieves a positive fidelity gap over a broad range of noise strengths, demonstrating a clear performance advantage despite using fewer resources. 
This advantage is particularly pronounced for larger ensembles, indicating that pre-processing effectively tailors the input states to better align with the structure of the noise. 
These results highlight that properly designed pre-processing can significantly enhance purification performance and achieve superior sample efficiency compared to conventional approaches.
A more detailed comparison across diverse state ensembles is presented in Fig.~\ref{fig:Global_Num_Point3_2}, where the advantage of pre-processing is made explicit: higher purification performance is achieved while consuming fewer noisy-state copies.

\begin{figure}[htbp]
    \centering   
    \includegraphics[width=1\textwidth]{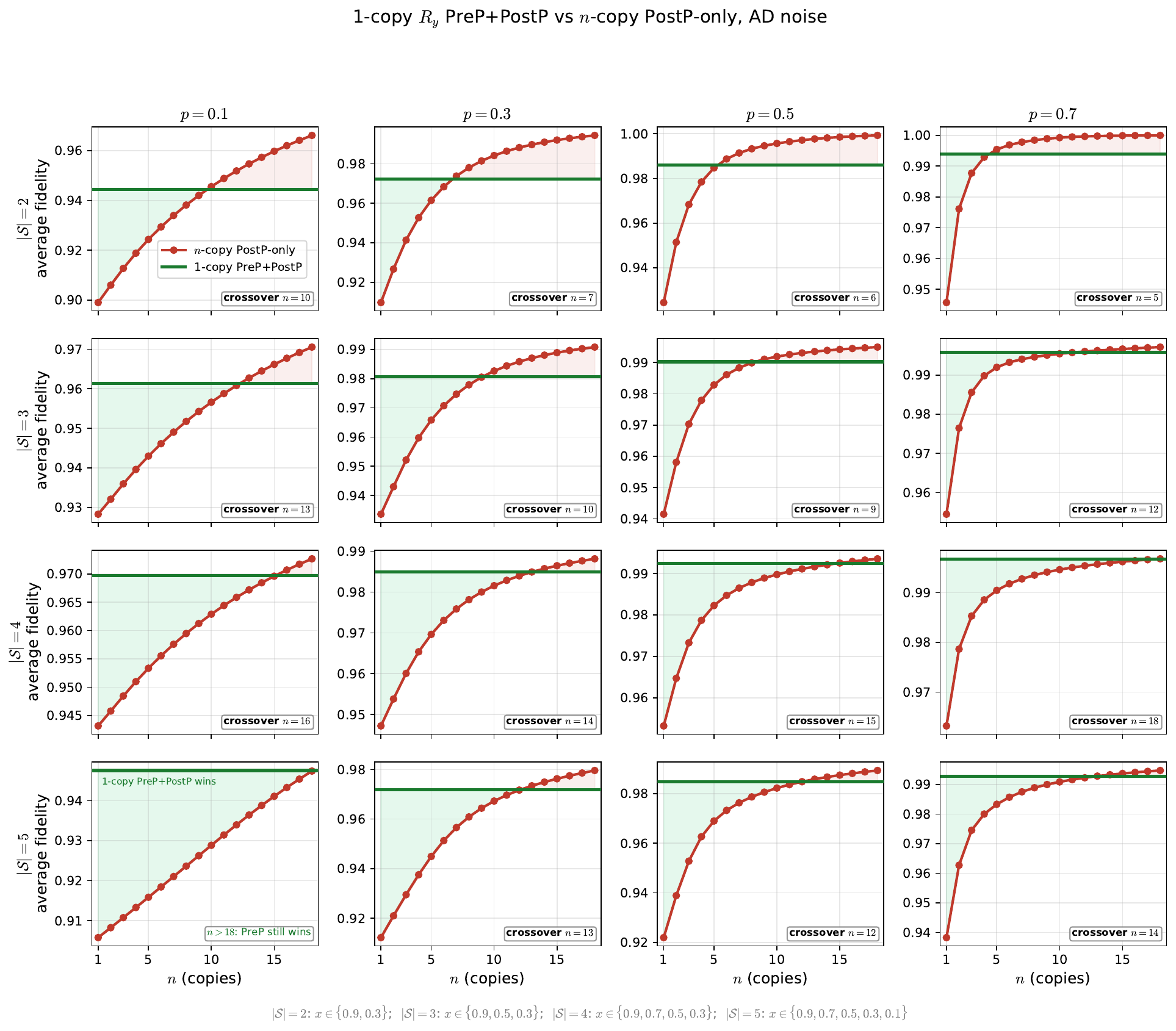}
    \caption{\textbf{Crossover Behavior across Ensembles and Noise}. 
        Average fidelity as a function of the number of copies $n$ for conventional multi-copy post-processing-only protocols (red curves, see Fig.~\ref{fig:Global_Multiple_Copies}(b)), benchmarked against single-copy pre-processing-augmented purification (green lines, see Fig.~\ref{fig:Global_Multiple_Copies}(a)), under amplitude damping noise $\mN_{\mathrm{AD}}(p)$. 
        Columns correspond to different noise strengths $p$, while rows represent state ensembles of increasing size $|\mS|$. 
        The shaded region highlights the regime where single-copy pre-processing outperforms $n$-copy post-processing. 
        The annotated crossover point indicates the minimum number of copies required for post-processing to surpass the single-copy pre-processing-augmented strategy. 
        Across a wide range of ensembles and noise parameters, the crossover occurs at relatively large $n$, and in some regimes is not observed within the considered range, demonstrating that pre-processing achieves competitive and superior performance with substantially fewer copies and thereby provides a clear advantage in sample efficiency.
    }
    \label{fig:Global_Num_Point3_2}
\end{figure}

One may then ask how far single-copy pre-processing-augmented purification (see Fig.~\ref{fig:Global_Multiple_Copies}(a)) can surpass conventional approaches (see Fig.~\ref{fig:Global_Multiple_Copies}(b)).
To address this, we further lower the noise parameter $p$ and consider ensembles of initial states comprising four and five copies. 
The comparison between the pre-processing-augmented protocol and conventional post-processing-only schemes is shown in Fig.~\ref{fig:Global_Num_Point3_3}.
In this setting, a single-copy pre-processing-augmented protocol already surpasses the performance of conventional schemes that rely on up to 25 noisy copies. 
This separation exposes a qualitative advantage of forward-assisted purification, demonstrating that suitable pre-processing can compensate for a substantial increase in resource consumption. 
More broadly, it points to the power of protocols that distribute quantum operations across multiple stages, highlighting the promise of a spatiotemporal framework for quantum information processing.

\begin{figure}[htbp]
    \centering   
    \includegraphics[width=1\textwidth]{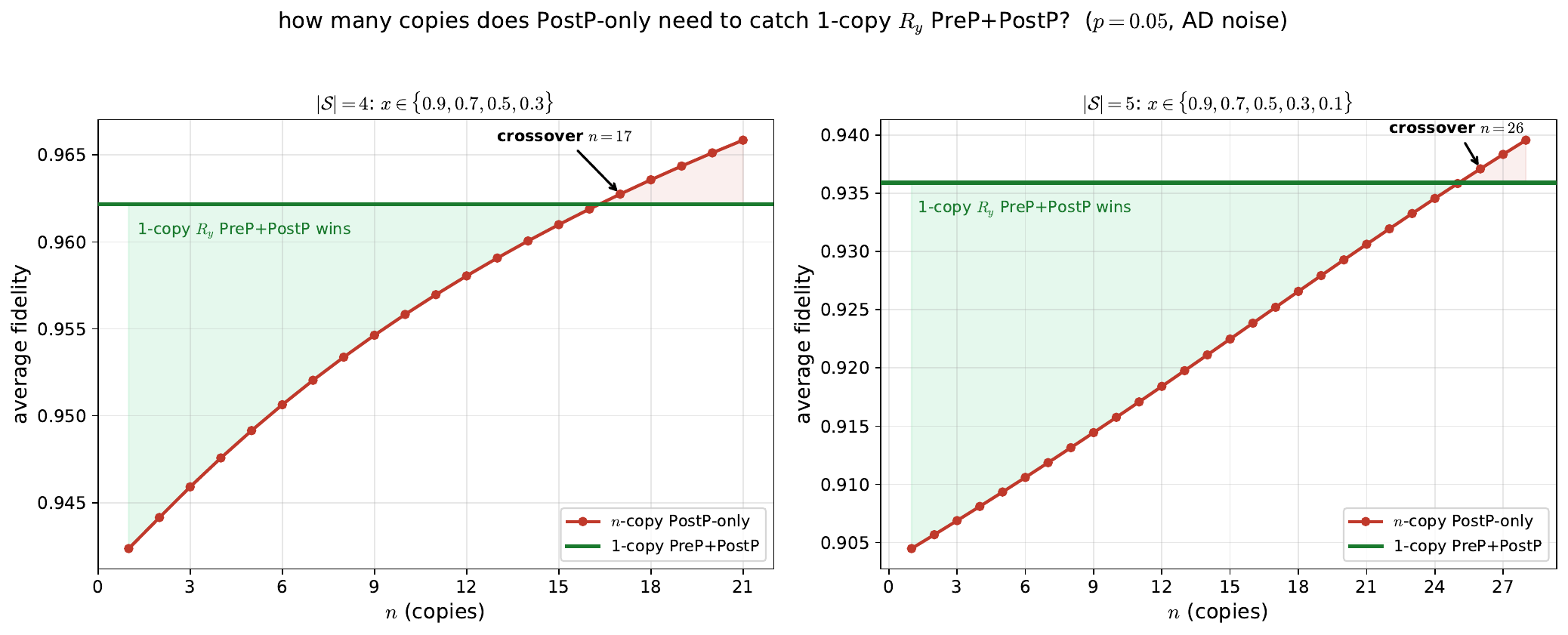}
    \caption{\textbf{Sample Efficiency in Quantum State Purification}. 
        Average fidelity as a function of the number of noisy copies $n$ under amplitude damping noise with $p=0.05$, shown for ensembles of size $|\mS|=4$ (left) and $|\mS|=5$ (right). 
        The green line indicates the performance of the single-copy pre-processing–augmented protocol (see Fig.~\ref{fig:Global_Multiple_Copies}(a)), while the red curve corresponds to conventional post-processing–only purification using $n$ copies (see Fig.~\ref{fig:Global_Multiple_Copies}(b)).
        For small to intermediate $n$, the single-copy protocol consistently achieves higher fidelity, as highlighted by the shaded region. 
        The crossover points identify the number of copies required for post-processing-only strategies to reach the same performance, occurring at $n=17$ and $n=26$, respectively. 
        These results reveal that suitable pre-processing can offset a substantial increase in resource consumption, providing a clear advantage in sample efficiency and pointing towards the potential of spatiotemporal protocol design.
    }
    \label{fig:Global_Num_Point3_3}
\end{figure}

A closer inspection of Fig.~\ref{fig:Global_Num_Point3_3} reveals that the advantage of pre-processing is not only quantitative but also structurally robust across ensembles. 
In both cases, the fidelity achieved by the single-copy protocol remains essentially constant, while the performance of post-processing-only strategies improves only gradually with the number of copies. 
This mismatch in scaling gives rise to a pronounced crossover, indicating that substantial resource overhead is required for conventional approaches to compensate for the absence of pre-processing. 
The location of the crossover shifts to larger $n$ as the ensemble size increases, suggesting that the benefit of pre-processing becomes more pronounced in more complex input settings. 
These trends indicate that pre-processing effectively reshapes the input before noise acts, enabling a more favorable alignment with the noise structure and thereby achieving performance levels that would otherwise require a significantly larger number of copies.

Further numerical results, shown in Fig.~\ref{fig:Global_Num_Point3_4}, push the comparison into a more demanding regime by enlarging the ensemble and extending the post-processing benchmark to the limit of current computational capability. 
The ensembles considered take the form
\begin{align}
    &\left\{\psi(0.2), \psi(0.2571), \psi(0.3143), \psi(0.3714), \psi(0.4286), \psi(0.4857), \psi(0.5429), \psi(0.6) \right\},\\
    &\left\{\psi(0.4), \psi(0.4111), \psi(0.4222), \psi(0.4333), \psi(0.4444), \psi(0.4556), \psi(0.4667), \psi(0.4778), \psi(0.4889), \psi(0.5)\right\}.
\end{align}
For the first ensemble, the states are defined as $\psi(x)$ (see Eq.~\eqref{eq:psi_alpha_i}) with $x$ sampled uniformly over the interval $[0.2, 0.6]$, yielding eight equally spaced instances. 
The second ensemble is constructed analogously, with $\alpha$ uniformly sampled over $[0.4, 0.5]$ to produce ten states.
Even when the number of copies is increased to nearly $n \approx 50$ in post-processing-only purification (see Fig.~\ref{fig:Global_Num_Point3_4}), no crossover is observed: the single-copy pre-processing-augmented protocol (see Fig.~\ref{fig:Global_Multiple_Copies}(a)) continues to outperform post-processing-only strategies (see Fig.~\ref{fig:Global_Multiple_Copies}(b)) throughout the accessible range. 
Linear fit indicates that any potential crossover would occur far beyond this regime. 
For the first ensemble, linear extrapolation places the crossover at approximately $n \approx 97$, while a log-exponential fit refines this estimate to $n \approx 112$. 
For the second ensemble, the corresponding thresholds are pushed dramatically higher, with linear fit yielding $n \approx 1918$ and the log-exponential fit further increasing the estimate to $n \approx 2431$. 
These results highlight the rapidly escalating sample complexity required for post-processing-only strategies (see Fig.~\ref{fig:Global_Multiple_Copies}(b)) to match the performance of the pre-processing-augmented protocol (see Fig.~\ref{fig:Global_Multiple_Copies}(a)). 
Reaching even the present scale of $n \approx 50$ already relies critically on symmetry-reduced SDPs in Subsec.~\ref{subsec:SDP_Symmetry}; without the Schur-Weyl-based reduction and Clebsch-Gordan recursion, the optimization would become intractable at substantially smaller $n\approx8$.

\begin{figure}[htbp]
    \centering   
    \includegraphics[width=1\textwidth]{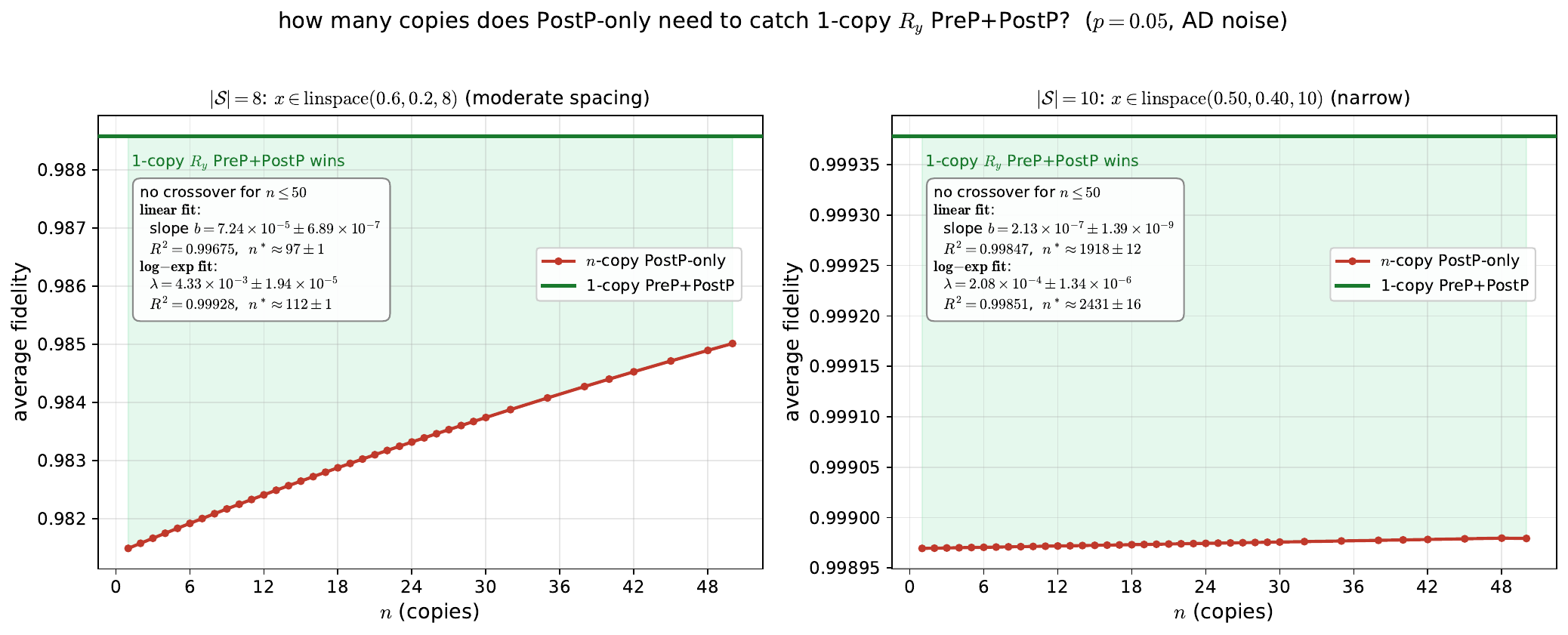}
    \caption{\textbf{Extreme Sample Efficiency in Quantum State Purification}. 
        Performance comparison between single-copy pre-processing–augmented purification (see Fig.~\ref{fig:Global_Multiple_Copies}(a)) and post-processing–only strategies (see Fig.~\ref{fig:Global_Multiple_Copies}(b)) as the number of noisy copies $n$ increases, under amplitude damping noise ($p=0.05$). 
        Results are shown for ensembles with $|\mS|=8$ (left) and $|\mS|=10$ (right), constructed from uniformly spaced parameters. 
        The pre-processing-augmented protocol (green) remains effectively constant, whereas post-processing–only performance (red) improves gradually with $n$, yet fails to reach parity within the explored range $n \leqslant 50$. Extrapolation of the post-processing trend indicates that substantially larger copy numbers are required to match the single-copy performance, with estimates differing markedly between the two ensembles. 
        The shaded region highlights the parameter regime in which pre-processing confers a clear advantage, illustrating a persistent separation in sample efficiency.
    }
    \label{fig:Global_Num_Point3_4}
\end{figure}

It is worth noting that, at the outset of this subsection, we highlighted that a single-copy pre-processing-augmented purification protocol (see Fig.~\ref{fig:Global_Multiple_Copies}(a)) can outperform post-processing-only schemes (see Fig.~\ref{fig:Global_Multiple_Copies}(b)) using as many as 1905 copies, based on a linear extrapolation (see Fig.~\ref{fig:Global_Num_Point3_4}). 
However, the true optimal performance of purification is determined by an optimization over all admissible post-processing quantum channels, formulated as an SDP and therefore intrinsically nonlinear.
As a result, linear extrapolation cannot capture the full structure of the optimization landscape. In realistic settings, it is thus expected that even 1905 copies are insufficient for conventional post-processing purification to attain the performance achieved by the pre-processing-augmented protocol introduced here, further highlighting the advantage conferred by incorporating temporal structure in quantum state purification.

Taken together, these results challenge the standard view that improved purification necessarily relies on access to multiple noisy copies. 
A different picture emerges: performance is determined not by copy number alone, but by how quantum dynamics are structured and exploited. 
By acting prior to noise, pre-processing reshapes the effective channel in a manner that cannot be reproduced by post-processing alone, allowing a single-copy protocol to surpass conventional multi-copy strategies. 
This establishes a clear operational separation between control before and after noise, and identifies pre-processing as a distinct resource for purification. 
The resulting gains in sample efficiency are not merely incremental but qualitative, enabling high-fidelity operation in regimes where preparing and maintaining many copies is prohibitively costly. 
More broadly, these findings point to a shift in how quantum information tasks can be optimized under realistic conditions: rather than compensating for noise after it occurs, one can actively steer its impact, opening new directions for quantum communication and distributed processing in the presence of unavoidable imperfections.


\subsection{Symmetry Collapses Complexity: Schur-Weyl Reduction of SDPs}\label{subsec:SDP_Symmetry}

The semidefinite programming underlying $n$-to-1 purification rapidly encounters a computational wall: its dimension grows exponentially with the number of input copies $n$, rendering even modest instances prohibitively computational-intensive.
This apparent intractability, however, is not intrinsic to the task but reflects a naive formulation that obscures the problem's underlying structure.
The independent and identically distributed (i.i.d.) nature of the inputs endows the problem with a permutation symmetry, and resolving this symmetry compresses the effective optimization to polynomial scale.
In this subsection, we develop a symmetry-resolved formulation of the SDP used in Subsec.~\ref{subsec:Less_is_More}. 
While a direct implementation becomes impractical beyond a few copies ($n\lesssim8$) , the resulting reduction enables efficient optimization deep into the many-copy regime, reaching system sizes exceeding 50 on a 128 GB memory workstation. Here $\lesssim$ means less than or approximately, which is an estimate of the upper bound given the engineering constraint that may vary due to different computational resources available.

We focus on the SDP formulation for $n$-to-1 post-processing purification, as introduced in Eq.~\eqref{eq:PostP_Fundamental_Limit}, which can be cast into the following simplified form.
\begin{align}\label{eq:PostP_Fundamental_Limit_Simplified}
    F_{\mathrm{PostP}}
    \coloneqq
    \max \quad 
    & 
    \Tr[J^{\theta^{\mathrm{Post}}}_{CD}\cdot \left(\frac{1}{|\mS|}\sum_{\psi\in\mS}\left(\mN(\psi)^{\T}\right)^{\otimes n}_{C}\otimes\psi_{D}\right)
    ]
    \\
    \text{s.t.} \quad 
    &J^{\theta^{\mathrm{Post}}}\geqslant0, \Tr_{D}[J^{\theta^{\mathrm{Post}}}]=\1_{C}.
\end{align}
Here, the noisy channel $\mN$ is absorbed into the input state, so that $\mN(\psi)$ acting on system $C$ is treated as an effective input state (see Fig.~\ref{fig:FA_Purification_Simplified}).
The $n$-to-1 post-processing SDP optimizes over a Choi operator $J^{\theta^{\mathrm{Post}}}\in\mathbb{C}^{N\times N}$ with
\begin{align}
    N \coloneqq d^{n+1},
\end{align}
where $d$ denotes the single-copy Hilbert space dimension.
In the absence of symmetry reduction, the problem size grows exponentially with the number of input copies, and the memory cost of standard SDP solvers scales as $\sim N^4$.
This $N^4$ scaling arises because the SDP variable has $\mO(N^2)$ degrees of freedom, and interior-point solvers require storing and factorizing matrices of comparable size, leading to an overall memory cost that scales as $N^4$.
As a result, the computational cost escalates rapidly, rendering even modest instances prohibitively memory-intensive.
For qubits, i.e., $d=2$, $n=7$ already corresponds to $N=256$ and requires $\sim 52\mathrm{GB}$ of memory measured in running; increasing to $n=8$ raises this demand by an additional order of magnitude to $\sim 0.83\mathrm{TB}$ (estimated).
For higher local dimensions, the situation is even more severe: in the distributed setting with $d=4$, $n=4$ yields 
$N=1024$, with a estimated memory requirement of $\sim 13\mathrm{TB}$.
These estimates make clear that, without exploiting additional structure, the SDP formulation is rapidly driven beyond practical computational limits.
This motivates our analysis of the permutation invariance of $(\mN(\psi)^{\T})^{\otimes n}$ as a means to reduce the computational cost of evaluating the SDP optimum $F_{\mathrm{PostP}}$ (see Eq.~\eqref{eq:PostP_Fundamental_Limit_Simplified}), ultimately leading to an equivalent formulation whose effective dimension scales only polynomially with $n$, thereby restoring computational tractability.

General complexity bounds for solving SDPs lay bare the computational bottleneck, highlighting the necessity of structurally simplified formulations in quantum information theory.
Consider an SDP in which the optimization variable is an $N\times N$ matrix subject to $m$ constraints. Achieving a solution with accuracy $\epsilon$ requires computational resources that scale as
\begin{align}
    \mO\left(\mathrm{poly}(N,m)\log(\frac{1}{\epsilon})\right).
\end{align}
Cutting-plane methods (CPMs) operate by maintaining and progressively shrinking a convex region guaranteed to contain the optimal solution. 
At each iteration, a separation oracle identifies a hyperplane that excludes a portion of the current feasible set while preserving the optimal point, and the procedure continues until the region becomes sufficiently small to certify near-optimality.
Interior-point methods (IPMs), by contrast, introduce a barrier function into the objective and solve a sequence of modified optimization problems. 
The resulting iterates follow a well-defined central path and converge to the optimum with high precision.
The computational complexities of these approaches are compared in Tab.~\ref{tab:SDP_Complexity}.

\begin{table}[htbp]
\centering
\begin{tblr}{
  colspec = {c | l || c | c | l},
  row{1} = {bg=gray!50, font=\bfseries}, 
  row{2} = {bg=gray!50, font=\bfseries},
  column{1} = {bg=gray!10, font=\bfseries},
  cell{2}{2} = {bg=gray!50}, 
  column{2} = {bg=white},
  hlines,
  vlines,
  cells = {m, c},
  cell{3,4,5,9,10,11}{3-5} = {bg=mOrange},
  cell{6,7,8,12}{3-5} = {bg=mBlue},
}
  \SetCell[c=5]{c} Computational Complexity of Solving Semidefinite Programming & & & & \\
  Year & References & Method & \#Iters & Cost per Iter \\
  1979 & \cite{Shor1977,KHACHIYAN198053,nemirovskij1983problem} & Cutting-Plane Method & $m^2$ & $mN^2 + m^2 + N^{\omega}$ \\
  1988 & \cite{TarKhaErl88} & Cutting-Plane Method & $m$ & $mN^2 + m^{3.5} + N^{\omega}$ \\
  1989 & \cite{63500} & Cutting-Plane Method & $m$ & $mN^2 + m^{\omega} + N^{\omega}$ \\
  1992 & \cite{Nesterov01011992} & Interior-Point Method & $N^{1/2}$ & $m^2N^2 + mN^{\omega} + m^{\omega}$ \\
  1994 & \cite{nesterov1994interior} & Interior-Point Method & $(mN)^{1/4}$ & $m^4N^2 + m^3N^{\omega}$ \\
  2000 & \cite{10.1287/moor.25.3.365.12212} & Interior-Point Method & $(mN)^{1/4}$ & $m^4N^2 + m^3N^{\omega}$ \\
  2003 & \cite{sivaramakrishnan2007properties} & Cutting-Plane Method & $m$ & $mN^2 + m^{\omega} + N^{\omega}$ \\
  2015 & \cite{7354442} & Cutting-Plane Method & $m$ & $mN^2 + m^2 + N^{\omega}$ \\
  2020 & \cite{10.1145/3357713.3384284} & Cutting-Plane Method & $m$ & $mN^2 + m^2 + N^{\omega}$ \\
  2020 & \cite{9317892} & Interior-Point Method & $N^{1/2}$ & $mN^2 + m^{\omega} + N^{\omega}$ \\
\end{tblr}
\caption{\textbf{Comparison of Computational Costs}.
The size of the SDP variable and the number of constraints are denoted by $N$ and $m$, respectively, while the exponent $\omega$ characterizes the cost of matrix multiplication. 
Accordingly, $m^{\omega}$ captures the cost of inverting the Hessian of the logarithmic barrier function in IPMs, while $N^{\omega}$ represents the cost associated with factorizing the slack matrix.}
\label{tab:SDP_Complexity}
\end{table}

Before proceeding further, we recall the basic representation-theoretic structure underlying permutation symmetry.
The $n$-fold tensor product $\rho^{\otimes n}$ carries a natural action of the symmetric group $\mathfrak{S}_{n}$ via permutations of tensors.
Because each copy in $\rho^{\otimes n}=\rho\otimes\cdots\otimes\rho$ is identical, $\rho^{\otimes n}$ commutes with all permutation operators $P_{\pi}$ with $\pi\in\mathfrak{S}_{n}$.
By Schur–Weyl duality, this induces the decomposition
\begin{align}\label{eq:SW_decomp}
    (\mathbb{C}^{d})^{\otimes n}
    =\bigoplus_{\substack{\lambda \vdash n \\ \ell(\lambda)\leqslant d}}\;
    V_{\lambda}\otimes W_{\lambda},
\end{align}
where $\lambda \vdash n  $ denotes that $\lambda$ is an interger partition  of $n$,  $\lambda$ labels Young diagrams with at most $d$ rows and $n$ boxes, and $V_{\lambda}$ and $W_{\lambda}$ are irreducible representations of $\mathrm{SU}(d)$ and $\mathfrak{S}_{n}$, respectively.
Here $\ell(\lambda)$ denotes the number of rows of the Young diagram $\lambda$, and the condition $\ell(\lambda)\leqslant d$ restricts to diagrams with at most $d$ rows.
For notational simplicity, we will henceforth leave the constraints on $\lambda$ implicit in the direct-sum notation.
By Schur's lemma, any operator commuting with all permutations is block-diagonal in this decomposition and takes the form
\begin{align}
    \bigoplus_{\lambda}\;A_{\lambda}\otimes \1_{W_{\lambda}}.
\end{align}
Here $A_{\lambda}\in\mathbb{C}^{m_{\lambda}\times m_{\lambda}}$, $m_{\lambda}:=\dim V_{\lambda}$, and $d_{\lambda}:=\dim W_{\lambda}$.


To place the analysis in a more general setting, consider the trace overlap between the operator $J$ and the product state $\rho^{\otimes n}\otimes \sigma$.
Here, $\rho^{\otimes n}$ acts on systems $A_1\ldots A_n$, collectively denoted by $A$, while $\sigma$ acts on system $B$; 
accordingly, $J$ acts on the joint system $AB$.
For an arbitrary permutation $P_{\pi}$, with $\pi\in\mathfrak{S}_{n}$, acting on $\rho^{\otimes n}$, one then has
\begin{align}\label{eq:J_SW}
    \Tr[J\cdot\left(\rho^{\otimes n}\otimes \sigma\right)]
    &=
    \Tr[J\cdot\left(P_{\pi}\left(\rho^{\otimes n}\right)P_{\pi}^{\dagger}\otimes \sigma\right)]\\
    &=
    \Tr[(P_{\pi}^{\dagger}\otimes\1_{B})J(P_{\pi}\otimes\1_{B})\cdot\left(\rho^{\otimes n}\otimes \sigma\right)].
\end{align}
In other words, replacing $J$ with $(P_{\pi}^{\dagger}\otimes\1_{B})J(P_{\pi}\otimes\1_{B})$ leaves the overlap invariant.
The same holds upon averaging over all such permutations, which leads to the following symmetrized operator
\begin{align}
    \bar{J}\coloneqq\frac{1}{n!}\sum_{\pi\in\mathfrak{S}_{n}}
    (P_{\pi}^{\dagger}\otimes\1_{B})
    J
    (P_{\pi}\otimes\1_{B}).
\end{align}
One then obtains
\begin{align}
    \Tr[\bar{J}\cdot\left(\rho^{\otimes n}\otimes \sigma\right)]
    =
    \Tr[J\cdot\left(\rho^{\otimes n}\otimes \sigma\right)].
\end{align}

On the other hand, since both $\rho^{\otimes n}$ and $\bar{J}$ commute with all permutation operators, they admit the block-diagonal structure
\begin{align}\label{eq:rho_n_decomp}
    \rho^{\otimes n}=
    \bigoplus_{\lambda}\;\rho_{\lambda}\otimes \1_{W_{\lambda}},
\end{align}
and
\begin{align}\label{eq:J_bar_decomp}
    \bar{J}=
    \bigoplus_{\lambda}\;J_{\lambda}\otimes \1_{W_{\lambda}}.
\end{align}
Although the decompositions in Eqs.~\eqref{eq:rho_n_decomp} and~\eqref{eq:J_bar_decomp} appear similar, they differ in their support.
In Eq.~\eqref{eq:rho_n_decomp}, $\rho_{\lambda}$ acts on the irreducible space $V_{\lambda}$ defined in Eq.~\eqref{eq:SW_decomp}, whereas in Eq.~\eqref{eq:J_bar_decomp}, $J_{\lambda}$ acts on the enlarged space $V_{\lambda}\otimes B$.
Accordingly, $\rho_{\lambda}$ is a $m_{\lambda}\times m_{\lambda}$ matrix, while $J_{\lambda}$ is an $m_{\lambda}d\times m_{\lambda}d$ matrix.
The block-diagonal structure derived above allows the overlap to be re-expressed as
\begin{align}
    \Tr[\bar{J}\cdot\left(\rho^{\otimes n}\otimes \sigma\right)]
    =
    \sum_{\lambda}d_{\lambda}\Tr[J_{\lambda}\cdot(\rho_{\lambda}\otimes\sigma)].
\end{align}

Maximizing the overlap $\Tr[J\cdot(\rho^{\otimes n}\otimes \sigma)]$ under the constraint that $J$ represents the Choi operator of a quantum channel, then leads to the following SDP 
\begin{align}\label{eq:Overlap_Original}
    F_{\ast}\coloneqq\max \quad 
    & 
    \Tr[J\cdot\left(\rho^{\otimes n}\otimes \sigma\right)]
    \\
    \text{s.t.} \quad 
    &J\geqslant0, \Tr_{B}[J]=\1_{A}.
\end{align}
In light of the symmetry analysis above, this optimization is equivalent to the following reduced formulation
\begin{align}\label{eq:Overlap_SW}
    \max \quad 
    & 
    \sum_{\lambda}d_{\lambda}\Tr[J_{\lambda}\cdot(\rho_{\lambda}\otimes\sigma)]
    \\
    \text{s.t.} \quad 
    &J_{\lambda}\geqslant0, \Tr_{B}[J_{\lambda}]=\1_{V_{\lambda}}.
\end{align}
For different values of $\lambda$, the optimization decouples into independent blocks and can therefore be carried out in parallel. Each block is much smaller than the original problem, yielding a substantial computational speedup over the original SDP.
More precisely, the problem reduces to optimizing over each block separately, leading to the following reduced formulation
\begin{align}\label{eq:Overlap_SW_block}
    F_{\lambda}\coloneqq\max \quad 
    & 
    \Tr[J_{\lambda}\cdot(\rho_{\lambda}\otimes\sigma)]
    \\
    \text{s.t.} \quad 
    &J_{\lambda}\geqslant0, \Tr_{B}[J_{\lambda}]=\1_{V_{\lambda}},
\end{align}
which satisfies
\begin{align}\label{eq:Weighted_Sum}
    F_{\ast}=\sum_{\lambda}d_{\lambda}F_{\lambda}.
\end{align}

Equipped with this general result, the $n$-copy post-processing purification setting can be revisited. 
The average fidelity, given in Eq.~\eqref{eq:PostP_Fundamental_Limit_Simplified}, is characterized by the objective function
\begin{align}
    F_{\mathrm{PostP}}
    =
    \max_{\mE} \quad 
    & 
    \Tr[J^{\theta^{\mathrm{Post}}}_{CD}\cdot \left(\frac{1}{|\mS|}\sum_{\psi\in\mS}\left(\mN(\psi)^{\T}\right)^{\otimes n}_{C}\otimes\psi_{D}\right)
    ].
\end{align}
To project the operator 
\begin{align}\label{eq:Xi}
    \Xi\coloneqq\frac{1}{|\mS|}\sum_{\psi\in\mS}\left(\mN(\psi)^{\T}\right)^{\otimes n}_{C}\otimes\psi_{D}
\end{align}
onto the subspace $V_{\lambda}\otimes W_{\lambda}$ (see Eq.~\eqref{eq:SW_decomp}), we construct the following projector
\begin{align}\label{eq:Pi_lambda}
    \Pi_{\lambda}\coloneqq
    \frac{d_{\lambda}}{n!}\sum_{\pi\in \mathfrak{S}_{n}}\chi_{\lambda}(\pi^{-1})\,P_{\pi}.
\end{align}
Here, $\chi_{\lambda}$ denotes the character of the irreducible representation $W_{\lambda}$ of $\mathfrak{S}_{n}$, while each $P_{\pi}$ represents the operator associated with a permutation, acting on $(\mathbb{C}^{d})^{\otimes n}$ and thus taking the form of a $d^n\times d^n$ matrix. 
The parameter $d_{\lambda}$ specifies the dimension of $W_{\lambda}$.
Following this projection, the SDP decomposes into independent optimizations over the blocks of $V_{\lambda}\otimes W_{\lambda}$, enabling a substantial reduction in computational cost. 
Among all irreducible sectors, the largest block has dimension 
\begin{align}
    \left(\max_{\lambda}m_{\lambda}\right)\cdot d=
    \binom{n+d-1}{d-1}\cdot d =
    \mO\left(n^{d-1}\right).
\end{align}
For $d=2$, this yields $\max_{\lambda}m_{\lambda}=n+1$. 
For $d=4$, the corresponding $\max_{\lambda}m_{\lambda}$ becomes $\binom{n+3}{3}=\mO(n^{3})$.
The matrix dimension $N_{\mathrm{SW}}$ therefore becomes 
\begin{align}
    N_{\mathrm{SW}} = \mO\left(n^{d-1}\right) \, .
\end{align}
Compared with the original SDP --- whose matrix dimension scales as $N \times N$ with $N=d^{n+1}$ --- this reduction already yields an exponential improvement in computational tractability.

Upon exploiting Schur–Weyl duality (see Eq.~\eqref{eq:SW_decomp}), the primary computational bottleneck shifts from the SDP itself to the explicit construction of the projection operator $\Pi_{\lambda}$ (see Eq.~\eqref{eq:Pi_lambda}). 
This step incurs factorial overhead --- requiring $\mO(n!\cdot d^{2n})$ in time and $\mO(d^{2n})$ in memory --- which ultimately dominates the overall complexity.
While tractable at moderate system sizes, for example, $d=4$, $n=4$ involves 24 passes over $256\times256$ matrices, the factorial scaling rapidly becomes prohibitive. 
In practice, this confines the approach to $n \lesssim 5$ for $d=4$ and $n \lesssim 8$ for $d=2$.
The distributed PPT purification bounds ($d=4$) investigated in Sec.~\ref{sec:Advantages_DQSP} are therefore limited to $n=4$ copies; 
extending to $n=5$ already lies beyond the computational reach of this construction.

The Schur-Weyl duality developed here admits a concrete algorithmic realization, distilled in Alg.~\ref{alg:char_pipeline}. 
This pipeline takes as input a target ensemble $\mS$ on $\mathbb{C}^{d}$ and returns the corresponding $n$-to-1 purification fidelity $F_{\mathrm{PostP}}$ in Eq.~\eqref{eq:PostP_Fundamental_Limit_Simplified}, for arbitrary local dimension $d\geqslant 2$.
At its core lies the routine \textsc{Char-Reduce}, which performs the block extraction for a single qudit state in three steps: 
(1) construction of the projector $\Pi_{\lambda}$ in Eq.~\eqref{eq:Pi_lambda}; 
(2) application of a Young symmetrizer to isolate a canonical copy of $V_{\lambda}$ within the $\lambda$ sector; 
and (3) a QR orthonormalization that yields an isometry $\Phi_{\lambda}\in\mathbb{C}^{d^{n}\times m_{\lambda}}$ onto this subspace (the notation $\Phi_{\lambda}$ avoids any ambiguity with permutation operators $P_{\pi}$).
Built on this primitive, the outer pipeline implements Steps (i)–(iv) of Subsec.~\ref{subsec:Optimization_Purification_Performance}: (1) block extraction applied to each $\sigma_i^{\T}=\mN(\psi_i)^{\T}$; 
(2) construction of the ensemble-averaged operators $\Xi_{\lambda}$ (see Eq.~\eqref{eq:Xi}) on $V_{\lambda}\otimes D$ (see Eq.~\eqref{eq:SW_decomp}); decomposition into independent semidefinite programming across irreducible sectors $\lambda$; and (3) a final aggregation through multiplicity-weighted summation with weights $d_{\lambda}$ (see Eq.~\eqref{eq:Overlap_SW}). 
Together, these steps translate the representation-theoretic structure into a tractable optimization procedure.

\begin{algorithm}[htbp]
    \DontPrintSemicolon
    \SetKwProg{Fn}{Function}{}{end}
    \KwIn{target-state ensemble $\mS = \{\psi_{i}\}_{i=1}^{|\mS|}$; noise channel $\mN$; copy count $n$; local dimension $d$.}
    \KwOut{optimal post-processing fidelity $F_{\mathrm{PostP}}$, the $n$-to-1 SDP optimum of Eq.~\eqref{eq:PostP_Fundamental_Limit_Simplified}.}
    \BlankLine
    
    \Fn{\normalfont\textsc{Char-Reduce}($\rho,\, n,\, d$)}{
        \tcc*[h]{Input: single-qudit $\rho\in\mathbb{C}^{d\times d}$, copy count $n$, local dimension $d$. Output: reduced blocks $\bigl\{R_{\lambda}^{(n)}\bigr\}_{\lambda}$ over partitions $\lambda\vdash n$ with $\ell(\lambda)\leqslant d$, where $R_{\lambda}^{(n)} \coloneqq \Phi_{\lambda}^{\dagger}\rho^{\otimes n}\Phi_{\lambda}\in\mathbb{C}^{m_{\lambda}\times m_{\lambda}}$ ($m_{\lambda} = \dim V_{\lambda}$) and $\Phi_{\lambda}\in\mathbb{C}^{d^{n}\times m_{\lambda}}$ is an isometry onto one fixed copy of $V_{\lambda}$ (see Eq.~\eqref{eq:SW_decomp}) inside $(\mathbb{C}^{d})^{\otimes n}$.}\;
        \ForEach{partition $\lambda \vdash n$ with $\ell(\lambda)\leqslant d$}{
            $d_{\lambda} \leftarrow \dfrac{n!}{\prod_{(i, k)\in\lambda} h_{\lambda}(i, k)}$ \tcp*{hook-length formula; $d_{\lambda} = \dim W_{\lambda}$, the number of orthogonal copies of $V_{\lambda}$ inside $(\mathbb{C}^{d})^{\otimes n}$. For $d = 2$, $\lambda = (\tfrac{n}{2}+j, \tfrac{n}{2}-j)$ recovers Eq.~\eqref{eq:dj_formula}}
            $\Pi_{\lambda} \leftarrow \dfrac{d_{\lambda}}{n!}\displaystyle\sum_{\pi\in\mathfrak{S}_{n}} \chi_{\lambda}(\pi^{-1})\,P_{\pi}$ \tcp*{isotypic projector onto $V_{\lambda}\otimes W_{\lambda}$; $P_{\pi}$ is the permutation unitary on $(\mathbb{C}^{d})^{\otimes n}$, $\chi_{\lambda}$ is the character of the irreducible $\mathfrak{S}_{n}$-representation on $W_{\lambda}$; Eq.~\eqref{eq:Pi_lambda}. For $\mathfrak{S}_{n}$, every element is conjugate to its inverse and all characters are real, so $\chi_{\lambda}(\pi^{-1}) = \chi_{\lambda}(\pi)$ numerically}
            diagonalize $\Pi_{\lambda}$; stack the $m_{\lambda} d_{\lambda}$ unit-eigenvalue eigenvectors as the columns of $Q_{\lambda}\in\mathbb{C}^{d^{n}\times m_{\lambda} d_{\lambda}}$ \tcp*{$\operatorname{image}(Q_{\lambda}) = V_{\lambda}\otimes W_{\lambda}$; Hermitian eigensolvers (\texttt{eigh}) return orthonormal eigenvectors, so $Q_{\lambda}^{\dagger}Q_{\lambda} = \1_{m_{\lambda} d_{\lambda}}$ automatically, and no Gram--Schmidt step is needed}
            fix a standard Young tableau $T$ of shape $\lambda$ and form the Young symmetrizer $Y_{T} \coloneqq a_{T}\,b_{T}$, with row symmetrizer $a_{T} \coloneqq \sum_{\pi\in R(T)} P_{\pi}$ and column antisymmetrizer $b_{T} \coloneqq \sum_{\pi\in C(T)} \operatorname{sgn}(\pi)\,P_{\pi}$ \tcp*{$R(T), C(T)\subseteq\mathfrak{S}_{n}$ are the row- and column-stabilizer subgroups of the tableau $T$}
            $\Phi_{\lambda} \leftarrow$ orthonormal basis for $\operatorname{range}(Y_{T}\,Q_{\lambda})$, obtained by rank-revealing QR or SVD truncated to rank $m_{\lambda}$ \tcp*{$Y_{T}$ is a quasi-idempotent satisfying $Y_{T}^{2} = (n!/d_{\lambda})\,Y_{T}$; applied to $Q_{\lambda}$ it projects the $\lambda$-isotypic sector onto one fixed copy of $V_{\lambda}$, so $Y_{T}\,Q_{\lambda}\in\mathbb{C}^{d^{n}\times m_{\lambda}d_{\lambda}}$ has rank exactly $m_{\lambda}$ and its $m_{\lambda}d_{\lambda} - m_{\lambda}$ spurious columns must be discarded. The resulting $\Phi_{\lambda}\in\mathbb{C}^{d^{n}\times m_{\lambda}}$ satisfies $\Phi_{\lambda}^{\dagger}\Phi_{\lambda} = \1_{m_{\lambda}}$. The choice of $T$ is immaterial: different standard tableaux select different copies of $V_{\lambda}$ (see Eq.~\eqref{eq:SW_decomp}), and Schur's lemma forces the downstream $R_{\lambda}^{(n)}$ to coincide}
            $R_{\lambda}^{(n)} \leftarrow \Phi_{\lambda}^{\dagger}\,\rho^{\otimes n}\,\Phi_{\lambda}$\;
        }
        \KwRet $\bigl\{R_{\lambda}^{(n)}\bigr\}_{\lambda}$\;
    }
    
    \BlankLine
    \tcc*[h]{\textbf{Step (i)--(ii).} Per-state noisy reduced blocks. The transpose $\sigma_{i}^{\T}$ enters because the SDP objective of Eq.~\eqref{eq:PostP_Fundamental_Limit_Simplified} carries $\bigl(\mN(\psi_{i})^{\T}\bigr)^{\otimes n}$.}\;
    \ForEach{$\psi_{i}\in\mS$}{
        $\sigma_{i} \leftarrow \mN(\psi_{i})$ \tcp*{noisy single-copy state, $\sigma_{i}\in\mathbb{C}^{d\times d}$}
        $\bigl\{\rho_{\lambda, i}^{(n)}\bigr\}_{\lambda} \leftarrow \textsc{Char-Reduce}\bigl(\sigma_{i}^{\T},\, n,\, d\bigr)$ \tcp*{$\rho_{\lambda, i}^{(n)}$: $\lambda$-reduced block of $(\sigma_{i}^{\T})^{\otimes n}$, size $m_{\lambda}\times m_{\lambda}$}
    }
    
    \BlankLine
    \tcc*[h]{\textbf{Step (iii).} Assemble the ensemble-averaged operator on $V_{\lambda}\otimes D$, where $D$ is the $d$-dimensional target system. The trivial action of $\rho^{\otimes n}$ on $W_{\lambda}$ (Eq.~\eqref{eq:rho_n_decomp}) drops $W_{\lambda}$ from the SDP variables entirely. Note: $(\sigma_{i}^{\T})^{\otimes n}$ and hence $\Omega_{\lambda}$ are generally not Hermitian; since $J_{\lambda}$ is constrained to be Hermitian, the objective $\Tr[J_{\lambda}\Omega_{\lambda}]$ depends only on the Hermitian part $(\Omega_{\lambda} + \Omega_{\lambda}^{\dagger})/2$, which standard SDP solvers enforce automatically.}\;
    \ForEach{partition $\lambda\vdash n$ with $\ell(\lambda)\leqslant d$}{
        $\Xi_{\lambda} \leftarrow \dfrac{1}{|\mS|}\displaystyle\sum_{i=1}^{|\mS|} \rho_{\lambda, i}^{(n)} \otimes \psi_{i, D}$ \tcp*{$\psi_{i, D}\equiv\ket{\psi_{i}}\!\bra{\psi_{i}}$ on $D$; $\Xi_{\lambda}$ is a $d\,m_{\lambda}\times d\,m_{\lambda}$ operator on $V_{\lambda}\otimes D$}
    }
    
    \BlankLine
    \tcc*[h]{\textbf{Step (iv).} The bundled SDP $\max_{\{J_{\lambda}\}}\sum_{\lambda} d_{\lambda}\,\Tr[J_{\lambda}\Xi_{\lambda}]$ subject to $J_{\lambda}\geqslant 0$ and $\Tr_{D}[J_{\lambda}] = \1_{V_{\lambda}}$ carries no cross-$\lambda$ constraints (Eq.~\eqref{eq:Overlap_SW_block}); it therefore separates into independent block SDPs, one per $\lambda$, and additively gives $F_{\mathrm{PostP}} = \sum_{\lambda} d_{\lambda}\,F_{\lambda}$.}\;
    \ForEach{$\lambda\vdash n$ with $\ell(\lambda)\leqslant d$ \textup{(independent; parallelisable)}}{
        $F_{\lambda} \leftarrow \displaystyle\max_{\substack{J_{\lambda}\geqslant 0 \\ \Tr_{D}[J_{\lambda}] = \1_{V_{\lambda}}}} \Tr[\,J_{\lambda}\,\Xi_{\lambda}\,]$ \tcp*{$J_{\lambda}$: $\lambda$-block of the post-processing Choi operator on $V_{\lambda}\otimes D$; the TP condition descends from $\Tr_{D}\bigl[J^{\theta^{\mathrm{Post}}}\bigr] = \1_{C}$ restricted to the $\lambda$-sector}
    }
    \KwRet $F_{\mathrm{PostP}} \leftarrow \displaystyle\sum_{\lambda} d_{\lambda}\,F_{\lambda}$ \tcp*{each sector contributes with its multiplicity $d_{\lambda}$ (Eq.~\eqref{eq:Overlap_SW_block})}
    \caption{\textsc{Schur-Weyl-Reduction}: End-to-end character-formula pipeline for evaluating the $n$-to-1 qudit purification SDP in Eq.~\eqref{eq:PostP_Fundamental_Limit_Simplified}, applicable to arbitrary local dimension $d\geqslant 2$. 
    The construction scales as $\mO(n!\,d^{2n})$ in time and $\mO(d^{2n})$ in memory per input state; 
    while the block-structured SDP decomposes into independent sectors of dimension $d\,m_{\lambda} = \mO(d\,n^{d-1})$.}
    \label{alg:char_pipeline}
\end{algorithm}


\subsection{Symmetry Collapses Complexity: Clebsch-Gordan Recursion of SDPs for qubits}\label{subsec:SDP_CG}

This subsection completes the symmetry-reduction framework in Subsec.~\ref{subsec:SDP_Symmetry} for the special case of $d=2$ (qubits) by introducing a second, complementary ingredient: Clebsch-Gordan recursion. 
While parts of the discussion inevitably overlap with the preceding Subsec.~\ref{subsec:SDP_Symmetry}, the perspective adopted here is different in emphasis. 
Rather than further exploiting permutation symmetry at the level of Schur-Weyl block decomposition, we focus on the physical structure underlying these blocks --- specifically, how irreducible sectors arise from the recursive coupling of subsystems. 
This viewpoint reveals that the Schur-Weyl sectors for $d=2$, namely $V_{\lambda}\otimes W_{\lambda}$ (see Eq.~\eqref{eq:SW_decomp}), are not merely abstract symmetry labels, but encode a hierarchical organization governed by angular-momentum addition and conserved quantities.
Building on this insight, we show that the same structure that underpins the block decomposition naturally gives rise to a recursive construction of the relevant subspaces, thereby bypassing the factorial overhead associated with explicit projector construction, i.e., $\Pi_{\lambda}$ (see Eq.~\eqref{eq:Pi_lambda}). 
In this way, Clebsch-Gordan recursion is not introduced as a separate technical tool, but emerges directly from the physical organization of the problem, providing an efficient and scalable route to implementing the symmetry-reduced SDP in the many-copy regime.

Rather than proceeding directly to the Clebsch-Gordan recursion, it is instructive to first clarify how the algebraic structure of Schur-Weyl duality maps onto its physical content. 
This correspondence provides the conceptual footing on which the recursion naturally emerges. 
Before introducing the formal machinery, we therefore adopt a viewpoint familiar to any reader of quantum mechanics: the classification of many-body states by symmetry, ranging from bosonic and fermionic sectors to the continuum of intermediate symmetry types.

A simple convention fixes the language used throughout. 
The purification SDP (see Eq.~\eqref{eq:FA_Fundamental_Limit}) is defined on $n$ qubits, that is, $n$ copies of the two-dimensional Hilbert space $\mathbb{C}^{2}$. 
It is convenient to regard each qubit as a spin-1/2
particle, with the identifications
\begin{align}
    \ket{0} \equiv \ket{\uparrow},
    \quad\text{and}\quad
    \ket{1} \equiv \ket{\downarrow}.
\end{align}
In this picture, a computational basis string $\ket{b_{1}\, b_{2}\, \cdots\, b_{n}}$ represents a configuration of $n$ spins with definite $\hat{S}_{z}$ eigenvalues assigned site by site. 
We will use the two vocabularies --- qubits (i.e., $\ket{0}$/$\ket{1}$) and spin-1/2 particles (i.e., $\ket{\uparrow}$/$\ket{\downarrow}$) --- interchangeably, selecting whichever is more natural in context.

In the quantum mechanics of $n$ identical particles, the two permutation-symmetry classes encountered most often are bosons, whose wavefunction is fully symmetric under particle exchange, and fermions, whose wavefunction is fully antisymmetric. 
In fact they represent the two endpoints of a much richer spectrum of symmetry types.
The intermediate structures already appear in the nontrivial case of $n=3$ spin-1/2 particles. Standard angular-momentum coupling decomposes the 
$2^3$-dimensional Hilbert space into a spin-3/2 quartet (one copy of a 4-dimensional multiplet) and a spin-1/2 sector consisting of two independent copies of a 2-dimensional doublet. 
The spin-3/2 sector is fully symmetric under particle exchange, analogous to the bosonic case in many-body quantum mechanics (the qubits here are distinguishable resources rather than identical bosons, but the symmetry structure under the action of $\mathfrak{S}_{n}$ is the same): 
its highest-weight state
\begin{align}
    \ket{\uparrow\uparrow\uparrow}
\end{align}
is invariant under any permutation of the three particles. 
By contrast, the spin-1/2 sector carries mixed symmetry. 
It is neither fully symmetric nor fully antisymmetric, but instead exhibits partial symmetry under some exchanges and antisymmetry under others.
A concrete example is obtained by first placing particles 1 and 2 in a singlet state, which is antisymmetric under their exchange, and then coupling to particle 3
\begin{align}
    \frac{1}{\sqrt{2}}\left(\ket{{\uparrow}{\downarrow}} - \ket{{\downarrow}{\uparrow}}\right) \otimes \ket{{\uparrow}}.
\end{align}
This state is antisymmetric under the exchange of particles 1 and 2, yet has no definite transformation property under exchange of the pair (1,2) with particle 3, illustrating the characteristic structure of mixed-symmetry sectors.

Before turning to the graphical picture, we make precise the two operations it encodes, using the case of spin-1/2 particles as a concrete illustration.
Given a subset of $n$ particles, symmetrization is implemented by the projector
\begin{align}
    \mathrm{Sym}_{k} \;\coloneqq\; \frac{1}{n!}\sum_{\pi \in \mathfrak{S}_{n}} P_{\pi},
\end{align}
where $P_{\pi}$ permutes the chosen particles within the tensor product. 
For $n=2$, this reduces to 
\begin{align}
    \mathrm{Sym}_{2} = \tfrac{1}{2}(\mathbb{1} + P_{(12)}).
\end{align}
Acting on an asymmetric state such as $\ket{{\uparrow}{\downarrow}}$, one obtains
\begin{align}
  \mathrm{Sym}_{2}\,\ket{{\uparrow}{\downarrow}}
  \;=\; \frac{1}{2}\left(\ket{{\uparrow}{\downarrow}} + P_{(12)}\ket{{\uparrow}{\downarrow}}\right)
  \;=\; \frac{1}{2}\left(\ket{{\uparrow}{\downarrow}} + \ket{{\downarrow}{\uparrow}}\right),
\end{align}
which is invariant under exchange: applying $P_{(12)}$ leaves the state unchanged. 
This invariance under all permutations within the subset characterises the symmetric subspace.

Antisymmetrization is defined analogously, with each permutation weighted by its sign,
\begin{align}
    \mathrm{AntiSym}_{n} \;\coloneqq\; \frac{1}{n!}\sum_{\pi \in \mathfrak{S}_{n}} \mathrm{sgn}(\pi)\, P_{\pi},
\end{align}
For $n=2$, one has 
\begin{align}
    \mathrm{AntiSym}_{2} = \frac{1}{2}(\mathbb{1} - P_{(12)}),
\end{align}
yielding
\begin{align}
  \mathrm{AntiSym}_{2}\,\ket{{\uparrow}{\downarrow}}
  \;=\; \frac{1}{2}\left(\ket{{\uparrow}{\downarrow}} - \ket{{\downarrow}{\uparrow}}\right)
\end{align}
the familiar singlet state. 
Under exchange, this state acquires a minus sign, i.e., 
\begin{align}
    P_{(12)}\,\mathrm{AntiSym}_{2}\ket{{\uparrow}{\downarrow}} = -\,\mathrm{AntiSym}_{2}\ket{{\uparrow}{\downarrow}},
\end{align}
reflecting its antisymmetric character.
A direct consequence appears when two particles occupy the same single-particle state. 
For instance,
\begin{align}
    \mathrm{AntiSym}_{2}\,\ket{{\uparrow}{\uparrow}}
  \;=\; \frac{1}{2}\left(\ket{{\uparrow}{\uparrow}} - P_{(12)}\ket{{\uparrow}{\uparrow}}\right)
  \;=\; \frac{1}{2}\left(\ket{{\uparrow}{\uparrow}} - \ket{{\uparrow}{\uparrow}}\right)
  \;=\; 0,
\end{align}
since the swap acts trivially and the two contributions cancel. 
The antisymmetric projection thus annihilates any configuration with repeated single-particle states. 
This provides a structural origin of Pauli exclusion: the corresponding state is forced to vanish, and hence lies outside the physical Hilbert space. 
The same cancellation mechanism persists for general $n$ whenever two particles in the subset share the same single-particle state.

With symmetrization and antisymmetrization in hand, the recipe ``symmetrise within a group, antisymmetrise between groups'' admits a compact visual encoding. 
Heuristically, boxes in the same row are symmetrised while those in the same column are antisymmetrised; strictly, Young's construction composes these two operations (symmetrise along rows, then antisymmetrise down columns) to produce a projector onto the irreducible component labelled by the diagram, so the two symmetries do not hold simultaneously. The heuristic picture is sufficient for everything that follows.
This prescription is captured by Young diagrams.
For $n=3$ and $d=2$, the admissible arrangements are
\begin{align}
  \ydiagram{3}\;,
  \qquad
  \ydiagram{2,1}\;,
  \qquad
  \ydiagram{1,1,1}\;.
\end{align}
We label each diagram by its row-length sequence, writing $\lambda=(3), (2,1)$ and $(1,1,1)$, respectively. 
The pictorial and algebraic notations are interchangeable, and we use whichever is more convenient in context.

These diagrams encode the symmetry structure of the decomposition. 
The fully symmetric diagram
\begin{align}
    \ydiagram{3}\;,
\end{align}
corresponds to the spin-3/2 sector. 
The mixed diagram
\begin{align}
    \ydiagram{2,1}\;,
\end{align}
represents the spin-1/2 sector: two particles are symmetrized along the top row, while antisymmetrization along the column enforces partial exchange structure.
By contrast, the fully antisymmetric diagram
\begin{align}
    \ydiagram{1,1,1}\;,
\end{align}
is forbidden for qubits, as it would require antisymmetrizing three particles within a two-dimensional single-particle space. 
This is the Pauli exclusion in diagrammatic form: no column may exceed height $d$.

Each identification admits a direct physical verification. 
Consider first $\lambda=(3)$. 
To make the angular-momentum structure explicit, we write the single-particle spin operators as
\begin{align}
\hat{S}_{x}^{(k)} = \frac{1}{2}\sigma_{x}^{(k)},\qquad
\hat{S}_{y}^{(k)} = \frac{1}{2}\sigma_{y}^{(k)},\qquad
\hat{S}_{z}^{(k)} = \frac{1}{2}\sigma_{z}^{(k)},
\end{align}
with $\sigma_{\alpha}^{(k)}$ acting on the $k$-th spin. 
The total angular momentum operators are their sums,
\begin{align}
    \hat{J}_{\alpha} \;\coloneqq\; \sum_{k=1}^{3} \hat{S}_{\alpha}^{(k)},
    \quad \alpha \in \{x, y, z\}.
\end{align}
We reserve $J^{\theta}$ (without hat, introduced in Eq.~\eqref{eq:Choi}) for Choi operators and $\hat{J}_{\alpha}$ (with hat) for angular-momentum operators throughout this section.

Acting on the fully aligned state yields
\begin{align}
  \hat{J}_{z}\ket{{\uparrow}{\uparrow}{\uparrow}}
  \;=\; \left(\frac{1}{2} + \frac{1}{2} + \frac{1}{2}\right)\ket{{\uparrow}{\uparrow}{\uparrow}}
  \;=\; \frac{3}{2}\ket{{\uparrow}{\uparrow}{\uparrow}},
\end{align}
the maximal eigenvalue for three spin-1/2 particles. 
The state $\ket{{\uparrow}{\uparrow}{\uparrow}}$ is invariant under all permutations, and therefore resides in the fully symmetric sector, confirming the identification with the spin-3/2 irrep.

Several standard notions used throughout are fixed here for clarity, namely the raising and lowering operators and the structure of spin multiplets.
The raising and lowering operators are defined as
\begin{align}
    \hat{J}_{+} \;\coloneqq\; \hat{J}_{x} + i\,\hat{J}_{y} \;=\; \sum_{k=1}^{3} \hat{S}_{+}^{(k)},
\end{align}
and
\begin{align}
    \hat{J}_{-} \;\coloneqq\; \hat{J}_{x} - i\,\hat{J}_{y} \;=\; \sum_{k=1}^{3} \hat{S}_{-}^{(k)},
\end{align}
where the single-particle operators act as
\begin{align}
    \hat{S}_{+}^{(k)}\ket{{\downarrow}}_{k} = \ket{{\uparrow}}_{k}, 
    \quad
    \hat{S}_{+}^{(k)}\ket{{\uparrow}}_{k} = 0,
\end{align}
and
\begin{align}
    \hat{S}_{-}^{(k)}\ket{{\uparrow}}_{k} = \ket{{\downarrow}}_{k},
    \quad
    \hat{S}_{-}^{(k)}\ket{{\downarrow}}_{k} = 0.
\end{align}
The commutation relations $[\hat{J}_{z}, \hat{J}_{\pm}] = \pm\,\hat{J}_{\pm}$ ensure that $\hat{J}_{+}$ and $\hat{J}_{-}$ act as ladder operators, shifting the $\hat{J}_{z}$ eigenvalue by $\pm1$, with annihilation at the boundaries of the spectrum.
A spin-$j$ multiplet is a set of $2j + 1$ orthogonal quantum states that
(\romannumeral1) share a common $\hat{J}^{2}$ eigenvalue $j(j+1)$ (total spin squared),
(\romannumeral2) carry the $2j + 1$ distinct $\hat{J}_{z}$ eigenvalues $\{-j, -j+1, \ldots, +j\}$, with exactly one state per eigenvalue, and
(\romannumeral3) are joined into a single ladder by $\hat{J}_{\pm}$: starting from any state of the set, repeated application of $\hat{J}_{+}$ climbs the ladder until it annihilates the top state, called the highest-weight vector, and repeated application of $\hat{J}_{-}$ descends until it annihilates the bottom.
A familiar example is the hydrogen $\ell$-shell, which realizes the $\mathrm{SU}(2)$ representation of angular momentum $\ell$: its $2\ell + 1$ states, labelled by the magnetic quantum number $m_{\ell}$, form a single multiplet whose members are ladder-connected by $\hat{L}_{\pm}$ and mapped into one another by rotations of the reference frame.
A notational remark will be useful in what follows. 
The dimension $2j + 1$ of a spin-$j$ representation is denoted $m_{\lambda}$, where $\lambda$ is the Young diagram associated with $j$; this is the size of a single multiplet, and it agrees with the definition $m_{\lambda} \coloneqq \dim V_{\lambda}$ introduced after Eq.~\eqref{eq:SW_decomp} since each irrep $V_{\lambda}$ carries one copy of the multiplet. 
It is to be distinguished from the multiplicity $d_{\lambda}$ introduced shortly below, which counts the number of independent copies of that multiplet inside $(\mathbb{C}^{d})^{\otimes n}$, and from the magnetic quantum number $m_{\ell}$ (a $\hat{J}_{z}$ eigenvalue, indexed by angular momentum label $\ell$), with which it merely shares the letter $m$.

With these notions in place, we return to the three-particle example. 
The state $\ket{{\uparrow}{\uparrow}{\uparrow}}$ carries the maximal eigenvalue $\hat{J}_{z} = 3/2$ and is annihilated by $\hat{J}_{+}$, as no spin can be raised further. 
It therefore serves as the highest-weight vector of a spin-3/2 multiplet. 
Successive applications of $\hat{J}_{-}$ generate the remaining states, leading to
\begin{align}\label{eq:Young_Diagram_3}
  \left\{\ket{{\uparrow}{\uparrow}{\uparrow}},
  \frac{1}{\sqrt{3}}\left(\ket{{\downarrow}{\uparrow}{\uparrow}}+\ket{{\uparrow}{\downarrow}{\uparrow}}+\ket{{\uparrow}{\uparrow}{\downarrow}}\right),
  \frac{1}{\sqrt{3}}\left(\ket{{\uparrow}{\downarrow}{\downarrow}}+\ket{{\downarrow}{\uparrow}{\downarrow}}+\ket{{\downarrow}{\downarrow}{\uparrow}}\right),
  \ket{{\downarrow}{\downarrow}{\downarrow}}
  \right\}.
\end{align}
Each of these states is fully symmetric under permutations of the three particles, and hence lies in the symmetric subspace. 
The resulting multiplet has dimension
\begin{align}
    2j+1=2\cdot\frac{3}{2}+1=4,
\end{align}
in agreement with $m_{(3)}=4$ for the single-row Young diagram $\lambda=(3)$.

We now turn to the Young diagram $\lambda=(2,1)$ and establish that it corresponds to total spin $j=1/2$.
To this end, consider a representative state of this symmetry type, shown below.
\begin{align}
    \ket{\psi} \;=\; \frac{1}{\sqrt{2}}\left(\ket{{\uparrow}{\downarrow}} - \ket{{\downarrow}{\uparrow}}\right) \otimes \ket{{\uparrow}},
\end{align}
in which particles 1 and 2 form a singlet and particle 3 is spin-up. 
The state $\psi$ thus lies in the $\lambda=(2,1)$ sector: it is antisymmetric under exchange of particles 1 and 2, yet neither fully antisymmetric (which is forbidden for three qubits) nor fully symmetric, and therefore cannot belong to the $\lambda=(3)$ sector (see Eq.~\eqref{eq:Young_Diagram_3}).
Its total spin can be read off by evaluating $\hat{J}^{2}\ket{\psi}$.
The calculation follows from the elementary identity
\begin{align}
    (\vec{a} + \vec{b} + \vec{c})^{2} = \vec{a}^{\,2} + \vec{b}^{\,2} + \vec{c}^{\,2} + 2(\vec{a}\!\cdot\!\vec{b} + \vec{a}\!\cdot\!\vec{c} + \vec{b}\!\cdot\!\vec{c}),
\end{align}
applied to the three spin operators
\begin{align}
  \hat{J}^{2} \;=\; \Bigl(\hat{S}^{(1)} + \hat{S}^{(2)} + \hat{S}^{(3)}\Bigr)^{\!2}
  \;=\; \sum_{k=1}^{3}\hat{S}^{(k)\,2} \;+\; 2\bigl(\hat{S}^{(1)}\!\cdot\hat{S}^{(2)} + \hat{S}^{(1)}\!\cdot\hat{S}^{(3)} + \hat{S}^{(2)}\!\cdot\hat{S}^{(3)}\bigr).
\end{align}
Here, 
\begin{align}
    \hat{S}^{(k)} = (\hat{S}_{x}^{(k)}, \hat{S}_{y}^{(k)}, \hat{S}_{z}^{(k)})
\end{align}
is the three-component spin vector on particle $k$.
Its squared magnitude
\begin{align}
    \hat{S}^{(k)\,2} = (\hat{S}_{x}^{(k)})^{2} + (\hat{S}_{y}^{(k)})^{2} + (\hat{S}_{z}^{(k)})^{2}
\end{align}
measures the intrinsic spin carried by that particle.
The cross terms 
\begin{align}
    \hat{S}^{(i)}\!\cdot\hat{S}^{(j)} = \hat{S}_{x}^{(i)} \hat{S}_{x}^{(j)} + \hat{S}_{y}^{(i)} \hat{S}_{y}^{(j)} + \hat{S}_{z}^{(i)} \hat{S}_{z}^{(j)}
\end{align}
measure the relative orientation of two spins: positive for parallel alignment, negative for antiparallel.
The usefulness of this expansion lies in the fact that each contribution can be evaluated immediately on $\psi$, avoiding a brute-force computation in the full $8 \times 8$ space. 
The three types of terms act as follows.

First, for each particle, the operator $\hat{S}^{(k)\,2}$ is the Casimir operator: 
a polynomial in the spin components that commutes with all of them and therefore takes a fixed value $j(j+1)$ on every state within a given spin-$j$ representation.
For spin-1/2, this gives
\begin{align}
    \hat{S}^{(k)\,2} = 
    \frac{1}{2}(\frac{1}{2}+1)\1 = 
    \frac{3}{4}\1
\end{align}
and hence
\begin{align}
    \sum_{k=1}^{3}\hat{S}^{(k)\,2}\ket{\psi} \;=\; 3 \cdot \tfrac{3}{4}\ket{\psi} \;=\; \tfrac{9}{4}\ket{\psi}.
\end{align}

Second, the pair 1 and 2 forms a singlet, which is an eigenstate of $(\hat{S}^{(1)}+\hat{S}^{(2)})^{2}$ with eigenvalue zero. 
Using
\begin{align}
    \hat{S}^{(1)}\!\cdot\hat{S}^{(2)} = \frac{1}{2}\left[(\hat{S}^{(1)}+\hat{S}^{(2)})^{2} - \hat{S}^{(1)\,2} - \hat{S}^{(2)\,2}\right],
\end{align}
one finds
\begin{align}
    \hat{S}^{(1)}\!\cdot\hat{S}^{(2)}\,\ket{\psi} \;=\; \frac{1}{2}\left(0 - \frac{3}{4} - \frac{3}{4}\right)\ket{\psi}
    \;=\; -\frac{3}{4}\ket{\psi}.
\end{align}
Finally, the remaining cross terms combine as
\begin{align}
    \hat{S}^{(1)}\!\cdot\hat{S}^{(3)} + \hat{S}^{(2)}\!\cdot\hat{S}^{(3)}
    \;=\; \left(\hat{S}^{(1)} + \hat{S}^{(2)}\right)\!\cdot\hat{S}^{(3)}.
\end{align}
Since the singlet is annihilated by $\hat{S}^{(1)}+\hat{S}^{(2)}$, this contribution vanishes on $\psi$.
Collecting the terms,
\begin{align}
    \hat{J}^{2}\ket{\psi}
  \;=\; \left[\frac{9}{4} \;+\; 2\cdot\left(-\frac{3}{4}\right) \;+\; 2\cdot 0\right]\ket{\psi}
  \;=\; \left(\frac{9}{4} - \frac{3}{2}\right)\ket{\psi}
  \;=\; \frac{3}{4}\ket{\psi}
  \;=\; \frac{1}{2}\left(\frac{1}{2}+1\right)\ket{\psi}.
\end{align}
Thus $j(j+1)=3/4$, giving $j=1/2$. 
The $\lambda=(2,1)$ sector therefore contains spin-1/2 multiplets. 
The question of how many independent copies occur will be addressed after resolving the forbidden case $\lambda=(1,1,1)$ in the qubit setting.

Third, we explain why the Young diagram $\lambda=(1,1,1)$ is absent in the qubit case ($d=2$). 
Equivalently, we show that the projector $\mathrm{AntiSym}_{3}$ annihilates every computational basis vector on $(\mathbb{C}^{2})^{\otimes 3}$, so that the fully antisymmetric sector has zero dimension for $d=2$.
Consider an arbitrary computational basis vector $\ket{i_{1}\, i_{2}\, i_{3}}$ with $i_{k} \in \{0, 1\}$.
Since three indices are drawn from a two-element set, at least two must coincide (pigeonhole principle); without loss of generality, take $i_1=i_2$.
Partition the six permutations in $\mathfrak{S}_{3}$ into three pairs
\begin{align}
    \left\{\pi, \pi \circ (12)\right\},
\end{align}
each related by composition with the transposition $(1 2)$.
Within each pair, the two permutations have opposite parity
\begin{align}
    \mathrm{sgn}(\pi \circ (12)) = -\mathrm{sgn}(\pi),
\end{align}
yet act identically on $\ket{i_{1}\, i_{1}\, i_{3}}$, since the transposition $(1 2)$ leaves the state unchanged.
The contributions therefore, cancel pairwise in the signed sum, giving
\begin{align}
    \mathrm{AntiSym}_{3}\,\ket{i_{1}\, i_{1}\, i_{3}}
  \;=\; \frac{1}{6}\sum_{\pi \in \mathfrak{S}_{3}} \mathrm{sgn}(\pi)\, P_{\pi}\ket{i_{1}\, i_{1}\, i_{3}}
  \;=\; 0.
\end{align}
Because the computational basis spans $(\mathbb{C}^{2})^{\otimes 3}$, it follows that
\begin{align}
    \mathrm{AntiSym}_{3}=0.
\end{align}
The fully antisymmetric subspace is therefore trivial. 
In diagrammatic terms, this is precisely the Pauli: no column may exceed height $d$.

Finally, we turn to the analysis of the multiplicity associated with each sector labelled by a Young diagram $\lambda$. 
Rather than beginning with a formal count, it is more instructive to first ask what, precisely, is being counted. 
The answer turns out to depend qualitatively on the symmetry of the sector. 
This distinction provides the physical basis for introducing a new quantity, the multiplicity, which we now define.

In the case of $\lambda=(3)$, the spin-$3/2$ multiplet is generated from $\ket{{\uparrow}{\uparrow}{\uparrow}}$ the unique configuration with all three qubits aligned. 
The construction is entirely insensitive to particle labels: each qubit contributes identically, and no distinction between particles is retained. 
There exists only a single way to realize a fully symmetric multiplet, and it exhausts the $\lambda=(3)$ sector; that is
\begin{align}\label{eq:d_3}
    d_{\;\vcenter{\hbox{\scalebox{0.4}{\ydiagram{3}}}}} = d_{(3)} = 1.
\end{align}
Particle identity is erased.

In contrast, the spin-1/2 multiplet constructed in $\lambda=(2, 1)$ requires a choice. 
Pairing qubits $(1,2)$ into a singlet and assigning the residual spin to qubit $3$ is only one possibility. 
Equivalent constructions arise by pairing $(1,3)$ or $(2,3)$ instead. Although each yields a spin-$1/2$ multiplet with identical $\hat{J}^2$ and $\hat{J}_z$ structure, their basis states are distinct superpositions.
\begin{align}
  \ket{A} \;&=\; \tfrac{1}{\sqrt{2}}(\ket{{\uparrow}{\downarrow}{\uparrow}} - \ket{{\downarrow}{\uparrow}{\uparrow}}),
  \qquad\text{(pair $(1,2)$ singlet, particle $3$ up)},\label{eq:spin_1_2_A}\\
  \ket{B} \;&=\; \tfrac{1}{\sqrt{2}}(\ket{{\uparrow}{\uparrow}{\downarrow}} - \ket{{\uparrow}{\downarrow}{\uparrow}}),
  \qquad\text{(pair $(2,3)$ singlet, particle $1$ up)},\label{eq:spin_1_2_B}\\
  \ket{C} \;&=\; \tfrac{1}{\sqrt{2}}(\ket{{\uparrow}{\uparrow}{\downarrow}} - \ket{{\downarrow}{\uparrow}{\uparrow}}).
  \qquad\text{(pair $(1,3)$ singlet, particle $2$ up)},\label{eq:spin_1_2_C}
\end{align}
At first glance, this suggests three independent multiplets. 
This intuition is misleading. 
A direct calculation reveals that only two are linearly independent. 
Writing the highest-weight states explicitly and comparing them shows that one is always a linear combination of the others. 
\begin{align}
  \ket{A} + \ket{B}
  \;=\; \tfrac{1}{\sqrt{2}}\bigl(\ket{{\uparrow}{\downarrow}{\uparrow}} - \ket{{\downarrow}{\uparrow}{\uparrow}} + \ket{{\uparrow}{\uparrow}{\downarrow}} - \ket{{\uparrow}{\downarrow}{\uparrow}}\bigr)
  \;=\; \tfrac{1}{\sqrt{2}}\bigl(\ket{{\uparrow}{\uparrow}{\downarrow}} - \ket{{\downarrow}{\uparrow}{\uparrow}}\bigr)
  \;=\; \ket{C}.
\end{align}
Thus, the three constructions span a two-dimensional subspace.
The implication is precise: in the mixed-symmetry sector $\lambda=(2, 1)$, the choice of singlet pair constitutes a genuine degree of freedom --- but only two such choices are independent. 
The number of independent realisations defines the multiplicity of the sector, 
\begin{align}\label{eq:d_2_1}
    d_{\;\vcenter{\hbox{\scalebox{0.4}{\ydiagram{2,1}}}}} = d_{(2,1)} = 2.
\end{align}

The fully symmetric, i.e., $\lambda=(3)$, sector contains a single multiplet, with no internal freedom to generate another. 
By contrast, the mixed-symmetry sector, i.e., $\lambda=(2, 1)$, consists of several formally identical multiplets, distinguished not by their total spin $j$ or magnetic quantum number $m$, but by their internal coupling pattern. 
To specify a state uniquely, one therefore needs an additional label, beyond $j$ and $m$, that identifies which copy of the multiplet one is in. 
This extra index is the multiplicity label, and the number of values it can take is the multiplicity, denoted
\begin{align}
    d_{\lambda} \;\coloneqq\; 
    \text{number of linearly independent multiplets of type $\lambda$ inside $(\mathbb{C}^{d})^{\otimes n}$}
\end{align}
In the Schur-Weyl decomposition of Eq.~\eqref{eq:SW_decomp}, this freedom is encoded precisely in $V_\lambda \otimes W_\lambda$.
The space $W_\lambda$ carries the multiplicity: its dimension is $d_\lambda$, and each of its basis vectors selects a distinct copy of the multiplet.

Let $m_\lambda$ denote the dimension of a single multiplet of type $\lambda$. 
For qubits, this is simply the spin-multiplet size $2j+1$. 
Thus,
\begin{align}\label{eq:m_3}
    m_{\;\vcenter{\hbox{\scalebox{0.4}{\ydiagram{3}}}}} = m_{(3)} = 4.
\end{align}
for the spin-3/2 quadruplet of case $\lambda=(3)$, as demonstrated in Eq.~\eqref{eq:Young_Diagram_3}. 
Meanwhile, 
\begin{align}\label{eq:m_2_1}
    m_{\;\vcenter{\hbox{\scalebox{0.4}{\ydiagram{2,1}}}}} = m_{(2,1)} = 2.
\end{align}
for the spin-1/2 doublet of case $\lambda=(2,1)$.
Here, $d_\lambda$ counts how many independent copies of that multiplet occur in the corresponding sector. 
The total dimension of the $\lambda$ sector $V_\lambda \otimes W_\lambda$ is therefore
\begin{align}
    \dim(\lambda\text{-sector}) \;=\; m_{\lambda}\, d_{\lambda}.
\end{align}
Summing over all allowed Young diagrams $\lambda$ must recover the full Hilbert-space dimension $d^{n}$:
\begin{align}\label{eq:dim_sum}
  \sum_{\lambda \vdash n,\; \ell(\lambda) \leqslant d} m_{\lambda}\, d_{\lambda} \;=\; d^{n}.
\end{align}
For the running example $n=3$ and $d=2$, we have $m_{(3)}=4$ and $m_{(2,1)}=2$, while the fully antisymmetric sector $\lambda=(1,1,1)$ is absent. Equation~\eqref{eq:dim_sum} therefore becomes
\begin{align}
    \underbrace{4\,d_{(3)}}_{\text{spin-$3/2$ sector}} \;+\; \underbrace{2\,d_{(2,1)}}_{\text{spin-$1/2$ sector}} \;+\; 0 \;=\; 8 \;=\; 2^{3}.
\end{align}
Since constructions $\lambda=(3)$ and $\lambda=(2,1)$ explicitly realize at least one multiplet in each sector, we must have $d_{(3)}\geqslant 1$ and $d_{(2,1)}\geqslant 1$. 
The only non-negative integer solution is $d_{(3)}=1$ and $d_{(2,1)}=2$.
This matches the physical picture exactly: 
one fully symmetric multiplet, and two independent mixed-symmetry multiplets distinguished by their internal coupling structure.

The argument above determines $d_{(3)}$ and $d_{(2,1)}$ for the special case $n = 3$ by combining the dimension constraint Eq.~\eqref{eq:dim_sum} with the lower bounds $d_{(3)} \geqslant 1$ and $d_{(2,1)} \geqslant 1$, established through explicit constructions of at least one multiplet in each sector. 
For general $n$ (still with $d = 2$), however, a closed-form expression for $d_{\lambda}$ is preferable --- one that avoids the need to construct multiplets sector by sector.
Such an expression can be obtained by counting computational basis states at a fixed total $\hat{J}_{z}$ eigenvalue $m$ in two complementary ways: first in the computational basis, where the counting is elementary, and then in the spin-multiplet basis, where the multiplicities $d_{\lambda}$ naturally appear. 
For qubits, namely $d = 2$, the Schur-Weyl label $\lambda = (\lambda_{1}, \lambda_{2})$, with $\lambda_{1} \geqslant \lambda_{2}$ and $\lambda_{1} + \lambda_{2} = n$, reduces to the spin quantum number
\begin{align}\label{eq:j_from_lambda_qubit}
    j = \frac{\lambda_{1} - \lambda_{2}}{2},
\end{align}
and we therefore write $d_{j}$ in place of $d_{\lambda}$ throughout.

A computational basis string $\ket{b_{1}\, b_{2}\, \cdots\, b_{n}}$, with $b_{k} \in \{0,1\}$, has total $\hat{J}_{z}$ eigenvalue
\begin{align}
    m \;=\; \frac{\#\{k : b_{k} = 0\} - \#\{k : b_{k} = 1\}}{2} \;=\; \frac{n - 2r}{2},
\end{align}
where
\begin{align}
    r \coloneqq \#\{k : b_{k} = 1\},
\end{align}
is the number of down-spins. 
Fixing $m$ is therefore equivalent to fixing $r = n/2 - m$, and the number of such strings is
\begin{align}\label{eq:dj_lhs}
  \#\{\text{computational basis strings with $\hat{J}_{z} = m$}\} \;=\; \binom{n}{r} \;=\; \binom{n}{\frac{n}{2} - m}.
\end{align}
In the spin multiplet basis, the same eigenspace is organised by total spin $j$. 
A sector contributes to the $m$-eigenspace if and only if $|m| \leqslant j$. 
Whenever this condition holds, each of the $d_{j}$ copies of the spin-$j$ multiplet contributes exactly one state with magnetic label $m$. 
Summing over all such sectors gives
\begin{align}\label{eq:dj_column_sum}
  \#\{\text{spin-multiplet basis states with $\hat{J}_{z} = m$}\} \;=\; \sum_{j \,\geqslant\, |m|} d_{j}.
\end{align}
These two counts must agree, as they enumerate the same $\hat{J}_{z}$-eigenspace in two different orthonormal bases related by a unitary transformation.
Equating Eqs.~\eqref{eq:dj_lhs} and \eqref{eq:dj_column_sum} yields the column-sum identity on the multiplicities,
\begin{align}\label{eq:dj_column_identity}
  \binom{n}{\frac{n}{2} - m} \;=\; \sum_{j \,\geqslant\, |m|} d_{j},
  \quad \forall\, m.
\end{align}

Evaluating Eq.~\eqref{eq:dj_column_identity} at $m = j$ and at $m = j + 1$ (assuming $j + 1 \leqslant n/2$ so that both are admissible) gives the adjacent identities
\begin{align}
    \binom{n}{\frac{n}{2} - j} \;=\; d_{j} + d_{j+1} + d_{j+2} + \cdots,
\end{align}
and
\begin{align}
    \binom{n}{\frac{n}{2} - j - 1} \;=\; d_{j+1} + d_{j+2} + \cdots,
\end{align}
whose difference isolates $d_{j}$
\begin{align}\label{eq:dj_formula}
  d_{j} \;=\; \binom{n}{\frac{n}{2} - j} \;-\; \binom{n}{\frac{n}{2} - j - 1},
\end{align}
with the convention $\binom{n}{k} \coloneqq 0$ for $k < 0$ or $k > n$, covering the edge cases $j = 0$ and $j = n/2$.
This expression is the $d=2$ instance of the general hook-length formula, here derived by elementary counting.

For $n = 3$, i.e., $(\mathbb{C}^{d})^{\otimes 3}$, Eq.~\eqref{eq:dj_formula} gives
\begin{align}
    d_{3/2} \;=\; \binom{3}{0} - \binom{3}{-1} \;=\; 1 - 0 \;=\; 1,
  \qquad
  d_{1/2} \;=\; \binom{3}{1} - \binom{3}{0} \;=\; 3 - 1 \;=\; 2,
\end{align}
in agreement with the values obtained earlier from explicit constructions and the dimension constraint. 
The closed form in Eq.~\eqref{eq:dj_formula} thus reproduces the result without recourse to any explicit multiplet construction, and directly yields the decomposition
\begin{align}
  (\mathbb{C}^{2})^{\otimes 3}
  \;=\; 
  \underbrace{V_{\;\vcenter{\hbox{\scalebox{0.4}{\ydiagram{3}}}}}}_{\dim = 4} 
  \otimes \underbrace{W_{\;\vcenter{\hbox{\scalebox{0.4}{\ydiagram{3}}}}}}_{\dim = 1}
  \;\oplus\; 
  \underbrace{V_{\;\vcenter{\hbox{\scalebox{0.4}{\ydiagram{2,1}}}}}}_{\dim = 2} 
  \otimes 
  \underbrace{W_{\;\vcenter{\hbox{\scalebox{0.4}{\ydiagram{2,1}}}}}}_{\dim = 2}.
\end{align}
We summarise the discussion developed thus far for $(\mathbb{C}^{2})^{\otimes 3}$ in Tab.~\ref{tab:multiplicity_n_3}.

\begin{table}[htbp]
\centering
\begin{tblr}{
  colspec = {c || c | c | c | c | l},
  row{1} = {bg=gray!50, font=\bfseries}, 
  row{2} = {bg=gray!50, font=\bfseries},
  column{1} = {bg=gray!10, font=\bfseries},
  hlines,
  vlines,
  cells = {m, c},
  row{3} = {bg=mOrange},
  row{4} = {bg=mBlue},
}
  \SetCell[c=6]{c} Decomposition of $(\mathbb{C}^{2})^{\otimes 3}$ & & & & & \\
  $j$ & $\binom{n}{n/2-j}$ & $\binom{n}{n/2-j-1}$ & $d_j$ & $m_j$ & Sector Dimension $m_j d_j$ \\
  3/2 & $\binom{3}{0} = 1$ & $\binom{3}{-1} = 0$ & 1 & 4 & $4 \cdot 1 = 4$ \\
  1/2 & $\binom{3}{1} = 3$ & $\binom{3}{0} = 1$ & 2 & 2 & $2 \cdot 2 = 4$ \\
  \SetCell[c=5]{r} Total & & & & & $8 = 2^3$ \\
\end{tblr}
\caption{
\textbf{Schur-Weyl Decomposition of $(\mathbb{C}^{2})^{\otimes 3}$}. 
Each row corresponds to an allowed total-spin value $j$ for three spin-1/2 particles. 
Columns two and three evaluate the binomial coefficients appearing in Eq.~\eqref{eq:dj_formula}, whose difference yields the multiplicity $d_j$ (column four). 
Column five lists the dimension of the irrep $m_j = 2j+1$.
The final column reports the sector dimension $m_{j} d_{j} = (2j+1)\,d_{j}$. 
The two sectors together account for the full Hilbert space, namely $\dim \bigl((\mathbb{C}^{2})^{\otimes 3}\bigr) = 8$, consistent with the Schur-Weyl decomposition of Eq.~\eqref{eq:SW_decomp}.}
\label{tab:multiplicity_n_3}
\end{table}

Building on the results established above, we extend the analysis to the case $n = 4$, namely $(\mathbb{C}^{2})^{\otimes 4}$. 
The corresponding sector structure and multiplicities are provided in Tab.~\ref{tab:multiplicity_n_4}.

\begin{table}[htbp]
\centering
\begin{tblr}{
  colspec = {c || c | c | c | c | l},
  row{1} = {bg=gray!50, font=\bfseries}, 
  row{2} = {bg=gray!50, font=\bfseries},
  column{1} = {bg=gray!10, font=\bfseries},
  hlines,
  vlines,
  cells = {m, c},
  row{3} = {bg=mOrange},
  row{4} = {bg=mBlue},
  row{5} = {bg=mPurple}
}
  \SetCell[c=6]{c} Decomposition of $(\mathbb{C}^{2})^{\otimes 4}$ & & & & & \\
  $j$ & $\binom{n}{n/2-j}$ & $\binom{n}{n/2-j-1}$ & $d_j$ & $m_j$ & Sector Dimension $m_j d_j$ \\
  2 & $\binom{4}{0} = 1$ & $\binom{4}{-1} = 0$ & 1 & 5 & $5 \cdot 1 = 5$ \\
  1 & $\binom{4}{1} = 4$ & $\binom{4}{0} = 1$ & 3 & 3 & $3 \cdot 3 = 9$ \\
  0 & $\binom{4}{2} = 6$ & $\binom{4}{1} = 4$ & 2 & 1 & $1 \cdot 2 = 2$ \\
  \SetCell[c=5]{r} Total & & & & & $16 = 2^4$ \\
\end{tblr}
\caption{\textbf{Schur-Weyl Decomposition of $(\mathbb{C}^{2})^{\otimes 4}$}. 
Each row labels an admissible total-spin value $j$ for four spin-$\tfrac{1}{2}$ particles.
Columns two and three list the binomial coefficients entering Eq.~\eqref{eq:dj_formula}; their difference gives the multiplicity $d_{j}$ shown in column four.
Column five records the dimension of the corresponding irreducible representation, $m_{j} = 2j + 1$, while the final column gives the sector dimension $m_{j} d_{j} = (2j+1)\,d_{j}$.
The three sectors together account for the full Hilbert space, $\dim\bigl((\mathbb{C}^{2})^{\otimes 4}\bigr) = 16$, in agreement with the Schur-Weyl decomposition of Eq.~\eqref{eq:SW_decomp}.}
\label{tab:multiplicity_n_4}
\end{table}

With this structural understanding of the Schur-Weyl decomposition of tensor-product systems in place, we are now equipped to proceed further.
Nevertheless, as highlighted in Subsec.~\ref{subsec:SDP_Symmetry}, a direct implementation remains computationally demanding.
To overcome this limitation, we introduce an alternative approach --- the Clebsch-Gordan recursion --- which builds the decomposition iteratively.
The key idea is to adopt a physically transparent viewpoint: rather than decomposing $(\mathbb{C}^{2})^{\otimes n}$ in a single step, we examine how the total angular momentum evolves as qubits are added one at a time. 
This incremental perspective not only reduces computational complexity but also reveals the underlying structure in a particularly transparent way.

To circumvent the exponential overhead of manipulating $\rho^{\otimes n}$ --- whether as a $d^{n+1}\times d^{n+1}$ matrix or via the $\mO(n!\cdot d^{2n})$ character-formula construction of Subsec.~\ref{subsec:SDP_Symmetry} --- we construct the $n$-qubit spin structure sequentially, adding one qubit at a time. 
The procedure follows the ladder
\begin{align}\label{eq:tensor_ladder}
    \rho \;\longrightarrow\; \rho^{\otimes 2} \;\longrightarrow\; \rho^{\otimes 3} \;\longrightarrow\; \cdots \;\longrightarrow\; \rho^{\otimes n},
\end{align}
reusing the $(k-1)$-qubit construction to generate the $k$-qubit object, and thereby never invoking the full $2^{n}$-dimensional space.

The construction is separated into three parts. 
We begin by fixing the two bases that define the representation. 
We then turn to the multiplicity index $\alpha$ in the coupled basis, showing that, despite its formal presence, it is operationally inert in the final SDP, a point that is essential for the complexity reduction. 
The essential gain then becomes immediate: in the coupled basis, permutation symmetry is resolved into explicit block structure, recasting an exponential problem into one governed by symmetry-reduced sectors.

A key observation is that $(\mathbb{C}^{2})^{\otimes n}$ admits two natural bases.
The computational basis $\ket{b_{1}\, b_{2}\, \ldots\, b_{n}}$, with $b_{k} \in \{0, 1\}$, resolves the $\hat{S}_{z}$ eigenvalue of each qubit individually.
By contrast, the coupled basis $\ket{j, m, \alpha}_{n}$, organizes states in terms of quantum numbers: 
the total spin $j$, magnetic quantum number $m$, and a multiplicity label $\alpha \in \{1, \ldots, d_{j}\}$ distinguishing the $d_{j}$ orthogonal states of the spin-$j$ multiplet (Eq.~\eqref{eq:dj_formula}).
These two descriptions are connected by a unitary transformation whose matrix elements are the Clebsch-Gordan coefficients. 
For $n = 2$, where $d_{j} = 1$ in both sectors, this transformation reduces to the familiar singlet-triplet decomposition:
\begin{align}\label{eq:n2_change_of_basis}
    \ket{1, +1}_{2} 
    &\coloneqq 
    \ket{1, +1, 1}_{2}
    =
    \ket{00}, \\
    \ket{1, 0}_{2} 
    &\coloneqq 
    \ket{1, 0, 1}_{2}
    =
    \frac{1}{\sqrt{2}}\left(\ket{01} + \ket{10}\right), \\
    \ket{1, -1}_{2} 
    &\coloneqq
    \ket{1, -1, 1}_{2}
    =
    \ket{11}, \\
    \ket{0, 0}_{2} 
    &\coloneqq
    \ket{0, 0, 1}_{2}
    =
    \frac{1}{\sqrt{2}}\left(\ket{01} - \ket{10}\right),
\end{align}
with $\pm1/\sqrt{2}$ are the first Clebsch-Gordan coefficients to arise.
The total magnetic quantum number $m$ is immediately read off from a computational basis vector: it is half the difference between the number of $0$'s and $1$'s in the bit string. 
The total spin $j$, however, is not directly accessible in this representation. 
A generic computational basis state is typically a superposition of components with different total spin values.
This is already evident for $n=2$,
\begin{align}\label{eq:uncoupled_is_mixture}
    \ket{01} \;=\; \frac{1}{\sqrt{2}}\;\underbrace{\ket{1,\, 0}_{2}}_{j = 1}
     \;+\; \frac{1}{\sqrt{2}}\;\underbrace{\ket{0,\, 0}_{2}}_{j = 0},
\end{align}
showing that $\ket{01}$ decomposes into equal contributions from the triplet and singlet sectors.
Thus, while specifying the individual spin orientations fixes $m=0$, it does not determine whether the state belongs to the symmetric triplet or the antisymmetric singlet --- two distinct $\hat{J}^{2}$ eigenspaces sharing the same $m$. 
The Clebsch-Gordan transformation resolves this ambiguity: it reorganises computational basis states into superpositions that diagonalize $\hat{J}^{2}$, thereby isolating components of total spin.

The multiplicity index $\alpha$ encodes the residual freedom that remains once the total spin quantum numbers are fixed. 
As noted in this subsection, a given $j$-sector may contain multiple orthogonal copies of the same irreducible multiplet. 
The $\lambda=(2,1)$ sector for three qubits provides the simplest instance: it hosts $d_{\;\vcenter{\hbox{\scalebox{0.4}{\ydiagram{2,1}}}}}=2$ copies (see Eq.~\eqref{eq:d_2_1}) of the spin-1/2 doublet. 
These copies are degenerate under all rotationally invariant observables --- they share identical $(\hat{J}^{2},\hat{J}_{z})$ eigenvalues --- and differ only in their internal coupling structure, namely how the constituent qubits are paired into singlets and assembled into total spin.
This freedom is formalised by the multiplicity space $W_{j}$ in Eq.~\eqref{eq:SW_decomp}, a $d_{j}$-dimensional space that parametrizes all inequivalent realizations of spin $j$. 
The index $\alpha\in\{1,\ldots,d_{j}\}$ selects an orthonormal basis vector in $W_{j}$, thereby specifying a concrete copy of the multiplet.
For instance, the states $\ket{A}$ and $\ket{B}$ defined in Eqs.~\eqref{eq:spin_1_2_A} and \eqref{eq:spin_1_2_B} provide two distinct realisations of the same multiplet; their structure can be visualised equivalently through the corresponding Young tableaux.
\begin{align}
    \begin{ytableau} 
        1 & 2 \\ 3 
    \end{ytableau}\;,
    \quad\text{and}\quad
    \begin{ytableau} 
        1 & 3 \\ 2 
    \end{ytableau}\;.
\end{align}

Recall that for $n$ copies of an initial state $\rho$, i.e., $\rho^{\otimes n}$, Schur-Weyl duality yields the decomposition in Eq.~\eqref{eq:rho_n_decomp}, with each irreducible sector $\lambda$ supporting a block $\rho_{\lambda}$. 
In the qubit case, $\lambda$ is in one-to-one correspondence with the total spin $j$, and we use the two labels interchangeably. 
We therefore denote by $\rho^{(n)}_{j}$ the block associated with total spin $j$ in the $n$-copy decomposition. 
\begin{align}\label{eq:rho_n_to_rho_j}
    \rho^{\otimes n}\longrightarrow
    \rho^{(n)}_{j}.
\end{align}
This notation anticipates the recursive construction that follows, in which $\rho^{(n)}_{j}$ is built directly from the $(n-1)$-copy structure.

The matrix dimension is set solely by the total spin $j$: 
each block $V_{j}\equiv V_{\lambda}$ has size $m_{j}\times m_{j}$ with $m_{j}=2j+1$, independent of the iteration step $n$. 
Once $j$ is fixed, the block retains the same structure throughout the construction. 
By contrast, the set of admissible $j$ values evolves with $n$: after coupling $n$ qubits, the allowed total spin sectors are given by
\begin{align}\label{eq:allowed_j_at_level_k}
    j \;\in\;
    \begin{cases}
        \{\,0,\; 1,\; \ldots,\; \frac{n}{2}\,\}, & n \text{ even}, \\[0.2em]
        \{\,\tfrac{1}{2},\; \frac{3}{2},\; \ldots,\; \frac{n}{2}\,\}, & n \text{ odd}.
    \end{cases}
\end{align}
Iterating this construction across all coupling steps yields
\begin{align}\label{eq:ladder_coupled}
    \{\rho\}
    \;\longrightarrow\;
    \bigl\{\rho_{0}^{(2)},\, \rho_{1}^{(2)}\bigr\}
    \;\longrightarrow\;
    \bigl\{\rho_{\frac{1}{2}}^{(3)},\, \rho_{\frac{3}{2}}^{(3)}\bigr\}
    \;\longrightarrow\; \cdots \;\longrightarrow\;
    \bigl\{\rho_{j}^{(n)}\bigr\}_{j}.
\end{align}
The full $2^{n}$-dimensional object never appears. The construction is instead driven by a local recursion that advances the iteration one qubit at a time. At each step, the previously obtained coupled-basis blocks $\{\rho_{j}^{(n-1)}\}_{j}$ are updated by coupling in a single qubit described by the $2\times 2$ state $\rho$ in the computational basis. 
The task is therefore to determine an update rule that produces the new block $\rho_{j'}^{(n)}$ from these ingredients. 
Schematically,
\begin{align}\label{eq:recursion_expected}
    \underbrace{\rho_{j'}^{(n)}}_{\text{coupled block after $n$ qubits}}
    \;=\;
    \mR\!\left[\,
    \underbrace{\bigl\{\rho_{j}^{(n - 1)}\bigr\}_{j}}_{\text{coupled blocks after $n - 1$ qubits}},\;\;
    \underbrace{\rho}_{\text{new qubit (comp.\ basis)}}
    \,\right].
\end{align}
The map $\mR$ will be developed explicitly soon.

By the definition of $\rho_{j'}^{(n)}$, the block recursion is inherited from a simpler recursion on the coupled-basis vectors themselves: each $n$-qubit vector $\ket{j', m', \alpha'}_{n}$ must be expressible as a linear combination of already-known $(n-1)$-qubit coupled basis vectors tensored with the new qubit's computational basis states, and substituting any such expansion back yields $\mR$. 
Pinning down such an expansion requires two pieces of data --- which $(n-1)$-qubit vectors actually appear on the right-hand side, and with what coefficients. 
These reduce to the two physical questions below:
\begin{enumerate}
    \item[(i)] \textbf{Admissible total spin.}
    Which terms appear? 
    Upon appending the $n$-th qubit to an $(n-1)$-qubit state expressed in the coupled basis, the total spin is constrained by angular momentum addition. 
    The problem reduces to identifying the admissible values of $j'$ that can arise from coupling the existing total spin with the added qubit.
    \item[(ii)] \textbf{Clebsch-Gordan coefficients.}
    With what coefficients? 
    How is each newly formed coupled-basis state $\ket{j', m', \alpha'}_{n}$ decomposed into a linear combination of the known $(n-1)$-qubit basis states tensored with the appended qubit, and what coefficients govern this expansion?
\end{enumerate}
Both are questions of angular-momentum composition rather than matrix algebra, and together they fix the structure of the recursion. 
The first is resolved by the branching rule derived below, while the second is addressed in the subsequent analysis of the resulting expansion coefficients.

Introducing the $n$-th qubit into an already-coupled $(n-1)$-qubit system of total spin $j$ is, physically, adding a new angular momentum vector of magnitude 1/2 to an existing one of magnitude $j$.
The combined magnitude can take only two values: $j + 1/2$ (the new qubit lengthens the total-spin vector), or $j - 1/2$ (the new qubit locks into a singlet with one of the existing spin components, shortening it).
This dichotomy is not a continuous interpolation: angular-momentum addition in $\mathrm{SU}(2)$ is discrete, and the tensor product $j \otimes 1/2$ decomposes into precisely two irreducible representations, $j + 1/2$ and $j - 1/2$, with nothing in between.
The triangle inequality $|j_{1} - j_{2}| \leqslant J \leqslant j_{1} + j_{2}$ leaves only these two values when $j_{2} = 1/2$.
In symbols, the two outcomes are (with the boundary case $j = 0$ keeping only the aligned branch, since total spin cannot be negative: the branching requires $j \geqslant 1/2$ for both branches to exist):
\begin{align}\label{eq:branching_values}
    j' \;\in\; \left\{\,j + \frac{1}{2},\;\; j - \frac{1}{2}\,\right\}.
\end{align}
The physical picture delivers these two values but does not rule out a third; a $\hat{J}_{z}$-eigenvalue count, carried out immediately below, makes the dichotomy rigorous. Before running the count, we state the underlying Hilbert-space identity: Eq.~\eqref{eq:branching_values} is the basis-vector statement of the representation-theoretic decomposition
\begin{align}\label{eq:branching_rule}
    j \;\otimes\; \frac{1}{2}
    \;=\; \left(j + \frac{1}{2}\right) \;\oplus\; \left(j - \frac{1}{2}\right),
\end{align}
valid for $j \geqslant 1/2$ (the edge case $j = 0$ gives only the $+$ branch). On the left, $j \otimes 1/2$ is the tensor product $V_{j} \otimes V_{1/2}$ of the spin-$j$ multiplet with the single-qubit space $V_{1/2} = \mathbb{C}^{2}$, of dimension $(2j + 1) \times 2 = 4j + 2$, with product-state basis $\{\ket{j, m} \otimes \ket{s}\}_{m, s}$. On the right, $(j + 1/2) \oplus (j - 1/2)$ is the orthogonal direct sum $V_{j + 1/2} \oplus V_{j - 1/2}$ of dimension 
\begin{align}
    \left(2j + 2\right) + 2j = 4j + 2.
\end{align}
Each summand carries a sharp total spin, whereas the left-hand product vectors generally do not (see Eq.~\eqref{eq:uncoupled_is_mixture}). 
The two sides are the same Hilbert space in two bases, with the change-of-basis coefficients, i.e., the Clebsch-Gordan coefficients, to be computed explicitly soon.
A product vector $\ket{j, m} \otimes \ket{s}$ carries total $\hat{J}_{z}$-eigenvalue 
\begin{align}
    M = m + \left(\frac{1}{2} - s\right),
\end{align}
so enumerating all $4j + 2$ product vectors gives the following Tab.~\ref{tab:Magnetic_Quantum_Number}
\begin{table}[htbp]
\centering
\begin{tblr}{
  colspec = {c || c | c | c | c | c},
  row{1} = {bg=gray!50, font=\bfseries},
  row{2} = {bg=gray!50, font=\bfseries},
  column{1} = {bg=gray!10, font=\bfseries},
  hlines,
  vlines,
  cells = {m, c},
}
  \SetCell[c=6]{c} Magnetic Quantum Number & & & & & \\
  $M$ & $+\left(j + \frac{1}{2}\right)$ & $+\left(j - \frac{1}{2}\right)$ & $\cdots$ & $-\left(j - \frac{1}{2}\right)$ & $-\left(j + \frac{1}{2}\right)$ \\
  \# product vectors & 1 & 2 & $\cdots$ & 2 & 1 \\
\end{tblr}
\caption{\textbf{Product Vector Distribution}. The top row shows the total magnetic quantum number $M$, and the bottom row indicates the corresponding number of product vectors.}
\label{tab:Magnetic_Quantum_Number}
\end{table}

The two extremes reached by a single product vector each, every interior $M$ is reached by two. A spin-$J$ multiplet $V_{J}$ contributes one state at every $M \in [-J, +J]$; matching this count uniquely requires one $V_{j + 1/2}$ (covering every $M$) plus one $V_{j - 1/2}$ (covering the interior only), and admits nothing else. The same conclusion follows from the triangle inequality $|j_{1} - j_{2}| \leqslant J \leqslant j_{1} + j_{2}$: substituting $j_{1} = j, j_{2} = 1/2$ gives $J \in \{j - 1/2,\, j + 1/2\}$, a window of width $1$. Both arguments rely on the added factor being spin-1/2: for a larger local dimension $d > 2$, the recursion would open up into $d$ simultaneous branches at every step, and the coupled basis would no longer admit a two-child recursive construction.

Iterating the two-branch step from $n = 1$ ($j = 1/2$) up to the target $n$ enumerates all $d_{j}$ spin-$j$ multiplets: every sequence of choices (each step picking $+1/2$ or $-1/2$) that terminates at the desired total spin labels one orthogonal copy, and the number of such choice-sequences equals $d_{j}$.
For $n = 3$, we obtain Tab.~\ref{tab:spin_addition_paths}.

\begin{table}[htbp]
\centering
\begin{tblr}{
  colspec = {c || c | l},
  row{1} = {bg=gray!50, font=\bfseries}, 
  row{2} = {bg=gray!50, font=\bfseries},
  column{1} = {bg=gray!10, font=\bfseries},
  hlines,
  vlines,
  cells = {m, l}, 
  row{3} = {bg=mOrange}, 
  row{4,5} = {bg=mBlue}, 
  row{6,7} = {bg=mPurple}, 
}
  \SetCell[c=3]{c} Recursive Addition of Spin-$1/2$ Particles & & \\
  $\mathbf{n}$ & Total Spin $\mathbf{j}$ & Derivation \\
  1 & $j = 1/2$ & Initial state \\
  2 & $j = 1$   & via $+$ from $j=1/2$ \\
    & $j = 0$   & via $-$ from $j=1/2$ \\
  3 & $j = 3/2$ & via $+$ from $j=1$ \\
    & $j = 1/2$ & via $-$ from $j=1$, or via $+$ from $j=0$ \\
\end{tblr}
\caption{\textbf{Angular Momentum Addition Path}. This table tracks the evolution of total angular momentum $j$ as additional spin-$1/2$ particles are coupled to the system.}
\label{tab:spin_addition_paths}
\end{table}

The two paths reaching $j = 1/2$ at $n = 3$ are the $d_{1/2} = 2$ orthogonal multiplets of the sector, distinguished by whether qubits $1, 2$ were coupled into a triplet and reduced by qubit $3$, or coupled into a singlet and extended by qubit $3$.
The same branching tree appears in atomic physics when constructing LS-coupled multielectron states by adding one electron at a time; the expansion coefficients are the same $\mathrm{SU}(2)$ Clebsch-Gordan coefficients we derive below. For atomic electrons, Pauli exclusion further prunes the tree --- an additional constraint absent for the distinguishable qubit copies considered here.

Question (i) stated earlier is now answered: 
only two parent spins $j' \pm 1/2$ contribute. 
We turn to (ii), the form of the linear combination itself. Fix a target $n$-qubit coupled-basis vector $\ket{j', m'}_{n}$ and a choice of parent spin $j \in \{j' - 1/2,\, j' + 1/2\}$.

A conservation law narrows the candidates further. Because $\hat{J}_{z}$ is additive across the coupling step, the magnetic label $m'$ of the $n$-qubit state must equal the sum of the magnetic label $m$ of the $(n-1)$-qubit state and the $n$-th qubit's $\hat{S}_{z}^{(n)}$-eigenvalue. 
Writing $s \in \{0, 1\}$ for the $n$-th qubit's computational basis label, i.e.,
\begin{align}
    \hat{S}_{z}^{(n)}\ket{s}_{n} = \left(\frac{1}{2} - s\right)\ket{s}_{n},
\end{align}
the conservation reads
\begin{align}\label{eq:Jz_conservation}
    m' \;=\; m + \frac{1}{2} - s
    \qquad\Longleftrightarrow\qquad
    m \;=\; m' - \frac{1}{2} + s.
\end{align}
For each of the two qubit states $s = 0, 1$, the parent's magnetic label $m$ is thus pinned to a single value, namely $m' - 1/2$ and $m' + 1/2$ respectively.

The $n$-qubit coupled-basis vector is therefore pinned to be a linear combination of at most two product states:
\begin{align}\label{eq:recursion_form_preview}
    \ket{j', m'}_{n}
    \;=\;
    c_{0}\,\ket{j,\, m' - \tfrac{1}{2}}_{n - 1} \otimes \ket{0}_{n}
    \;+\;
    c_{1}\,\ket{j,\, m' + \tfrac{1}{2}}_{n - 1} \otimes \ket{1}_{n}.
\end{align}
This is the shape of the recursion promised at the start of the subsubsection: the $n$-qubit basis vector is written out in terms of two $(n-1)$-qubit basis vectors, each extended by one of the two computational basis states of the added qubit, with weights $c_{0}$ and $c_{1}$ to be determined. Nothing more appears on the right-hand side; the parent spin $j$ is chosen once, and the multiplicity index $\alpha$ (when $d_{j} > 1$) propagates unchanged through the coupling and is therefore suppressed.

The two weights are inner products, namely overlaps, of the left-hand side with each of the two product states on the right,
\begin{align}\label{eq:recursion_overlap}
    c_{s}
    \;=\;
    \bigl(\bra{j,\, m' - \tfrac{1}{2} + s}_{n - 1} \otimes \bra{s}_{n}\bigr)\,\ket{j', m'}_{n},
\end{align}
and they carry a clean physical reading: $c_{s}$ is the amplitude for the process that takes the parent state $\ket{j,\, m' - 1/2 + s}_{n - 1}$, appends the $n$-th qubit in $\ket{s}_{n}$, and projects onto the $n$-qubit total-spin sector $(j', m')$. 
These two numbers are the only non-trivial data introduced at the $n$-th recursion step; everything else is linear bookkeeping on top of them.

The overlaps $c_{s}$ are the standard Clebsch-Gordan coefficients of angular momentum coupling, usually written with the bra-ket notation $\langle j_{1}, m_{1};\, j_{2}, m_{2} \mid J, M\rangle$; the amplitude for the uncoupled product $\ket{j_{1}, m_{1}} \otimes \ket{j_{2}, m_{2}}$ to be found in the coupled eigenstate $\ket{J, M}$. In our setting 
\begin{align}
    (j_{1}, j_{2}) &= (j, \frac{1}{2}),\\
    (J, M) &= (j', m'),\\
    m_{1} &= m' - \frac{1}{2} + s,\\
    m_{2} &= \frac{1}{2} - s,
\end{align}
and thus
\begin{align}\label{eq:CG_def}
c_{s}
\;\equiv\;
\langle\, j,\; m' - \tfrac{1}{2} + s ;\;\; \tfrac{1}{2},\; \tfrac{1}{2} - s \,\mid\, j',\, m' \,\rangle,
\end{align}
with the convention $s \in \{0, 1\}$: $s = 0$ corresponds to the new qubit in $\ket{\uparrow}_{n} \equiv \ket{0}_{n}$ (so $\hat{S}_{z}^{(n)} = +1/2$) and $s = 1$ to $\ket{\downarrow}_{n} \equiv \ket{1}_{n}$ (so $\hat{S}_{z}^{(n)} = -1/2$).

To make the construction concrete, consider the case of $\rho^{\otimes 3}$ and suppose that
\begin{align}\label{eq:rho_diag_example}
    \rho \;=\; p_{0}\ketbra{0}{0} + p_{1} \ketbra{1}{1},
    \qquad
    p_{0} + p_{1} = 1,\quad p_{0}, p_{1} \geqslant 0.
\end{align}
Then we have
\begin{align}
    \rho^{\otimes 3}
    \;=\; \mathrm{diag}\big(
        p_{0}^{3},\,
        p_{0}^{2}p_{1},\,
        p_{0}^{2}p_{1},\,
        p_{0} p_{1}^{2},\,
        p_{0}^{2} p_{1},\,
        p_{0} p_{1}^{2},\,
        p_{0} p_{1}^{2},\,
        p_{1}^{3}
    \big)
\end{align}
in the order $\ket{000}, \ket{001}, \ket{010}, \ket{011}, \ket{100}, \ket{101}, \ket{110}, \ket{111}$.
Reducing this $8 \times 8$ array to the two spin sectors $j = 3/2$ (one $4 \times 4$ multiplet, $d_{3/2} = 1$) and $j = 1/2$ (two copies of a $2 \times 2$ multiplet, $d_{1/2} = 2$) cannot be carried out yet --- it requires the explicit Clebsch-Gordan coefficients, which will be derived soon.

Here, we formulate the required closed forms in four steps, assemble them into the Clebsch-Gordan recursion at the level of matrix entries, and verify the result on the $n = 3$ example set up at the end.
For notational convenience within this subsubsection we promote the weights $c_{s}$ of Eq.~\eqref{eq:recursion_form_preview} to $c_{s}^{j' \leftarrow j}(m')$, making the parent spin $j$ and the target pair $(j', m')$ explicit as labels (the numerical value is unchanged):
\begin{align}\label{eq:CG_expansion}
    \ket{j', m'}_{n}
    \;=\;
    \sum_{s \in \{0, 1\}}
    c_{s}^{j' \leftarrow j}(m')\,
    \ket{j, m' - \tfrac{1}{2} + s}_{n-1} \otimes \ket{s}_{n},
    \qquad j = j' \mp \tfrac{1}{2},
\end{align}
with $c_{s}^{j' \leftarrow j}(m') \equiv \langle j, m' - 1/2 + s;\, 1/2, 1/2 - s \mid j', m' \rangle$ as per Eq.~\eqref{eq:CG_def}, and $\ket{s}_{n}$ the computational basis state of the $n$-th qubit ($\ket{0}$ for $s = 0$, $\ket{1}$ for $s = 1$).

The closed-form expressions for the CG coefficients can be found in most quantum mechanics textbooks. Here we derive them from first principles
for self-consistency. Here we derive it from first principles for self consistence.
The closed-form evaluation follows from a short lowering-operator calculation, sketched in four labelled steps below so that the final expressions can be reproduced without recourse to a CG table.
The coefficients for the $j = j' - 1/2$ branch (parent spin smaller than child) are derived first in Steps 1-3, after which Step 4 fixes the coefficients for the $j = j' + 1/2$ branch (parent spin larger than child) by orthogonality within the two-dimensional $\hat{J}_{z} = m'$ subspace.

\textbf{Step 1: Starting from the highest weight state.}
Take the $j = j' - 1/2$ branch first --- the aligned case of Eq.~\eqref{eq:branching_rule}, where appending the $n$-th qubit raises the total spin from $j$ to $j + 1/2 = j'$. The $j = j' + 1/2$ branch is deferred to Step 4 because its candidate highest-weight vector is not uniquely pinned by $\hat{J}_{z}$ alone --- Step 4 makes this explicit.
At the top of the target spin-$j'$ multiplet ($m' = j'$), the only product state in $V_{j} \otimes V_{1/2}$ with matching $\hat{J}_{z}$-eigenvalue is $\ket{j, j}_{n-1} \otimes \ket{0}_{n}$, so up to an overall phase (fixed to $+1$ by Condon and Shortley~\cite{condon1935theory}),
\begin{align}\label{eq:CG_top_vector}
    \ket{j', j'}_{n} \;=\; \ket{j, j}_{n-1} \otimes \ket{0}_{n}.
\end{align}
Three observations make explicit why Eq.~\eqref{eq:CG_top_vector} has no unknown CG coefficient to solve for.

\begin{itemize}
  \item \textbf{Branch at maximal magnetic quantum number.}
  Only one $s$-branch survives. 
  At $m' = j'$, the general 2-term expansion Eq.~\eqref{eq:CG_expansion} would have an $s = 1$ term proportional to $\ket{j, j + 1}_{n-1} \otimes \ket{1}_{n}$, but $\ket{j, j + 1}_{n-1}$ does not exist (the spin-$j$ magnetic range tops out at $m = j$), so that term vanishes. Only the $s = 0$ term remains.
  \item \textbf{Unit coefficient.}
  The surviving coefficient has modulus $1$.
  Both $\ket{j', j'}_{n}$ and $\ket{j, j}_{n-1} \otimes \ket{0}_{n}$ are unit vectors, so the coefficient relating them has absolute value $1$.
  \item \textbf{Condon-Shortley phase fixing.}
  The phase is $+1$ by convention, and it fixes the phases of every descendant state. 
  A modulus-$1$ coefficient is still free up to $e^{i\theta}$; 
  rule below of the Condon-Shortley convention pins $\theta = 0$. That single phase choice then propagates down every $\hat{J}_{-}$-descendant of $\ket{j', j'}_{n}$ in Step 2 --- replacing the $+1$ by $-1$ here would flip every Clebsch-Gordan coefficient at every lower value of $m'$.
\end{itemize}

The Condon-Shortley convention~\cite{condon1935theory} is adopted throughout this work, and constitutes the \emph{de facto} phase convention in Clebsch-Gordan tables and symbolic implementations. For clarity and later reference, we state it in the form of two rules
\begin{itemize}
  \item \textbf{Phases within one multiplet.}
  $\bra{j, m - 1}\hat{J}_{-}\ket{j, m} = +\sqrt{(j + m)(j - m + 1)}$: the lowering matrix element is real and positive. This pins all phases within one multiplet --- once $\ket{j, j}$ is chosen, each successive $\hat{J}_{-}$-application to produce the next lower-$m$ state inherits a positive coefficient, and it is exactly this rule that Step 2 invokes when selecting the $+$ root in Eq.~\eqref{eq:lowering_coefficient}.
  \item \textbf{Phases across irreducible representations.}
  The overlap $\langle j_{1}, j_{1};\, j_{2}, J - j_{1} \mid J, J \rangle$ between the stretched product state (both factors at maximum weight) and the highest weight state of the coupled multiplet is positive. At $(j_{1}, j_{2}, J) = (j, 1/2, j')$, this pins $\langle j, j;\, 1/2, 1/2 \mid j', j'\rangle = +1$ --- the coefficient on the RHS of Eq.~\eqref{eq:CG_top_vector}.
\end{itemize}

Every subsequent sign in this subsubsection (the $+$ root in Eq.~\eqref{eq:lowering_coefficient}, the $+$ roots in Eq.~\eqref{eq:CG_coeffs_up}, the $-$ in $c_{0}$ of Eq.~\eqref{eq:CG_coeffs_down}) follows mechanically from rules above. 
An overall phase on any multiplet is unobservable (it cancels in all $|\braket{\psi}{\psi}|^{2}$), so the convention is pure bookkeeping: it exists to make Clebsch-Gordan tables from different sources agree.

\textbf{Step 2: Lowering along the ladder with $\hat{J}_{-}^{k}$.}
Act on both sides of Eq.~\eqref{eq:CG_top_vector} with $\hat{J}_{-}^{k}$ for $k \coloneqq j' - m'$, and expand the $n$-qubit lowering operator using the binomial identity
\begin{align*}
    \hat{J}_{-}^{k} \;=\; \left(\hat{J}_{-}^{(A)} + \hat{J}_{-}^{(B)}\right)^{k} \;=\; \sum_{r=0}^{k} \binom{k}{r}\,\left(\hat{J}_{-}^{(A)}\right)^{k - r}\,\left(\hat{J}_{-}^{(B)}\right)^{r},
\end{align*}
where, throughout Step 2, the superscripts $A$ and $B$ label subsystems: $\hat{J}_{-}^{(A)}$ acts on the first $n - 1$ qubits (the parent spin-$j$ multiplet), and $\hat{J}_{-}^{(B)}$ acts on the newly appended $n$-th qubit. Only the terms $r = 0$ and $r = 1$ survive because $\hat{J}_{-}^{(B)}$ is nilpotent of degree two on the spin-1/2 factor: applying it twice to any single-qubit state returns zero, $(\hat{J}_{-}^{(B)})^{2}\ket{\uparrow} = \hat{J}_{-}^{(B)}\ket{\downarrow} = 0$, so any term with $r \geqslant 2$ in the binomial expansion vanishes.
This truncation from $k + 1$ candidate terms to just $2$ is specific to $d = 2$ and is what keeps the recursion polynomial in $n$: for local dimension $d > 2$, the analogous nilpotency of $\hat{J}_{-}^{(B)}$ sets in only at higher powers, and the binomial sum retains more terms.

We now derive a closed-form expression for the action of $\hat{J}_{-}^{p}$ on a highest-weight vector, which will be reused repeatedly in what follows. 
Each surviving binomial term calls for the action of $\hat{J}_{-}^{p}$ on the top of some spin-$j$ multiplet---on the parent via $\hat{J}_{-}^{(A)}$ with $p = k$ or $p = k - 1$, and on the new qubit via $\hat{J}_{-}^{(B)}$ with $p \in \{0, 1\}$. We therefore pause the main derivation to justify, once and for all,
\begin{align}\label{eq:single_multiplet_lowering}
    \hat{J}_{-}^{p}\ket{j, j} \;=\; \sqrt{\frac{(2j)!\,p!}{(2j - p)!}}\;\ket{j, j - p}
    \qquad (0 \leqslant p \leqslant 2j).
\end{align}

The derivation of Eq.~\eqref{eq:single_multiplet_lowering} below runs entirely inside a single spin-$j$ irreducible representation, treated abstractly: the symbol $\ket{j, m}$ here is any state satisfying $\hat{J}^{2}\ket{j, m} = j(j + 1)\ket{j, m}$ and $\hat{J}_{z}\ket{j, m} = m\ket{j, m}$, with no tensor-product structure invoked, and the single-step coefficient $C_{j, m} \coloneqq \bra{j, m - 1}\hat{J}_{-}\ket{j, m}$ is the matrix element of $\hat{J}_-$ taken within a single multiplet --- bra and ket both in the same spin-$j$ irreducible representation. $C_{j, m}$ is not a Clebsch-Gordan coefficient: a Clebsch-Gordan coefficient couples two different irreducible representations (e.g.,\ $V_{j} \otimes V_{1/2} \to V_{j'}$) and is what we read off at the end of Step 3. The argument below uses only (a) the $\mathrm{SU}(2)$ commutation relations $[\hat{J}_{z}, \hat{J}_{\pm}] = \pm\hat{J}_{\pm}$, $[\hat{J}_{+}, \hat{J}_{-}] = 2\hat{J}_{z}$, (b) the eigenvalue definition of $\ket{j, m}$ stated above, (c) Hermiticity $\hat{J}_{+} = \hat{J}_{-}^{\dagger}$, and (d) first rule of the Condon-Shortley convention~\cite{condon1935theory}; the result therefore applies uniformly to every realisation of the spin-$j$ irreducible representation (a single spin-$j$ particle, any copy of the $j$-sector inside $(\mathbb{C}^{2})^{\otimes n}$, any abstract $\mathrm{SU}(2)$ module), and in particular does not require knowing the $n$-qubit coupled basis in the computational basis; there is no circularity with the Clebsch-Gordan coefficients we are after.

From $[\hat{J}_{z}, \hat{J}_{-}] = -\hat{J}_{-}$, one finds 
\begin{align}
    \hat{J}_{z}(\hat{J}_{-}\ket{j, m}) = (m - 1)\,\hat{J}_{-}\ket{j, m},
\end{align}
so $\hat{J}_{-}\ket{j, m}$ is a $\hat{J}_{z}$-eigenvector with eigenvalue $m - 1$; and $[\hat{J}^{2}, \hat{J}_{-}] = 0$ keeps it in the same spin-$j$ multiplet. Hence
\begin{align}\label{eq:single_step_lowering}
    \hat{J}_{-}\ket{j, m} \;=\; C_{j, m}\,\ket{j, m - 1},
\end{align}
for some coefficient $C_{j, m}$ to be determined. Taking the norm-squared and using $\hat{J}_{+} = \hat{J}_{-}^{\dagger}$ together with the operator identity $\hat{J}_{+}\hat{J}_{-} = \hat{J}^{2} - \hat{J}_{z}^{2} + \hat{J}_{z}$ (which follows from $[\hat{J}_{+}, \hat{J}_{-}] = 2\hat{J}_{z}$ combined with $\hat{J}^{2} = \hat{J}_{z}^{2} + 1/2(\hat{J}_{+}\hat{J}_{-} + \hat{J}_{-}\hat{J}_{+})$, itself obtained by writing $\hat{J}_{x}, \hat{J}_{y}$ in terms of $\hat{J}_{\pm}$ and expanding $\hat{J}_{x}^{2} + \hat{J}_{y}^{2}$):
\begin{align}
    |C_{j, m}|^{2}
    &\;=\; \bra{j, m}\hat{J}_{+}\hat{J}_{-}\ket{j, m}\\
    &\;=\; \bra{j, m}\hat{J}^{2} - \hat{J}_{z}^{2} + \hat{J}_{z}\ket{j, m} \\
    &\;=\; j(j + 1) - m^{2} + m\\
    &\;=\; (j + m)(j - m + 1).
\end{align}
Condon-Shortley convention~\cite{condon1935theory} selects the positive root,
\begin{align}\label{eq:lowering_coefficient}
    C_{j, m} \;=\; +\sqrt{(j + m)(j - m + 1)}.
\end{align}
Iterating Eq.~\eqref{eq:single_step_lowering} successively from $m = j$ down to $m = j - p + 1$,
\begin{align}
    \hat{J}_{-}^{p}\ket{j, j}
    \;=\; \left(\prod_{q = 0}^{p - 1} C_{j, j - q}\right)\,\ket{j, j - p}.
\end{align}
Substituting Eq.~\eqref{eq:lowering_coefficient} at $m = j - q$ gives
\begin{align}
    C_{j, j - q} = \sqrt{(2j - q)(q + 1)}.
\end{align}
So the product separates into two falling factorials:
\begin{align*}
    \prod_{q = 0}^{p - 1} (2j - q) &\;=\; (2j)(2j - 1) \cdots (2j - p + 1) \;=\; \frac{(2j)!}{(2j - p)!}, 
\end{align*}
and
\begin{align}
    \prod_{q = 0}^{p - 1} (q + 1) &\;=\; 1 \cdot 2 \cdots p \;=\; p!.
\end{align}
Combining, we have
\begin{align}
    \prod_{q = 0}^{p - 1} C_{j, j - q} = \sqrt{\frac{(2j)!\,p!}{(2j - p)!}},
\end{align}
which is Eq.~\eqref{eq:single_multiplet_lowering}. The derivation has used only the four ingredients listed in the scope warning: $\mathrm{SU}(2)$ commutators, the eigenvalue definition of $\ket{j, m}$, Hermiticity, and the Condon-Shortley convention. 
It does not invoke any Clebsch-Gordan tables or $n$-qubit structure.

Four limiting cases fix the formula. 
In particular, at $p = 0$, it reduces to 
\begin{align}
    \hat{J}_{-}^{0}\ket{j, j} = \ket{j, j},
\end{align}
as it must, with the factorials collapsing to
\begin{align}
    \sqrt{\frac{(2j)!\cdot 0!}{(2j)!}} = 1.
\end{align}
At $p = 1$, it gives
\begin{align}
    \sqrt{\frac{(2j)!}{(2j-1)!}}\,\ket{j, j - 1} = \sqrt{2j}\,\ket{j, j - 1},
\end{align}
matching 
\begin{align}
    C_{j, j} = \sqrt{2j \cdot 1},
\end{align}
from Eq.~\eqref{eq:lowering_coefficient}. 
At the opposite extreme, i.e., $p = 2j$, the formula reaches the bottom of the multiplet $\ket{j, -j}$ with the total ladder weight 
\begin{align}
    \sqrt{\frac{(2j)!\cdot(2j)!}{0!}} = (2j)!\;.
\end{align}
One step further, $p = 2j + 1$, lies outside the stated range: the factorial $(2j - p)! = (-1)!$ is undefined. 
Physically, the ladder has already terminated at the lower boundary, where the state is annihilated by the lowering operator.
\begin{align}
    C_{j, -j} = \sqrt{0 \cdot (2j + 1)} = 0,
\end{align}
so the $(2j + 1)$-th application returns the zero vector --- the algebraic origin of the range constraint $0 \leqslant p \leqslant 2j$.

We now leave the abstract single-multiplet setting and specialise Eq.~\eqref{eq:single_multiplet_lowering} to the concrete $n$-qubit decomposition $\ket{j', j'}_{n} = \ket{j, j}_{n - 1} \otimes \ket{0}_{n}$ delivered by Step 1 (obtained from $\hat{J}_{z}$-counting alone, with no Clebsch-Gordan input). 
On the parent subsystem $A$ the operator $\hat{J}_{-}^{(A)}$ acts within the spin-$j$ multiplet on $\ket{j, j}_{n - 1}$, so we apply Eq.~\eqref{eq:single_multiplet_lowering} with $p = k - r$; on the new-qubit subsystem $B$ the operator $\hat{J}_{-}^{(B)}$ acts within the spin-$\tfrac{1}{2}$ multiplet on $\ket{0}_{n}$ (which is itself the highest weight vector of a spin-1/2 multiplet), so we apply Eq.~\eqref{eq:single_multiplet_lowering} with $p = r \in \{0, 1\}$. Collecting the two surviving $r = 0$ and $r = 1$ terms of the binomial,
\begin{align}\label{eq:CG_lowered}
    \hat{J}_{-}^{k}\ket{j', j'}_{n}
    \;=\; \sqrt{\frac{(2j)!\,k!}{(2j - k)!}}\;\ket{j, j - k}_{n-1} \otimes \ket{0}_{n} + 
    k\,\sqrt{\frac{(2j)!\,(k-1)!}{(2j - k + 1)!}}\;\ket{j, j - k + 1}_{n-1} \otimes \ket{1}_{n}.
\end{align}

\textbf{Step 3: Determining the coefficients for $j = j' - \tfrac{1}{2}$ branch.}
The left-hand side of Eq.~\eqref{eq:CG_lowered} is itself one more application of Eq.~\eqref{eq:single_multiplet_lowering} within a single multiplet, this time inside the child spin-$j'$ sector: 
\begin{align}
    \hat{J}_{-}^{k}\ket{j', j'}_{n} = \sqrt{\frac{(2j')!\,k!}{(2j' - k)!}}\,\ket{j', j' - k}_{n} = 
    \sqrt{\frac{(2j')!\,k!}{(2j' - k)!}}\,\ket{j', m'}_{n},
\end{align}
where we have used $m' = j' - k$.
Dividing Eq.~\eqref{eq:CG_lowered} by 
\begin{align}
    L \coloneqq \sqrt{\frac{(2j')!\,k!}{(2j' - k)!}},
\end{align}
converts it into the Clebsch-Gordan expansion of $\ket{j', m'}_{n}$; matching the right-hand side against Eq.~\eqref{eq:CG_expansion} reads off
\begin{align}
    c_{0}^{2} \;=\; \frac{(2j)!}{(2j - k)!}\cdot\frac{(2j' - k)!}{(2j')!}, 
    \quad\text{and}\quad 
    c_{1}^{2} \;=\; k\,\frac{(2j)!}{(2j - k + 1)!}\cdot\frac{(2j' - k)!}{(2j')!},
\end{align}
where $k!$ has cancelled from $c_{0}$, and the $r = 1$ term's $k^{2}\,(k-1)!/k! = k$ has been folded in for $c_{1}$. 
Three substitutions now collapse the factorials.

\begin{itemize}
  \item \textbf{First substitution.}
  Branch relation $2j = 2j' - 1$.
  The numerator and denominator factorials now differ by exactly one step, so each factorial ratio collapses to a single factor:
\begin{align}
    \frac{(2j)!}{(2j')!} \;=\; \frac{(2j' - 1)!}{(2j')!} \;=\; \frac{1}{2j'}, \quad\text{and}\quad
    \frac{(2j' - k)!}{(2j - k)!} \;=\; \frac{(2j' - k)!}{(2j' - k - 1)!} \;=\; 2j' - k.
\end{align}

\item \textbf{Second substitution.}
Cancellation of the $c_{1}$ denominator.
For $c_{1}$, the combination $2j - k + 1 = 2j' - k$ turns its remaining factorial ratio into an exact cancellation:
\begin{align}
    \frac{(2j' - k)!}{(2j - k + 1)!} \;=\; \frac{(2j' - k)!}{(2j' - k)!} \;=\; 1.
\end{align}
Collecting these two substitutions gives
\begin{align}
    c_{0}^{2} \;=\; \frac{2j' - k}{2j'}, \quad\text{and}\quad 
    c_{1}^{2} \;=\; \frac{k}{2j'}.
\end{align}

\item \textbf{Third substitution.}
$k = j' - m'$. 
From Step 2, $k \coloneqq j' - m'$, so $2j' - k = j' + m'$, giving
\begin{align}
    c_{0}^{2} \;=\; \frac{j' + m'}{2j'}, \quad\text{and}\quad c_{1}^{2} \;=\; \frac{j' - m'}{2j'}.
\end{align}
Taking the positive square root (the Condon-Shortley convention selected the $+$ root at every step of Eq.~\eqref{eq:single_multiplet_lowering}, so the signs inherit positively here),
\begin{align}
    c_{0}^{j' \leftarrow j}(m') \;=\; \sqrt{\frac{j' + m'}{2j'}}, \quad\text{and}\quad c_{1}^{j' \leftarrow j}(m') \;=\; \sqrt{\frac{j' - m'}{2j'}},
\end{align}
which is Eq.~\eqref{eq:CG_coeffs_up} for the branch with parent spin $j = j' - 1/2$.
\end{itemize}

\textbf{Step 4: Fixing the $j = j' + 1/2$ branch by orthogonality.}
This branch has parent spin $j = j' + 1/2$ (parent larger than child), and its candidate highest-weight vector is not uniquely pinned by $\hat{J}_{z}$ alone: at highest weight $\hat{J}_{z} = j - 1/2$, two product states share the eigenvalue; that are 
\begin{align}
    \ket{j, j - 1}_{n - 1} \otimes \ket{0}_{n},
    \quad\text{and}\quad
    \ket{j, j}_{n - 1} \otimes \ket{1}_{n}.
\end{align}
So no direct highest weight construction fixes its phase. 
We instead fix this branch by orthogonality to the branch already derived in Step 3.
The fixed-$\hat{J}_{z}$ subspace at generic $m'$. Holding the parent spin at $j$ and the total $\hat{J}_{z}$ at $m'$, only two product states remain available:
\begin{align*}
    \mathcal{V}_{m'} \;\coloneqq\; \operatorname{span}\bigl\{\,\ket{j, m' - \tfrac{1}{2}}_{n-1} \otimes \ket{0}_{n},\ \ \ket{j, m' + \tfrac{1}{2}}_{n-1} \otimes \ket{1}_{n}\,\bigr\}.
\end{align*}
Both coupled states $\ket{j + 1/2, m'}_{n}$ (child of the $j' = j + 1/2$ branch) and $\ket{j - 1/2, m'}_{n}$ (child of the $j' = j - 1/2$ branch) lie in $\mathcal{V}_{m'}$, and they are orthogonal because they belong to distinct $\hat{J}^{2}$-eigenvalues $(j + 1/2)(j + 3/2)$ versus $(j - 1/2)(j + 1/2)$. Since $\dim\mathcal{V}_{m'} = 2$, they span an orthonormal basis of $\mathcal{V}_{m'}$; the state $\ket{j - 1/2, m'}_{n}$ is therefore the \emph{unique} unit vector orthogonal to $\ket{j + 1/2, m'}_{n}$ (up to an overall sign).
The state $\ket{j + \tfrac{1}{2}, m'}_{n}$, re-expressed in terms of parent-$j$. 
Specialise Step 3 to parent $j$ and child $j' = j + 1/2$ (substitute $j' \to j + 1/2$ in Eq.~\eqref{eq:CG_coeffs_up}, so $2j' = 2j + 1$):
\begin{align}
    \ket{j + \tfrac{1}{2}, m'}_{n}
    \;=\; \sqrt{\frac{j + \tfrac{1}{2} + m'}{2j + 1}}\;\ket{j, m' - \tfrac{1}{2}}_{n-1} \otimes \ket{0}_{n} 
    + \sqrt{\frac{j + \frac{1}{2} - m'}{2j + 1}}\;\ket{j, m' + \tfrac{1}{2}}_{n-1} \otimes \ket{1}_{n}.
\end{align}
Write the state $\ket{j - 1/2, m'}_{n}$ in the same $\mathcal{V}_{m'}$ basis,
\begin{align}
    \ket{j - \tfrac{1}{2}, m'}_{n} \;=\; \alpha\,\ket{j, m' - \tfrac{1}{2}}_{n-1} \otimes \ket{0}_{n} + \beta\,\ket{j, m' + \tfrac{1}{2}}_{n-1} \otimes \ket{1}_{n}.
\end{align}
Orthogonality to $\ket{j + 1/2, m'}_{n}$ gives
\begin{align}
    \alpha\,\sqrt{\frac{j + \frac{1}{2} + m'}{2j + 1}} + \beta\,\sqrt{\frac{j + \frac{1}{2} - m'}{2j + 1}} \;=\; 0,
\end{align}
and combining with unit norm $|\alpha|^{2} + |\beta|^{2} = 1$ yields, up to an overall sign $t = \pm 1$,
\begin{align}
    \alpha \;=\; -\,t\,\sqrt{\frac{j + \frac{1}{2} - m'}{2j + 1}}, 
    \quad\text{and}\quad 
    \beta \;=\; t\,\sqrt{\frac{j + \frac{1}{2} + m'}{2j + 1}}.
\end{align}

The Condon-Shortley convention requires the overlap $\langle j_{1}, j_{1};\, j_{2}, J - j_{1} \mid J, J\rangle$ between the stretched product state (both factors at maximum weight) and the highest weight state of the coupled multiplet to be positive. 
Specialised to the $j = j' - 1/2$ branch at its own highest weight $m' = j - 1/2$, this selects $\beta > 0$ there, hence $t = +1$, so $\alpha$ is negative; the minus sign in $c_{0}$ of Eq.~\eqref{eq:CG_coeffs_down}.
Substitute the parent variable $j = j' + 1/2$ into the expressions above ($2j + 1 = 2j' + 2$; $j + 1/2 \pm m' = j' + 1 \pm m'$):
\begin{align}
    c_{0}^{j' \leftarrow j}(m') \;=\; \alpha \;=\; -\sqrt{\frac{j' - m' + 1}{2j' + 2}}, \quad\text{and}\quad 
    c_{1}^{j' \leftarrow j}(m') \;=\; \beta \;=\; \sqrt{\frac{j' + m' + 1}{2j' + 2}},
\end{align}
which is exactly Eq.~\eqref{eq:CG_coeffs_down}.

Collecting Steps 3 and 4, the Clebsch-Gordan weights for the two branches $j = j' - 1/2$ (parent spin smaller than child) and $j = j' + 1/2$ (parent spin larger than child) take the closed forms
\begin{align}\label{eq:CG_coeffs_up}
    j = j' - \frac{1}{2}:\quad
    c_{0}^{j' \leftarrow j}(m')
    = \sqrt{\frac{j' + m'}{2 j'}},
    \quad
    c_{1}^{j' \leftarrow j}(m')
    = \sqrt{\frac{j' - m'}{2 j'}},
\end{align}
and
\begin{align}\label{eq:CG_coeffs_down}
    j = j' + \frac{1}{2}:\quad
    c_{0}^{j' \leftarrow j}(m')
    = -\sqrt{\frac{j' - m' + 1}{2 j' + 2}},
    \quad
    c_{1}^{j' \leftarrow j}(m')
    = \sqrt{\frac{j' + m' + 1}{2 j' + 2}},
\end{align}
with the signs fixed by the Condon-Shortley convention.

Closed forms above reconstruct the singlet-triplet decomposition of Eq.~\eqref{eq:n2_change_of_basis}.
Setting $(j, j', m') = (1/2, 1, 0)$ in the $j = j' - 1/2$ branch, Eq.~\eqref{eq:CG_coeffs_up} gives $c_{0} = c_{1} = 1/2$, and Eq.~\eqref{eq:CG_expansion} reads
\begin{align}
    \ket{1, 0}_{2}
    \;=\;\frac{1}{\sqrt{2}}\,\ket{\tfrac{1}{2}, -\tfrac{1}{2}}_{1} \otimes \ket{0}_{2}
    \;+\;\frac{1}{\sqrt{2}}\,\ket{\tfrac{1}{2}, +\tfrac{1}{2}}_{1} \otimes \ket{1}_{2}
    \;=\;\frac{1}{\sqrt{2}}\left(\ket{10} + \ket{01}\right),
\end{align}
using the single-qubit identifications $\ket{1/2, +1/2}_{1} = \ket{0}$ and $\ket{1/2, -1/2}_{1} = \ket{1}$.
For the $j = j' + 1/2$ branch, $(j, j', m') = (1/2, 0, 0)$ in Eq.~\eqref{eq:CG_coeffs_down} gives $c_{0} = -\sqrt{1/2}$ and $c_{1} = +\sqrt{1/2}$, producing $\ket{0, 0}_{2} = 1/\sqrt{2}(\ket{01} - \ket{10})$.
Both match Eq.~\eqref{eq:n2_change_of_basis}, confirming that the closed forms reduce to the familiar $n = 2$ coefficients $\pm 1/\sqrt{2}$ without any external Clebsch-Gordan table.
The two branches are orthogonal as vectors, and the full set $\{\ket{j', m'}_{n}\}$ obtained by running Eq.~\eqref{eq:CG_expansion} from $j' = 1/2$ (the single-qubit starting case) up to $j' = n/2$ is an orthonormal basis of the coupled space. With closed-form Clebsch-Gordan coefficients now in hand, we will substitute them into Eq.~\eqref{eq:rho_n_to_rho_j} to obtain an explicit Clebsch-Gordan recursion at the level of matrix entries of the block $\rho_{j}^{(n)}$, and then plug the result back into the SDP formulated in Eq.~\ref{eq:PostP_Fundamental_Limit_Simplified}.

The $n$-qubit sector $V_{j'}$ picks up contributions from two parent sectors: the $+$ branch of parent $j = j' - 1/2$ and the $-$ branch of parent $j = j' + 1/2$, with multiplicities adding as 
\begin{align}
    d_{j'}^{(n)} = d_{j' - \frac{1}{2}}^{(n-1)} + d_{j' + \frac{1}{2}}^{(n-1)}.
\end{align}
The Schur-Weyl duality Eq.~\eqref{eq:rho_n_decomp} guarantees that the extracted block $\rho_{j'}^{(n)}$ is the same matrix for any choice of parent, so any single pathway suffices to compute it.
Writing 
\begin{align}
    R_{j'}^{(n)}[m_{1}', m_{2}'] \coloneqq (\rho_{j'}^{(n)})_{m_{1}', m_{2}'},
\end{align}
and fixing any single parent $j$ whose block $R_{j}^{(n-1)}$ is already available, the recursion reads
\begin{align}\label{eq:CG_recursion}
    R_{j'}^{(n)}[m_{1}', m_{2}']
    \;=\;
    \sum_{s_{1}, s_{2} \in \{0, 1\}}
    c_{s_{1}}^{j' \leftarrow j}(m_{1}')\,
    c_{s_{2}}^{j' \leftarrow j}(m_{2}')\;
    R_{j}^{(n-1)}\left[m_{1}' - \frac{1}{2} + s_{1},\; m_{2}' - \frac{1}{2} + s_{2}\right]\;
    \rho[s_{1}, s_{2}],
\end{align}
with $\rho[s_{1}, s_{2}] \coloneqq \langle s_{1} | \rho | s_{2} \rangle$ and the convention $R_{j}^{(n-1)} \equiv 0$ whenever an $m$-index falls outside $[-j, j]$.
The base case is $R_{1/2}^{(1)}[m_{1}, m_{2}] = \rho[s_{1}, s_{2}]$ with $s_{k} = 1/2 - m_{k}$.
All arithmetic is among $(2j + 1) \times (2j + 1)$ matrices with $j \leqslant n/2$; the ambient $(\mathbb{C}^{2})^{\otimes n}$ is never constructed.

Finally, we assess the computational complexity of the Clebsch-Gordan recursion. 
At level $n$, there are $\mO(n)$ admissible values of $j'$, each associated with a block of size $\mathcal{O}(n^{2})$, and each matrix element is obtained from a four-term summation. 
Accumulating the cost over all levels up to $n$ yields polynomial scaling in both time and memory for constructing the blocks from a single input state $\rho \in \mS$, denoted by $T_{\mathrm{CG}}(n)$ and $M_{\mathrm{CG}}(n)$, respectively.
\begin{align}\label{eq:CG_cost_per_state}
    T_{\mathrm{CG}}(n) \;=\; \mO(n^{4}),
    \qquad
    M_{\mathrm{CG}}(n) \;=\; \mO(n^{3}),
\end{align}
For an ensemble of $|\mS|$ input states, the total construction cost scales as $\mO(|\mS|\cdot n^{4})$, in sharp contrast to the $\mO(n!\cdot 4^{n})$ of the method based on Schur-Weyl duality introduced in Subsec.~\ref{subsec:SDP_Symmetry}.

We now show how the Clebsch-Gordan recursion solves the SDP formulated in Eq.~\eqref{eq:PostP_Fundamental_Limit_Simplified}, proceeding in four steps.
\begin{enumerate}
    \item[(i)] \textbf{Formulation of the noisy state.}
    For each $\psi_{i} \in \mS$, construct the corresponding noisy state $\sigma_{i} \coloneqq \mN(\psi_{i})$, where $\mN$ denotes the noise channel.
    \item[(ii)] \textbf{Clebsch-Gordan recursion.}
    Apply the Clebsch–Gordan recursion in Eq.~\eqref{eq:CG_recursion} to $\sigma_{i}^{\T}$, building up from $n=1$ using the closed-form coefficients in Eqs.~\eqref{eq:CG_coeffs_up}–\eqref{eq:CG_coeffs_down}. This yields, for each input state $\psi_{i}$ and each total spin sector $j$, the $j$-th block of $(\sigma_{i}^{\T})^{\otimes n}$.
    Note that in the qubit case, label $\lambda$ in Schur-Weyl duality (see Eq.~\eqref{eq:SW_decomp}) is in one-to-one correspondence with the total spin $j$.
    \item[(iii)] \textbf{Parallelization of the SDP.}
    Solving block SDPs in parallel across $\lambda$ sectors.
    \item[(iv)] \textbf{Assembling the blocks into the final solution.}
    Combine the blocks to form the ensemble-averaged objective for each $\lambda$, as prescribed by Eq.~\eqref{eq:Overlap_SW_block}.
\end{enumerate}
The construction scales as $\mO(|\mS|\cdot n^{4})$, while the SDP itself is solved with cost polynomial in the block dimension $m_{\lambda}d=\mO(n)$. 
As summarised in Tab.~\ref{tab:SDP_complexity}, this reduction enables simulations at system sizes of several tens of qubits, even on standard hardware.

To illustrate the recursion, consider the case $n=3$ with $\rho=\mathrm{diag}(p_{0},p_{1})$ as in Eq.~\eqref{eq:rho_diag_example}. 
The $n=2$ parent blocks are 
\begin{align}
    R_{0}^{(2)}=(p_{0}p_{1}),
\end{align}
for the singlet and 
\begin{align}
    R_1^{(2)} = 
    \begin{pmatrix}
        p_0^2 & 0 & 0 \\
        0 & p_0 p_1 & 0 \\
        0 & 0 & p_1^2
    \end{pmatrix},
\end{align}
for the triplet.
We first construct the $j'=1/2$ block at $n=3$. Starting from the parent sector $j=0$, Eq.~\eqref{eq:CG_coeffs_up} yields 
\begin{align}
    c_{0}(m'=+\frac{1}{2})=1,
\end{align}
and
\begin{align}
    c_{1}(m'=+\frac{1}{2})=0,
\end{align}
so that only the contribution with $s_{1}=s_{2}=0$ survives; that is
\begin{align}
    R_{\frac{1}{2}}^{(3)}[+\tfrac{1}{2}, +\tfrac{1}{2}]
    \;=\; R_{0}^{(2)}[0, 0] \cdot \rho[0, 0]
    \;=\; (p_{0} p_{1}) \cdot p_{0}
    \;=\; p_{0}^{2} p_{1},
\end{align}
and analogously 
\begin{align}
    R_{\frac{1}{2}}^{(3)}[-\tfrac{1}{2}, -\tfrac{1}{2}] = p_{0} p_{1}^{2},
\end{align}
giving 
\begin{align}
    \rho_{\frac{1}{2}}^{(3)} = 
    \begin{pmatrix}
        p_0^2 p_1 & 0 \\
        0 & p_0 p_1^2
    \end{pmatrix}.
\end{align}
As a cross-check, parent $j = 1$ gives 
\begin{align}
    c_{0} = -\sqrt{\frac{1}{3}},
\end{align}
and
\begin{align}
    c_{1} = \sqrt{\frac{2}{3}},
\end{align}
and the surviving $s_{1} = s_{2}$ terms sum to
\begin{align}
    \frac{1}{3}(p_{0} p_{1}) p_{0} + \frac{2}{3} p_{0}^{2} p_{1} = p_{0}^{2} p_{1},
\end{align}
matching the parent-$(j = 0)$ result as required by Schur's lemma.

The disappearance of the factorial overhead reflects a shift in perspective rather than a simplification of the combinatorics. 
In the method based on Schur-Weyl duality (see Eq.~\eqref{eq:SW_decomp}), one resolves $\Pi_{j}\equiv\Pi_{\lambda}$ (see Eq.~\eqref{eq:Pi_lambda}) by explicitly summing over all permutations, incurring an $\mO(n!)$ cost. 
The Clebsch-Gordan recursion reaches the same Schur-Weyl blocks $V_{j}\otimes W_{j}\equiv V_{\lambda}\otimes W_{\lambda}$ without ever enumerating $\mathfrak{S}_{n}$: it proceeds inductively, adding one qubit at a time and invoking only the local branching rule in Eq.~\eqref{eq:branching_rule}.
This efficiency is a direct manifestation of Schur-Weyl duality. 
The $\mathfrak{S}_{n}$-invariant and $\mathrm{SU}(2)$-covariant viewpoints encode an identical block structure, yet expose it in fundamentally different ways. 
The former resolves the symmetry globally through permutation sums; the latter accesses it locally through angular momentum coupling. 
By following the $\mathrm{SU}(2)$ route, the Clebsch-Gordan recursion reconstructs the same $\mathfrak{S}_{n}$ projectors while bypassing the factorial complexity entirely.

This subsection introduces the Clebsch-Gordan recursion as the key reduction that converts an intractable optimization into a practically solvable one. 
A direct treatment of the purification SDP in Eq.~\eqref{eq:PostP_Fundamental_Limit_Simplified} entails a memory cost 
\begin{align}
    \mO(N^{4})=\mO((d^{n+1})^{4}),
\end{align}
which is already prohibitive at modest system sizes. 
In the qubit setting, i.e., $d=2$, this scaling becomes
\begin{align}
    \mO(4^{2(n+1)}),
\end{align}
and rapidly becomes prohibitive, confining brute-force approaches to at most $n \approx 7$ copies. 
In the distributed setting with $d=4$, the constraint is even more severe, with practical limits around $n \approx 4$.
Exploiting the permutation invariance of $\rho^{\otimes n}$ reorganizes this exponential scale problem into $\mO(n)$ independent block SDPs, one for each Young diagram label $\lambda$, as prescribed by the decomposition in Eq.~\eqref{eq:Overlap_SW_block}. 
Building the per-state blocks $\rho_{j}^{(n)}$ --- the construction bottleneck of the Schur-Weyl duality --- was itself trimmed from 
\begin{align}
    \mO(n!\,4^{n})
\end{align}
via Eq.~\eqref{eq:Pi_lambda} of Subsec.~\ref{subsec:SDP_Symmetry} to 
\begin{align}
    \mO(n^{4})
\end{align}
through the Clebsch-Gordan recursion of Eq.~\eqref{eq:CG_recursion}.
On the other hand, the progression of memory complexity reduction achieved by each method is summarised as follows.
\begin{align}
    \underbrace{\mO(4^{n+1})}_{\text{Brute-Force Evaluation}}
    \longrightarrow
    \underbrace{\mO(4^{n})}_{\text{Schur-Weyl Duality}}
    \longrightarrow
    \underbrace{\mO(n^{3})}_{\text{Clebsch-Gordan Recursion}}.
\end{align}
In this form, the purification problem becomes accessible on standard hardware, extending the reachable regime to system sizes well into the tens, as quantified by the benchmarks in Tab.~\ref{tab:SDP_complexity}.

A close inspection may suggest that the size $|\mS|$ of the initial state enters the time complexity of the Clebsch-Gordan recursion, yet is absent from the Schur-Weyl-duality-based approach. 
This distinction is only apparent. 
When the scaling is expressed explicitly, the dependence on $|\mS|$ is retained. 
For the Schur-Weyl duality method introduced in Subsec.~\ref{subsec:SDP_Symmetry}, the time complexity takes the form
\begin{align}
    \mO\left(4^{n}\left(n!+4|\mS|+4n^2\right)\right).
\end{align}
For large $n$, the factorial term $n!$ dominates the scaling. 
Subleading contributions proportional to $|\mS|$ and $n^2$ are therefore neglected, yielding the expression $\mO(n!\, 4^{n})$ shown in Tab.~\ref{tab:SDP_complexity}.

\begin{table}[htbp]
    \centering
    \begin{tblr}{
      colspec = {l || c | c | c},
      row{1} = {bg=gray!50, font=\bfseries}, 
      row{2} = {bg=gray!50, font=\bfseries},
      column{1} = {bg=gray!10, font=\bfseries},
      hlines,
      vlines,
      cells = {m, c},
      cell{3}{2-4} = {bg=mOrange},
      cell{4}{2-4} = {bg=mBlue},
      cell{5}{2-4} = {bg=mGreen},
      cell{6}{2} = {bg=mRed}, cell{6}{3-4} = {bg=mGreen},
      cell{7}{2-3} = {bg=mRed}, cell{7}{4} = {bg=mGreen},
      cell{8}{2-3} = {bg=mRed}, cell{8}{4} = {bg=mGreen},
    }
      \SetCell[c=4]{c} Scaling of Computational Complexity & & & \\
      Metrics & Brute-Force (Tab~\ref{tab:SDP_Complexity}) & Schur-Weyl (Subsec.~\eqref{subsec:SDP_Symmetry}) & Clebsch-Gordan (Subsec.~\eqref{subsec:SDP_CG})  \\
      Time & $\mO(\mathrm{poly}(2^{n+1}))$ & $\mO(n!\, 4^{n})$ & $\mO(|\mS|\,n^{4})$ \\
      Memory & $\mO(4^{n+1})$ & $\mO(4^{n})$ & $\mO(n^{3})$ \\
      $n=6$ & 55.8 GB, 419 s & 0.4 GB, <1 s & 0.2 GB, <1 s \\
      $n=20$ & infeasible & 3.4 GB, 241 s & 1.9 GB, 15 s \\
      $n=25$ & infeasible & >124 GB (OOM) & 4.9 GB, 38 s \\
      $n=30$ & infeasible & infeasible & 11.3 GB, 86 s \\
    \end{tblr}
    \caption{\textbf{Symmetry-Reduced Scaling of Purification SDP in Eq.~\eqref{eq:PostP_Fundamental_Limit_Simplified}}. 
    Scaling of computational cost for the $n$-to-1 qubit purification SDP ($d=2$, $|\mS|=2$, $p{=}0.5$, single core). 
    The naive formulation becomes rapidly infeasible due to exponential memory growth. 
    Schur-Weyl reduction compresses the problem into symmetry-adapted blocks, substantially lowering resource demands, while Clebsch-Gordan recursion further reduces the construction cost to polynomial scaling. 
    Benchmarks demonstrate tractable performance up to $n=30$.
    Here, the abbreviation OOM denotes an out-of-memory error.
    Time and memory refer to time complexity and memory complexity, respectively.
    }
    \label{tab:SDP_complexity}
\end{table}

In this work, computational tasks were performed on a high-performance workstation equipped with an AMD Ryzen Threadripper PRO 7975WX processor, featuring 32 physical cores and 64 logical threads based on the Zen 4 architecture. 
The system operates at a maximum boost frequency of 5.36 GHz and utilizes a multi-tier cache hierarchy, including 128 MiB of L3 cache shared across CCX groups. 
To accelerate specialized workloads, the CPU supports advanced instruction sets such as AVX-512 (including VNNI for AI acceleration) and BF16. 
The hardware configuration is supported by 128 GB of physical RAM (124 GiB usable), ensuring sufficient memory bandwidth and capacity for data-intensive processing.

Finally, the Clebsch-Gordan (CG) recursion is distilled into an explicit algorithm, presented at the end of this subsection. 
In conjunction with the four-step pipeline (i)-(iv), Eq.~\eqref{eq:CG_recursion} gives rise to Alg.~\ref{alg:CG_pipeline}, which realizes the $d=2$ counterpart of the general-$d$ character-formula pipeline in Alg.~\ref{alg:char_pipeline}. 
This qubit structure admits a further simplification of the associated SDP, both in formulation and computational cost.
At its core, \textsc{CG-Recursion} replaces \textsc{Char-Reduce}. 
Rather than evaluating the $\mO(n!\,d^{2n})$ character formula, it exploits angular-momentum addition to decompose $\rho^{\otimes n}$ directly into its spin-$j$ sectors, generating all blocks at a cost $\mO(n^{4})$ per state, without constructing the $2^{n}\times m_{j}$ isometry $\Phi_{j}^{(n)}$ (see Alg.~\ref{alg:char_pipeline}).
The outer routine, \textsc{Clebsch-Gordan-Reduction}, retains the structure of Alg.~\ref{alg:char_pipeline} specialised to qubits:
it assembles the sector-resolved operators $\Xi_{j}\equiv\Xi_{\lambda}$, solves the resulting block SDPs independently, and aggregates the solution through the multiplicity-weighted sum $F_{\mathrm{PostP}}=\sum_{j} d_{j} F_{j}$ (see Eq.~\eqref{eq:Weighted_Sum}).

\begin{algorithm}[htbp]
\DontPrintSemicolon
\SetKwProg{Fn}{Function}{}{end}
\KwIn{target-state ensemble $\mS = \{\psi_{i}\}_{i=1}^{|\mS|}$ of pure qubit states; noise channel $\mN$; copy count $n$.}
\KwOut{optimal post-processing fidelity $F_{\mathrm{PostP}}$, the $n$-to-1 SDP optimum of Eq.~\eqref{eq:PostP_Fundamental_Limit_Simplified}.}
\BlankLine

\Fn{\normalfont\textsc{CG-Recursion}($\rho,\, n$)}{
    \tcc*[h]{Input: single-qubit $\rho\in\mathbb{C}^{2\times 2}$ and level $n$. Output: reduced blocks $\bigl\{R_{j}^{(n)}\bigr\}_{j}$, where $R_{j}^{(k)} \coloneqq \Phi_{j}^{(k)\,\dagger}\,\rho^{\otimes k}\,\Phi_{j}^{(k)} \in \mathbb{C}^{m_{j}\times m_{j}}$ (with $m_{j} = 2j + 1$) is the spin-$j$ block of $\rho^{\otimes k}$ in the coupled basis, and $\Phi_{j}^{(k)}$ is the $k$-qubit counterpart of the isometry constructed in Alg.~\ref{alg:char_pipeline}. Magnetic labels run over $m_{1}, m_{2} \in \{-j, -j+1, \ldots, +j\}$, and $\rho[s_{1}, s_{2}] \coloneqq \langle s_{1}|\rho|s_{2}\rangle$ for $s_{i} \in \{0, 1\}$. The recursion evolves only the blocks $R_{j}^{(k)}$; the isometry $\Phi_{j}^{(k)}$ is never materialised.}\;
    $R_{1/2}^{(1)}[m_{1}, m_{2}] \leftarrow \rho\bigl[\tfrac{1}{2} - m_{1},\; \tfrac{1}{2} - m_{2}\bigr]$ for $m_{1}, m_{2} \in \bigl\{-\tfrac{1}{2}, +\tfrac{1}{2}\bigr\}$ \tcp*{base case ($k = 1$), using the identification $\ket{0} \leftrightarrow \ket{\tfrac{1}{2}, +\tfrac{1}{2}}$ and $\ket{1} \leftrightarrow \ket{\tfrac{1}{2}, -\tfrac{1}{2}}$, i.e.\ $s = \tfrac{1}{2} - m$}
    \For{$k = 2$ \KwTo $n$}{
        \ForEach{admissible total spin $j'$ at level $k$}{
            pick $j = j' + \tfrac{1}{2}$ if $j' = 0$; otherwise either parent $j \in \{j' - \tfrac{1}{2},\, j' + \tfrac{1}{2}\}\cap\mathbb{R}_{\geqslant 0}$ works \tcp*{both choices yield the same $R_{j'}^{(k)}$ by Schur's lemma; at $j' = 0$ only $j = \tfrac{1}{2}$ exists, and Eq.~\eqref{eq:CG_coeffs_up} carries a formal $1/(2j')$ singularity that must be avoided---use Eq.~\eqref{eq:CG_coeffs_down}}
            \ForEach{magnetic pair $(m_{1}', m_{2}')$ with $m_{i}'\in\{-j', -j'+1, \ldots, j'\}$}{
                $R_{j'}^{(k)}[m_{1}', m_{2}'] \leftarrow \displaystyle\sum_{s_{1}, s_{2} \in \{0,1\}} c_{s_{1}}^{j' \leftarrow j}(m_{1}')\, c_{s_{2}}^{j' \leftarrow j}(m_{2}')\, R_{j}^{(k-1)}\bigl[m_{1}' - \tfrac{1}{2} + s_{1},\; m_{2}' - \tfrac{1}{2} + s_{2}\bigr]\, \rho[s_{1}, s_{2}]$ \tcp*{Clebsch--Gordan amplitudes $c_{s}^{j'\leftarrow j}(m')$ from Eqs.~\eqref{eq:CG_coeffs_up}--\eqref{eq:CG_coeffs_down} (real in Condon--Shortley convention); any $R_{j}^{(k-1)}$ index falling outside $[-j, j]$ contributes zero; Eq.~\eqref{eq:CG_recursion}}
            }
        }
    }
    \KwRet $\bigl\{R_{j}^{(n)}\bigr\}_{j}$\;
}

\BlankLine
\tcc*[h]{\textbf{Step (i)--(ii).} Per-state noisy reduced blocks. The transpose $\sigma_{i}^{\T}$ enters because the SDP objective of Eq.~\eqref{eq:PostP_Fundamental_Limit_Simplified} carries $\bigl(\mN(\psi_{i})^{\T}\bigr)^{\otimes n}$.}\;
\ForEach{$\psi_{i} \in \mS$}{
    $\sigma_{i} \leftarrow \mN(\psi_{i})$ \tcp*{noisy single-copy state, $\sigma_{i}\in\mathbb{C}^{2\times 2}$}
    $\bigl\{\rho_{j, i}^{(n)}\bigr\}_{j} \leftarrow \textsc{CG-Recursion}\bigl(\sigma_{i}^{\T},\, n\bigr)$ \tcp*{$\rho_{j, i}^{(n)}$: spin-$j$ reduced block of $(\sigma_{i}^{\T})^{\otimes n}$, of size $m_{j}\times m_{j}$}
}

\BlankLine
\tcc*[h]{\textbf{Step (iii).} Assemble the ensemble-averaged operator on $V_{j}\otimes D$, where $D$ is the single-qubit target ($d = 2$). The trivial action of $\rho^{\otimes n}$ on the multiplicity factor $W_{j}$ (Eq.~\eqref{eq:rho_n_decomp}) drops $W_{j}$ from the SDP variables entirely.}\;
\ForEach{admissible $j$ at level $n$}{
    $\Xi_{j} \leftarrow \dfrac{1}{|\mS|}\displaystyle\sum_{i = 1}^{|\mS|} \rho_{j, i}^{(n)} \otimes \psi_{i, D}$ \tcp*{$\psi_{i, D} \equiv \ket{\psi_{i}}\!\bra{\psi_{i}}$ on $D$; $\Xi_{j}$ is a $2 m_{j} \times 2 m_{j}$ Hermitian operator on $V_{j}\otimes D$}
}

\BlankLine
\tcc*[h]{\textbf{Step (iv).} The bundled SDP $\max_{\{J_{j}\}}\sum_{j} d_{j}\,\Tr[J_{j}\Xi_{j}]$ subject to $J_{j}\geqslant 0$ and $\Tr_{D}[J_{j}] = \1_{V_{j}}$ carries no cross-$j$ constraints (Eq.~\eqref{eq:Overlap_SW_block}); it therefore separates into $\mO(n)$ independent block SDPs of size $\mO(n)$ and additively gives $F_{\mathrm{PostP}} = \sum_{j} d_{j}\,F_{j}$.}\;
\ForEach{admissible $j$ \textup{(independent; parallelisable)}}{
    $F_{j} \leftarrow \displaystyle\max_{\substack{J_{j} \geqslant 0 \\ \Tr_{D}[J_{j}] = \1_{V_{j}}}} \Tr[\,J_{j}\,\Omega_{j}\,]$ \tcp*{$J_{j}$: spin-$j$ block of the post-processing Choi operator on $V_{j}\otimes D$; the TP condition descends from $\Tr_{D}\bigl[J^{\theta^{\mathrm{Post}}}\bigr] = \1_{C}$ restricted to the $j$-sector}
}
\KwRet $F_{\mathrm{PostP}} \leftarrow \displaystyle\sum_{j} d_{j}\, F_{j}$ \tcp*{each sector contributes with its multiplicity $d_{j}$ (Eq.~\eqref{eq:Overlap_SW_block})}
\caption{\textsc{Clebsch-Gordan-Reduction}: 
End-to-end Clebsch-Gordan pipeline for evaluating the $n$-to-1 qubit purification SDP in Eq.~\eqref{eq:PostP_Fundamental_Limit_Simplified}.
The construction scales as $\mO(|\mS|\,n^{4})$ in time and $\mO(n^{3})$ in memory per input state. 
The resulting block-structured SDP decomposes into independent spin-$j$ sectors of dimension $m_{j}=\mO(n)$, each solvable in polynomial time.}
\label{alg:CG_pipeline}
\end{algorithm}


\section{Advantages in Distributed Quantum State Purification}\label{sec:Advantages_DQSP}

The preceding analysis established that the spatiotemporal framework yields a clear advantage in global purification, improving both attainable fidelity and sample efficiency. 
Here, the framework is extended to distributed quantum state purification, in which bipartite states are shared between remote agents. 
This setting is foundational: entanglement distillation is recovered as a special case, linking the present approach directly to central tasks in quantum communication and computation.
Subsection~\ref{subsec:Where_PreP_Matters} delineates the operational regimes in which pre-processing enhances purification, and those in which it becomes ineffective under amplitude damping noise. Subsection~\ref{subsec:When_Noise_Matters} turns to depolarizing noise, showing that pre-processing inserted between the entangling operation and the noise offers no improvement when restricted to local unital maps.
These limitations sharpen the role of genuinely spatiotemporal strategies. Subsection~\ref{subsec:LU_Pre_Processing} introduces forward-assisted (FA) protocols tailored to the distributed setting and demonstrates a clear performance separation from conventional schemes.
The advantage extends beyond fidelity: pre-processing also reduces resource requirements. In some regimes, single-copy pre-processing outperforms multi-copy post-processing, including 4-to-1 protocols, indicating a substantial improvement in sample efficiency that mirrors the global case.
The present analysis reaches up to 4-to-1 distributed purification under current computational constraints; this bound reflects numerical tractability rather than a limitation of the FA framework itself.
A further step is taken in Subsec.~\ref{subsec:Beyond_NP_Theorem}, where FA purification, through the inclusion of pre-processing, circumvents established no-purification constraints, including those for Bell states. 
Detailed numerical analyses throughout this section substantiate these results and delineate the regimes in which the advantage emerges.


\subsection{Where Pre-Processing Matters: Activation of Advantage}\label{subsec:Where_PreP_Matters}

This work focuses on a restricted class of pre-processing (PreP) protocols, namely those consisting solely of local operations, which are readily implementable in experimental settings~\cite{glc7-xy8t}. 
Two central questions will be addressed: whether pre-processing can provide an advantage, and whether its placement influences the achievable performance. 
Both are answered in the affirmative. 
To elucidate the underlying mechanism, the discussion is specialized to single-shot entanglement distillation --- a particular instance of distributed purification --- which provides a minimal setting for isolating the role of PreP and sets the stage for the subsequent analysis.

As a concrete setting, consider the preparation of a maximally entangled state $\phi^{+}$, followed by degradation under amplitude damping (AD) noise (see Tab.~\ref{tab:Quantum_Noise_Models}). 
Two natural placements of pre-processing arise: it may be applied prior to the entangling operation (see Fig.~\ref{fig:PreP_Locations}(a)), as in~\cite{glc7-xy8t}, or inserted immediately before the noisy channels (see Fig.~\ref{fig:PreP_Locations}(b)), as illustrated in Fig.~\ref{fig:PreP_Locations}.

\begin{figure}[htbp]
    \centering   
    \includegraphics[width=1\textwidth]{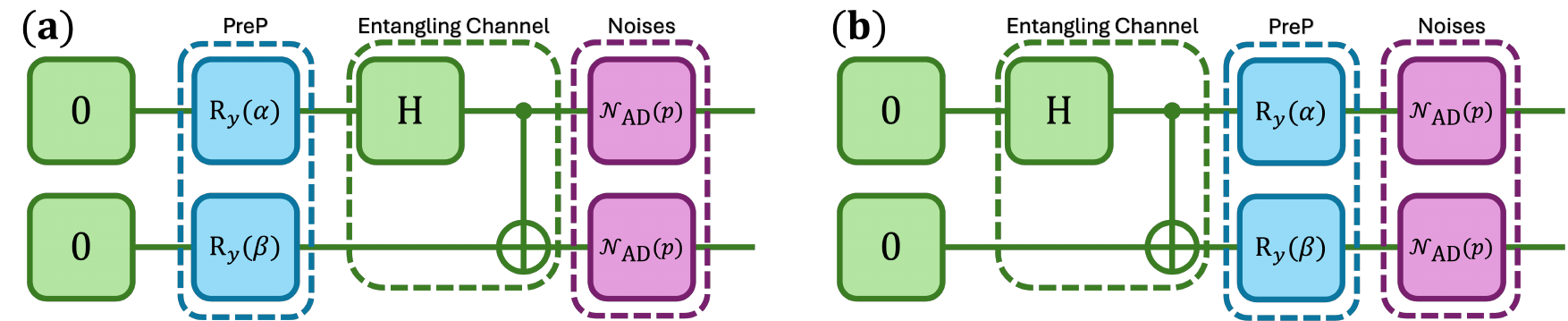}
    \caption{\textbf{Pre-Processing-Augmented Entanglement Distillation}. 
        In (a), pre-processing consists of local rotations about the $y$-axis applied prior to the entangling operation $\mathrm{CNOT}\circ(\mathrm{H}\otimes\id)H$. 
        In (b), the same local rotations are applied after the entangling operation $\mathrm{CNOT}\circ(\mathrm{H}\otimes\id)$ but immediately before the noisy channels.
    }
    \label{fig:PreP_Locations}
\end{figure}

A key distinction between the two PreP placements in Fig.~\ref{fig:PreP_Locations} is their operational relevance to purification. 
When PreP is applied as in Fig.~\ref{fig:PreP_Locations}(b), it is ineffective: despite being present, it does not improve the purification task, or equivalently, entanglement distillation. 
By contrast, the placement in Fig.~\ref{fig:PreP_Locations}(a) can yield a genuine advantage. 
To understand this difference, we now examine the two cases in detail.

For clarity, attention is restricted to the case where the local noise consists of identical AD channels with parameter $p$. 
To establish that, in Fig.~\ref{fig:PreP_Locations}(b), pre-processing via $y$-axis rotations does not improve upon the case without pre-processing, we evaluate the fidelity of the noisy Bell states shown below 

\begin{figure}[htbp]
    \centering   
    \includegraphics[width=1\textwidth]{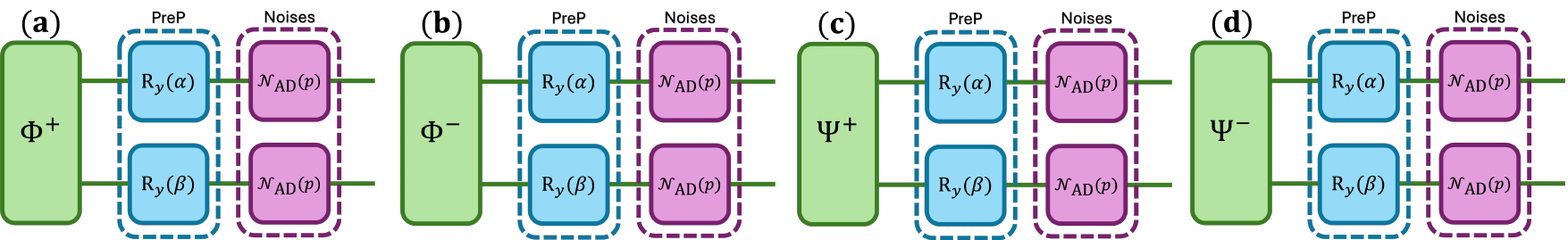}
    \caption{\textbf{Noisy Bell States}. 
        Bell states defined in Subsec.~\ref{subsec:Bell_States} are subjected to amplitude damping (AD) noise. 
        A pre-processing (PreP) operation is applied prior to the noise, which can be interpreted as an encoding step in quantum error correction (QEC). 
        The resulting fidelities corresponding to the configurations in (a)–(d) are denoted by $F_{\mathrm{Bell}, a}, \ldots, F_{\mathrm{Bell}, d}$.
    }
    \label{fig:Noisy_Bell_States}
\end{figure}

When a single Bell state $\Phi^{+}$ is considered, it is denoted by $\phi^{+}$ throughout this work to represent the maximally entangled state (MES).
The fidelities $F_{\mathrm{Bell},a}, F_{\mathrm{Bell},b}, F_{\mathrm{Bell},c}, F_{\mathrm{Bell},d}$ of protocols shown in Fig.~\ref{fig:Noisy_Bell_States} are given in the following lemma.

\begin{mylem}{Fidelities of Noisy Bell States in Fig.~\ref{fig:Noisy_Bell_States}}{Fidelities_Noisy_Bell_States}
    For pre-processing–augmented (PreP) generation of Bell states under amplitude damping (AD) noise, as illustrated in Fig.~\ref{fig:Noisy_Bell_States}, the corresponding fidelities are given by
    \begin{align}
        F_{\mathrm{Bell},a}(\alpha,\beta)=
        &\frac{1}{2}
        \left[
            p + (1-p)(2-p)\cos^{2}\!\left(\frac{\beta-\alpha}{2}\right)
        \right],\label{eq:Fidelity_Noisy_Bell_1}\\
        F_{\mathrm{Bell},b}(\alpha,\beta)=
        &\frac{1}{2}
        \left[
            p + (1-p)(2-p)\cos^{2}\!\left(\frac{\alpha+\beta}{2}\right)
        \right],\label{eq:Fidelity_Noisy_Bell_2}\\
        F_{\mathrm{Bell},c}(\alpha,\beta)=
        &\frac{1-p}{2}
        \left[
            p + (2-p)\cos^{2}\!\left(\frac{\alpha+\beta}{2}\right)
        \right],\label{eq:Fidelity_Noisy_Bell_3}\\
        F_{\mathrm{Bell},d}(\alpha,\beta)=
        &\frac{1-p}{2}
        \left[
            p + (2-p)\cos^{2}\!\left(\frac{\beta-\alpha}{2}\right)
        \right],\label{eq:Fidelity_Noisy_Bell_4}
    \end{align}
    where $\alpha$ and $\beta$ parameterize the local rotations in the pre-processing stage, and $p$ denotes the noise strength.
\end{mylem}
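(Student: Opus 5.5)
The plan is a direct Kraus-operator computation, organised so that the angular dependence is separated from the noise dependence. I read the configurations (a)--(d) of Fig.~\ref{fig:Noisy_Bell_States} as the four Bell states of Subsec.~\ref{subsec:Bell_States}. In each, a local pre-processing $\mathrm{R}_{y}(\alpha)\otimes\mathrm{R}_{y}(\beta)$ is applied, followed by $\mN_{\mathrm{AD}}\otimes\mN_{\mathrm{AD}}$, and the fidelity is taken with the unrotated Bell state. I will fix the exact assignment of (a)--(d) from the figure, and I expect (a),(b) to be the $\Phi$-type states and (c),(d) the $\Psi$-type states.

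Setting the angles to zero gives a check on the damping convention. For $\Phi^{+}$ under $\mN_{\mathrm{AD}}\otimes\mN_{\mathrm{AD}}$ with damping probability $\gamma$, the Kraus expansion gives fidelity $\tfrac14[(2-\gamma)^2+\gamma^2]=1-\gamma+\gamma^2/2$. For $\Psi^{+}$ it gives $1-\gamma$. These reproduce Eqs.~\eqref{eq:Fidelity_Noisy_Bell_1} and \eqref{eq:Fidelity_Noisy_Bell_3} at $\cos=1$ under the identification $p\equiv\gamma$. So the first step is to state this convention explicitly, namely $K_0=\mathrm{diag}(1,\sqrt{1-p})$ and $K_1=\sqrt{p}\,\ketbra{0}{1}$, which is the table's parameterisation with $p\mapsto 1-p$.

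\textbf{Step 1: Bell decomposition of the rotated state.} Because $\mathrm{R}_{y}$ is real orthogonal, $(O\otimes O)\ket{\Gamma}=\ket{\Gamma}$. It follows that $\mathrm{R}_{y}(\alpha)\otimes\mathrm{R}_{y}(\beta)\ket{\Phi^{+}}=(\1\otimes\mathrm{R}_{y}(\beta-\alpha))\ket{\Phi^{+}}=\cos\frac{\beta-\alpha}{2}\ket{\Phi^{+}}\pm\sin\frac{\beta-\alpha}{2}\ket{\Psi^{-}}$. For $\Phi^{-}$ and $\Psi^{+}$, the identity $\mathrm{Z}\,\mathrm{R}_{y}(\theta)\,\mathrm{Z}=\mathrm{R}_{y}(-\theta)$ turns the difference into a sum, which yields $\cos\frac{\alpha+\beta}{2}$ paired with the appropriate partner Bell state. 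This explains where the angle combination in each formula comes from.

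\textbf{Step 2: Noise action and fidelity.} I apply the four Kraus products $K_a\otimes K_b$ to the two-term superposition and take the overlap with the target. The rotated state is written as $c\,\ket{\text{target}}+s\,\ket{\text{partner}}$, with $c=\cos(\cdot)$ and $s=\sin(\cdot)$. The fidelity is then $c^2 F_{tt}+s^2F_{pp}+2cs\,\mathrm{Re}\,F_{tp}$, where $F_{xy}=\sum_{ab}\bra{\text{target}}K_a\otimes K_b\ket{x}\bra{y}(K_a\otimes K_b)^{\dagger}\ket{\text{target}}$. I will verify that the cross term vanishes. The reason is that AD on both sides preserves the parity of excitation number modulo the $\ket{11}\to\ket{00}$ jump, so the $\Phi$ sector and the $\Psi$ sector are not mixed back onto the target.

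\textbf{Step 3: Collect terms.} The remaining work is arithmetic: compute the four numbers $F_{tt},F_{pp}$ for the $\Phi$ and $\Psi$ cases and use $s^2=1-c^2$. For $\Phi$-targets, $F_{tt}=1-p+p^2/2$ and the partner ($\Psi$-type) contributes $p/2$ through the jump $K_1\otimes K_0$. This gives $\tfrac12[p+(1-p)(2-p)c^2]$. For $\Psi$-targets, $F_{tt}=1-p$ and the partner contributes $p(1-p)/2$, which gives $\tfrac{1-p}{2}[p+(2-p)c^2]$.

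The main obstacle is bookkeeping rather than any conceptual step. Two points need care. The first is the sign of the $\sin$ term and whether the partner is $\Psi^{\mp}$ or $\Phi^{\mp}$ in each configuration, since this determines $\alpha+\beta$ versus $\beta-\alpha$. The second is the precise vanishing of the cross term. I will settle both by writing the $4\times4$ matrices explicitly in the computational basis for one representative case, and then use the $\mathrm{Z}$- and $\mathrm{X}$-conjugation symmetries of $\mN_{\mathrm{AD}}$ to transfer the result to the other configurations.
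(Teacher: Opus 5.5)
Your approach is right, and it is exactly the direct calculation the paper omits. The identification $p\equiv\gamma$ (damping probability) is needed and you have it right; this lemma silently uses the opposite convention to the Kraus table. Your Step 1 decompositions are correct: $\Phi^{+}$ and $\Psi^{-}$ pick up $\beta-\alpha$, while $\Phi^{-}$ and $\Psi^{+}$ pick up $\alpha+\beta$, so (a)--(d) are $\Phi^{+},\Phi^{-},\Psi^{+},\Psi^{-}$ respectively. The values of $F_{tt}$ and $F_{pp}$ you list in Step 3 reproduce all four formulas.

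One step of your closing plan would fail, however. $\mN_{\mathrm{AD}}$ is covariant under $\mathrm{Z}$-conjugation but not under $\mathrm{X}$-conjugation. For example, $\mN_{\mathrm{AD}}(\ketbra{1}{1})=p\ketbra{0}{0}+(1-p)\ketbra{1}{1}$, whereas $\mathrm{X}\,\mN_{\mathrm{AD}}(\ketbra{0}{0})\,\mathrm{X}=\ketbra{1}{1}$. So a $\Phi$-type computation cannot be transferred to the $\Psi$-type configurations; indeed (a) and (c) already differ at $\alpha=\beta=0$.

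You therefore need two genuine computations, one per type, and your Step 3 already contains both. Conjugating by $\mathrm{Z}\otimes\1$ then gives the remaining two cases: it maps $\Phi^{+}\leftrightarrow\Phi^{-}$ and $\Psi^{-}\leftrightarrow\Psi^{+}$ and sends $\alpha\mapsto-\alpha$.

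Two smaller corrections to the wording of Step 2.

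\textbf{Partner weight.} For $\Phi$-targets the weight $p/2$ does not come from $K_1\otimes K_0$ alone. It comes from both single-jump products $K_0\otimes K_1$ and $K_1\otimes K_0$, each contributing $p/4$. Likewise, for $\Psi$-targets each single jump contributes $p(1-p)/4$.

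\textbf{Vanishing cross term.} The cross term does not vanish because parity is preserved; the single jumps flip it. It vanishes because each $K_a\otimes K_b$ lowers the excitation number by exactly $a+b$. Hence no single Kraus product can have both $\bra{t}K_a\otimes K_b\ket{t}$ and $\bra{t}K_a\otimes K_b\ket{\text{partner}}$ nonzero.
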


The results follow from direct calculation and are therefore omitted. 
Based on these expressions in Lem.~\ref{lem:Fidelities_Noisy_Bell_States}, it is straightforward to verify that all the noisy Bell states satisfy
\begin{align}
    F_{\mathrm{Bell},i}(\alpha,\beta)\leqslant
    F_{\mathrm{Bell},i}(0,0),
    \quad\forall\,\,\,
    i\in\{a, b, c, d\}.
\end{align}
This yields the following theorem.

\begin{mythm}{Ineffectiveness of Local-Unitary Pre-Processing}{Ineffectiveness_LU_PreP}
    For single-shot pre-processing (PreP) purification, or equivalently entanglement distillation, as considered in Fig.~\ref{fig:Noisy_Bell_States}, when PreP consists solely of local $y$-axis rotations, i.e., $\mathrm{R}_{y}(\alpha)\otimes\mathrm{R}_{y}(\beta)$, applied after the preparation of Bell states (see Eq.~\eqref{eq:Bell_Set}) but prior to the noise, $\mN_{\mathrm{AD}}(p)\otimes\mN_{\mathrm{AD}}(p)$, no improvement in the final fidelity can be achieved. 
    In other words, local-unitary (LU) PreP is ineffective in this setting.
\end{mythm}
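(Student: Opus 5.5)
The proof reduces to Lem.~\ref{lem:Fidelities_Noisy_Bell_States}, which gives closed-form fidelities for the four configurations in Fig.~\ref{fig:Noisy_Bell_States}, one for each Bell state in $\mS_{\mathrm{Bell}}$. Each expression has the form $F_{\mathrm{Bell},i}(\alpha,\beta)=A_i(p)+B_i(p)\cos^{2}(\theta_i/2)$, with $\theta_i\in\{\beta-\alpha,\alpha+\beta\}$. For $i\in\{a,b\}$ the coefficients are
\begin{align}
    A_i=\frac{p}{2},\qquad B_i=\frac{(1-p)(2-p)}{2}.
\end{align}
For $i\in\{c,d\}$ they are
\begin{align}
    A_i=\frac{(1-p)p}{2},\qquad B_i=\frac{(1-p)(2-p)}{2}.
\end{align}
The plan has two steps. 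First, show $B_i(p)\geqslant0$ for $p\in[0,1]$; this holds because $1-p\geqslant0$ and $2-p>0$. Second, use $\cos^{2}(\cdot)\leqslant1=\cos^{2}(0)$. Together these give $F_{\mathrm{Bell},i}(\alpha,\beta)\leqslant A_i+B_i=F_{\mathrm{Bell},i}(0,0)$ for all $\alpha,\beta$. The maximum is attained at $\alpha=\beta=0$, which is the no-pre-processing protocol. So local $y$-rotations inserted after Bell-state preparation and before $\mN_{\mathrm{AD}}(p)^{\otimes2}$ cannot raise the fidelity.

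The substantive work is verifying the formulas of Lem.~\ref{lem:Fidelities_Noisy_Bell_States}, which the paper omits. I would do this as follows.
\begin{enumerate}
    \item Write each Bell state as $(\1\otimes P)\ket{\Phi^{+}}$ with $P\in\{\1,\mathrm{Z},\mathrm{X},\mathrm{XZ}\}$.
    \item Use the real-matrix transpose trick $(M\otimes\1)\ket{\Phi^{+}}=(\1\otimes M^{\T})\ket{\Phi^{+}}$ with $\mathrm{R}_y(\alpha)^{\T}=\mathrm{R}_y(-\alpha)$. This moves $\mathrm{R}_y(\alpha)$ onto the second factor and combines it with $\mathrm{R}_y(\beta)$ into a single rotation $\mathrm{R}_y(\beta\mp\alpha)$, where the sign depends on how $P$ commutes with $\mathrm{R}_y$. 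This accounts for the appearance of $\beta-\alpha$ versus $\alpha+\beta$.
    \item Apply the two Kraus operators of $\mN_{\mathrm{AD}}(p)$ on each side, giving four terms. Compute the overlaps $|\bra{\mathrm{Bell}}(K_k\otimes K_l)\,U\ket{\Phi^{+}}|^{2}$ entrywise.
    \item Collect terms. The terms with $K_1\otimes K_1$ and the cross terms produce the constant $p/2$ or $(1-p)p/2$, and the $K_0\otimes K_0$ term produces the $\cos^{2}$ contribution.
\end{enumerate}

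The main obstacle is the bookkeeping in steps 3 and 4, in particular confirming that no $\sin\theta$ cross term survives. Such a term would break the monotone structure the first paragraph relies on. I would guard against this by symbolic computation and by spot checks at $p=0$ and $p=1$, where the fidelities are known.

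If some configuration did produce a mixed term $C\sin\theta$, I would maximize $B\cos^{2}(\theta/2)+C\sin\theta$ directly over $\theta$ and check that the maximizer is $\theta=0$. Given the stated formulas, however, the monotonicity argument in the first paragraph suffices.
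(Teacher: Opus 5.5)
Your proposal is correct and follows the paper's own argument. The paper likewise takes the closed forms of Lem.~\ref{lem:Fidelities_Noisy_Bell_States}, whose derivation it omits as a direct calculation, and observes that every $\cos^{2}$ coefficient is nonnegative, so $F_{\mathrm{Bell},i}(\alpha,\beta)\leqslant F_{\mathrm{Bell},i}(0,0)$ for all $i\in\{a,b,c,d\}$. Two small cautions for your verification step, neither of which affects the argument: in that lemma $p$ is the damping probability (noiseless at $p=0$), opposite to the Kraus parametrization of Tab.~\ref{tab:Quantum_Noise_Models}, so your $p=0,1$ spot checks will look swapped; and your step~4 bookkeeping is off, because for $\Phi^{\pm}$ the $K_{1}\otimes K_{1}$ term gives $\tfrac{p^{2}}{4}\cos^{2}(\theta/2)$ rather than a constant, while the constant comes from the cross terms through $\sin^{2}=1-\cos^{2}$, so the $\cos^{2}$ coefficient is not produced by $K_{0}\otimes K_{0}$ alone.
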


As a direct consequence of Thm.~\ref{thm:Ineffectiveness_LU_PreP}, we obtain the following corollary.

\begin{mycor}{Ineffectiveness of Pre-Processing in Fig.~\ref{fig:PreP_Locations}(b)}{Ineffectiveness_LU_PreP_4b}
    The pre-processing stage in the protocol of Fig.~\ref{fig:PreP_Locations}(b) is ineffective: it does not improve the fidelity and offers no advantage for purification.
\end{mycor}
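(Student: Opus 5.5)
The plan is to reduce the corollary to Thm.~\ref{thm:Ineffectiveness_LU_PreP}. The approach is to identify the state entering the noise in Fig.~\ref{fig:PreP_Locations}(b) with the state considered in Fig.~\ref{fig:Noisy_Bell_States}, and then read off the fidelity bound.

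The first step is to note that the entangling operation $\mathrm{CNOT}\circ(\mathrm{H}\otimes\id)$ acting on $\ket{00}$ prepares exactly $\ket{\Phi^{+}}=\ket{\phi^{+}}$, which is an element of $\mS_{\mathrm{Bell}}$ in Eq.~\eqref{eq:Bell_Set}. In Fig.~\ref{fig:PreP_Locations}(b), the local rotations $\mathrm{R}_{y}(\alpha)\otimes\mathrm{R}_{y}(\beta)$ then act on this Bell state. They are followed by $\mN_{\mathrm{AD}}(p)\otimes\mN_{\mathrm{AD}}(p)$, and the output is scored by its fidelity with the target $\phi^{+}$. This is precisely the configuration of Fig.~\ref{fig:Noisy_Bell_States} in which the prepared Bell state is $\Phi^{+}$. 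I would check that this is case (a), whose fidelity is $F_{\mathrm{Bell},a}(\alpha,\beta)$ in Eq.~\eqref{eq:Fidelity_Noisy_Bell_1}; a quick sanity check is that at $\alpha=\beta=0$ one recovers the standard AD value $\tfrac12[p+(1-p)(2-p)]$.

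The second step applies Lem.~\ref{lem:Fidelities_Noisy_Bell_States}. Since $\cos^{2}((\beta-\alpha)/2)\leqslant 1$ and the prefactor $(1-p)(2-p)\geqslant 0$ for $p\in[0,1]$, we obtain $F_{\mathrm{Bell},a}(\alpha,\beta)\leqslant F_{\mathrm{Bell},a}(0,0)$. The choice $(\alpha,\beta)=(0,0)$ is exactly the protocol without pre-processing, so no choice of local $y$-rotations improves the fidelity. This is the content of Thm.~\ref{thm:Ineffectiveness_LU_PreP}, specialized to this placement, and it establishes the corollary.

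The main obstacle is the bookkeeping in the first step. One must make sure that the conventions of Fig.~\ref{fig:PreP_Locations}(b) match the corresponding panel of Fig.~\ref{fig:Noisy_Bell_States}: the order of rotation and noise, the target state, and the sign convention of $\mathrm{R}_{y}$, which decides whether the $\alpha-\beta$ or the $\alpha+\beta$ formula applies. The argument is robust to this ambiguity, however. Every expression in the lemma is maximized at $\alpha=\beta=0$, including the shifted cases (c),(d), which carry the same $\cos^{2}\leqslant1$ structure with a nonnegative prefactor. The conclusion therefore holds whichever case the placement corresponds to, and I would state it this way to avoid depending on the exact labeling.
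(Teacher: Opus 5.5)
Your proposal is correct and follows the paper's route. The paper obtains the corollary directly from Thm.~\ref{thm:Ineffectiveness_LU_PreP}, i.e., from the fact that every fidelity in Lem.~\ref{lem:Fidelities_Noisy_Bell_States}, in particular $F_{\mathrm{Bell},a}$ (the $\Phi^{+}$ configuration realized by Fig.~\ref{fig:PreP_Locations}(b)), is maximized at $\alpha=\beta=0$ because the $\cos^{2}$ term has a nonnegative prefactor. Your explicit identification of Fig.~\ref{fig:PreP_Locations}(b) with that case is bookkeeping the paper leaves implicit.
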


It is then natural to ask whether pre-processing-augmented purification in Fig.~\ref{fig:PreP_Locations}(a) can be effective. 
To address this question, the analysis is restricted to a minimal setting by setting $\beta = 0$, such that pre-processing is applied only on system $A$ (Alice's side). 
Under this choice, the states transform as follows
\begin{align}
    \ket{00}
    \xrightarrow{\mathrm{R}_{y}(\alpha)\otimes\mathrm{R}_{y}(0)}
    \frac{1}{\sqrt{2}}\left(\cos{\frac{\alpha}{2}}+\sin{\frac{\alpha}{2}}\right)\ket{00}
    +
    \frac{1}{\sqrt{2}}\left(\cos{\frac{\alpha}{2}}-\sin{\frac{\alpha}{2}}\right)\ket{11}.
\end{align}
Following the subsequent entangling operation $\mathrm{CNOT}\circ(\mathrm{H}\otimes\id)$ and the action of noise $\mathrm{R}_{y}(\alpha)\otimes\mathrm{R}_{y}(\beta)$, the fidelity of the final state is characterized by

\begin{mylem}{Fidelity of Protocol in Fig.~\ref{fig:PreP_Locations}(a)}{Fidelity_LU_PreP_4a}
    For pre-processing-augmented (PreP) generation of a maximally entangled state $\phi^{+}$ under amplitude damping (AD) noise $\mN_{\mathrm{AD}}(p)\otimes\mN_{\mathrm{AD}}(p)$, as illustrated in Fig.~\ref{fig:PreP_Locations}(a) with $\beta = 0$, the resulting fidelity $F_{\ref{fig:PreP_Locations}(a)}(\alpha)$ is given by
    \begin{align}
        F_{\ref{fig:PreP_Locations}(a)}(\alpha)=
        \frac{1 - p + p^{2}}{2}
        +
        \frac{(1 - p)\sqrt{1 + p^{2}}}{2}
        \cos\!\left(\alpha - \alpha^{*}\right),
    \end{align}
    where the quantity $\alpha^{*}$ is defined as $\alpha^{*} = \arctan(p)$.
\end{mylem}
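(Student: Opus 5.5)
The plan is a direct Kraus-operator computation. There are three steps: write the pre-noise state in closed form, push it through $\mN_{\mathrm{AD}}(p)\otimes\mN_{\mathrm{AD}}(p)$ term by term, and then recombine the $\alpha$-dependence into a single cosine.

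\textbf{Step 1: the pre-noise state.} With $\beta=0$ we have $\mathrm{R}_{y}(\alpha)\ket{0}=\cos\frac{\alpha}{2}\ket{0}+\sin\frac{\alpha}{2}\ket{1}$. The Hadamard followed by the $\mathrm{CNOT}$ turns this into the pure state
\begin{align}
    \ket{\psi_\alpha}=a\ket{00}+b\ket{11},
    \qquad
    a=\tfrac{1}{\sqrt2}\bigl(\cos\tfrac{\alpha}{2}+\sin\tfrac{\alpha}{2}\bigr),
    \qquad
    b=\tfrac{1}{\sqrt2}\bigl(\cos\tfrac{\alpha}{2}-\sin\tfrac{\alpha}{2}\bigr),
\end{align}
which agrees with the display preceding the lemma. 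The only data needed downstream are
\begin{align}
    a^{2}=\tfrac{1+\sin\alpha}{2},
    \qquad
    b^{2}=\tfrac{1-\sin\alpha}{2},
    \qquad
    ab=\tfrac{\cos\alpha}{2}.
\end{align}

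\textbf{Step 2: the noise and the overlap.} Apply the four Kraus products $K_i\otimes K_j$ of Tab.~\ref{tab:Quantum_Noise_Models} and evaluate the fidelity $\bra{\phi^{+}}\rho\ket{\phi^{+}}$ branch by branch.
\begin{itemize}
    \item The mixed branches $K_0\otimes K_1$ and $K_1\otimes K_0$ map $\ket{11}$ into $\mathrm{span}\{\ket{01},\ket{10}\}$, which is orthogonal to $\ket{\phi^{+}}$, so they contribute nothing.
    \item The branch $K_1\otimes K_1$ sends $b\ket{11}$ to $\ket{00}$ with weight $\propto b\,\gamma$, where $\gamma$ is the per-qubit decay probability. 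It contributes $\tfrac12 b^{2}\gamma^{2}$.
    \item The branch $K_0\otimes K_0$ keeps the coherent pair $a\ket{00}+b\,\eta\ket{11}$, where $\eta=1-\gamma$ is the two-qubit coherence-survival factor. It contributes $\tfrac12|a+b\eta|^{2}$.
\end{itemize}
Writing $\eta^{2}+\gamma^{2}=1-2p+2p^{2}$ (this combination is symmetric under $p\leftrightarrow 1-p$) and collecting terms gives
\begin{align}
    F=\tfrac12\bigl[\,1-p+p^{2}+p(1-p)\sin\alpha+\eta\cos\alpha\,\bigr].
\end{align}

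\textbf{Step 3: recombination.} Matching the target formula requires $\eta=1-p$. In that case the $\alpha$-dependent part is $(1-p)\bigl(\cos\alpha+p\sin\alpha\bigr)$. The harmonic-addition identity
\begin{align}
    \cos\alpha+p\sin\alpha=\sqrt{1+p^{2}}\,\cos(\alpha-\alpha^{*}),
    \qquad
    \alpha^{*}=\arctan p,
\end{align}
then yields exactly the stated expression. As a sanity check, at $\alpha=0$ the result reduces to $F_{\mathrm{Bell},a}(0,0)$ of Lem.~\ref{lem:Fidelities_Noisy_Bell_States}.

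\textbf{Main obstacle: the noise convention.} Read literally, the Kraus pair of Tab.~\ref{tab:Quantum_Noise_Models}, namely $K_0=\mathrm{diag}(1,\sqrt p)$ and $K_1=\sqrt{1-p}\ket{0}\!\bra{1}$, gives $\eta=p$ and $\gamma=1-p$. That produces the coefficient $p\cos\alpha$ instead of $(1-p)\cos\alpha$. The lemma therefore holds with $p$ read as the damping probability, i.e.\ $K_0=\mathrm{diag}(1,\sqrt{1-p})$. I would fix this convention explicitly at the outset and check it against Eq.~\eqref{eq:Fidelity_Noisy_Bell_1} at $\alpha=\beta=0$. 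The remaining steps are bookkeeping.
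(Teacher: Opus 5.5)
Your proposal is correct and follows the same route as the paper: a direct Kraus-operator calculation, whose details the paper omits, finished by the harmonic-addition step $\cos\alpha+p\sin\alpha=\sqrt{1+p^{2}}\cos(\alpha-\arctan p)$. Your convention caveat is also right: the lemma, the stated value $F(0)=1-p+p^{2}/2$, and Eq.~\eqref{eq:Fidelity_Noisy_Bell_1} all take $p$ as the damping probability, whereas the Kraus table, and the numerics of Tab.~\ref{tab:Distributed_Multiple_Copies} (where $F_a=(1+p^{2})/2$), take $p$ as the survival probability, so the two are related by $p\leftrightarrow 1-p$.
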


The maximal performance achievable with PreP in Fig.~\ref{fig:PreP_Locations}(a) is given by
\begin{align}
    \max_{\alpha} F_{\ref{fig:PreP_Locations}(a)}(\alpha)=
    \frac{1}{2}
    \left(
        1 - p + p^{2} + (1 - p)\sqrt{1 + p^{2}}
    \right).
\end{align}
In the absence of PreP, i.e., for $\alpha = 0$, the fidelity reduces to
\begin{align}
    F_{\ref{fig:PreP_Locations}(a)}(0)=
    1 - p + \frac{p^{2}}{2}.
\end{align}
Their difference
\begin{align}
    \max_{\alpha} F_{\ref{fig:PreP_Locations}(a)}(\alpha)-F_{\ref{fig:PreP_Locations}(a)}(0)
    =
    \frac{1 - p}{2}
    \left(
        \sqrt{1 + p^{2}} - 1
    \right)
    > 0,
    \quad\forall\,\,\, 
    p \in (0,1),
\end{align}
shows that a local $y$-axis rotation $\mathrm{R}_{y}(\alpha)\otimes\1$ on a single subsystem enhances the fidelity.

\begin{mycor}{Effectiveness of Pre-Processing in Fig.~\ref{fig:PreP_Locations}(a)}{Effectiveness_LU_PreP_4a}
    The pre-processing stage in the protocol of Fig.~\ref{fig:PreP_Locations}(a) is effective: there exist local unitary (LU) PreP operations that enhance purification, more precisely, single-shot entanglement distillation in this setting.
\end{mycor}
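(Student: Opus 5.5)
The plan is to read the corollary directly off Lem.~\ref{lem:Fidelity_LU_PreP_4a}: exhibit one explicit local-unitary pre-processing, namely $\mathrm{R}_{y}(\alpha^{*})\otimes\1$ with $\alpha^{*}=\arctan(p)$, and show that it strictly beats the protocol without pre-processing for every $p\in(0,1)$. Since $\mathrm{R}_{y}(\alpha)\otimes\mathrm{R}_{y}(0)$ is a product of local unitaries applied before the entangling gate $\mathrm{CNOT}\circ(\mathrm{H}\otimes\id)$, as in Fig.~\ref{fig:PreP_Locations}(a), it is an admissible LU PreP. Setting $\alpha=0$ recovers the conventional pipeline, in which the ideal $\phi^{+}$ is simply exposed to $\mN_{\mathrm{AD}}(p)\otimes\mN_{\mathrm{AD}}(p)$. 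So the comparison reduces to comparing $F_{\ref{fig:PreP_Locations}(a)}(\alpha^{*})$ with $F_{\ref{fig:PreP_Locations}(a)}(0)$.

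First, I would evaluate the lemma's closed form at the two points. At $\alpha=\alpha^{*}$ the cosine equals $1$, which gives the maximum
\begin{align}
    \tfrac{1}{2}\bigl(1-p+p^{2}+(1-p)\sqrt{1+p^{2}}\bigr).
\end{align}
At $\alpha=0$ I would use $\cos(\alpha^{*})=1/\sqrt{1+p^{2}}$. This collapses the second term to $(1-p)/2$, so
\begin{align}
    F_{\ref{fig:PreP_Locations}(a)}(0)=1-p+\tfrac{p^{2}}{2}.
\end{align}
As a consistency check, this should coincide with the directly computed fidelity of $\mN_{\mathrm{AD}}(p)^{\otimes 2}(\phi^{+})$ with $\phi^{+}$, and also with $F_{\mathrm{Bell},a}(0,0)$ from Lem.~\ref{lem:Fidelities_Noisy_Bell_States}. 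Both give $\tfrac{1}{2}[p+(1-p)(2-p)]=1-p+\tfrac{p^{2}}{2}$, confirming that $\alpha=0$ is the genuine no-PreP baseline.

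Subtracting the two values gives
\begin{align}
    \tfrac{1-p}{2}\bigl(\sqrt{1+p^{2}}-1\bigr).
\end{align}
Both factors are strictly positive for $p\in(0,1)$, so the improvement is strict. At the endpoints $p=0$ and $p=1$ the gap vanishes, consistent with noiseless and fully damped channels. This strict inequality is exactly the statement that an effective LU PreP exists.

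The main obstacle is not this comparison but the validity of Lem.~\ref{lem:Fidelity_LU_PreP_4a} itself. The plan takes it as given, since it is stated earlier. If I had to verify it, I would first propagate $\ket{00}$ through $\mathrm{R}_{y}(\alpha)\otimes\1$ and the entangling gate. This yields the state $a\ket{00}+b\ket{11}$ with $a,b=(\cos\frac{\alpha}{2}\pm\sin\frac{\alpha}{2})/\sqrt{2}$. Next I would apply the AD Kraus operators of Tab.~\ref{tab:Quantum_Noise_Models} on both qubits and collect the overlap with $\phi^{+}$. Finally, I would rewrite the resulting combination of $\cos\alpha$ and $\sin\alpha$ in amplitude–phase form; this is where the amplitude $(1-p)\sqrt{1+p^{2}}/2$ and the phase $\alpha^{*}=\arctan p$ should emerge.
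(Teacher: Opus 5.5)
Your proposal is correct and matches the paper's argument: evaluate the closed form of Lem.~\ref{lem:Fidelity_LU_PreP_4a} at $\alpha=\alpha^{*}$ and at $\alpha=0$, then note that the gap $\frac{1-p}{2}\bigl(\sqrt{1+p^{2}}-1\bigr)$ is strictly positive for $p\in(0,1)$, so $\mathrm{R}_{y}(\alpha^{*})\otimes\1$ is an effective LU PreP. The only flaw is in your side check: with the Kraus operators of Tab.~\ref{tab:Quantum_Noise_Models} taken literally ($K_0=\mathrm{diag}(1,\sqrt{p})$), the no-PreP fidelity of $\mN_{\mathrm{AD}}(p)^{\otimes 2}(\phi^{+})$ is $(1+p^{2})/2$, not $1-p+\tfrac{p^{2}}{2}$, and the lemma (like Lem.~\ref{lem:Fidelities_Noisy_Bell_States}) is recovered only when $p$ is read as the damping probability, i.e.\ after $p\to 1-p$ --- this leaves the corollary intact, since the gap is positive for every nontrivial damping strength.
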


Corollary~\ref{cor:Effectiveness_LU_PreP_4a} captures the mechanism underlying the advantage demonstrated in recent quantum optical experiments on pre-processing–augmented single-shot entanglement distillation~\cite{glc7-xy8t}.
The preceding analysis in this subsection resolves the questions posed at the outset. 
Pre-processing can yield a genuine advantage in purification, as demonstrated by the protocol in Fig.~\ref{fig:PreP_Locations}(a). 
Crucially, its placement matters: 
although Fig.~\ref{fig:PreP_Locations}(a) is effective, the protocol in Fig.~\ref{fig:PreP_Locations}(b) is not.

A clear operational picture now emerges. 
Local-unitary pre-processing is not intrinsically beneficial; its impact is entirely governed by where it intervenes in the protocol. 
When inserted after entanglement generation, immediately before the noisy channels (see Fig.~\ref{fig:PreP_Locations}(b)), it is functionally redundant: 
the fidelity expressions of Lem.~\ref{lem:Fidelities_Noisy_Bell_States} certify that no improvement is possible, and Thm.~\ref{thm:Ineffectiveness_LU_PreP} together with Cor.~\ref{cor:Ineffectiveness_LU_PreP_4b} formalize this as a strict no-advantage statement. 
In contrast, when applied prior to the entangling operation (see Fig.~\ref{fig:PreP_Locations}(a)), pre-processing reshapes the state in a manner that is not erased by the subsequent dynamics. 
Lemma~\ref{lem:Fidelity_LU_PreP_4a} and Cor.~\ref{cor:Effectiveness_LU_PreP_4a} show that this early intervention produces a strictly positive gain for nontrivial noise strengths, with a well-defined optimal rotation that aligns the state against amplitude damping.
Pre-processing is effective only when it precedes the stage at which correlations are established, thereby modifying how entanglement is created and subsequently degraded; once entanglement is fixed, local rotations cannot recover what the noise removes. 
This identifies pre-processing as a stage-sensitive resource whose utility derives from its ability to bias the formation of entanglement, rather than to repair it after the fact.
These results also provide a concrete design principle for forward-assisted protocols --- namely, that useful interventions must be positioned upstream of both entanglement generation and noise --- and set the conceptual foundation for the more general spatiotemporal strategies developed in the following sections.


\subsection{When Noise Matters: Loss of Advantage}\label{subsec:When_Noise_Matters}

The previous subsection focused on amplitude damping (AD) noise. 
An immediate question is whether qualitatively different behaviour arises under other noise models, such as depolarizing noise, and whether the limitations of local-unitary (LU) pre-processing persist. 
The analysis below shows that LU pre-processing applied after the entangling gate and before the depolarizing channel does not confer any advantage for purification.

A key property underlying the analysis is the commutation of a quantum channel with unitary operations, known as unitary covariance (UC), defined as follows

\begin{mydef}{Unitary Covariance (UC)}{Unitary_Covariance}
    A quantum channel $\mE$ is said to be unitary covariant (UC) if it commutes with an arbitrary unitary operation $U$, that is,
    \begin{align}
    [\mE, \mU] = 0,
    \end{align}
    where the unitary channel $\mU$ acts as $\mU(X) = U X U^\dagger$.
\end{mydef}

Unitary covariance captures a fundamental symmetry of quantum dynamics, reflecting invariance under arbitrary changes of basis. 
It has been extensively studied in quantum information processing, where it often underlies the structure of optimal protocols and simplifies analytical characterizations. 
In particular, it plays a central role in recent developments on virtual broadcasting~\cite{PhysRevLett.132.110203,z2pr-zbwl,8g6j-w7ld,okada2025virtualphasecovariantquantumbroadcasting,wang2026practicalquantumbroadcasting}, where symmetry constraints critically shape the form and performance of optimal transformations.

Another useful concept is that of a unital channel.

\begin{mydef}{Unital}{Unital}
    A quantum channel $\mE$ is called unital if it preserves the identity operator, namely
    \begin{align}
    \mE(\1)=\1.
    \end{align}
    In terms of its Kraus operators $\{K_i\}_{i}$, where $\mE(\rho)=\sum_i K_i \rho K_i^{\dagger}$, this is equivalent to
    \begin{align}
    \sum_i K_i K_i^{\dagger}=\1.
    \end{align}
\end{mydef}

Unital channels play a foundational role in quantum information theory because they represent a specific class of physical processes that preserve ``maximal disorder''. 
Their importance spans from quantum thermodynamics to coherence.
Examples of unital channels include unitary evolutions, Pauli channels --- such as bit flip $\mN_{\mathrm{BF}}$, phase flip $\mN_{\mathrm{PF}}$, and depolarizing channels $\mN_{\mathrm{D}}$ --- as well as, more generally, mixed unitary channels. 
In contrast, the amplitude damping channel $\mN_{\mathrm{AD}}$ provides a canonical example of a non-unital channel.

For the depolarizing channel $\mN_{\mathrm{D}}$ (see Tab.~\ref{tab:Quantum_Noise_Models}), one has

\begin{mylem}{Commutation with Unital Channels}{Unital}
    The depolarizing channel commutes with any unital channel, i.e.,
    \begin{align}
        [\mN_{\mathrm{D}}(p), \mE]=0,
    \end{align}
    holds for any unital channel $\mE$.
\end{mylem}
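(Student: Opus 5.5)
The plan is to rewrite the depolarizing channel in its ``replacement'' form and then use only two properties of $\mE$: linearity together with unitality, $\mE(\1)=\1$, and trace preservation. I treat $\mE$ as a channel on the same qubit system, so that $\1$ on its input and output is the same operator; this is the setting in which the lemma is used. First, from the Kraus operators in Tab.~\ref{tab:Quantum_Noise_Models},
\begin{align}
    \mN_{\mathrm{D}}(p)(\rho)=\frac{1+3p}{4}\,\rho+\frac{1-p}{4}\left(\mathrm{X}\rho\mathrm{X}+\mathrm{Y}\rho\mathrm{Y}+\mathrm{Z}\rho\mathrm{Z}\right).
\end{align}

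The step I expect to need the most care is the Pauli-twirl identity
\begin{align}
    \rho+\mathrm{X}\rho\mathrm{X}+\mathrm{Y}\rho\mathrm{Y}+\mathrm{Z}\rho\mathrm{Z}=2\Tr[\rho]\,\1 .
\end{align}
I will verify it on the basis $\{\1,\mathrm{X},\mathrm{Y},\mathrm{Z}\}$. For $\rho=\1$ the left side is $4\,\1$. For $\rho=\mathrm{X}$ it is $\mathrm{X}+\mathrm{X}-\mathrm{X}-\mathrm{X}=0$, since Paulis either commute or anticommute, and the same holds for $\mathrm{Y}$ and $\mathrm{Z}$. Both sides are linear in $\rho$, so the identity holds for all $\rho$. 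Substituting it into the display above gives
\begin{align}
    \mN_{\mathrm{D}}(p)(\rho)=p\,\rho+(1-p)\Tr[\rho]\,\frac{\1}{2}.
\end{align}

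Next I compute both compositions on an arbitrary operator $\rho$. By linearity and unitality,
\begin{align}
    \mE\circ\mN_{\mathrm{D}}(p)(\rho)=p\,\mE(\rho)+(1-p)\Tr[\rho]\,\frac{\mE(\1)}{2}=p\,\mE(\rho)+(1-p)\Tr[\rho]\,\frac{\1}{2}.
\end{align}
In the other order,
\begin{align}
    \mN_{\mathrm{D}}(p)\circ\mE(\rho)=p\,\mE(\rho)+(1-p)\Tr[\mE(\rho)]\,\frac{\1}{2}.
\end{align}
Trace preservation gives $\Tr[\mE(\rho)]=\Tr[\rho]$, so the two expressions coincide. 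Hence $[\mN_{\mathrm{D}}(p),\mE]=0$.

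Unitality is used exactly once, to replace $\mE(\1)$ by $\1$. This is why the argument fails for non-unital maps such as $\mN_{\mathrm{AD}}$.
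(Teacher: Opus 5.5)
Your proof is correct and follows the same route as the paper: write $\mN_{\mathrm{D}}(p)$ in the replacement form $\rho\mapsto p\,\rho+(1-p)\Tr[\rho]\,\1/2$, then compare the two compositions using unitality and trace preservation of $\mE$. You are more careful than the paper, which asserts the replacement form without deriving it from the Kraus operators and uses trace preservation, and the assumption that $\mE$ acts on the same qubit space, only implicitly.
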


\begin{proof}
    For any input state $\rho$, it holds that
    \begin{align}
        \mN_{\mathrm{D}}(p) \circ \mE(\rho)=
        p\,\mE(\rho)+(1-p)\frac{\1}{d}=
        \mE\circ\mN_{\mathrm{D}}(p)(\rho).
    \end{align}
    This concludes the proof.
\end{proof}

As a direct consequence, it follows that

\begin{mycor}{Unitary Covariance of $\mN_{\mathrm{D}}$}{UC_Depolarizing}
    The depolarizing channel is unitary covariant.
\end{mycor}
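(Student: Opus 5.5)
The plan is to derive Cor.~\ref{cor:UC_Depolarizing} directly from Lem.~\ref{lem:Unital}, by checking that every unitary channel satisfies the hypothesis of that lemma. Fix an arbitrary unitary $U$ and the associated channel $\mU(X)=UXU^{\dagger}$.

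\textbf{Step 1: unitary channels are unital.} The channel $\mU$ has the single Kraus operator $K_0=U$. Hence
\begin{align}
    \sum_i K_iK_i^{\dagger}=UU^{\dagger}=\1 .
\end{align}
By Def.~\ref{def:Unital}, $\mU$ is unital. Equivalently, $\mU(\1)=U\1U^{\dagger}=\1$.

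\textbf{Step 2: apply the lemma.} Lemma~\ref{lem:Unital} then gives $[\mN_{\mathrm{D}}(p),\mU]=0$ for every $U$. This is exactly the defining condition of unitary covariance in Def.~\ref{def:Unitary_Covariance}, so the corollary follows.

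The only point requiring care is that the proof of Lem.~\ref{lem:Unital} uses the closed form $\mN_{\mathrm{D}}(p)(\rho)=p\rho+(1-p)\1/d$. The paper instead defines $\mN_{\mathrm{D}}$ through the Pauli Kraus operators in Tab.~\ref{tab:Quantum_Noise_Models}. I would therefore add a one-line check using the qubit twirl identity
\begin{align}
    \rho+\mathrm{X}\rho\mathrm{X}+\mathrm{Y}\rho\mathrm{Y}+\mathrm{Z}\rho\mathrm{Z}=2\Tr[\rho]\,\1 .
\end{align}
Substituting this into the Kraus form gives
\begin{align}
    \tfrac{1+3p}{4}\rho+\tfrac{1-p}{4}\bigl(2\Tr[\rho]\1-\rho\bigr)=p\rho+(1-p)\Tr[\rho]\tfrac{\1}{2}.
\end{align}
This is the required closed form, extended linearly to all operators so that it holds for $\mE(\rho)$ as well.

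With that form, the commutation can also be verified directly: $\mU(\1)=\1$ and $\mU$ preserves the trace, so both orderings produce $pU\rho U^{\dagger}+(1-p)\Tr[\rho]\1/d$.

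Once this identification is made, no real obstacle remains.
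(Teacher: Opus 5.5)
Your proposal is correct and follows the paper's route: the corollary is presented as a direct consequence of Lem.~\ref{lem:Unital}, since every unitary channel $\mU$ is unital and unitary covariance is exactly commutation with all such $\mU$. Your additional check that the Pauli--Kraus form in Tab.~\ref{tab:Quantum_Noise_Models} reduces to $p\rho+(1-p)\Tr[\rho]\1/2$ makes explicit a step the paper's lemma proof leaves implicit. So does your remark that trace preservation of $\mU$ is used alongside unitality.
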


Equipped with these results, we now state a central theorem of this subsection: under depolarizing noise, pre-processing restricted to local unital channels cannot improve purification performance in the distributed setting.

\begin{figure}[htbp]
    \centering   
    \includegraphics[width=1\textwidth]{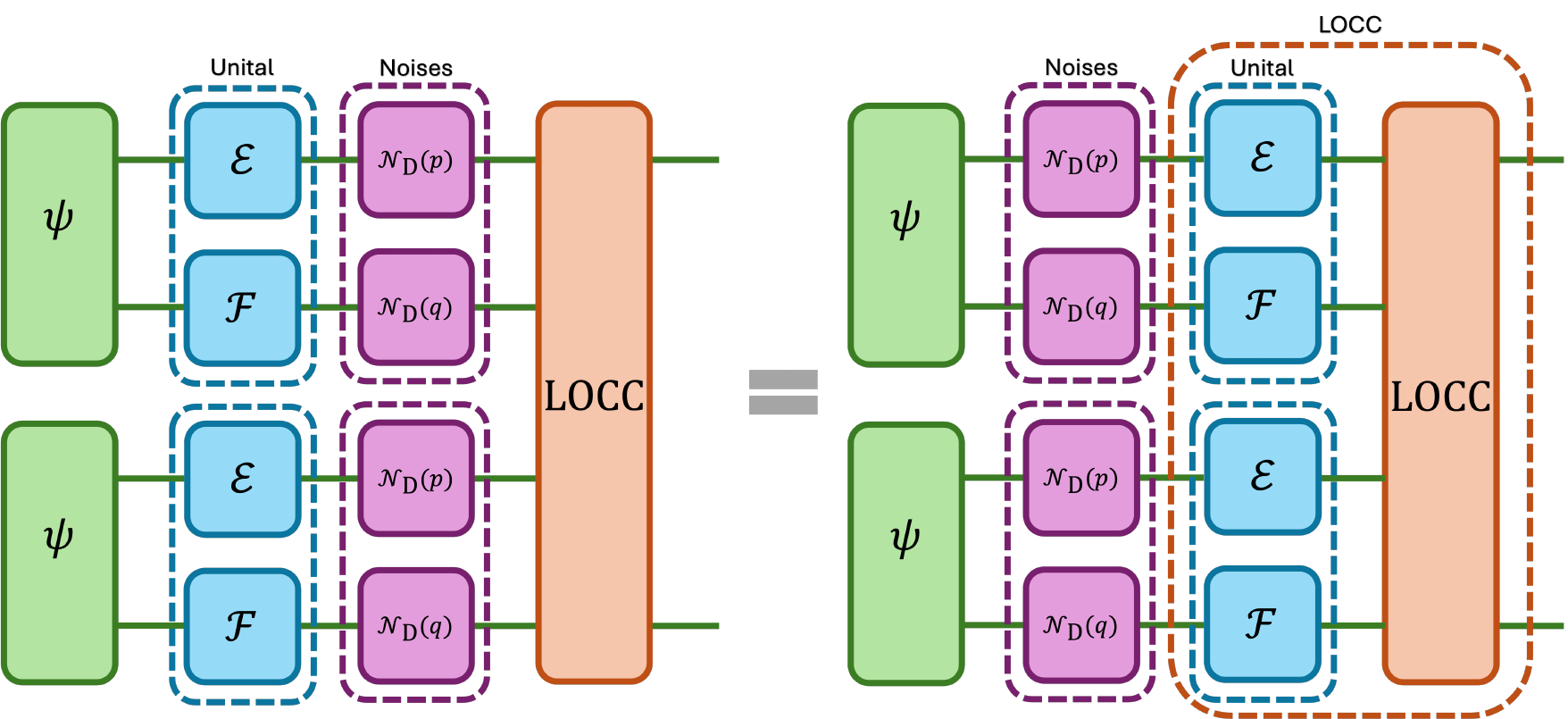}
    \caption{\textbf{2-to-1 LOCC Purification}. 
        In the left panel, local unital pre-processing (PreP) is applied after state preparation and prior to the noise. 
        Following the action of the noise, LOCC post-processing (PostP) is performed to enhance purification performance. 
        Owing to the commutation between the depolarizing channel and unital channels (Lem.~\ref{lem:Unital}), the purple and blue boxes can be interchanged, yielding the right panel. 
        In this representation, the unital operations can be absorbed into the LOCC PostP, implying that PreP provides no advantage in this protocol. 
        Although illustrated for the 2-to-1 setting, the same argument extends to arbitrary n-to-1 purification.
    }
    \label{fig:D_LOCC}
\end{figure}

\begin{mythm}{Ineffectiveness of Local-Unital Pre-Processing}{Ineffectiveness_LUnital_PreP}
    For n-to-1 unassisted (UA) purification, as considered in Fig.~\ref{fig:D_LOCC}, suppose the pre-processing (PreP) consists solely of local unital channels, e.g., $\mE\otimes\mF$ with $\mE$ and $\mF$ unital, applied after the state preparation but before the noise $\mN_{\mathrm{D}}(p)\otimes\mN_{\mathrm{D}}(q)$, while the post-processing (PostP) is restricted to local operations and classical communication (LOCC). Then, no improvement in the final fidelity can be achieved. 
    In other words, local-unital (LU) PreP is ineffective in this setting.
\end{mythm}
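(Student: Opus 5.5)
The plan is to push the local unital pre-processing through the noise and absorb it into the LOCC post-processing, exactly as suggested by Fig.~\ref{fig:D_LOCC}. Throughout, write the $n$ copies of the pre-processing as $(\mE_1\otimes\mF_1)\otimes\cdots\otimes(\mE_n\otimes\mF_n)$, where each $\mE_k$ acts on Alice's $k$-th subsystem and each $\mF_k$ on Bob's, and all of these maps are unital. The noise is $(\mN_{\mathrm{D}}(p)\otimes\mN_{\mathrm{D}}(q))^{\otimes n}$.

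\textbf{Commutation.} I first invoke Lem.~\ref{lem:Unital} locally: $\mN_{\mathrm{D}}(p)\circ\mE_k=\mE_k\circ\mN_{\mathrm{D}}(p)$ and $\mN_{\mathrm{D}}(q)\circ\mF_k=\mF_k\circ\mN_{\mathrm{D}}(q)$. Since the maps act on distinct tensor factors, these relations lift to the full product. This gives
\begin{align}
(\mN_{\mathrm{D}}(p)\otimes\mN_{\mathrm{D}}(q))^{\otimes n}\circ\textstyle\bigotimes_k(\mE_k\otimes\mF_k)=\textstyle\bigotimes_k(\mE_k\otimes\mF_k)\circ(\mN_{\mathrm{D}}(p)\otimes\mN_{\mathrm{D}}(q))^{\otimes n}.
\end{align}
One point needs care here. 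The lemma's proof uses $\mE(\1)=\1$ with matching input and output dimension, so $\mE(\1/d)=\1/d$. I will therefore restrict to unital channels on a fixed local dimension, which is the setting of the figure.

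\textbf{Absorption.} For any LOCC post-processing $\Lambda$, the overall output becomes $\Lambda\circ\bigotimes_k(\mE_k\otimes\mF_k)$ applied to $\mN^{\otimes n}(\psi^{\otimes n})$. The map $\Lambda':=\Lambda\circ\bigotimes_k(\mE_k\otimes\mF_k)$ is again LOCC, because it is a local operation by each party followed by an LOCC protocol. Hence every UA protocol with local unital PreP has the same output, state by state, as a PostP-only protocol with LOCC post-processing $\Lambda'$. It follows that the optimal average fidelity with PreP is at most the PostP-only optimum. The reverse inequality is trivial, by taking identity pre-processing, so the two optima are equal.

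\textbf{Main obstacle.} I expect the main difficulty to be precision rather than computation. First, the commutation must hold at the level of the entire spatiotemporal circuit, including when the pre-processing channels are chosen per copy. Second, the statement ``no improvement'' has to be phrased as equality of optimal values over the LOCC class, since it is not a claim about an individual protocol. I would also remark that the same argument covers the PPTp relaxation, because composing with local channels preserves PPT-preservation. Finally, it extends to probabilistic protocols, because the absorbed maps are trace preserving and so leave the success probability unchanged.
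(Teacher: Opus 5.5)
Your proposal is correct and is essentially the paper's own graphical proof in Fig.~\ref{fig:D_LOCC}: commute each local unital map past the depolarizing noise using Lem.~\ref{lem:Unital}, then absorb it into the LOCC post-processing, so the optimum with pre-processing equals the post-processing-only optimum. Your per-copy form $\bigotimes_k(\mE_k\otimes\mF_k)$ is exactly the scope in which this works, since a unital map acting jointly on Alice's $n$ qubits (such as a CNOT between copies) need not commute with $\mN_{\mathrm{D}}(p)^{\otimes n}$; your extensions to PPTp and probabilistic post-processing are also valid.
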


A graphical proof of the above Thm.~\ref{thm:Ineffectiveness_LUnital_PreP}, which also serves as a schematic illustration, is provided in Fig.~\ref{fig:D_LOCC}.

The results of this subsection reveal a qualitatively different mechanism governing purification under depolarizing noise. 
Owing to Lem.~\ref{lem:Unital}, local-unital pre-processing commutes with the noise and can therefore be freely rearranged within the protocol, as demonstrated in Fig.~\ref{fig:D_LOCC}. 
In particular, any such operation can be absorbed into the subsequent LOCC post-processing without affecting the overall purification. 
The apparent freedom to pre-process thus carries no operational consequence.
The material developed in this and the preceding subsection lays the groundwork for distributed quantum state purification, to be explored in the subsections that follow.


\subsection{Local Unitary Pre-Processing Purification}\label{subsec:LU_Pre_Processing}

Section~\ref{sec:Advantages_GQSP} developed the spatiotemporal framework in the context of global quantum state purification (see Fig.~\ref{fig:Conventional_Purification}), demonstrating that forward-assisted (FA) protocols can outperform conventional approaches limited to post-processing when arbitrary CPTP maps are available. 
This viewpoint, while comprehensive, leaves open an equally fundamental regime: 
distributed purification. 
Here, each copy is prepared as a bipartite state shared across spatially separated agents forming a network, and the admissible operations are restricted to local operations and classical communication (LOCC). 
This setting naturally encompasses entanglement distillation as a central special case --- a cornerstone of quantum information processing and quantum computing.

Within this constrained operational landscape, FA protocols exhibit two advantages. They deliver higher output fidelity than conventional post-processing-only schemes, and exhibit a clear separation in sample efficiency: a single-copy protocol augmented with pre-processing can already surpass conventional distributed purification that consumes 4 copies. 
The present demonstration is limited to the 4-to-1 regime by computational constraints rather than by any intrinsic limitation of the protocol. 
Notably, in this distributed setting, the advantage arises from pre-processing alone, without the need for subsequent post-processing, revealing an operational role that is qualitatively distinct from that observed in global quantum state purification.

To expose this advantage in its clearest form, we focus on the minimal setting of bipartite qubit systems and the purification of noisy maximally entangled states --- a canonical primitive underlying entanglement-based quantum technology. 
To situate our protocol within established approaches, we begin from the standard scenario in which a maximally entangled state is prepared and subsequently subjected to noise within a quantum circuit.
Concretely, we first initialize the system in the computational basis state $\ket{00}_{AB}$, shared between Alice and Bob.
Entanglement is then generated by applying a Hadamard gate to Alice's qubit, followed by a global controlled-NOT (CNOT) operation. 
In the absence of noise, this procedure prepares the maximally entangled Bell state $\phi^{+}$, or simply maximally entangled state (MES), shared between the two parties:
\begin{align}
    \ket{\phi^{+}}_{AB}:=\mathrm{CNOT}\circ(\mathrm{H}\otimes\id)\ket{00}=\frac{1}{\sqrt{2}}\left(\ket{00}+\ket{11}\right).
\end{align}
Subsystem labels (e.g., 
$A$ and $B$) are suppressed whenever they are unambiguous from context, in order to streamline notation without loss of clarity.
In practice, however, imperfections in control and unavoidable environmental interactions introduce errors, so that the ideal state is replaced by a noisy output arising from the action of a quantum channel on the entangled pair (see Fig.~\ref{fig:Distributed_Purifications}(a)).
\begin{align}\label{eq:psi_a}
    \psi_{a} \coloneqq \mN_1(p)\otimes\mN_2(q)(\phi^{+}).
\end{align}
We denote the maximally entangled state by $\phi^{+}=\ketbra{\phi^{+}}{\phi^{+}}$ for brevity, and write $\mN_1(p)$ and $\mN_2(q)$ in Eq.~\eqref{eq:psi_a} for the local noisy quantum channels acting on systems $A$ and $B$, respectively. 
Noise is assumed to act after the entangling gate. 
If it occurred beforehand, the state could be discarded and reset to $\ket{00}$, which can be implemented easily in experiments.
Here, the subscript $a$ labels the protocol rather than the underlying physical system, a convention adopted for clarity in subsequent comparisons.

\begin{figure}[htbp]
    \centering   
    \includegraphics[width=0.6\textwidth]{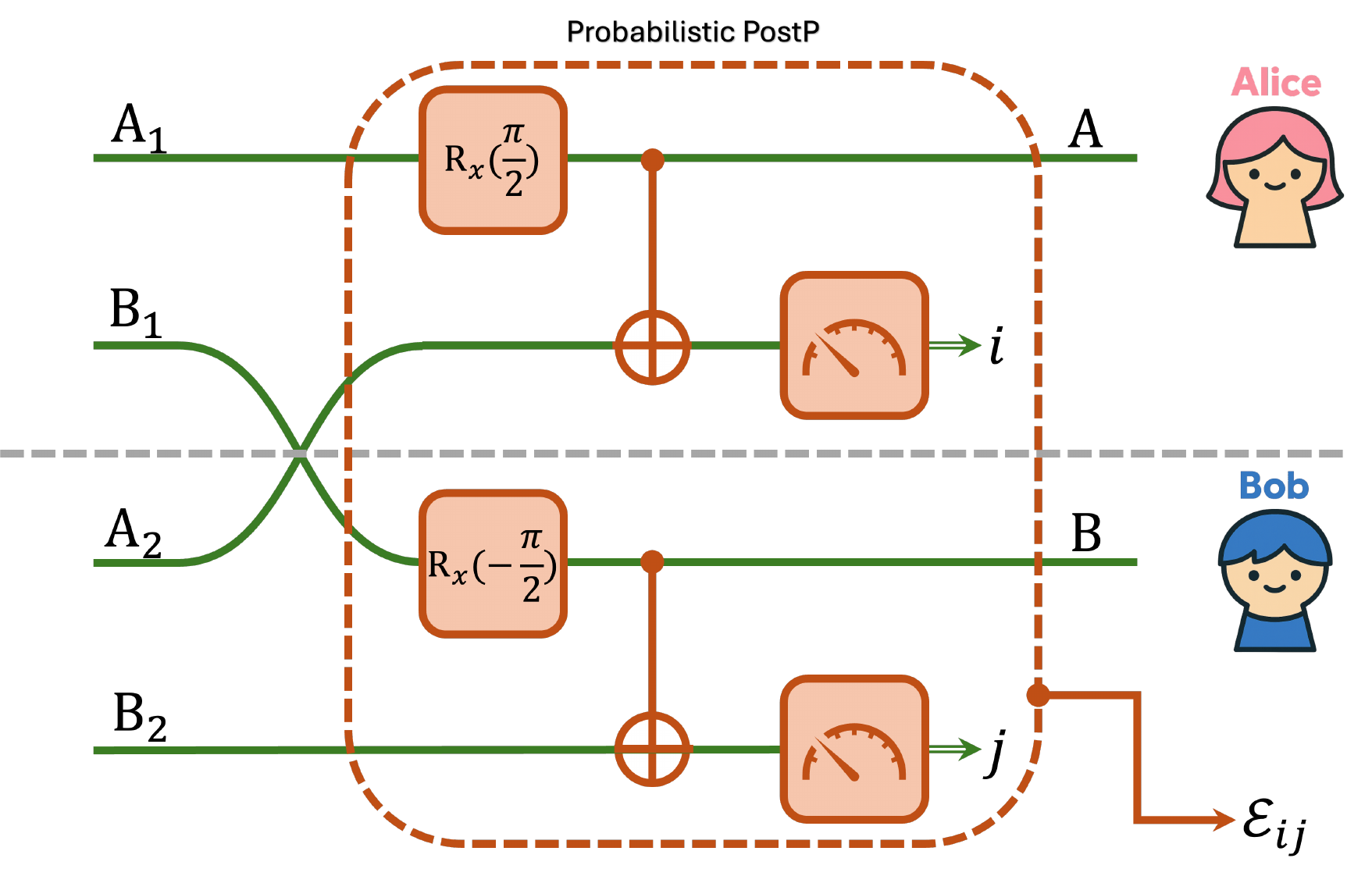}
    \caption{\textbf{Probabilistic Post-Processing}. 
        Given two copies of a bipartite input state, a probabilistic LOCC post-processing protocol (shown above) is applied to purify the noisy states. 
        Depending on the measurement outcomes $(i,j)$ on Alice's and Bob's sides, the resulting quantum operation is described by a map $\mE_{ij}$. Particular attention is given to the outcome $(0,0)$, which heralds successful purification.
    }
    \label{fig:Probabilistic_PostP}
\end{figure}

When multiple copies of the noisy maximally entangled state $\psi_{a}$ are available, one may employ the probabilistic purification protocol of~\cite{3bb1-pmtp}, as illustrated in Fig.~\ref{fig:Probabilistic_PostP}. 
The purification map is denoted by $\mE_{ij}$, where $i$ and $j$ correspond to the measurement outcomes on Alice's and Bob's sides. 
The focus is on the case $i=j=0$, with the resulting state denoted by $\psi_b$ (see Fig.~\ref{fig:Distributed_Purifications}(b)), namely,
\begin{align}\label{eq:psi_b}
    \psi_b \coloneqq \mE_{00}(\psi_a\otimes\psi_a).
\end{align}

\begin{figure}[htbp]
    \centering   
    \includegraphics[width=1\textwidth]{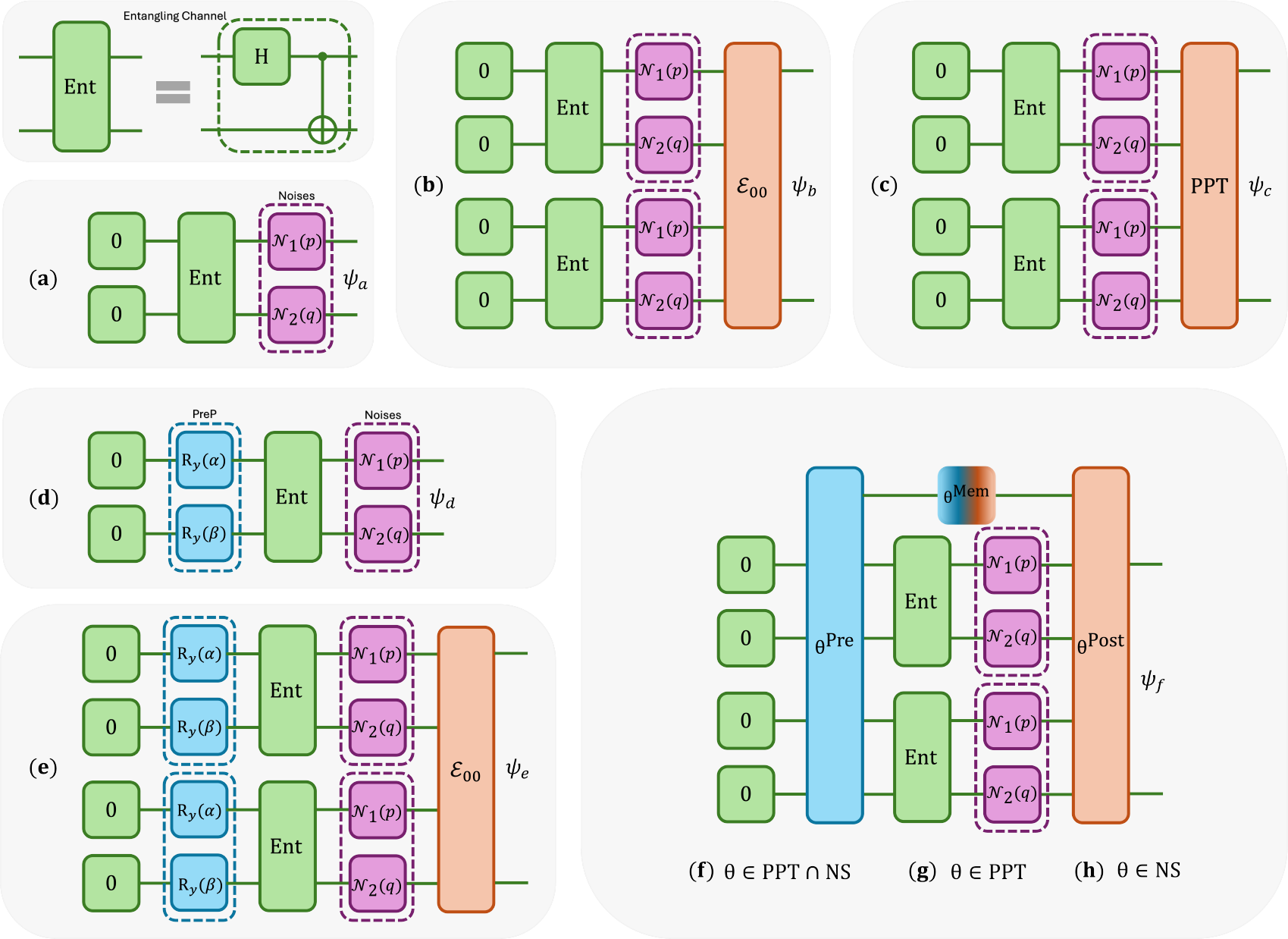}
    \caption{\textbf{Forward-Assisted Distributed Purifications}. 
        (a) Preparation of a single-copy noisy maximally entangled state $\psi_a$ (see Eq.~\eqref{eq:psi_a}) via an entangling channel (Ent), followed by noise $\mN_1(p)\otimes\mN_2(q)$.
        (b) Two-copy probabilistic post-processing (PostP): independent realizations of $\psi_a$ are combined and processed via a LOCC map $\mE_{00}$ (see Fig.~\ref{fig:Probabilistic_PostP}), yielding $\psi_b$ (see Eq.~\eqref{eq:psi_b}).
        (c) Deterministic purification under PPT PostP, optimized over all PPT-preserving maps, producing $\psi_c$ (see Eq.~\eqref{eq:psi_c}).
        (d) Local-unitary (LU) pre-processing (PreP), consisting of $y$-axis rotations $\mathrm{R}_{y}(\alpha)\otimes\mathrm{R}_{y}(\beta)$ applied prior to Ent, followed by noise, resulting in $\psi_d$ (see Eq.~\eqref{eq:psi_d}).
        (e) Hybrid protocol: two copies of the pre-processed states $\psi_d$ are used as inputs to probabilistic PostP $\mE_{00}$, leading to $\psi_e$ (see Eq.~\eqref{eq:psi_e}).
        (f)-(h) General forward-assisted (FA) purification: PreP and PostP are connected via a quantum memory channel, forming a superchannel $\theta$ (see Fig.~\ref{fig:Superchannel}), which may satisfy PPT (g), NS (h), or both constraints (f).
    }
    \label{fig:Distributed_Purifications}
\end{figure}

Any probabilistic purification protocol can be extended to a deterministic one by assigning failure outcomes to 0, leaving the expectation value unchanged. 
It is therefore natural to examine the corresponding limitations of deterministic protocols. 
In distributed settings, purification is typically restricted to LOCC operations, whose full characterization remains intractable. 
The analysis is thus extended to positive partial transpose preserving (PPTp or simply PPT) operations (see Def.~\ref{def:PPT_p}), an efficiently computable superset that provides an upper bound on all LOCC protocols via semidefinite programming (SDP). 
The focus is then placed on the output state of the purification map $\mE: A_1B_1A_2B_2\to AB$ (see Fig.~\ref{fig:Distributed_Purifications}(c)).
\begin{align}\label{eq:psi_c}
    \psi_c(\mE):=\mE(\psi_a\otimes\psi_a), \quad\text{with}\,\,\,
    \mE\in\mathrm{PPT}.
\end{align}
It should be noted that the fidelity of $\psi_c(\mE)$ with $\mE\in\mathrm{PPT}$ is not necessarily higher than that of $\psi_b$ in Eq.~\eqref{eq:psi_b}, since $\psi_b$ is obtained via post-selection. 
This will be confirmed by our numerical analysis.

Noise processes are inherently dynamical and are described by quantum channels. 
The most general manipulations of such channels are characterized by superchannels~\cite{Chiribella_2008,8678741,xiao2025superchanneltearsgeneralizedoccams}, which consist of both pre-processing and post-processing stages. 
Pre-processing corresponds to operations applied prior to the action of noise, while post-processing captures operations applied afterwards. 
In quantum error correction, these stages are analogous to encoding and decoding. 
A key distinction, however, is that superchannels may incorporate quantum memory linking the pre- and post-processing stages. 
This framework provides the natural operational setting for forward-assisted (FA) purification (see Subsec.~\ref{subsec:Forward_Assisted_Purification_Protocols}).
To highlight the advantage of forward-assisted (FA) purification, attention is restricted to the minimal setting of local unitary (LU) pre-processing, enabling direct comparison with existing protocols and their fundamental limits. 
The resulting state takes the form (see Fig.~\ref{fig:Distributed_Purifications}(d))
\begin{align}\label{eq:psi_d}
    \psi_d(\alpha,\beta):=
    \underbrace{\left(\mN_1(p)\otimes\mN_2(q)\right)}_{\text{Noise}}
    \circ \
    \underbrace{\mathrm{CNOT}\circ(\mathrm{H}\otimes\id)}_{\text{Entangling Gate}}
    \circ
    \overbrace{\left(\mathrm{R}_{y}(\alpha)\otimes
    \mathrm{R}_{y}(\beta)\right)}^{\text{Pre-Processing}}
    \left(\ketbra{00}{00}\right),
\end{align}
where $\mathrm{R}_{y}(\alpha)$ denotes a rotation about the $y$-axis by an angle $\alpha$. 
This choice offers improved purification performance while remaining experimentally accessible.
Equation~\eqref{eq:psi_d} describes a single-copy, or single-shot, purification scenario. 
In contrast, $\psi_b$ and $\psi_c$ in Eqs.~\eqref{eq:psi_b} and~\eqref{eq:psi_c}, respectively, correspond to two-copy purification protocols.

Note that the pre-processing stage is applied prior to the entangling gate.
This choice is motivated by both physical and mathematical perspectives.
Physically, within the framework of dynamical entanglement~\cite{bauml2019resourcetheoryentanglementbipartite,PhysRevLett.125.180505,xing2023fundamentallimitationscommunicationquantum,glc7-xy8t}, the relevant resource for distributed purification is the entangling quantum dynamics itself. 
Noise acts by degrading this entangling channel, making it natural to treat the noisy entangling operation as the fundamental dynamical resource and to implement pre-processing before it.
Mathematically, for the amplitude damping and depolarizing channels commonly encountered in quantum information processing, inserting pre-processing between an ideal entangling gate and subsequent noise offers no advantage, as established in Cor.~\ref{cor:Ineffectiveness_LU_PreP_4b} and Thm.~\ref{thm:Ineffectiveness_LUnital_PreP}.

Combining LU pre-processing with the post-processing purification protocol $\mE_{00}$ in Fig.~\ref{fig:Probabilistic_PostP} leads to the following probabilistic purified state (see Fig.~\ref{fig:Distributed_Purifications}(e)),
\begin{align}\label{eq:psi_e}
    \psi_e(\alpha,\beta):=\mE_{00}(\psi_d(\alpha,\beta)\otimes\psi_d(\alpha,\beta)),
\end{align}
which defines a probabilistic forward-assisted purification protocol.

So far, the protocols introduced do not involve any quantum memory between the pre-processing (PreP) and post-processing (PostP) stages. 
More generally, one can consider additional classes of purification protocols that incorporate quantum memory. 
In particular, three such classes arise: one that is both PPT (with respect to post-processing) and non-signalling (NS), and two others that satisfy PPT (with respect to post-processing) and NS constraints separately. 
Formally, the FA purification protocol that is both PPT and NS gives rise to the following state (see Fig.~\ref{fig:Distributed_Purifications}(f))
\begin{align}\label{eq:psi_f}
    \psi_f(\theta):=\overbrace{\theta^{\mathrm{Post}}}^{\text{Post-Processing}}
    \circ \
    \left(\overbrace{\theta^{\mathrm{Mem}}}^{\text{Memory}}\otimes\left(\underbrace{\left(\mN_1(p)\otimes\mN_2(q)\right)}_{\text{Noise}}
    \circ 
    \underbrace{\mathrm{CNOT}\circ(\mathrm{H}\otimes\id)}_{\text{Entangling Gate}}\right)^{\otimes2}\right)
    \circ
    \overbrace{\theta^{\mathrm{Pre}}}^{\text{Pre-Processing}}
    \left(\ketbra{00}{00}^{\otimes2}\right),
\end{align}
with
\begin{align}
    \theta:=\theta^{\mathrm{Post}}\circ\theta^{\mathrm{Mem}}\circ\theta^{\mathrm{Pre}}\in\mathrm{PPT}\cap\mathrm{NS}.
\end{align}
The conditions of PPT and NS can be formulated as SDPs. 
To illustrate, consider the superchannel shown in Fig.~\ref{fig:Superchannel}. 
The PPT constraint with respect to the post-processing stage is expressed as
\begin{align}
    (J^{\theta})^{\T_{\mathrm{Post}}}
    =
    (J^{\theta})^{\T_{CD}}
    \geqslant0,
\end{align}
where $J^{\theta}$ denotes the Choi operator of the superchannel $\theta$.
Note that, in this setting, the memory system $F$ is not regarded as part of the post-processing systems.
Again, taking the superchannel $\theta$ in Fig.~\ref{fig:Superchannel} as an example, the NS condition reads
\begin{align}
    \Tr_{D}[J^{\theta}]&=J^{\theta}_{AB}\otimes\frac{\1_{C}}{d_C},\\
    \Tr_{B}[J^{\theta}]&=\frac{\1_{A}}{d_A}\otimes J^{\theta}_{CD}.
\end{align}

\begin{table}[htbp]
    \centering
    \begin{tblr}{
      colspec = {l || c | c | c},
      row{1,2} = {bg=gray!50, font=\bfseries}, 
      column{1} = {bg=gray!10},                
      hlines,                                  
      vlines,                                  
      cells = {m},                             
      row{1} = {c},                            
    }
      \SetCell[c=4]{c} Forward-Assisted Distributed Purification Protocols & & & \\
      Distributed Purifications & Performance & Number of Copies & Constraints \\
      Protocol a (see Eq.~\eqref{eq:psi_a}) & $F_a:=F(\psi_{a}, \phi^{+})$ & single-copy & without any purification \\
      Protocol b (see Eq.~\eqref{eq:psi_b}) & $F_b:=F(\psi_{b}, \phi^{+})$ & two-copy & {with probabilistic post-processing\\ via $\mE_{00}$ (see Fig.~\ref{fig:Probabilistic_PostP})} \\
      Protocol c (see Eq.~\eqref{eq:psi_c}) & {$F_c:=\max F(\psi_c(\mE), \phi^{+})$\\ \,\,\,\, s.t. $\mE\in\mathrm{PPT}$} & two-copy & with PPT post-processing \\
      Protocol d (see Eq.~\eqref{eq:psi_d}) & {$F_d:=\max F(\psi_d(\alpha,\beta), \phi^{+})$ \\
      \,\,\,\, s.t. $\alpha,\beta\in[0,2\pi)$} & 
      single-copy &
      {with LU pre-processing\\ via $\mathrm{R}_{y}(\alpha)\otimes\mathrm{R}_{y}(\beta)$}\\
      Protocol e (see Eq.~\eqref{eq:psi_e}) & {$F_e:=\max F(\psi_e(\alpha,\beta), \phi^{+})$ \\
      \,\,\,\, s.t. $\alpha,\beta\in[0,2\pi)$} & 
      two-copy & 
      {with LU pre-processing implemented\\ by $\mathrm{R}_{y}(\alpha)\otimes\mathrm{R}_{y}(\beta)$\\ and probabilistic post-processing\\ via $\mE_{00}$ (see Fig.~\ref{fig:Probabilistic_PostP})} \\
      Protocol f (see Eq.~\eqref{eq:psi_f}) & {$F_f:=\max F(\psi_f(\theta), \phi^{+})$ \\
      \,\,\,\, s.t. $\theta\in\mathrm{PPT}\cap\mathrm{NS}$} & two-copy & 
      {with FA purification implemented\\ via a superchannel satisfying both\\ PPT (with respect to the\\ post-processing systems) and NS} \\
      Protocol g (see Eq.~\eqref{eq:psi_f}) & {$F_g:=\max F(\psi_f(\theta), \phi^{+})$ \\
      \,\,\,\, s.t. $\theta\in\mathrm{PPT}$} & two-copy & 
      {with FA purification implemented\\ via a superchannel satisfying PPT\\ (with respect to the\\ post-processing systems)} \\
      Protocol h (see Eq.~\eqref{eq:psi_f}) & {$F_h:=\max F(\psi_f(\theta), \phi^{+})$ \\
      \,\,\,\, s.t. $\theta\in\mathrm{NS}$} & two-copy & 
      {with FA purification implemented\\ via a superchannel satisfying NS} \\
    \end{tblr}
    \caption{\textbf{Forward-Assisted Distributed Purification Protocols}.
        Summary of forward-assisted (FA) distributed 2-to-1 purification protocols, including their performance measures, number of input copies, and operational constraints. 
        Protocols (a)–(h) span conventional schemes without purification, probabilistic and PPT post-processing, LU pre-processing, and general FA protocols implemented via superchannels subject to PPT and/or NS constraints.
    }
    \label{tab:Purification_Comparison_1}
\end{table}

All purification protocols, including the conventional one, are summarized in Tab.~\ref{tab:Purification_Comparison_1} and illustrated in Fig.~\ref{fig:Distributed_Purifications}.
Building on the hierarchy established in Fig.~\ref{fig:Venn}, we arrive at a theorem that characterizes the structural relations among forward-assisted purification protocols in the distributed setting.

\begin{mythm}{Hierarchy of Forward-Assisted Purification in the Distributed Setting}{Hierarchy_Distributed}
For forward-assisted purification protocols in the distributed setting (see Fig.~\ref{fig:Distributed_Purifications}), as summarized in Tab.~\ref{tab:Purification_Comparison_1}, the achievable performance is ordered by the following inequality chain.
\begin{align}
    F_a\leqslant F_c\leqslant F_f\leqslant F_g,
\end{align}
and, under an alternative admissible constraint, the ordering extends to
\begin{align}
    F_a\leqslant F_c\leqslant F_f\leqslant F_h.
\end{align}
Moreover, when probabilistic post-processing is permitted, the achievable performances satisfy
\begin{align}
    F_b\leqslant F_e.
\end{align}
\end{mythm}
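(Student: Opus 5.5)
The plan is to prove each inequality by a feasible-set inclusion: every value on the left is attained by a superchannel admissible on the right, which then implies that the maxima are ordered. All optimizations share the same objective $\Tr[J^{\theta}\cdot(\Psi\otimes\Omega)]$ from Eq.~\eqref{eq:FA_Fundamental_Limit}, here with $\Psi$ built from $\phi^{+}$ and $\Omega$ from the noisy entangling channel. So it is enough to embed each protocol's Choi operator into the next feasible set. This gives the whole chain once the embeddings are in place.

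\textbf{Step 1, $F_a\leqslant F_c$.} I would use the trivial protocol of Fig.~\ref{fig:Trivial_Purification}: trace out the second copy $A_2B_2$ and keep $A_1B_1$. Its Choi operator is $\Gamma_{A_1A}\otimes\Gamma_{B_1B}\otimes\1_{A_2B_2}$. Partial transposition on the $B$ systems maps $\Gamma_{B_1B}$ to the swap operator, so the result is not positive. However, the map is a product of local channels, and local channels are PPT-preserving: the partial transpose of the Choi operator of $\mE_A\otimes\mE_B$ is $J^{\mE_A}\otimes (J^{\mE_B})^{\T}$, which is positive. So the map is feasible for protocol c, and it achieves $F(\psi_a,\phi^{+})$.

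\textbf{Step 2, $F_c\leqslant F_f$.} Take an optimal PPT-preserving $\mE$ and build the superchannel with identity pre-processing, no memory, and $\theta^{\mathrm{Post}}=\mE$. Its Choi operator factorizes as $J^{\mathrm{id}}_{AB}\otimes J^{\mE}_{CD}$. Three checks are needed:
\begin{itemize}
\item Non-signalling in both directions holds because the operator is a product of two channel Choi operators, as in Lem.~\ref{lem:Hierarchical_Inclusion_Relation_1} with a trivial memory.
\item The PPT condition $(J^{\theta})^{\T_{CD}}\geqslant0$ must follow from $J^{\mE}$ being PPT.
\item The objective reproduces the PostP fidelity, as in Eq.~\eqref{eq:PostP_Fundamental_Limit}.
\end{itemize}

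\textbf{Main obstacle.} The main difficulty sits in the second check. It lies in reconciling Def.~\ref{def:PPT_PostP} (transpose on all post-processing systems $C D$) with Def.~\ref{def:PPT_p} (transpose only on Bob's systems). I would read the superchannel PPT condition in the distributed setting as a transpose on Bob's post-processing systems $B_3B_4$, consistent with Fig.~\ref{fig:Bipartite_Superchannel}. Under that reading the condition follows directly from the hypothesis.

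\textbf{Step 3, $F_f\leqslant F_g$ and $F_f\leqslant F_h$.} These are immediate, since $\mathrm{PPT}\cap\mathrm{NS}$ is contained in each of $\mathrm{PPT}$ and $\mathrm{NS}$, so each maximum can only increase.

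\textbf{Step 4, $F_b\leqslant F_e$.} Choose $\alpha=\beta=0$ in Eq.~\eqref{eq:psi_d}. Then $\mathrm{R}_y(0)=\1$, so $\psi_d(0,0)=\psi_a$ and hence $\psi_e(0,0)=\psi_b$. The maximum over $(\alpha,\beta)$ in the definition of $F_e$ is therefore at least $F_b$.
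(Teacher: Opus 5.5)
Your inclusion strategy is the paper's own argument: the paper gives no separate proof and reads both chains off the nested protocol classes of Fig.~\ref{fig:Venn}. Steps~3 and~4 are correct as written. Step~2, however, rests on a misreading. Def.~\ref{def:PPT_PostP}, which is stated for exactly the bipartite superchannel of Fig.~\ref{fig:Bipartite_Superchannel}, transposes \emph{all} post-processing systems $A_3B_3A_4B_4$. Subsec.~\ref{subsec:LU_Pre_Processing} restates it as $(J^{\theta})^{\T_{CD}}\geqslant0$. It is therefore a PPT condition across the pre-/post-processing cut, not across the Alice/Bob cut.

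Restricting the transpose to $B_3B_4$ gives a different and incomparable feasible set. You would then be proving the chain for quantities other than the paper's $F_f$ and $F_g$. No reinterpretation is needed. For your memoryless superchannel (labels as in Eq.~\eqref{eq:FA_Fundamental_Limit}), $(J^{\mathrm{id}}_{AB}\otimes J^{\mE}_{CD})^{\T_{CD}}=J^{\mathrm{id}}_{AB}\otimes (J^{\mE}_{CD})^{\T}\geqslant0$. This holds because the full transpose of a positive operator is its complex conjugate, and hence positive.

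This holds for every CPTP $\mE$, so the PPT-preserving property of $\mE$ is not used at all in this step. It is the same mechanism as the fully transposed factor $(J^{\mathrm{Post}}_{FCD})^{\T_{FCD}}$ in the proof of Lem.~\ref{lem:Hierarchical_Inclusion_Relation_2}. It is also consistent with $F_{\mathrm{UA}}\leqslant F_{\mathrm{PPT}}$ in Thm.~\ref{thm:MT_Hierarchy_Purification_Performance}. Combined with your non-signalling check, this gives $F_c\leqslant F_f$ under the paper's actual definitions.

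Step~1 contains a false intermediate claim. For Def.~\ref{def:PPT_p}, the partial transpose does not turn $\Gamma_{B_1B}$ into the swap operator. That condition transposes Bob's inputs and output together ($B_1B_2B$), and this leaves both $\Gamma_{B_1B}$ and $\1_{B_2}$ invariant. Hence $(\Gamma_{A_1A}\otimes\Gamma_{B_1B}\otimes\1_{A_2B_2})^{\T_{B_1B_2B}}\geqslant0$ holds directly.

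The swap appears only if you transpose one of $B_1$, $B$ without the other, which is not the PPT-preservation condition. Your follow-up argument via products of local channels is the correct justification, and it contradicts the earlier remark. Simply drop the remark.
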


The hierarchy established in Thm.~\ref{thm:Hierarchy_Distributed} reveals a monotonic enhancement in performance as the accessible operational resources are enlarged.
The role of $F_d$ (see Tab.~\ref{tab:Purification_Comparison_1}), however, remains unclear, and its relation to the other protocol classes is not yet resolved. 
This gap will be addressed through the numerical analysis that follows.

We now proceed to a systematic numerical investigation of the distributed purification protocols introduced above (see Fig.~\ref{fig:Distributed_Purifications}).
The analysis addresses four aspects.

\begin{itemize}
    \item First, we compare the performance of the purification protocols summarized in Tab.~\ref{tab:Purification_Comparison_1}. 
    \item Second, for a fixed noise model and parameter, we vary the pre-processing angles $\alpha$ and $\beta$ in $\mathrm{R}_{y}(\alpha)\otimes\mathrm{R}_{y}(\beta)$ (see Eq.~\eqref{eq:psi_d}) to probe the underlying mechanism and identify the origin of the observed advantage.
    \item Third, we consider asymmetric local noise with distinct parameters and quantify the performance gain achievable using only LU pre-processing on a single copy, in comparison with two-copy PPT post-processing, thereby highlighting the role of pre-processing in distributed purification.
    \item Fourth, the numerical results reveal regimes in which local-unitary (LU) pre-processing on a single copy already outperforms two-copy PPT post-processing, which serves as a computable upper bound for conventional protocols.
    This naturally raises the question of whether such an advantage can persist beyond the two-copy setting, for example, in 3-to-1 purification or higher, which we explore in detail below.
\end{itemize}

\begin{figure}[htbp]
    \centering   
    \includegraphics[width=1\textwidth]{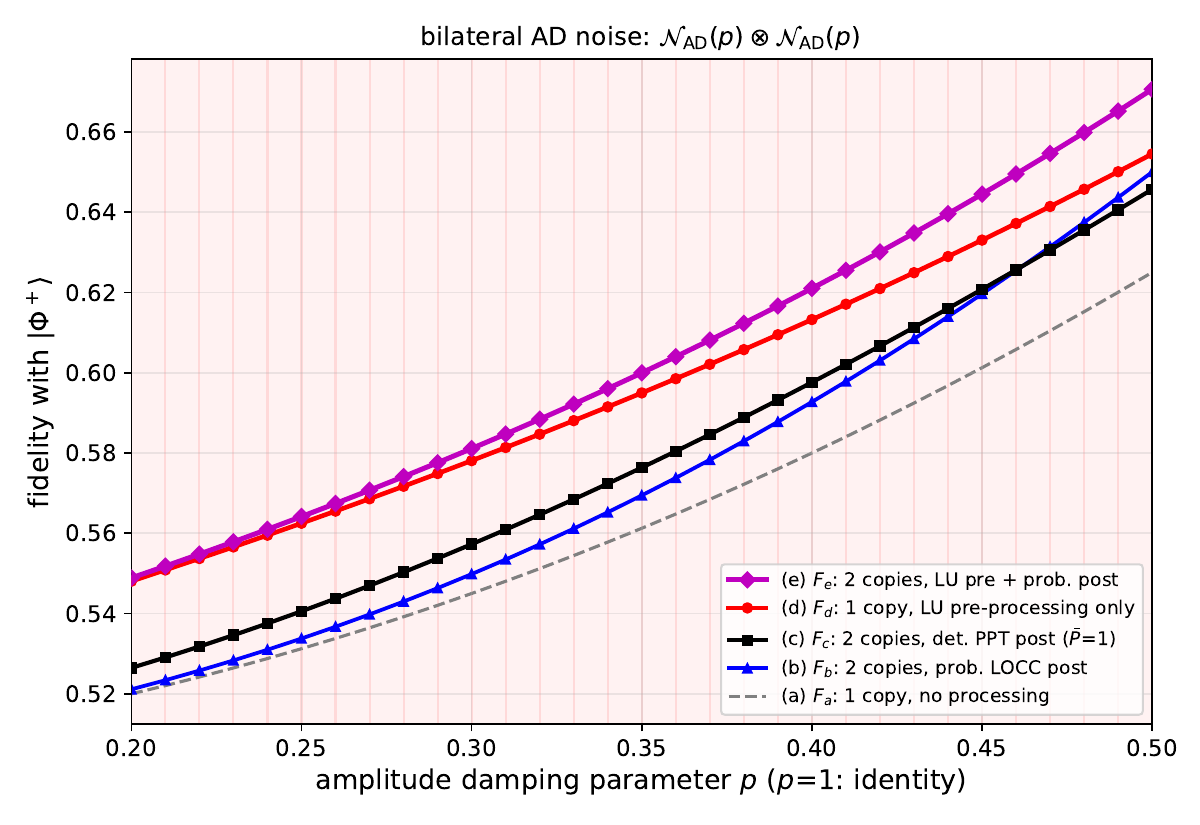}
    \caption{\textbf{Comparison Across Distributed Purification Protocols}. 
        Identical amplitude damping noise $\mN_{\mathrm{AD}}(p)$ is applied to all systems. 
        The performance of the protocols summarized in Tab.~\ref{tab:Purification_Comparison_1} is compared as a function of the noise parameter $p$. 
        Probabilistic post-processing via $\mE_{00}$ (see Fig.~\ref{fig:Probabilistic_PostP}) achieves higher performance than PPT post-processing in certain regimes, as $\mE_{00}$ implements a probabilistic purification, whereas Protocol c (see Eq.~\eqref{eq:psi_c}) corresponds to deterministic (i.e., purificiation success probability $\bar{P}=1$) PPT post-processing.
    }
    \label{fig:Distributed_Num_Point1_1}
\end{figure}

For the first numerical analysis, attention is restricted to the symmetric setting in which both local noise channels are AD channels characterized by the same parameter $p$, namely 
\begin{align}
    \mN_1(p)=\mN_2(q)=\mN_{\mathrm{AD}}(p).
\end{align}
The resulting performance, as a function of $p$, is shown in Fig.~\ref{fig:Distributed_Num_Point1_1} for the protocols in Fig.~\ref{fig:Distributed_Purifications}.

The numerical results reveal that, for the noise model considered, even single-copy local-unitary pre-processing (see Fig.~\ref{fig:Distributed_Purifications}(d)) can outperform two-copy PPT post-processing (see Fig.~\ref{fig:Distributed_Purifications}(c)). 
Notably, in contrast to the global setting discussed in Subsec.~\ref{subsec:PreP_Post}, the protocol in Fig.~\ref{fig:Distributed_Purifications}(d) involves no post-processing following the pre-processing stage. 
The observed advantage, therefore, arises entirely from pre-processing, highlighting its independent operational power in the distributed regime.
Beyond achieving a performance advantage, this highlights an additional feature: a reduction in the number of copies required for purification. 
Such resource savings are particularly important for current quantum technologies, where preparing multiple copies is experimentally demanding.

Moreover, using two copies of the LU pre-processing-augmented state $\psi_d(\alpha,\beta)$ of Eq.~\eqref{eq:psi_d} (see Fig.~\ref{fig:Distributed_Purifications}(e)), followed by probabilistic LOCC post-processing $\mE_{00}$ (see Fig.~\ref{fig:Probabilistic_PostP}), leads to a further enhancement in purification performance (see Fig.~\ref{fig:Distributed_Num_Point1_1}). 
Notably, this protocol is readily implementable with existing experimental techniques.
To conclude, for the noise model considered in Fig.~\ref{fig:Distributed_Num_Point1_1} and the protocols analyzed in Tab.~\ref{tab:Purification_Comparison_1}, we establish the following inequality chains
\begin{align}
    F_a\leqslant F_b\leqslant F_d\leqslant F_e,
\end{align}
and
\begin{align}
    F_a\leqslant F_c\leqslant F_d\leqslant F_e.
\end{align}
This ordering makes explicit that pre-processing alone can exceed the fundamental limits of conventional purification schemes, isolating it as an independent operational resource rather than a mere auxiliary step.

Beyond amplitude damping noise $\mN_{\mathrm{AD}}$ (see Tab.~\ref{tab:Quantum_Noise_Models}), we consider the canonical noise models listed in Tab.~\ref{tab:Quantum_Noise_Models} to examine the role of pre-processing. 
Comparing the protocols in Fig.~\ref{fig:Distributed_Purifications}(e) and Fig.~\ref{fig:Distributed_Purifications}(b), corresponding to cases with and without pre-processing, reveals a clear advantage, as shown in Fig.~\ref{fig:Distributed_Num_Point1_2}.

\begin{figure}[htbp]
    \centering   
    \includegraphics[width=1\textwidth]{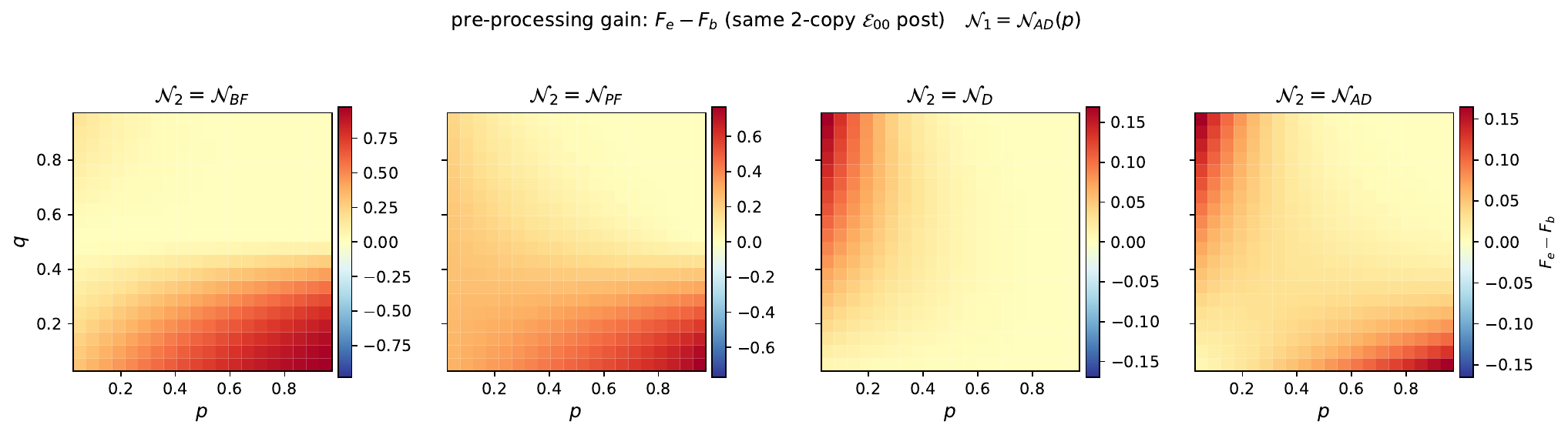}
    \caption{\textbf{With vs Without Pre-Processing}. 
        The composite noise $\mN_1(p)\otimes\mN_2(q)$ combines different noise models, with $\mN_1$ fixed as amplitude damping and $\mN_2$ varied over bit flip, phase flip, depolarizing, and amplitude damping channels. 
        In all regimes, pre-processing (see Fig.~\ref{fig:Distributed_Purifications}(e)) outperforms the no-pre-processing case (see Fig.~\ref{fig:Distributed_Purifications}(b)) for two-copy protocols followed by post-processing $\mE_{00}$ (see Fig.~\ref{fig:Probabilistic_PostP}).
    }
    \label{fig:Distributed_Num_Point1_2}
\end{figure}

We next investigate the effect of varying the pre-processing parameters $\alpha$ and $\beta$, with the aim of identifying regimes in which single-copy LU pre-processing (see Fig.~\ref{fig:Distributed_Purifications}(d)) outperforms two-copy PPT post-processing (see Fig.~\ref{fig:Distributed_Purifications}(c)). 
The corresponding numerical results are shown in Fig.~\ref{fig:Distributed_Num_Point2}.

\begin{figure}[htbp]
    \centering   
    \includegraphics[width=1\textwidth]{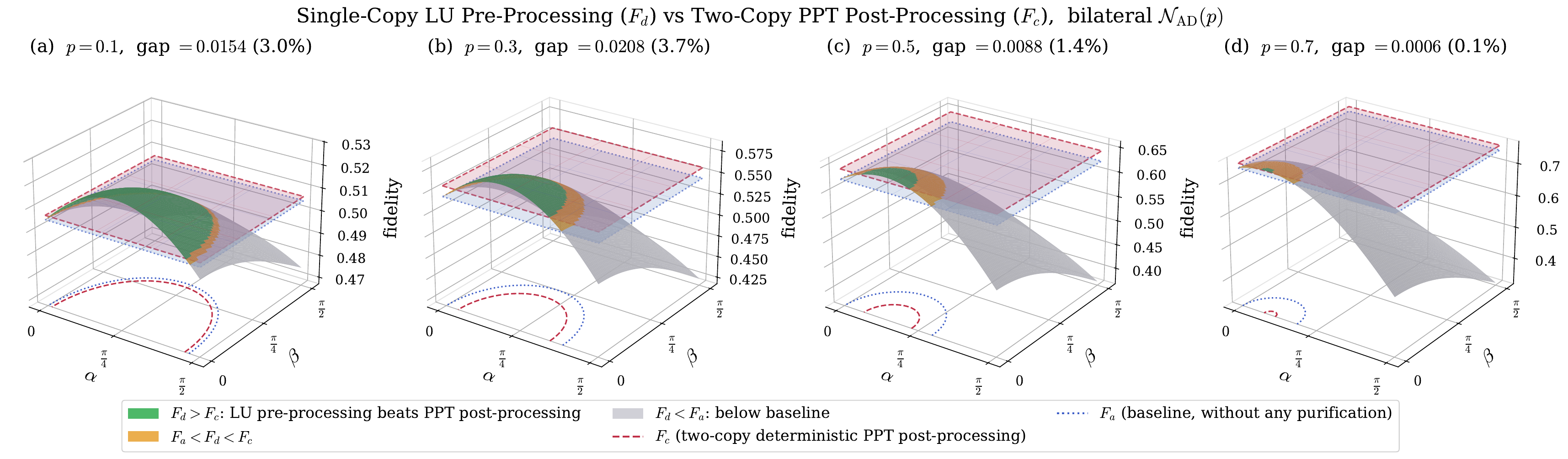}
    \caption{\textbf{Single-Copy LU Pre-Processing vs Two-Copy PPT Post-Processing}. 
        Surface plots of $F_d$ (see Tab.~\ref{tab:Purification_Comparison_1}) as a function of local unitary angles $\alpha$ and $\beta$ in $R_y(\alpha)\otimes R_y(\beta)$, under identical amplitude damping noise $\mN_{\mathrm{AD}}(p)$ on both subsystems, with $p=0.1$ (a), $0.3$ (b), $0.5$ (c), and $0.7$ (d).
        The red dashed plane denotes $F_c$ (see Tab.~\ref{tab:Purification_Comparison_1}), the optimal two-copy deterministic PPT post-processing fidelity (see Fig.~\ref{fig:Distributed_Purifications}(c)), while the blue dotted plane marks the baseline $F_a$ without purification (see Fig.~\ref{fig:Distributed_Purifications}(a)). Green regions ($F_d>F_c$) indicate that single-copy LU pre-processing (see Fig.~\ref{fig:Distributed_Purifications}(d)) outperforms two-copy PPT post-processing; 
        orange regions ($F_a<F_d<F_c$) indicate intermediate performance; 
        gray regions ($F_d<F_a$) fall below the baseline.
        Floor contours show the projected boundaries $F_d=F_c$ and $F_d=F_a$ in the $(\alpha,\beta)$ plane.
    }
    \label{fig:Distributed_Num_Point2}
\end{figure}

For the third aspect, we examine the improvement ratio of $F_d$ over $F_c$, corresponding to single-copy LU pre-processing (see Fig.~\ref{fig:Distributed_Purifications}(d)) relative to two-copy PPT post-processing (see Fig.~\ref{fig:Distributed_Purifications}(c)). 
In this setting, asymmetric local noise is considered, with distinct parameters $p$ and $q$, i.e., the noise is given by 
\begin{align}
    \mN_1(p)\otimes\mN_2(q)=\mN_{\mathrm{AD}}(p)\otimes\mN_{\mathrm{AD}}(q).
\end{align}
The corresponding numerical results are shown in Fig.~\ref{fig:Distributed_Num_Point3}, highlighting the advantage of protocol d (see Fig.~\ref{fig:Distributed_Purifications}(d)), using single-copy LU pre-processing, over protocol c (see Fig.~\ref{fig:Distributed_Purifications}(c)), which employs two-copy PPT post-processing.

\begin{figure}[htbp]
    \centering   
    \includegraphics[width=0.43\textwidth]{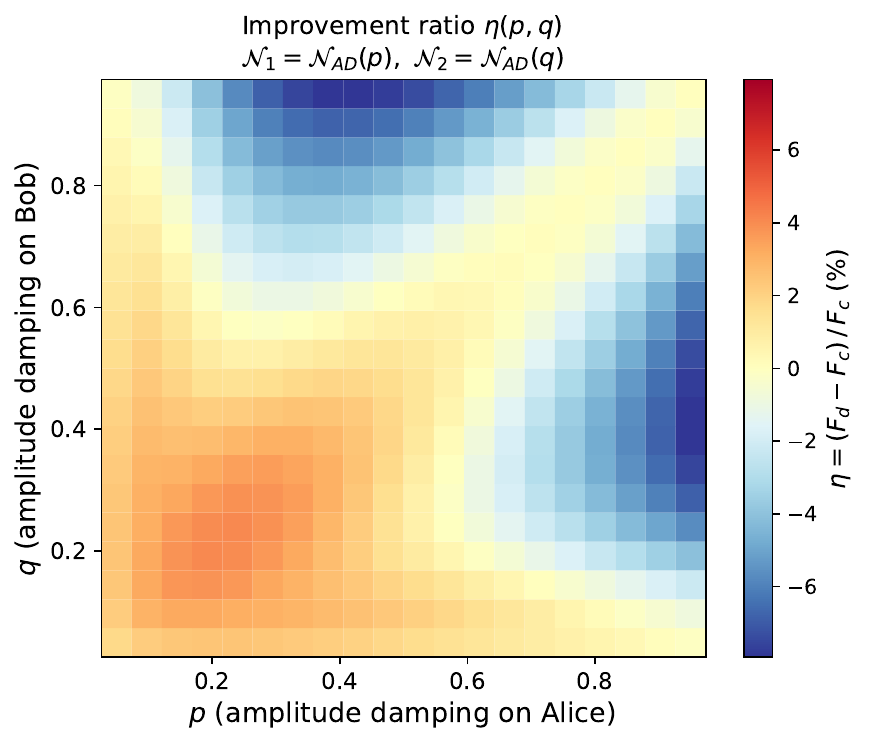}
    \caption{\textbf{Improvement Ratio of LU Pre-Processing over PPT Post-Processing}. 
        Heat map of the improvement ratio $\eta(p,q) \coloneqq (F_d - F_c)/F_c$ under asymmetric amplitude damping noise, where $\mN_1(p)=\mN_{\mathrm{AD}}(p)$ acts on Alice's system and $\mN_2(q)=\mN_{\mathrm{AD}}(q)$ acts on Bob's system. 
        Positive values indicate regimes in which single-copy LU pre-processing (see Fig.~\ref{fig:Distributed_Purifications}(d)) outperforms two-copy PPT post-processing (see Fig.~\ref{fig:Distributed_Purifications}(c)), while negative values indicate the opposite. 
        A broad parameter region with $\eta(p,q)>0$ demonstrates the advantage of LU pre-processing in distributed purification.
    }
    \label{fig:Distributed_Num_Point3}
\end{figure}

We finally turn to the question of how far pre-processing can outperform PPT post-processing. 
As already observed, a single copy with LU pre-processing (see Fig.~\ref{fig:Distributed_Purifications}(d)), yielding $\psi_d$ (see Eq.~\eqref{eq:psi_d}), can surpass two-copy PPT post-processing, yielding $\psi_c$ (see Eq.~\eqref{eq:psi_c}). 
A natural question is whether increasing the number of copies in PPT post-processing --- the conventional approach --- can eventually recover the advantage and outperform the single-copy LU pre-processing scheme.

While additional copies generally improve the performance of PPT post-processing, this is not universally sufficient. 
Remarkably, we identify parameter regimes in which single-copy LU pre-processing continues to outperform multi-copy PPT post-processing, demonstrating a persistent advantage beyond the two-copy setting.

In particular, we consider 3-to-1 and 4-to-1 purification protocols, as illustrated in Fig.~\ref{fig:Distributed_Multiple_Copies}. 
More precisely, given $n$ noisy copies of $\psi_a$ (see Eq.~\eqref{eq:psi_a}), PPT post-processing yields the following output state
\begin{align}\label{eq:psi_c_n_to_1}
    \psi_c^{n\to 1}(\mE) \coloneqq \mE(\psi_a^{\otimes n}), \quad \text{with}\,\,\,
    \mE\in\mathrm{PPT}.
\end{align}
The optimal performance over all PPT post-processing protocols is then given by
\begin{align}\label{eq:PPT_n_to_1}
    F_c^{n\to 1} \coloneqq
    \max \quad 
    & F(\psi_c^{n\to 1}(\mE), \phi^{+})\\
    \text{s.t.} \quad 
    &\mE\in\mathrm{PPT}.
\end{align}

\begin{figure}[htbp]
    \centering   
    \includegraphics[width=1\textwidth]{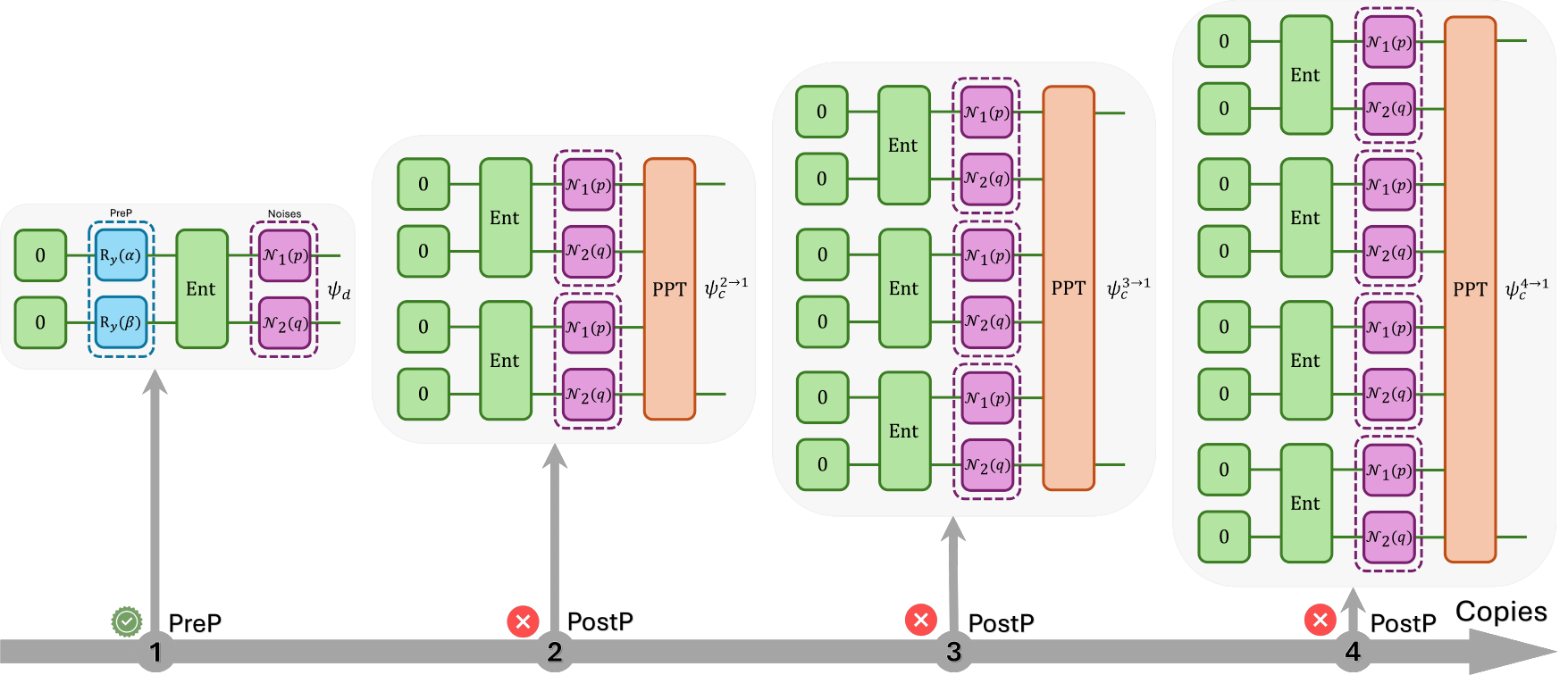}
    \caption{\textbf{Multi-Copy Distributed Purification}. 
        Comparison of single-copy LU PreP with $n$-copy PPT PostP ($n=2,3,4$). 
        Comparison between single-copy LU pre-processing (PreP) and $n$-copy PPT post-processing (PostP) for $n=2,3,4$. 
        While increasing the number of copies enhances the performance of PPT-based post-processing, there remain regimes in which single-copy pre-processing yields higher fidelity. 
        This demonstrates a persistent advantage of pre-processing, even against multi-copy purification strategies.
    }
    \label{fig:Distributed_Multiple_Copies}
\end{figure}

Here, we again consider the symmetric amplitude damping setting, where both local noise channels are amplitude damping channels characterized by the same parameter $p$. 
As noise parameter $p$ varies, the comparison between single-copy LU pre-processing and multi-copy ($n=2,3,4$) PPT post-processing (see Fig.~\ref{fig:Distributed_Multiple_Copies}) is summarized in Tab.~\ref{tab:Distributed_Multiple_Copies}.

\begin{table}[htbp]
    \centering
    \begin{tblr}{
      colspec = {c || c | c | c | c | c | c | c | c},
      row{1,2} = {bg=gray!50, font=\bfseries}, 
      column{1} = {bg=gray!10, font=\bfseries}, 
      hlines,
      vlines,
      cells = {m, c},
      cell{3}{7} = {bg=mRed}, 
      cell{4-13}{7} = {bg=mGreen},
      cell{3-9}{8} = {bg=mRed},
      cell{10-13}{8} = {bg=mGreen},
      cell{3-10}{9} = {bg=mRed},
      cell{11-13}{9} = {bg=mGreen},
    }
      \SetCell[c=9]{c} Single-Copy LU Pre-Processing vs Multi-Copy PPT Post-Processing & & & & & & & & \\
      Noise Parameter $p$ & $F_a$ & $F_d$ & $F_c^{2\to 1}$ & $F_c^{3\to 1}$ & $F_c^{4\to 1}$ & $F_d>F_c^{2\to 1}$ & $F_d>F_c^{3\to 1}$ & $F_d>F_c^{4\to 1}$ \\
      0.95 & 0.95125 & 0.95184 & 0.95184 & 0.99257 & 0.99326 & $\times$    & $\times$    & $\times$    \\
      0.9  & 0.90500 & 0.90724 & 0.90724 & 0.97286 & 0.97578 & $\checkmark$ & $\times$    & $\times$    \\
      0.8  & 0.82000 & 0.82792 & 0.82785 & 0.90996 & 0.92135 & $\checkmark$ & $\times$    & $\times$    \\
      0.7  & 0.74500 & 0.76041 & 0.75975 & 0.83279 & 0.85471 & $\checkmark$ & $\times$    & $\times$    \\
      0.6  & 0.68000 & 0.70311 & 0.69990 & 0.75514 & 0.78317 & $\checkmark$ & $\times$    & $\times$    \\
      0.5  & 0.62500 & 0.65451 & 0.64569 & 0.68407 & 0.71071 & $\checkmark$ & $\times$    & $\times$    \\
      0.4  & 0.58000 & 0.61324 & 0.59760 & 0.62218 & 0.64282 & $\checkmark$ & $\times$    & $\times$    \\
      0.3  & 0.54500 & 0.57810 & 0.55728 & 0.57110 & 0.58402 & $\checkmark$ & $\checkmark$ & $\times$    \\
      0.2  & 0.52000 & 0.54806 & 0.52647 & 0.53263 & 0.53858 & $\checkmark$ & $\checkmark$ & $\checkmark$ \\
      0.1  & 0.50500 & 0.52227 & 0.50685 & 0.50842 & 0.50986 & $\checkmark$ & $\checkmark$ & $\checkmark$ \\
      0.05 & 0.50125 & 0.51073 & 0.50174 & 0.50214 & 0.50247 & $\checkmark$ & $\checkmark$ & $\checkmark$ \\
    \end{tblr}
    \caption{\textbf{Scaling Comparison}. 
        Performance comparison under symmetric amplitude damping noise, where both channels share the same noise parameter $p$. 
        Results are reported for single-copy LU pre-processing and multi-copy ($n=2,3,4$) PPT post-processing as $p$ varies, highlighting regimes in which single-copy pre-processing outperforms multi-copy post-processing.
    }
    \label{tab:Distributed_Multiple_Copies}
\end{table}

The advantage persists even in the 4-to-1 purification setting when only PPT post-processing is employed. 
A clearer illustration of the relation between the number of noisy copies and the achieved performance is provided in Fig.~\ref{fig:Disributed_Num_Scaling}. 
We do not extend the comparison to higher-copy regimes due to the rapidly increasing computational cost. 
Nevertheless, the observed trends suggest a strong conjecture: there exist instances in which even arbitrarily many noisy copies, processed solely via PPT post-processing, fail to outperform single-copy LU pre-processing.

\begin{figure}[htbp]
    \centering   
    \includegraphics[width=0.6\textwidth]{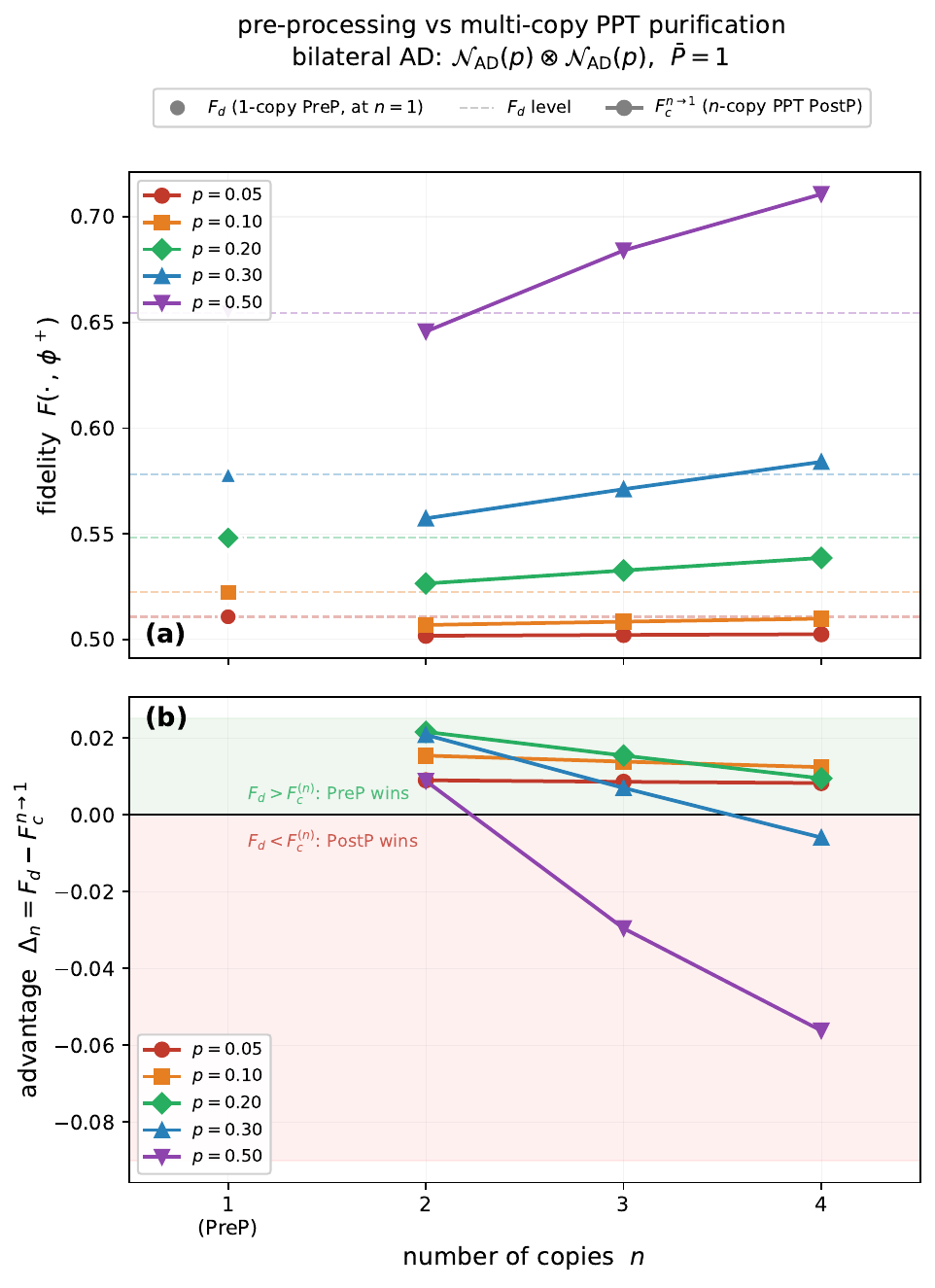}
    \caption{\textbf{Single-Copy LU Pre-Processing vs Multi-Copy PPT Post-Processing}. 
        (a) Fidelity as a function of the number of copies $n$ under symmetric amplitude damping noise $\mN_{\mathrm{AD}}(p)\otimes\mN_{\mathrm{AD}}(p)$.
        The horizontal dashed lines indicate the performance $F_d$ (see Tab.~\ref{tab:Purification_Comparison_1}) of single-copy LU pre-processing (see Fig.~\ref{fig:Distributed_Purifications}(d)) with different $p$, while the solid lines show the performance $F_c^{n\to 1}$ (see Eq.~\eqref{eq:PPT_n_to_1}) of $n$-copy PPT post-processing.
        (b) Advantage $\Delta_n:= F_d - F_c^{n\to 1}$ as a function of $n$. 
        Positive values, i.e., $\Delta_n>0$, indicate regimes where single-copy LU pre-processing outperforms multi-copy PPT post-processing, while negative values indicate the opposite.
        Across a broad range of noise parameters $p$, single-copy LU pre-processing maintains an advantage even against multi-copy PPT post-processing, highlighting a persistent performance gain beyond increasing copy number.
    }
    \label{fig:Disributed_Num_Scaling}
\end{figure}

In this subsection, we have introduced forward-assisted (FA) distributed purification protocols (see Tab.~\ref{tab:Purification_Comparison_1}) and compared their performance in terms of fidelity (see Fig.~\ref{fig:Distributed_Num_Point1_1}). 
From a theoretical perspective, FA protocols equipped with quantum memory can achieve higher performance limits, surpassing conventional approaches. 
However, such memory-assisted protocols are challenging to implement with current quantum technologies.
To address this, we further considered pre-processing–augmented (PreP) purification, which is operationally simpler. 
In particular, we focus on the minimal setting of local-unitary (LU) pre-processing (see Fig.~\ref{fig:Distributed_Purifications}(d)), avoiding the need for complex post-processing operations such as LOCC or PPT operations, whose optimal performance remains experimentally inaccessible.

Despite this simplicity, LU pre-processing can already deliver advantages (see Fig.~\ref{fig:Distributed_Num_Point2}). 
In certain regimes, pre-processing-augmented purification achieves performance unattainable by any PPT post-processing protocol (see Fig.~\ref{fig:Distributed_Num_Point3}). 
Moreover, beyond improving fidelity, pre-processing can reduce the number of copies required: remarkably, even single-copy LU pre-processing can outperform four-copy PPT post-processing for a broad range of noise parameters (see Fig.~\ref{fig:Disributed_Num_Scaling}).
These results highlight both the fundamental role of pre-processing in distributed purification and its practical relevance for near-term quantum technologies.


\subsection{Beyond the No-Purification Theorem}\label{subsec:Beyond_NP_Theorem}

Recent work on distributed quantum state purification has established that, for the set of four Bell states $\mS_{\mathrm{Bell}}$ (see Eq.~\eqref{eq:Bell_Set}), no non-trivial 2-to-1 LOCC purification protocol exists, even probabilistically (see Thm.~2 of~\cite{3bb1-pmtp}). 
This result is derived within the conventional framework, where only post-processing is considered. 
It is therefore natural to ask how this limitation changes when forward-assisted (FA) purification (see Subsec.~\ref{subsec:Forward_Assisted_Purification_Protocols}) is permitted, even in its simplest form involving pre-processing. 
In this subsection, it is shown that incorporating pre-processing circumvents the no-go theorem, enabling efficient purification even in the deterministic regime.

We now recall the no-purification theorem for the set of Bell states established in~\cite{3bb1-pmtp}:

\begin{mylem}
{No Purification of Bell States~\cite{3bb1-pmtp}}{NP_Bell}
For the bilocal depolarizing channel 
\begin{align}
    \mN_1(p)\otimes\mN_2(q)=
    \mN_{\mathrm{D}}(p)\otimes\mN_{\mathrm{D}}(q),
\end{align}
with noise parameters $p$ and $q$, no non-trivial 2-to-1 purification protocol based on PPT post-processing exists for the Bell-state set $\mS_{\mathrm{Bell}}$ (see Eq.~\eqref{eq:Bell_Set}), for any success probability.
\end{mylem}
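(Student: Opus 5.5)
The plan is to prove the lemma as an operator inequality certified by a dual witness. The averaged post-processing objective of Eq.~\eqref{eq:PostP_Fundamental_Limit}, specialised to $\mS_{\mathrm{Bell}}$ and a heralded PPT map, must never exceed the baseline fidelity times the success probability.

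First I compute the noisy inputs explicitly. Since $\mN_{\mathrm{D}}$ is unital and covariant (Lem.~\ref{lem:Unital}, Cor.~\ref{cor:UC_Depolarizing}), each noisy Bell state is isotropic:
\begin{align}
    \rho_k \coloneqq \mN_{\mathrm{D}}(p)\otimes\mN_{\mathrm{D}}(q)(\Phi_k) = pq\,\Phi_k + (1-pq)\frac{\1}{4}.
\end{align}
The baseline fidelity of Eq.~\eqref{eq:Purification_Benchmarking} is therefore $F_0 = pq + (1-pq)/4$.

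A probabilistic 2-to-1 protocol is described by a CP, trace-non-increasing branch $\mE$ with Choi operator $J\geqslant0$, $J^{\T_{B}}\geqslant0$ and $\Tr_{AB}[J]\leqslant\1$. The quantities to compare are
\begin{align}
    P(\mE) &= \frac{1}{4}\sum_k \Tr\!\left[J\,\bigl((\rho_k^{\T})^{\otimes2}\otimes\1\bigr)\right], \\
    P(\mE)F(\mE) &= \frac{1}{4}\sum_k \Tr\!\left[J\,\bigl((\rho_k^{\T})^{\otimes2}\otimes\Phi_k\bigr)\right].
\end{align}
Non-trivial purification for some success probability would mean $P(\mE)F(\mE) > F_0\,P(\mE)$ for some feasible $J$. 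It therefore suffices to show that the operator
\begin{align}
    W \coloneqq \frac{1}{4}\sum_k (\rho_k^{\T})^{\otimes 2}\otimes\bigl(F_0\1 - \Phi_k\bigr)
\end{align}
has $\Tr[JW]\geqslant0$ on the PPT cone. Since that cone is self-dual up to partial transposition, it suffices to exhibit a decomposition $W = X + Y^{\T_{B}}$ with $X,Y\geqslant0$.

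To find $X$ and $Y$, I reduce by symmetry. Each $\rho_k$ and $\Phi_k$ is Bell-diagonal, and local Pauli conjugation $\sigma\otimes\1$ permutes the Bell basis. Consequently $W$ is invariant under the joint twirl applying the same Pauli on Alice's three qubits (and, by the $\Phi_k$ structure, the compatible one on Bob's side). We may also symmetrise over the swap of the two input copies. By convexity, twirling any feasible $J$ preserves feasibility and the objective, so I may restrict to $J$ diagonal in the product Bell basis of $(A_1B_1)(A_2B_2)(AB)$, labelled by $(k_1,k_2,k_3)$, up to the residual Pauli symmetry. The partial transpose on $B$ acts on Bell-diagonal two-qubit operators by a fixed $4\times4$ orthogonal (Hadamard-type) map $\Phi_k^{\T_B} = \tfrac12\1 - \Phi_{\bar k}$. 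On the tensor product it acts as a threefold tensor power of this map. The search for $X,Y$ thus becomes a small linear program over $64$ coefficients, further collapsed by symmetry to a handful of orbit classes. I would first try the natural ansatz in which $Y$ is proportional to the projector onto the classical-correlation patterns $k_1\oplus k_2\oplus k_3 \neq 0$ (XOR in the Pauli group), and then check nonnegativity of the remaining $X$ entrywise.

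The main obstacle is this last step: verifying nonnegativity uniformly in $(p,q)\in[0,1]^2$. I would parametrise by $r \coloneqq pq$ alone, since the inputs depend only on this product. This gives each orbit coefficient of $X$ as a low-degree polynomial in $r$, whose nonnegativity on $[0,1]$ can be checked by hand. Near $r\to1$, where $W$ becomes nearly singular, the ansatz may need adjusting with an $r$-dependent weight. Consistency with the tight numerics can be checked against the $2$-copy row of the SDP in Eq.~\eqref{eq:PPT_Fundamental_Limit} at $n=2$ with trivial pre-processing.
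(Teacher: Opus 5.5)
Your reduction (dual certificate $W=X+Y^{\T_B}$, Bell-diagonal twirl, dependence on $r=pq$ only) is sound. The gap is that the proof stops exactly where the content is: you never exhibit $X,Y\geqslant0$. Worse, the ansatz you propose provably cannot work. The set $\{k_1\oplus k_2\oplus k_3\neq0\}$ is not invariant under the Alice-side shift $k_i\mapsto k_i\oplus g$, because the XOR of three labels shifts by $g$. Since $W$ is invariant under that shift, and Alice-only conjugation commutes with $\T_B$, any valid pair $(X,Y)$ may be twirled, and the twirl of $\Pi_{\oplus\neq0}$ is $\tfrac34\1$. Its partial transpose is again $\tfrac34\1$, which can never cancel the negative block of $W$. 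Writing $\Pi_{=}\coloneqq\sum_k\Phi_k^{\otimes3}$, one finds that $64W$ has eigenvalue $-3r(1-r)(1+3r)$ on $\Pi_{=}$. So for $0<r<1$ the twirled $X=W-\tfrac34y\1$ has a negative eigenvalue for every weight $y\geqslant0$. (A smaller point: the Alice-only joint twirl does not by itself make $J$ Bell-diagonal. You need the bilateral twirls $\sigma\otimes\sigma$ on each pair $(A_iB_i)$ separately; these are local and leave every $\rho_k$ and $\Phi_k$ invariant.)

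The fix is to put $Y$ on the \emph{pairwise-distinct} patterns instead. Let $\Pi_{\neq}$ project onto $\Phi_{k_1}\otimes\Phi_{k_2}\otimes\Phi_{k_3}$ with $k_1,k_2,k_3$ pairwise distinct. A direct count with $\Phi_k^{\T_B}=\tfrac12\1-\Phi_{\bar k}$ gives $\Pi_{\neq}^{\T_B}=\tfrac12\1-2\Pi_{=}$. The eigenvalues of $64W$ on the four pattern classes are:
\begin{itemize}
\item $-3r(1-r)(1+3r)$ for $k_1=k_2=k_3$;
\item $r(1-r)(1+3r)$ when exactly one of $k_1,k_2$ equals $k_3$;
\item $r(5+2r+9r^2)$ for $k_1=k_2\neq k_3$;
\item $r(1-r)(5+3r)$ for pairwise distinct labels.
\end{itemize}
The first two classes are cancelled exactly, giving
\[
W=\frac{r}{16}\Bigl[(1+3r^2)\,\Pi_{k_1=k_2\neq k_3}+(1-r)\,\Pi_{\neq}\Bigr]+\Bigl(\frac{r(1-r)(1+3r)}{32}\,\Pi_{\neq}\Bigr)^{\T_B}.
\]
Both operators are manifestly positive for all $r\in[0,1]$, so no $r$-dependent adjustment near $r\to1$ is needed. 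With this certificate your argument becomes a complete, self-contained proof. The paper itself gives none: it imports the PPT, arbitrary-success-probability form of the lemma from Supplementary Material G of the cited reference.
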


Note that, in this setting, the noisy state takes the form 
\begin{align}\label{eq:psi_pq}
    \psi_{(p,q)}:=
    \mN_{\mathrm{D}}(p)\otimes\mN_{\mathrm{D}}(q)(\psi), 
    \quad\text{with}\,\,\,
    \psi\in\mS_{\mathrm{Bell}}.
\end{align}
Rather than adopting the original statement of Thm.~2 in Ref.~\cite{3bb1-pmtp}, we use its strongest formulation (see Lem.~\ref{lem:NP_Bell}), as presented in Supplementary Material G of~\cite{3bb1-pmtp}.
This version applies not only to LOCC operations but also to the strictly larger class of PPT operations, and holds for arbitrary success probability.

\begin{figure}[htbp]
    \centering   
    \includegraphics[width=1\textwidth]{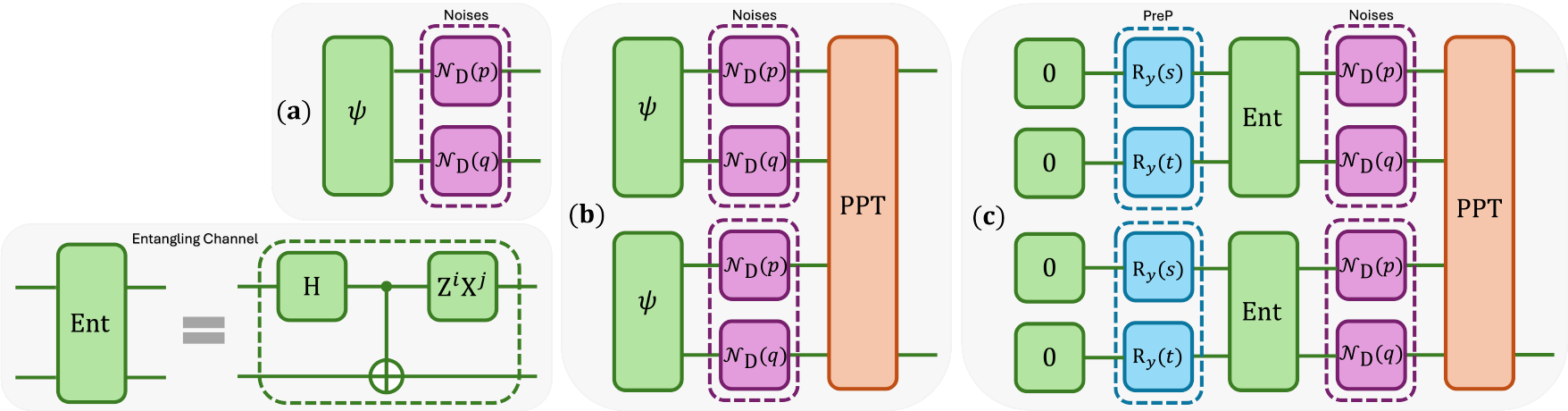}
    \caption{\textbf{Forward-Assisted Bell-State Purification}. 
        The input state is one of the four Bell states (see Eq.~\eqref{eq:Bell_Set}), and performance is quantified by the average fidelity over this set.
        (a) Benchmarking: noisy Bell states generated via an entangling channel (Ent), followed by local depolarizing noise $\mN_{\mathrm{D}}(p)\otimes\mN_{\mathrm{D}}(q)$.
        (b) Conventional Purification: two noisy copies are processed via PPT post-processing.
        (c) PreP Purification: local amplitude damping pre-processing is applied prior to Ent, followed by depolarizing noise and PPT post-processing.
    }
    \label{fig:Nogo_Protocols}
\end{figure}

In the deterministic setting, the average fidelity under the bilocal depolarizing channel is then characterized as (see Fig.~\ref{fig:Nogo_Protocols}(a))
\begin{align}\label{eq:Bell_ave}
    F_{\mathrm{ave}}(p,q):=
    \frac{1}{4}
    \sum_{\psi\in\mS_{\mathrm{Bell}}}
    F(\psi_{(p,q)},\psi),
\end{align}
and serves as a baseline benchmark for purification protocols. 
By contrast, one may employ a 2-to-1 PPT purification map $\mE\in \mathrm{PPT}$, leading to the following performance (see Fig.~\ref{fig:Nogo_Protocols}(b))
\begin{align}\label{eq:Bell_ave_PPT_pq}
    F_{\mathrm{ave,\ PPT}}(p,q):=
    \frac{1}{4}\max_{\mE\in\mathrm{PPT}}
    \sum_{\psi\in\mS_{\mathrm{Bell}}}
    F(\mE\left(\psi_{(p,q)}\otimes\psi_{(p,q)}\right),\psi),
\end{align}
where $\psi_{(p,q)}$ is defined in Eq.~\eqref{eq:psi_pq}.
Since Lem.~\ref{lem:NP_Bell} holds for arbitrary success probability, it also applies to the deterministic setting considered here, implying
\begin{align}
    F_{\mathrm{ave,\ PPT}}(p,q)=
    F_{\mathrm{ave}}(p,q).
\end{align}

We proceed to examine the role of pre-processing in Bell-state purification.
In contrast to the protocols considered in Subsec.~\ref{subsec:LU_Pre_Processing}, where pre-processing is treated in isolation, it is here integrated with PPT post-processing to probe the ultimate limits of UA purification (see Tab.~\ref{tab:Purification_Comparison_1}). 
The analysis begins by applying pre-processing to states drawn from the set $\mS_{\mathrm{Bell}}$ (see Eq.~\eqref{eq:Bell_Set}).
By treating the noisy entangling gate as a dynamical resource, we can act on the input before the gate is applied. 
Taking this pre-processing to be a local unitary, specifically $R_y(s)\otimes R_y(t)$, gives the state
\begin{align}\label{eq:psi_st}
    \psi^{(s,t)}(i,j) \coloneqq
    \underbrace{(\mathrm{Z}^{i}\mathrm{X}^{j}\otimes\id)\circ\mathrm{CNOT}\circ(\mathrm{H}\otimes\id)}_{\text{Entangling Gate}}
    \circ
    \overbrace{\left(R_y(s)\otimes R_y(t)\right)}^{\text{Pre-Processing}}(\ketbra{00}{00}), 
    \quad\text{with}\,\,\,
    i, j\in\{0,1\}.
\end{align}
To benchmark the performance, the four Bell states are expressed as (see Fig.~\ref{fig:Nogo_Protocols})
\begin{align}\label{eq:psi_ij}
    \psi(i,j) \coloneqq
    \underbrace{(\mathrm{Z}^{i}\mathrm{X}^{j}\otimes\id)\circ\mathrm{CNOT}\circ(\mathrm{H}\otimes\id)}_{\text{Entangling Gate}}
    (\ketbra{00}{00}), 
    \quad\text{with}\,\,\,
    i, j\in\{0,1\}.
\end{align}
After the action of noise, specifically, depolarizing noise $\mN_{\mathrm{D}}$ on both subsystems, the state $\psi^{(s,t)}(i,j)$ becomes
\begin{align}\label{eq:psi_stpq}
    \psi^{(s,t)}_{(p,q)}(i,j) \coloneqq
    \underbrace{\left(\mN_{\mathrm{D}}(p)\otimes\mN_{\mathrm{D}}(q)\right)}_{\text{Noise}}
    \left(\psi^{(s,t)}(i,j)\right), 
    \quad\text{with}\,\,\,
    i, j\in\{0,1\}.
\end{align}
Here, $\mN_{\mathrm{D}}(p)\otimes\mN_{\mathrm{D}}(q)$ represents the noise (see Fig.~\ref{fig:Nogo_Protocols}).

In the final stage, a PPT operation is applied to purify the noisy state, with the resulting 2-to-1 purification performance quantified by (see Fig.~\ref{fig:Nogo_Protocols}(c))
\begin{align}\label{eq:Bell_ave_Pre_PPT_stpq}
    F_{\mathrm{ave,\ PPT}}^{\mathrm{Pre}}(s,t,p,q) \coloneqq
    \frac{1}{4}\max_{\mE\in\mathrm{PPT}}
    \sum_{i,j}
    F(\mE\left(\psi^{(s,t)}_{(p,q)}(i,j)\otimes\psi^{(s,t)}_{(p,q)}(i,j)\right),\psi(i,j)).
\end{align}
where $\psi^{(s,t)}_{(p,q)}(i,j)$ is defined in Eq.~\eqref{eq:psi_stpq}.

To assess the performance over all pre-processing operations, one may further optimize over the parameters $s$ and $t$, which ultimately yields the following average fidelity
\begin{align}\label{eq:Bell_ave_Pre_PPT_pq}
    F_{\mathrm{ave,\ PPT}}^{\mathrm{Pre}}(p,q) \coloneqq
    \frac{1}{4}\max_{s,t}\max_{\mE\in\mathrm{PPT}}
    \sum_{i,j}
    F(\mE\left(\psi^{(s,t)}_{(p,q)}(i,j)\otimes\psi^{(s,t)}_{(p,q)}(i,j)\right),\psi(i,j)).
\end{align}
A comparison between $F_{\mathrm{ave,\ PPT}}^{\mathrm{Pre}}(s,t,p,q)$ (or even $F_{\mathrm{ave,\ PPT}}^{\mathrm{Pre}}(p,q)$) and $F_{\mathrm{ave,\ PPT}}(p,q)$ is well justified, as the set of free operations, namely, PPT operations, is identical in both cases. 
In particular, although the UA protocol incorporates an additional pre-processing stage, it does not introduce any extra resource cost. 
The comparison is therefore fair from the perspective of dynamical entanglement~\cite{bauml2019resourcetheoryentanglementbipartite,PhysRevLett.125.180505,xing2023fundamentallimitationscommunicationquantum,glc7-xy8t}.

Having introduced the conventional purification protocol based on PPT post-processing (see Eq.~\eqref{eq:Bell_ave_PPT_pq}), and the unassisted (UA) protocol combining local unitary (LU) pre-processing $R_y(s)\otimes R_y(t)$ with PPT post-processing (see Eq.~\eqref{eq:Bell_ave_Pre_PPT_stpq}), their performance is compared using average fidelities. 
The protocols considered here are summarized in Tab.~\ref{tab:Bell_Purification_Comparison}.

\begin{table}[htbp]
    \centering
    \begin{tblr}{
      colspec = {l || c | c | c},
      row{1,2} = {bg=gray!50, font=\bfseries}, 
      column{1} = {bg=gray!10},                
      hlines,                                  
      vlines,                                  
      cells = {m},                             
      row{1} = {c},                            
    }
      \SetCell[c=4]{c} Forward-Assisted Distributed Bell-State Purification Protocols (2-to-1) & & & \\
      Distributed Purifications & Performance & Number of Copies & Constraints \\
      Bell Protocol a (see Eq.~\eqref{eq:Bell_ave}) & $B_a:=F_{\mathrm{ave}}(p,q)$ & single-copy & without any purification \\
      Bell Protocol b (see Eq.~\eqref{eq:Bell_ave_PPT_pq}) & $B_b:=F_{\mathrm{ave,\ PPT}}(p,q)$ & two-copy & with PPT post-processing \\
      Bell Protocol c (see Eq.~\eqref{eq:Bell_ave_Pre_PPT_stpq}) & $B_c:=F_{\mathrm{ave,\ PPT}}^{\mathrm{Pre}}(s,t,p,q)$ & two-copy & {with pre-processing implemented\\ by $R_y(s)\otimes R_y(t)$\\ and PPT post-processing} \\
    \end{tblr}
    \caption{\textbf{Forward-Assisted Distributed Bell-State Purification Protocols}.
        Summary of forward-assisted (FA) distributed 2-to-1 Bell-state purification protocols, specifying their performance metrics, number of input copies, and operational constraints.
        The protocols include the baseline without purification, PPT post-processing, and FA schemes combining local unitary pre-processing with PPT post-processing.
    }
    \label{tab:Bell_Purification_Comparison}
\end{table}

It is worth noting that inserting pre-processing consisting solely of local unital channels after the entangling channel in Fig.~\ref{fig:Nogo_Protocols}, and before the depolarizing noise $\mN_{\mathrm{D}}(p)\otimes\mN_{\mathrm{D}}(q)$, does not improve purification performance, as established in Thm.~\ref{thm:Ineffectiveness_LUnital_PreP}.

As shown in Lem.~\ref{lem:NP_Bell}, Bell protocols a and b in Tab.~\ref{tab:Bell_Purification_Comparison} achieve identical performance, namely
\begin{align}
    B_b = B_a,
\end{align}
indicating that PPT post-processing alone provides no improvement. 
We therefore examine the difference
\begin{align}\label{eq:Delta_B_c_B_a}
    \Delta \coloneqq B_c - B_a.
\end{align}
A strictly positive value signals a violation of the no-purification limitation enabled by pre-processing, while zero indicates no improvement.

The numerical results for depolarizing noise $\mN_{\mathrm{D}}(p)\otimes\mN_{\mathrm{D}}(p)$ with strengths $p\in\{0.1, 0.3, 0.5, 0.7\}$ are shown in Fig.~\ref{fig:Nogo_Num_AD_Det}(a)-(d), where a strictly positive value is clearly observed. 
This demonstrates that, once pre-processing is introduced, even simple local rotations about the $y$-axis are sufficient to overcome the no-go result of Lem.~\ref{lem:NP_Bell}, thereby enabling genuine purification of Bell states.
Notably, this advantage is achieved without additional resources, as local rotations $R_y(s)\otimes R_y(t)$ belong to LOCC, and the setting considered here is fully deterministic.
These findings point to a practically accessible route towards more efficient purification protocols.
The result is formalized in the following theorem.

\begin{figure}[htbp]
    \centering   
    \includegraphics[width=1\textwidth]{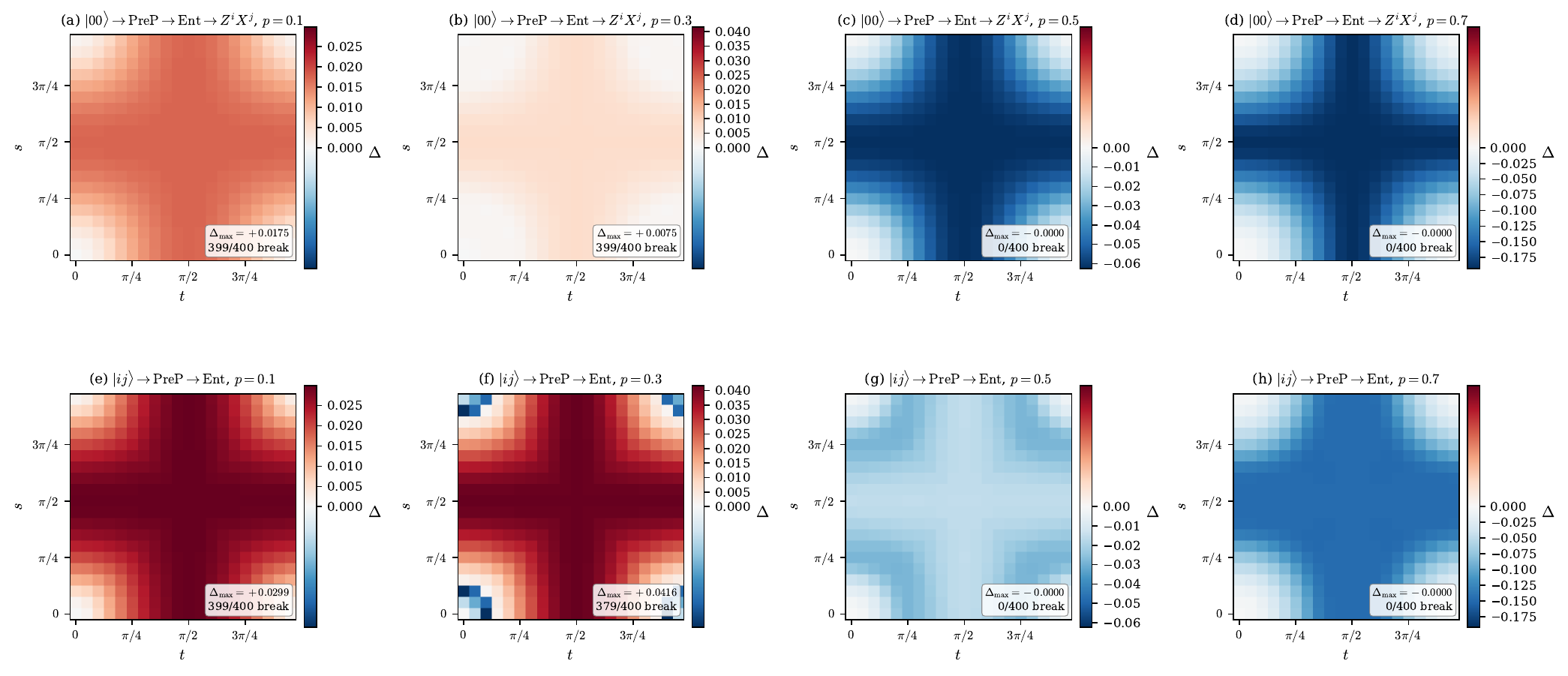}
    \caption{\textbf{Breaking No-Purification via PreP}. 
        Heat maps of $\Delta$ (see Eq.~\eqref{eq:Delta_B_c_B_a}) over $(s,t)$ (see Eq.~\eqref{eq:psi_st}) for depolarizing noise $\mN_{\mathrm{D}}(p)\otimes\mN_{\mathrm{D}}(p)$ with (a) $p=0.1$, (b) $p=0.3$, (c) $p=0.5$, and (d) $p=0.7$, based on the protocol in Fig.~\ref{fig:Nogo_Protocols}. 
        Here $\Delta$ denotes the performance difference between protocols a and c in Tab.~\ref{tab:Bell_Purification_Comparison}, corresponding to schemes without and with local amplitude damping pre-processing, respectively. 
        Positive values ($\Delta>0$, red regions) indicate a violation of Lem.~\ref{lem:NP_Bell}, showing that local pre-processing enables genuine Bell-state purification within LOCC and in a deterministic setting, whereas negative values ($\Delta<0$, blue regions) indicate degraded performance. 
        Panels (e)-(h) show the corresponding results for the alternative Bell-state generation scheme illustrated on the right-hand side of Fig.~\ref{fig:Bell}, with (e) $p=0.1$, (f) $p=0.3$, (g) $p=0.5$, and (h) $p=0.7$.
        The full purification protocol for this alternative construction is detailed in Fig.~\ref{fig:Nogo_Protocols_Alternative}.
    }
    \label{fig:Nogo_Num_AD_Det}
\end{figure}

\begin{mythm}{Efficient Purification of Bell States}{EP_Bell}
    For the bilocal depolarizing channel 
    \begin{align}
        \mN_1(p)\otimes\mN_2(q)=
        \mN_{\mathrm{D}}(p)\otimes\mN_{\mathrm{D}}(q),
    \end{align}
    there exist efficient 2-to-1 UA purification protocols for the Bell-state set $\mS_{\mathrm{Bell}}$ (see Eq.~\eqref{eq:Bell_Set}), in which pre-processing is implemented by $R_y(s)\otimes R_y(t)$ and post-processing is optimized over all PPT operations (see Fig.~\ref{fig:Nogo_Protocols}(c)). 
    In particular, such a protocol achieves efficient purification even in the deterministic regime, as shown in Fig.~\ref{fig:Nogo_Num_AD_Det}(a)-(d).
\end{mythm}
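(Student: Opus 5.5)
The statement is an existence claim, so the plan is to exhibit, for suitable $(s,t)$ and $p=q$, one \emph{feasible} deterministic PPT post-processing map $\mE$ whose average fidelity strictly exceeds the baseline $B_a$. This suffices because $B_c\geqslant$ the value of any feasible point, and Lem.~\ref{lem:NP_Bell} gives $B_b=B_a$, so $\Delta>0$ follows without solving the SDP of Eq.~\eqref{eq:Bell_ave_Pre_PPT_stpq} to optimality.

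\textbf{Step 1: compute the baseline.} Write $\mN_{\mathrm{D}}(p)(\rho)=p\rho+(1-p)\1/2$, which follows from Tab.~\ref{tab:Quantum_Noise_Models}. Every Bell state then has fidelity $pq+(1-pq)/4$ after the bilocal noise, so
\begin{align}
    B_a=\frac{1+3pq}{4}.
\end{align}

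\textbf{Step 2: identify the mechanism.} Compute $\psi^{(s,t)}_{(p,q)}(i,j)$ of Eq.~\eqref{eq:psi_stpq} in closed form. $\mathrm{R}_y(s)\otimes\mathrm{R}_y(t)$ followed by the entangling gate produces a non-maximally entangled pure state. Its amplitudes on $\ket{00},\ket{11}$ and on $\ket{01},\ket{10}$ depend on $\cos$ and $\sin$ of $(s\pm t)/2$. Consequently the four noisy inputs are no longer Bell-diagonal, and they are no longer related to each other by the local Pauli covariance on which the no-go argument of Lem.~\ref{lem:NP_Bell} rests. Two copies of such states carry information about the label $(i,j)$, in particular about the $\mathrm{X}^{j}$ part, that a PPT map can correlate with the output. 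This is the structural reason a positive gap should be possible.

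\textbf{Step 3: reduce the SDP.} Write the objective $\Tr[J^{\mE}\,\Xi]$, where
\begin{align}
    \Xi=\frac{1}{4}\sum_{i,j}\bigl(\psi^{(s,t)}_{(p,q)}(i,j)^{\T}\bigr)^{\otimes 2}\otimes\psi(i,j).
\end{align}
Average $J^{\mE}$ over the symmetry group left invariant by the problem: the copy swap $A_1B_1\leftrightarrow A_2B_2$, and the residual Pauli-type symmetries that still permute the ensemble after pre-processing. This shrinks the set of free parameters while preserving positivity, the TP constraint, the PPT constraint $(J^{\mE})^{\T_{B}}\geqslant0$ and the objective.

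\textbf{Step 4: construct and certify a feasible point.} At a concrete point, say $p=q=1/2$ with $(s,t)$ in the red region of Fig.~\ref{fig:Nogo_Num_AD_Det}(c), I would first look for an analytic ansatz. A natural choice is a map in which each party measures its second qubit in a basis adapted to $(s,t)$ and applies a conditional local correction, communicated classically, to the first pair. Such a map is LOCC and hence automatically PPT, so it needs no further feasibility check; one then computes its fidelity symbolically. If no simple LOCC ansatz beats $B_a$, I would solve the reduced SDP numerically, round the optimizer to a rational matrix, and restore exact feasibility. TP is restored by an exact correction $J\mapsto(\1\otimes\Tr_{\mathrm{out}}[J]^{-1/2})\,J\,(\1\otimes\Tr_{\mathrm{out}}[J]^{-1/2})$, or by mixing in a small multiple of the completely depolarizing Choi operator. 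Positivity of $J$ and of $J^{\T_B}$ is then checked in exact arithmetic, for instance through $LDL^{\top}$ factorizations, and the objective is evaluated exactly and compared with $B_a=7/16$.

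\textbf{Main obstacle.} I expect Step 4 to be the hardest part. An analytic PPT map with a clean closed-form fidelity may not exist near the optimum. The gap $\Delta$ may also be small, so the rounding must be done carefully enough that feasibility is restored without losing the margin. If the margin proves too thin at $p=1/2$, I would move to the parameter value in Fig.~\ref{fig:Nogo_Num_AD_Det} where $\Delta$ is largest and certify the gap there; a single certified instance suffices for the existence claim.
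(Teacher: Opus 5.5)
Your route is valid and genuinely different from the paper's. The paper gives no derivation for this theorem. It cites Lem.~\ref{lem:NP_Bell} for $B_b=B_a$ and then displays numerically solved SDP values of $\Delta$ over $(s,t)$ at $p=q\in\{0.1,0.3,0.5,0.7\}$ (Fig.~\ref{fig:Nogo_Num_AD_Det}(a)--(d)), with no certificate. You correctly note that existence needs only a \emph{lower} bound on $B_c$, i.e.\ one feasible PPT map whose value exceeds $B_a=(1+3pq)/4$; your baseline is right. You then propose to certify such a point exactly. That turns numerics into a computer-assisted proof, at the price of covering one parameter point rather than four panels.

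Three corrections, and a shortcut.

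First, Step~2 is misstated. $\mN_{\mathrm{D}}$ commutes with unitaries, so $\psi^{(s,t)}_{(p,q)}(i,j)=(\mathrm{Z}^{i}\mathrm{X}^{j}\otimes\1)\,\sigma\,(\mathrm{Z}^{i}\mathrm{X}^{j}\otimes\1)^{\dagger}$ with $\sigma=\psi^{(s,t)}_{(p,q)}(0,0)$. This is exactly the covariance you use in Step~3. What changes is that $\sigma$ is no longer Bell-diagonal. Alice's marginal has Bloch vector $p(\cos s\sin t,\,0,\,\sin s)$, whose $x$- and $z$-signs are flipped by $\mathrm{Z}^{i}$ and $\mathrm{X}^{j}$ respectively, so the label becomes locally readable.

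Second, your analytic ansatz cannot succeed if the conditional correction is unitary. By the same covariance and self-duality of $\mN_{\mathrm{D}}$, a locally rotated kept pair has fidelity at most $pq\max_{V,W}|\langle\phi^{+}|V\otimes W|\chi\rangle|^{2}+(1-pq)/4\leqslant B_a$, where $\chi$ is the noiseless pre-processed state. Discarding works instead:
\begin{itemize}
\item At $s=\pi/2$ the pre-noise inputs are $\ketbra{j}{j}\otimes\mathrm{R}_{y}(t)\ketbra{0}{0}\mathrm{R}_{y}(t)^{\dagger}$.
\item Alice measures $A_1$ in the computational basis; she obtains $b=j$ with probability $(1+p)/2$.
\item She sends $b$ and a random bit $r$, and the parties output $\ket{r}\ket{r\oplus b}$.
\end{itemize}
This deterministic LOCC map is PPT. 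Its output has fidelity $1/2$ with both Bell states labelled $j=b$, giving average fidelity $(1+p)/4$, which exceeds $B_a$ iff $q<1/3$. That is a closed-form certificate for panels (a),(b) with no SDP at all. Note the output is separable: the gain exists only because $B_a<1/2$. At your chosen $p=q=1/2$ this protocol gives only $3/8<7/16$. At $p=q=0.7$ one has $B_a>1/2$, so an entangled output is unavoidable and the numerical branch is genuinely needed.

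Third, the TP repair in that branch needs changing. Conjugating by $\Tr_{\mathrm{out}}[J]^{-1/2}$ is not rational, and it need not preserve $J^{\T_B}\geqslant0$ because it acts non-locally on $A_1B_1A_2B_2$. Mixing in the depolarizing Choi operator does not restore TP. Instead:
\begin{itemize}
\item mix first, to create a strict eigenvalue margin for both $J$ and $J^{\T_B}$;
\item round to rational entries;
\item apply the exact linear correction $J\mapsto J+(\1-\Tr_{\mathrm{out}}J)\otimes\1_{\mathrm{out}}/4$;
\item verify $J\geqslant0$ and $J^{\T_B}\geqslant0$ in exact arithmetic.
\end{itemize}
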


Remark that the preparation of Bell states (see Eq.~\eqref{eq:Bell_Set}) is not unique. 
In the analysis so far, we adopt the approach of first generating a maximally entangled state and then obtaining the remaining Bell states via local Pauli operations $\mathrm{Z}^{i}\mathrm{X}^{j}$. 
Alternatively, one may fix the entangling circuit to be H followed by CNOT, and generate the four Bell states by varying the computational basis inputs $(i,j)$, i.e., $\ket{ij}$. 
The correspondence between $(i,j)$ and the resulting Bell states is illustrated in Fig.~\ref{fig:Bell}.

\begin{figure}[htbp]
    \centering   
    \includegraphics[width=0.5\textwidth]{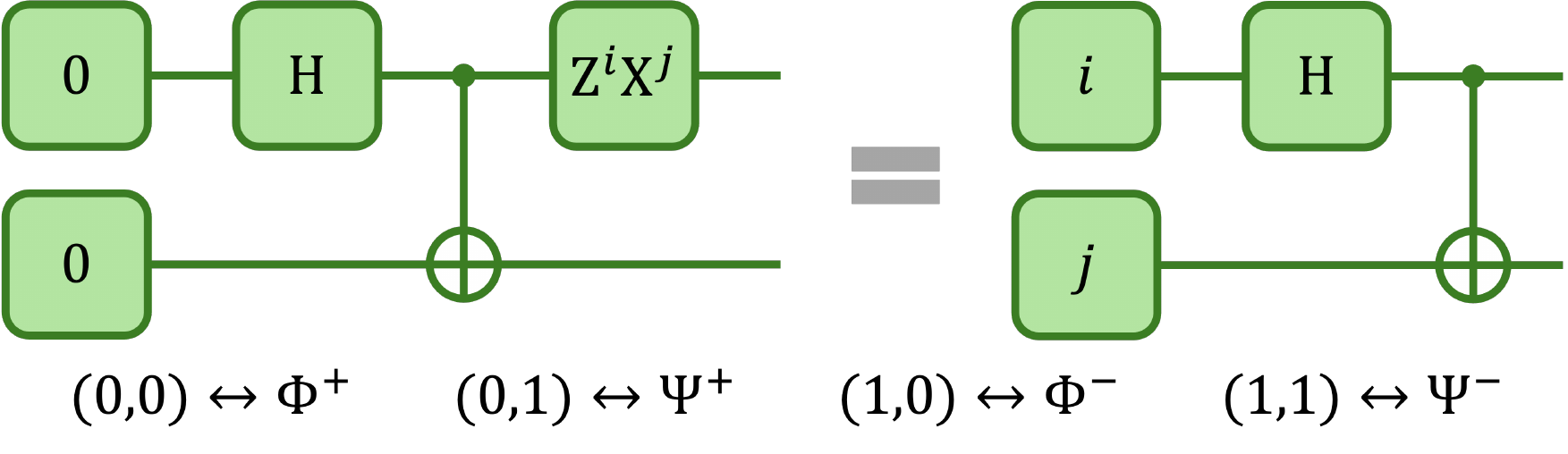}
    \caption{\textbf{Equivalent Constructions of Bell States}. 
        Two equivalent methods for generating the four Bell states (see Subsec.~\ref{subsec:Bell_States}) are illustrated. 
        Left: a fixed input $\ket{00}$ is first entangled via a Hadamard and CNOT gate, followed by local Pauli operations $\mathrm{Z}^{i}\mathrm{X}^{j}$ to obtain all Bell states. 
        Right: the entangling circuit (Hadamard + CNOT) is fixed, while the computational basis inputs $\ket{ij}$ are varied. 
        The correspondence between $(i,j)$ and the resulting Bell states is shown below, demonstrating the equivalence of the two constructions.
    }
    \label{fig:Bell}
\end{figure}

This naturally raises the question of whether pre-processing remains effective when Bell states are generated using the alternative construction shown on the right-hand side of Fig.~\ref{fig:Bell}. 
The corresponding protocol is illustrated in Fig.~\ref{fig:Nogo_Protocols_Alternative}.

\begin{figure}[htbp]
    \centering   
    \includegraphics[width=0.75\textwidth]{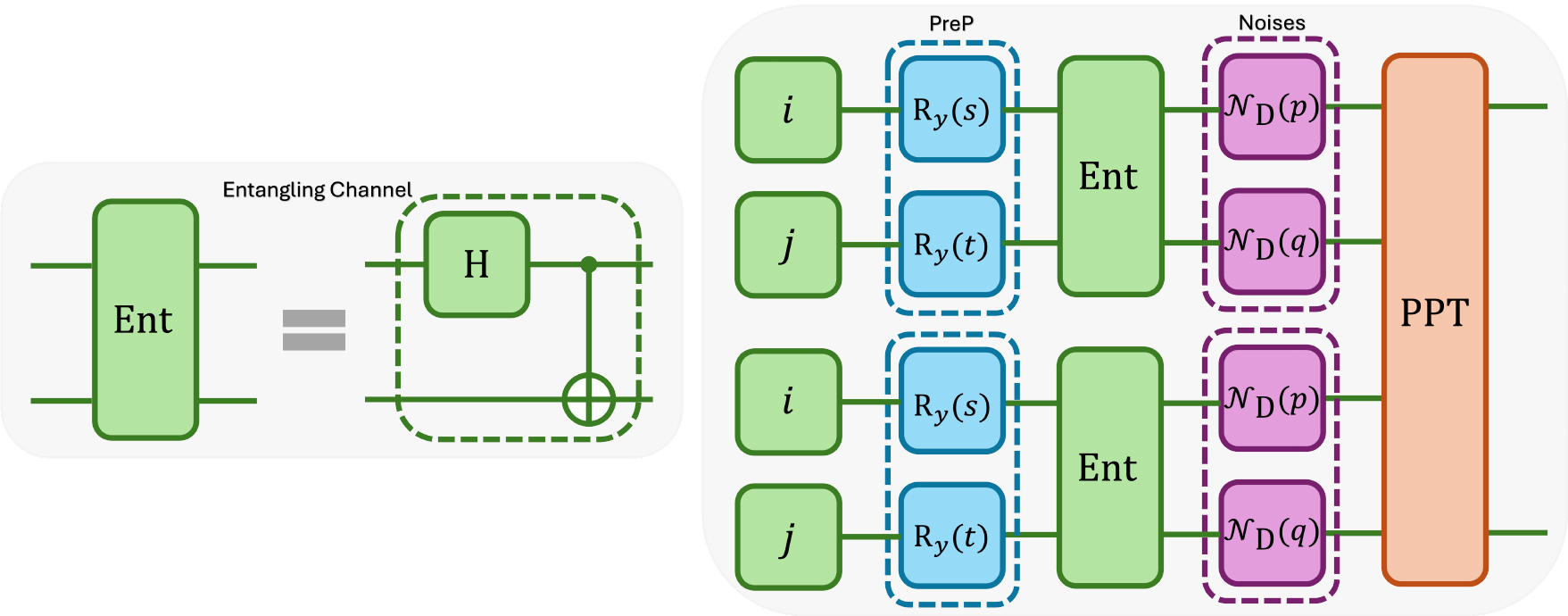}
    \caption{\textbf{Forward-Assisted Bell-State Purification with Alternative Bell-State Generation}. 
        PreP purification protocol based on the alternative Bell-state generation scheme shown on the right-hand side of Fig.~\ref{fig:Bell}.
    }
    \label{fig:Nogo_Protocols_Alternative}
\end{figure}

Despite variations in numerical outcomes across different PreP schemes, a key feature remains unchanged: the 2-to-1 no-purification limitation for Bell states is consistently violated. 
This further underscores the essential role of pre-processing in enabling enhanced purification. 
Detailed numerical results are presented in Fig.~\ref{fig:Nogo_Num_AD_Det}(e)-(h).

Up to this point, the discussion has focused on deterministic purification of Bell states. This restriction is not essential: the breakdown of the no-purification limitation extends beyond the deterministic regime. 
Numerical experiments for probabilistic purification have also been performed, with representative results at success probability $\bar{P}=0.1$ shown in Fig.~\ref{fig:Nogo_Num_AD_Prob}. 
These results confirm that genuine Bell-state purification persists in the probabilistic setting once pre-processing is incorporated.

\begin{figure}[htbp]
    \centering   
    \includegraphics[width=1\textwidth]{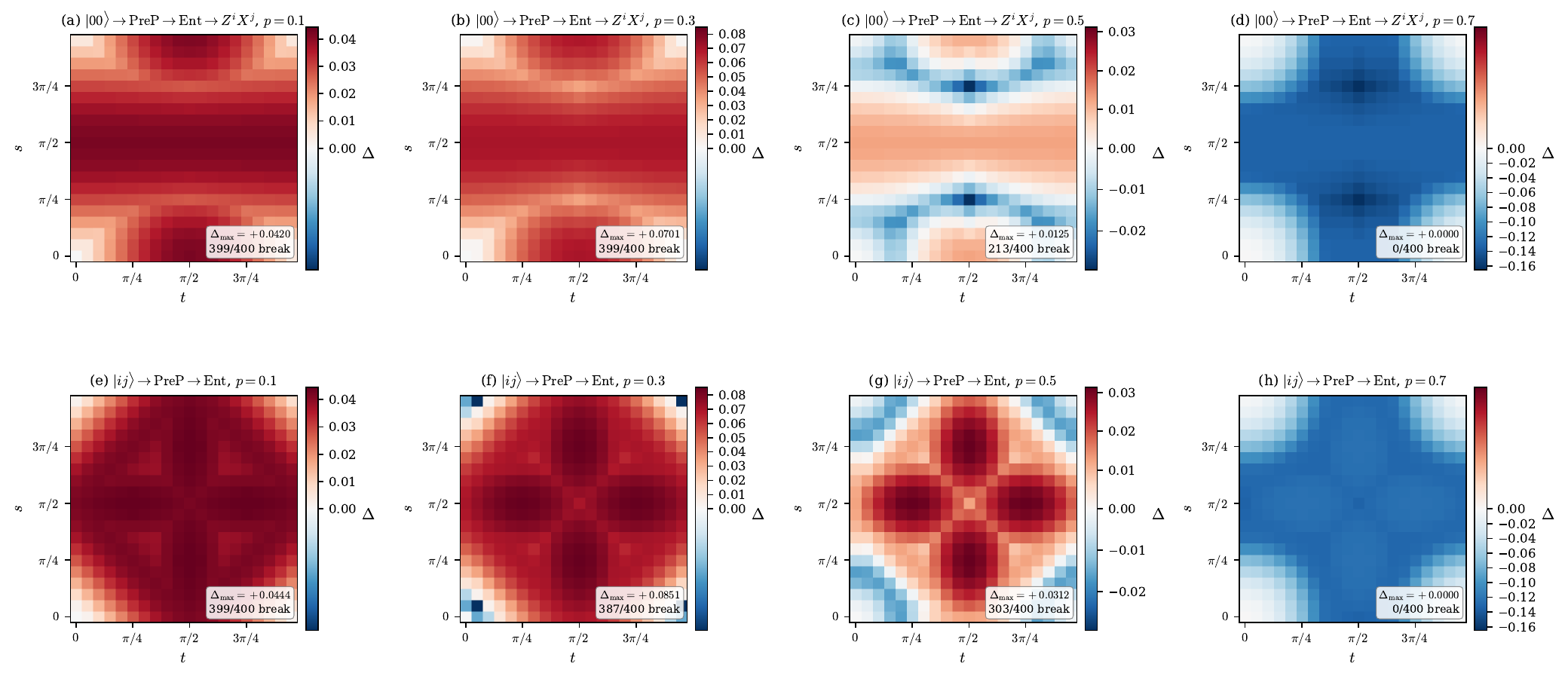}
    \caption{\textbf{Breaking Probabilistic No-Purification via PreP}. 
        Heat maps of $\Delta$ in the probabilistic regime. 
        Distributions of $\Delta$ (Eq.~\eqref{eq:Delta_B_c_B_a}) over $(s,t)$ (Eq.~\eqref{eq:psi_st}) under depolarizing noise $\mN_{\mathrm{D}}(p)\otimes\mN_{\mathrm{D}}(p)$ with (a) $p=0.1$, (b) $p=0.3$, (c) $p=0.5$, and (d) $p=0.7$, based on Fig.~\ref{fig:Nogo_Protocols}. 
        Here $\Delta$ measures the performance difference between protocols (a) and (c) in Tab.~\ref{tab:Bell_Purification_Comparison}, without and with local amplitude-damping pre-processing. 
        Positive values ($\Delta>0$, red) indicate violation of Lem.~\ref{lem:NP_Bell}, while negative values ($\Delta<0$, blue) indicate degraded performance. 
        Panels (e)–(h) show the corresponding results for the alternative Bell-state generation scheme (Fig.~\ref{fig:Bell}) with the same ordering of $p$. 
        The full protocol is given in Fig.~\ref{fig:Nogo_Protocols_Alternative}. 
        All panels correspond to the probabilistic regime with success probability $\bar{P}=0.1$.
    }
    \label{fig:Nogo_Num_AD_Prob}
\end{figure}

Theorem~\ref{thm:EP_Bell} shows that pre-processing breaks the no-purification limitation for 2-to-1 Bell-state purification. 
This prompts a natural question: what mechanism enables this advantage, and why does pre-processing overcome the no-go theorem of Lem.~\ref{lem:NP_Bell}? 
To investigate this, we replace the LU pre-processing $R_y(s)\otimes R_y(t)$ with randomly sampled quantum channels generated via Monte Carlo methods.
In each trial, a random single-qubit CPTP map $\mE_A$ (and independently $\mF_B$) is constructed by sampling a $4\times4$ Choi operator from a Wishart distribution and projecting onto the trace-preserving (TP) constraint; 
Kraus operators are then obtained via eigendecomposition of the Choi operator.

\begin{figure}[htbp]
    \centering   
    \includegraphics[width=0.85\textwidth]{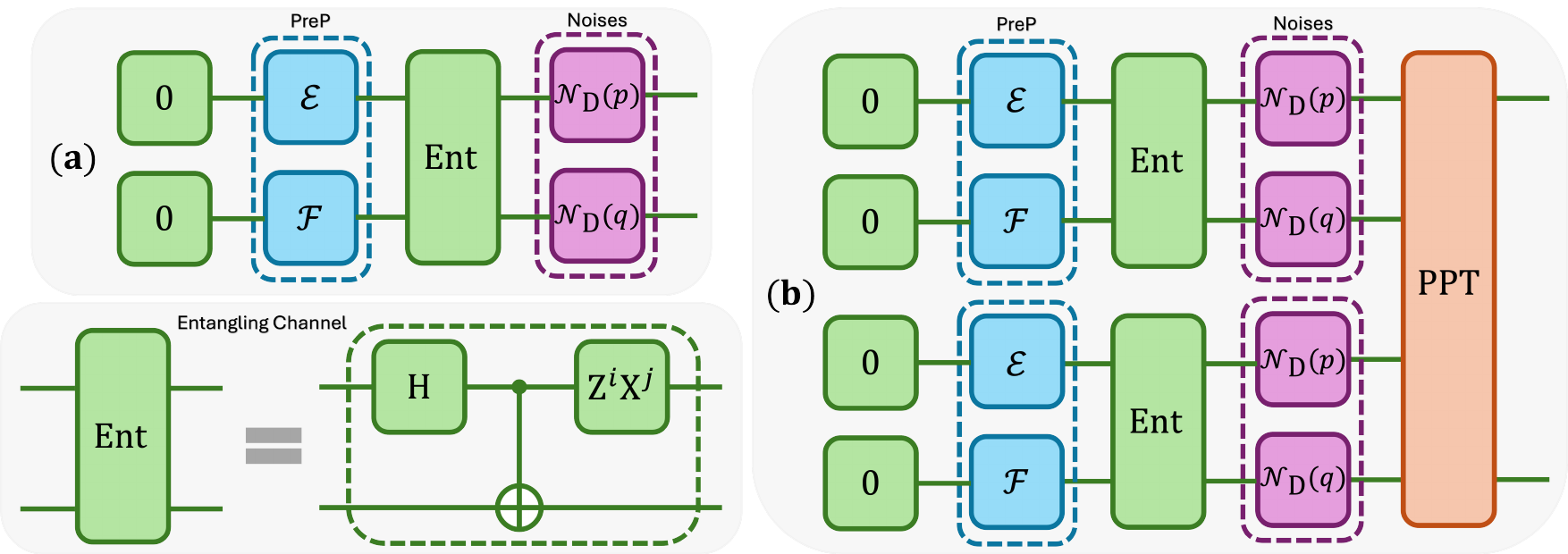}
    \caption{\textbf{Forward-Assisted Bell-State Purification Using Random Quantum Channels}. 
        The input state is one of the four Bell states (see Eq.~\eqref{eq:Bell_Set}), and performance is quantified by the average fidelity over this set.
        (a) Randomized PreP: local pre-processing channels $\mE$ and $\mF$ are applied prior to the entangling channel (Ent), followed by local depolarizing noise $\mN_{\mathrm{D}}(p)\otimes\mN_{\mathrm{D}}(q)$. 
        In each trial, $\mE$ and $\mF$ are independently sampled single-qubit CPTP maps generated via Monte Carlo methods. 
        (b) Randomized PreP with PostP: two noisy copies produced as in (a) are further processed via PPT post-processing. 
        This setup is used to investigate the mechanism by which pre-processing overcomes the no-purification limitation of Lem.~\ref{lem:NP_Bell}.
    }
    \label{fig:Nogo_Protocols_Random}
\end{figure}

We now consider the protocols in Fig.~\ref{fig:Nogo_Protocols_Random} and evaluate their performance using PreP of randomly sampled channels $\mE \otimes \mF$, with the results shown in Tab.~\ref{tab:Nogo_Protocols_Random}.

\begin{table}[htbp]
\centering
\begin{tblr}{
  colspec = {c || c | c | c | c},
  row{1,2} = {bg=gray!50, font=\bfseries}, 
  column{1} = {bg=gray!10, font=\bfseries}, 
  hlines, 
  vlines,
  cells = {m, c},
  cell{3,4,5,6,7}{4} = {bg=mRed},
  cell{5,6}{5} = {bg=mRed},
  cell{3,4,7}{5} = {bg=mGreen},
}
  \SetCell[c=5]{c} Forward-Assisted Distributed Purification: Individual Bell States & & & & \\
  Bell State & {Prot. of Fig.~\ref{fig:Nogo_Protocols}(a) \\ (1-Copy) No Pur.} & {Prot. of Fig.~\ref{fig:Nogo_Protocols}(b)\\ (2-to-1) PPT PostP} & {Prot. of Fig.~\ref{fig:Nogo_Protocols_Random}(a)\\ (1-Copy) PreP} & {Prot. of Fig.~\ref{fig:Nogo_Protocols_Random}(b)\\ (2-to-1) PreP + PPT PostP} \\
  
  $\Phi^+$ & 0.2575 & 0.2575 & 0.2505 & 0.2706 \\
  
  $\Phi^-$ & 0.2575 & 0.2575 & 0.2495 & 0.3112 \\
  
  $\Psi^+$ & 0.2575 & 0.2575 & 0.2507 & 0.2507 \\
  
  $\Psi^-$ & 0.2575 & 0.2575 & 0.2497 & 0.2497 \\
  
  Average & \textbf{0.2575} & \textbf{0.2575} & \textbf{0.2501} & \textbf{0.2706} \\
\end{tblr}
\caption{\textbf{Comparison of State Fidelity across Different Protocols}. 
The action of the optimal PPT post-processing --- maximizing the average fidelity --- is evaluated on each Bell state individually. 
Both subsystems are subject to depolarizing noise with parameter $p=0.1$. 
The reported data correspond to a fixed random seed (seed = 138) used to initialize the pseudo-random number generator.
Green shading \colorbox{mGreen}{\phantom{x}} indicates an improvement in fidelity relative to the baseline (see Fig.~\ref{fig:Nogo_Protocols}), while red shading \colorbox{mRed}{\phantom{x}} denotes a decrease.}
\label{tab:Nogo_Protocols_Random}
\end{table}

The numerical data in Tab.~\ref{tab:Nogo_Protocols_Random} reveal that the role of pre-processing here is fundamentally different from that in Subsec.~\ref{subsec:LU_Pre_Processing}, where it directly enhances purification performance. 
In the present setting, random channel pre-processing $\mE\otimes\mF$ initially reduces the average fidelity, appearing detrimental at first glance. 
However, this reduction plays a crucial role: it breaks the symmetry among the four Bell states, as evidenced in the third column of Tab.~\ref{tab:Nogo_Protocols_Random}.
Once this symmetry is lifted, PPT post-processing becomes effective and can subsequently improve the purification performance. 
Pre-processing is therefore essential in this distributed setting, not because it enhances fidelity directly, but because it induces the asymmetry required to unlock further gains.

Till now, the analysis has focused on deterministic protocols and the corresponding violation of the no-purification limitation. 
This mechanism, however, is not restricted to deterministic settings. 
It persists in the probabilistic regime, where the success probability is less than one. 
Additional numerical experiments, including probabilistic purification with randomly sampled channels, is presented in Fig.~\ref{fig:Nogo_Protocols_Random_Prob}.

\begin{figure}[htbp]
    \centering   
    \includegraphics[width=1\textwidth]{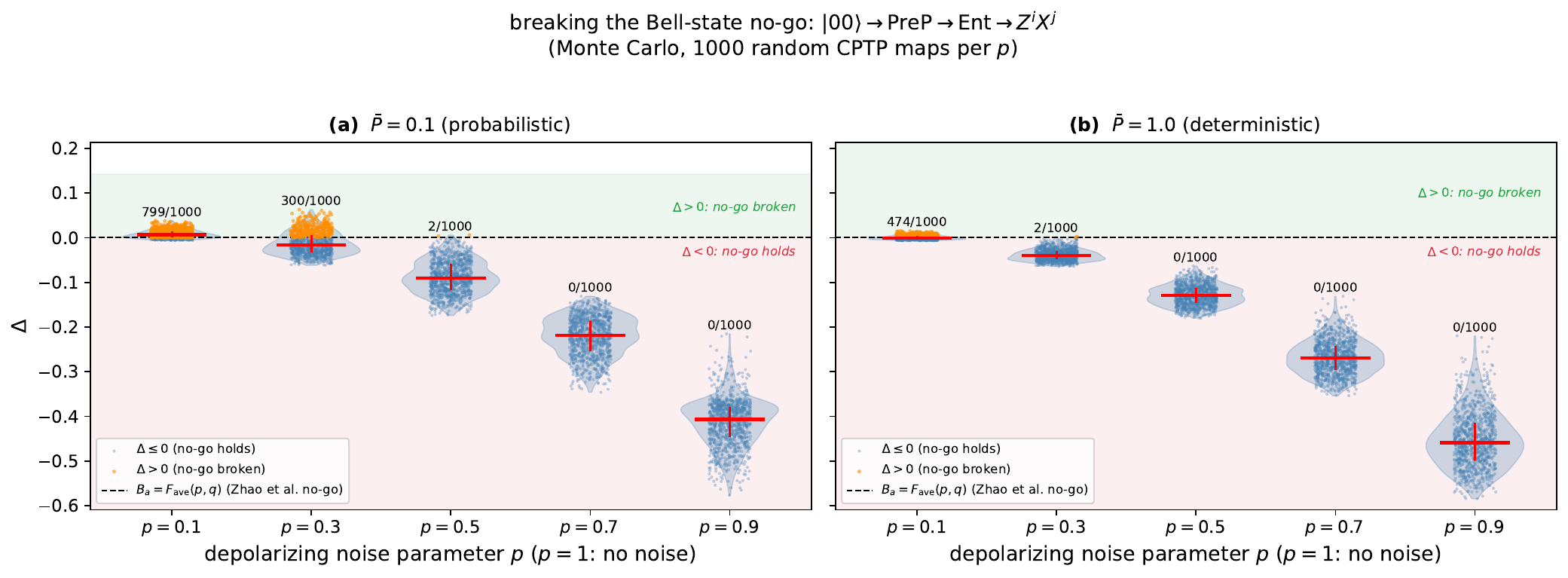}
    \caption{\textbf{Breaking No-Purification with Random PreP}. 
        Distribution of $\Delta$ obtained from Monte Carlo sampling of random CPTP pre-processing channels $\mE \otimes \mF$ (1000 samples per $p$). 
        (a) Probabilistic setting with $\bar{P}=0.1$; 
        (b) deterministic setting with $\bar{P}=1$. 
        The horizontal axis denotes the depolarizing noise parameter $p$, and the dashed line marks $\Delta=0$, separating regimes where the no-go theorem holds ($\Delta \leqslant 0$) and is violated ($\Delta > 0$). 
        Positive values indicate that pre-processing enables purification beyond the PPT limit of Lem.~\ref{lem:NP_Bell}. 
        Counts above each cluster denote the number of samples with $\Delta>0$. 
        Across both probabilistic and deterministic regimes, strictly positive instances are observed, confirming that random pre-processing can break the no-purification limitation.
    }
    \label{fig:Nogo_Protocols_Random_Prob}
\end{figure}

In summary, the results of this subsection establish that forward-assisted protocols fundamentally alter the landscape of Bell-state purification. 
While conventional PPT post-processing alone is constrained by the no-purification theorem, the inclusion of pre-processing --- particularly beyond structured or symmetric operations --- enables one to circumvent this limitation. 
Crucially, the role of pre-processing is not merely to enhance fidelity directly, but to reshape the structure of the input ensemble, breaking symmetries that otherwise inhibit purification. 
This mechanism persists across both deterministic and probabilistic regimes, indicating that the observed advantage is robust and not tied to a specific operational setting. 
Taken together, these findings reveal that pre-processing acts as a catalytic resource in distributed purification, unlocking capabilities that are fundamentally inaccessible within the conventional purification framework.


\end{document}